\newcommand{\indicator}{\ensuremath{\mathbbm{1}}}
\DeclareMathOperator*{\supp}{supp}
\newcommand{\R}{\ensuremath{\mathbb{R}}}
\newcommand{\Exp}{\ensuremath{\mathbb{E}}}
\newcommand{\Prob}{\ensuremath{\mathbb{P}}}
 \pgfplotsset{compat=1.18}
\theoremstyle{definition}
\newtheorem{assumption}{Assumption}
\newtheorem{proposition}{Proposition}
\newtheorem{lemma}{Lemma}
\newtheorem{definition}{Definition}
\newtheorem{theorem}{Theorem}
\newtheorem{example}{Example}
\newcommand*{\indep}{%
  \mathbin{%
    \mathpalette{\@indep}{}%
  }%
}
\newcommand*{\nindep}{%
  \mathbin{%
    \mathpalette{\@indep}{\not}%
  }%
}
\newcommand*{\@indep}[2]{%
  \sbox0{$#1\perp\m@th$}%
  \sbox2{$#1=$}%
  \sbox4{$#1\vcenter{}$}%
  \rlap{\copy0}%
  \dimen@=\dimexpr\ht2-\ht4-.2pt\relax
  \kern\dimen@
  {#2}%
  \kern\dimen@
  \copy0 %
} 
\newcommand{\E}{\ensuremath{\mathbb{E}}} 
\renewcommand{\P}{\ensuremath{\mathbb{P}}} 
\newcommand{\1}{\ensuremath{\mathbbm{1}}}
\newcommand{\lc}{\underline{c}}
\newcommand{\uc}{\overline{c}}
\newcommand{\lL}{\underline{\Lambda}}
\newcommand{\uL}{\overline{\Lambda}}
\title{\textbf{A General Approach to \\ Relaxing Unconfoundedness}\footnote{We thank participants at the 2024 Southern Economic Association annual meeting. Masten thanks the National Science Foundation for research support under Grant 1943138.}}
\author{Matthew A. Masten\footnote{Department of Economics, Duke University,
        \texttt{matt.masten@duke.edu}} \qquad Alexandre Poirier\thanks{
    Department of Economics, Georgetown University,
 \texttt{alexandre.poirier@georgetown.edu}} \qquad Muyang Ren\thanks{
    Department of Economics, Duke University,
 \texttt{muyang.ren@duke.edu}}
}
\date{January 26, 2025}
\begin{document}
\maketitle
\begin{abstract}
This paper defines a general class of relaxations of the unconfoundedness assumption. This class includes several previous approaches as special cases, including the marginal sensitivity model of \cite{Tan2006}. This class therefore allows us to precisely compare and contrast these previously disparate relaxations. We use this class to derive a variety of new identification results which can be used to assess sensitivity to unconfoundedness. In particular, the prior literature focuses on average parameters, like the average treatment effect (ATE). We move beyond averages by providing sharp bounds for a large class of parameters, including both the quantile treatment effect (QTE) and the distribution of treatment effects (DTE), results which were previously unknown even for the marginal sensitivity model.
\end{abstract}

\bigskip
\small
\noindent \textbf{JEL classification:}
C14; C18; C21; C25; C51

\bigskip
\noindent \textbf{Keywords:}
Identification, Treatment Effects, Partial Identification, Sensitivity Analysis, Unconfoundedness

\onehalfspacing
\normalsize

\newpage
\section{Introduction}\label{sec:intro_and_litreview}

A large literature studies the identification and estimation of treatment effects when a binary treatment is randomly assigned conditional on covariates. This assumption is called unconfoundedness, conditionally independent treatment assignment, or ignorability, among other terms. With observational data it is often considered very strong, however, so a corresponding literature has developed to relax this assumption. These papers use a variety of different classes of relaxations of unconfoundedness. That is, there are different ways of formalizing the idea that treatment is ``almost'' randomly assigned, given the covariates. This variation raises a question: How do these different relaxations compare to each other? This question is important because empirical researchers are often concerned that the number of robustness checks they must consider is constantly growing; if some of these checks are related, however, then that relationship can potentially be used to simplify the overall analysis. Moreover, mathematically related analyses do not necessarily provide ``independent'' evidence of robustness, a second motivation for better understanding the relationships between different relaxations of an assumption.

With that aim, this paper makes two main contributions. First, we define a general class of relaxations, which includes several previous approaches as special cases. Second, we derive closed form, analytical identification results for treatment effects under this general class of relaxations. This paper therefore unifies several disparate identification results in the literature. In doing so, we also provide a variety of new identification results, because we study an extensive list of parameters, including quantile treatment effects (QTEs) and the distribution of treatment effects (DTEs), whereas most existing papers focus solely on average-type treatment effects. These new results were previously unknown even for the specific types of relaxations that have been considered before. We give a precise discussion of how our results compare to the previous literature in the next subsection.

In section \ref{subsec:setup} we set up the baseline treatment effects model and define the target parameters we study. We define our general class of relaxations at the start of section \ref{subsec:generalClassOfRelax}. We show how this class relates to previous relaxations in sections \ref{sec:MSM} and \ref{sec:cdep}. In section \ref{sec:generalIdentificationResults} we derive general analytical identification results for marginal cdfs of potential outcomes and monotonic functionals of those cdfs. We apply those results in section \ref{sec:mainTreatmentEffectBounds} to obtain analytical bounds on various treatment effect parameters. We conclude in section \ref{sec:conclusion}.

\subsection*{Related Literature}

\nocite{Rosenbaum1995, Rosenbaum2002, Rosenbaum2017}

A vast literature studies unconfoundedness; we do not attempt a comprehensive review here. Instead we discuss the most closely related prior work. Nonparametric relaxations of unconfoundedness were pioneered by Paul Rosenbaum's work (see his 2002 or 2017 books for a survey, for example). His work focuses on sensitivity analysis within the context of finite sample randomization inference (c.f., chapter 5 of \citealt{ImbensRubin2015}). Much of the subsequent literature has instead focused on large population level identification analysis. In particular, inspired by Rosenbaum's approach, \cite{Tan2006} proposed the \emph{marginal sensitivity model} (MSM), a specific nonparametric relaxation of unconfoundedness (which we review in section \ref{subsec:generalClassOfRelax}). Given this relaxation, Tan showed that bounds on parameters of interest can be characterized as the solutions to optimization problems with infinitely many constraints, but did not provide any formal results, proofs, or closed form expressions for these bounds. \cite{ZhaoSmallBhattacharya2019} derived non-sharp bounds on the average potential outcome $\Exp(Y_x)$ and the average treatment effect (ATE) under the MSM, but also did not derive closed form expressions for these bounds. \cite{DornGuo2023} strengthened that result by deriving sharp bounds on $\Exp(Y_x)$, ATE, and the average effect of treatment on the treated (ATT) under the MSM, but again without closed form expressions. \cite{DornGuoKallus2024} subsequently refined that result by obtaining closed form expressions for sharp bounds on $\Exp(Y_x)$ and ATE under the MSM, in addition to developing the concept of double-validity and double sharpness. \cite{Tan2024} gives alternative sharp bound expressions for the ATE under the MSM. \cite{KallusZhou2018} studied policy learning under the MSM, which is related to identification of the average weighted welfare (what they call the ``policy value''), but they do not derive population bounds on this parameter.

This existing literature on the MSM largely focuses on average potential outcomes $\Exp(Y_x)$ or the ATE. Our paper provides the first sharp bounds on a wide variety of target parameters under the MSM, including the quantile treatment effect (QTE), the quantile treatment effect on the treated (QTT), the distribution of treatment effects (DTE), and the average weighted welfare (AWW). Moreover, for many of the parameters we study, our bounds are closed form. The existence of closed form expressions simplifies the construction of estimation and inference procedures, and also allows us to analytically examine how the bounds depend on the distribution of the observed data, and thus which features of the data lead results to be robust.

\cite{MastenPoirier2016,MastenPoirier2018} proposed an alternative relaxation of unconfoundedness called \emph{conditional $c$-dependence}, and derived closed form sharp bounds on a variety of treatment effect parameters under this relaxation, including $\Exp(Y_x)$, ATE, ATT, the QTE, and the DTE (in \citealt{MastenPoirier2019BF}). In the current paper we extend these identification results to a class of parameters that also includes the average weighted welfare (AWW), weighted average treatment effects, and to quantiles of the distribution of conditional average treatment effects (QCATE). Those earlier papers also restricted attention to continuous or binary outcomes whereas our new results apply for any distribution of the outcome, including mixed continous-discrete distributions. We also show how the conditional $c$-dependence relaxation is related to the marginal sensitivity model.

While our general class of relaxations includes several previously proposed relaxations of unconfoundedness, there are alternative relaxations where it is not yet clear if they can be accommodated by our class. This includes \cite{BonviniKennedy2022}, who derive closed-form sharp bounds on ATE under a mixture-style relaxation, and \cite{HuangPimentel2025}, who derive closed-form non-sharp bounds on ATT under an assumption about how much unobserved variables can affect the variance of odds ratios similar to those that arise in the MSM; in appendix A.6 they derive sharp but non-closed form bounds on the ATT under the same relaxation. It also includes \cite{DingVanderWeele2016} and \cite{VanderWeeleDing2017}, who derive closed-form non-sharp bounds on the causal relative risk under assumptions about relative risks involving latent confounders; \cite{Sjolander2024} derives closed-form sharp bounds under the same relaxation.

There are several related papers that provide general methods for deriving bounds. \cite{DornYap2024} show how to derive analytical expressions for sharp bounds on parameters that can be written as certain weighted averages of outcomes under a restriction on a generalized likelihood ratio. Like us, they show that their class of relaxations includes several previous relaxations (such as the MSM of \citealt{Tan2006} and conditional $c$-dependence of \citealt{MastenPoirier2018}). Whereas we only study relaxations of unconfoundedness, they also show how to use their results to do sensitivity analysis for instrumental variables and regression discontinuity designs. Their analysis of unconfoundedness, however, focuses on average potential outcomes and ATE, whereas we also study parameters like the QTE and DTE. A large literature in econometrics has studied how to derive identified sets for a variety of parameters under a variety of assumptions when all observed variables are discretely distributed; see, for example, \cite{Torgovitsky2018} and \cite{GuRussellStringham2024}, and the references therein. \cite{Duarte2024} uses similar ideas to numerically compute identified sets for a variety of sensitivity analyses when all variables are discretely distributed. \cite{RambachanCostonKennedy2023} provide a variety of general sensitivity analyses for binary outcomes. In contrast to this literature, we obtain analytical sharp bounds without any restriction on the distribution of the outcome variable, which allows the outcome to be continuously distributed or even mixed continuous-discretely distributed.

Several prior papers also discuss the relationship between various relaxations of unconfoundedness. \cite{MastenPoirier2023EJ} discuss mean independence, quantile independence assumptions, and a weaker version of quantile independence that they call $\mathcal{U}$-independence. Their focus is on interpreting relaxations in terms of treatment selection models, rather than providing identification results for a broad class of relaxations. \citet[section 7.2]{ZhaoSmallBhattacharya2019} discuss the relationship between the MSM and Rosenbaum's sensitivity model. For binary outcomes, \citet[appendix D]{RambachanCostonKennedy2023} relate their relaxation to the MSM, Rosenbaum's sensitivity model, conditional $c$-dependence, and an approach called Tukey's factorization.

\subsubsection*{Notation}

For random a variable $A$ and a random vector $B$, we let $F_{A \mid B}(a \mid b) \coloneqq \Prob(A \leq a \mid B=b)$ denote the conditional cdf. For $\tau \in (0,1)$, we let $Q_{A \mid B}(\tau \mid b) \coloneqq \inf\{a \in \R : F_{A \mid B}(a \mid b) \geq \tau\}$ denote the left-inverse of this cdf, that is, its conditional quantile function.

\section{Setup and Target Parameters}\label{subsec:setup}

We are interested in the causal impact of a binary treatment variable $X \in \{0,1\}$ on an outcome variable $Y$. Let $(Y_1,Y_0)$ be potential outcomes under treatment and no treatment respectively. Denote the realized outcome by 
\[
	Y = X Y_1 + (1-X) Y_0.
\]
Let $W$ be a vector of covariates with support $\supp(W) \subseteq \R^{d_W}$. We use $p_{x|w}$ to denote $\P(X=x \mid W=w)$. $p_{1|w}$ thus denotes the propensity score. We assume realizations of $(Y,X,W)$ are observed by the researcher. Our identification analysis abstracts from sampling uncertainty and assumes the joint distribution of $(Y,X,W)$ is known.

Throughout the paper we maintain the following assumption, which is a strict overlap assumption. It is also sometimes called strict positivity.

\begin{assumption}\label{assn:overlap}
There exists $\epsilon > 0$ such that $p_{1|w} \in [\epsilon,1-\epsilon]$ for all $w\in\supp(W)$.
\end{assumption}

With observational data, a commonly imposed assumption is unconfoundedness. It is also called selection on observables, ignorability, or conditional independence. This assumption states that potential outcomes are independent of treatment given covariates $W$. This conditional independence is either imposed jointly on both potential outcomes, or on each potential outcome separately:
\begin{equation}\label{eq:unconf}
	Y_x \indep X \mid W \text{ for } x \in \{0,1\} \qquad \text{ or } \qquad (Y_1,Y_0) \indep X \mid W.
\end{equation}
Under Assumption \ref{assn:overlap} and either version of unconfoundedness given in equation \eqref{eq:unconf}, it is well known that the pair of distribution functions $(F_{Y_1 \mid W},F_{Y_0 \mid W})$ is point-identified via
\[
	F_{Y_x \mid W}(y \mid w) = \P(Y \leq y \mid X=x, W=w)
\]
for $x \in \{0,1\}$. This point-identification implies that many parameters that summarize aspects of the distribution of $(Y_1,Y_0,X,W)$ are point-identified. Specifically, parameters that can be expressed as functionals of $F_{Y_1 \mid W}$, $F_{Y_0 \mid W}$, and the known distribution of observables $(Y,X,W)$, are point-identified. For example, we can point-identify the Conditional Average Treatment Effect (CATE) as defined by $\text{CATE}(w) \coloneqq \E[Y_1 - Y_0 \mid W=w]$ because it can be written as 
\begin{align*}
	\E[Y_1 - Y_0 \mid W=w] &= \int y_1 \, dF_{Y_1 \mid W}(y_1 \mid w) - \int y_0 \, dF_{Y_0 \mid W}(y_0 \mid w).
\end{align*}
However, parameters that depend on other aspects of the distribution of potential outcomes may only be partially identified. For example, consider the distribution function of $Y_1 - Y_0$, the unit-level treatment effect:
\begin{align*}
	F_{Y_1 - Y_0}(z) &= \int \1(y_1 - y_0 \leq z) \, dF_{Y_1,Y_0}(y_1,y_0).
\end{align*}
This parameter depends on the structure of the dependence between the two potential outcomes, which is unknown from either version of the unconfoundedness assumption. As discussed in \cite{FanPark2010}, $F_{Y_1 - Y_0}(z)$ is partially identified and sharp bounds can be recovered in terms of the distribution of $(Y,X,W)$.

To help classify treatment effect parameters, consider the decomposition
\begin{align*}
	F_{Y_1,Y_0 \mid X,W}(y_1,y_0 \mid x,w)
	&= C_{1,0 \mid X,W} \Big( F_{Y_1 \mid X,W}(y_1 \mid x,w),F_{Y_0 \mid X,W}(y_0 \mid x,w) \; \big| \; x,w \Big),
\end{align*}
where $C_{1,0 \mid X,W}(\cdot,\cdot \mid x,w)$ is a copula that characterizes the dependence between $Y_1$ and $Y_0$ conditional on $(X,W) = (x,w)$.  By Sklar's Theorem (\citealt{Sklar1959}), such a copula exists. Given that $F_{Y,X,W}$ is known, we consider treatment effect parameters that can be written as a function of $(F_{Y_1 \mid X,W},F_{Y_0 \mid X,W},C_{1,0 \mid X,W},F_{Y,X,W})$. We denote these parameters through the functional 
\begin{align}\label{eq:TE_param_general}
	\theta(F_{Y_1 \mid X,W},F_{Y_0 \mid X,W},C_{1,0 \mid X,W},F_{Y,X,W}),
\end{align}
and we give several examples below. The dependence of $\theta$ on some of its arguments is suppressed if the functional is constant with respect to them. We will sometimes denote these parameters as functionals of $(F_{Y_1 \mid W},F_{Y_0 \mid W},F_{Y,X,W})$ rather than $(F_{Y_1 \mid X,W},F_{Y_0 \mid X,W},F_{Y,X,W})$. These two formulations are equivalent due to the relationship
\begin{align}\label{eq:relationship_F_YXW_FYW}
	F_{Y_x \mid W}(y \mid w) &= F_{Y \mid X,W}(y \mid x,w)p_{x|w} + F_{Y_x \mid X,W}(y \mid 1-x,w)p_{1-x|w}
\end{align}
which holds for all $(y,x,w) \in \R \times \{0,1\}\times \supp(W)$.

Next we consider eleven example target parameters. Our results give sharp bounds for all eleven parameters under our general class of assumptions, including the marginal sensitivity model as a special case. For many of these parameters we also obtain analytical, closed form expressions for the bound functions.

\begin{example}[Conditional Quantile Treatment Effect]\label{ex:CQTE}
	For quantile index $\tau \in (0,1)$ and covariate value $w \in \supp(W)$, the conditional quantile treatment effect (CQTE) can be written as
	\begin{align*}
		\text{CQTE}(\tau \mid w) &\coloneqq Q_{Y_1 \mid W}(\tau \mid w) - Q_{Y_0 \mid W}(\tau \mid w) = \theta_\text{CQTE}(F_{Y_1 \mid W},F_{Y_0 \mid W};\tau,w)
	\end{align*}
	where $\theta_\text{CQTE}(F_{Y_1 \mid W},F_{Y_0 \mid W};\tau,w) \coloneqq F_{Y_1 \mid W}^{-1}(\tau \mid w) - F_{Y_0 \mid W}^{-1}(\tau \mid w)$ where $F_{Y_x \mid W}^{-1}(\tau \mid w) = \inf\{y \in \R:F_{Y_x \mid W}(y \mid w) \geq \tau\}$ is the left-inverse of $F_{Y_x \mid W}(\cdot \mid w)$, or its conditional quantile function.
\end{example}

\begin{example}[Conditional Average Treatment Effect]\label{ex:CATE}
	For covariate value $w \in \supp(W)$, the conditional average treatment effect (CATE) can be written as
	\begin{align*}
		\text{CATE}(w) &\coloneqq \E[Y_1 - Y_0 \mid W = w] = \theta_\text{CATE}(F_{Y_1 \mid W},F_{Y_0 \mid W};w)
	\end{align*}
	where $\theta_\text{CATE}(F_{Y_1 \mid W},F_{Y_0 \mid W};w) \coloneqq \int y_1 \, dF_{Y_1 \mid W}(y_1 \mid w) - \int y_0 \, dF_{Y_0 \mid W}(y_0 \mid w)$.
\end{example}

\begin{example}[Average Treatment Effect]\label{ex:ATE}
Denote the average treatment effect (ATE) as
	\begin{align*}
		\text{ATE} &\coloneqq \E[Y_1 - Y_0] = \theta_\text{ATE}(F_{Y_1 \mid W},F_{Y_0 \mid W},F_W)
	\end{align*}
	where $\theta_\text{ATE}(F_{Y_1 \mid W},F_{Y_0 \mid W},F_W) \coloneqq \int\left(\int y_1 \, dF_{Y_1 \mid W}(y_1 \mid w) - \int y_0 \, dF_{Y_0 \mid W}(y_0 \mid w)\right) \, dF_W(w)$. We can also consider weighted average treatment effects of the kind
	\begin{align*}
		\text{WATE}(\omega) &\coloneqq \E[\omega(W)(Y_1 - Y_0)] 
	\end{align*}
	for an identified function $\omega:\supp(W) \to \R_{\geq 0}$. The ATE is a special case where $\omega(w) = 1$.
\end{example}

\begin{example}[Average Treatment Effect on the Treated]\label{ex:ATT}
Denote the average treatment effect on the treated (ATT) as
	\begin{align*}
		\text{ATT} &\coloneqq\E[Y_1 - Y_0 \mid X=1] = \theta_\text{ATT}(F_{Y_0 \mid X,W},F_{Y,X,W})
	\end{align*}
	where $\theta_\text{ATT}(F_{Y_0 \mid X,W},F_{Y,X,W}) \coloneqq \E[Y \mid X=1] - \int \int y_0 \, dF_{Y_0 \mid X,W}(y_0 \mid 1,w) \, dF_{W \mid X}(w \mid 1)$. 
\end{example}

\begin{example}[Quantile Treatment Effect]\label{ex:QTE}
	For $\tau \in (0,1)$, denote the quantile treatment effect (QTE) as
	\begin{align*}
		\text{QTE}(\tau) &\coloneqq Q_{Y_1}(\tau) - Q_{Y_0}(\tau) = \theta_\text{QTE}(F_{Y_1 \mid W},F_{Y_0 \mid W},F_W;\tau)
	\end{align*}
	where $\theta_\text{QTE}(F_{Y_1 \mid W},F_{Y_0 \mid W},F_W;\tau) \coloneqq F_{Y_1}^{-1}(\tau) - F_{Y_0}^{-1}(\tau)$ and $F_{Y_x}(\cdot) = \int F_{Y_x \mid W}(\cdot \mid w) \, dF_W(w)$ for $x \in \{0,1\}$.
\end{example}

\begin{example}[Quantile Treatment Effect on the Treated]\label{ex:QTT}
For $\tau \in (0,1)$, denote the quantile treatment effect on the treated (QTT) as
\begin{align*}
	\text{QTT}(\tau)
	&\coloneqq Q_{Y_1 \mid X}(\tau \mid 1) - Q_{Y_0 \mid X}(\tau \mid 1) 
	= \theta_\text{QTT}(F_{Y_0 \mid X,W},F_{Y,X,W};\tau)
\end{align*}
where $\theta_\text{QTT}(F_{Y_0 \mid X,W},F_{Y,X,W}; \tau) \coloneqq Q_{Y \mid X}(\tau \mid 1) - F_{Y_0 \mid X}^{-1}(\tau \mid 1)$ with
\[
	F_{Y_0 \mid X}(\cdot \mid 1) = \int F_{Y_0 \mid X,W}(\cdot \mid 1,w) \, dF_{W \mid X}(w \mid 1).
\]
\end{example}

\begin{example}[CATE Distribution]\label{ex:CATE_dist}
	For $\tau \in (0,1)$, denote the quantile of the CATE (QCATE) as
	\begin{align*}
		\text{QCATE}(\tau) &\coloneqq F^{-1}_{\text{CATE}(W)}(\tau) = \theta_\text{QCATE}(F_{Y_1 \mid W},F_{Y_0 \mid W},F_W;\tau)
	\end{align*}
	where $\theta_\text{QCATE}(F_{Y_1 \mid W},F_{Y_0 \mid W},F_W;\tau) \coloneqq F^{-1}_{\text{CATE}(W)}(\tau)$ with
\[
	F_{\text{CATE}(W)}(z) = \int \1(\theta_\text{CATE}(F_{Y_1 \mid W},F_{Y_0 \mid W};w) \leq z) \, dF_W(w).
\]
This parameter is motivated by the sorted effects studied in \cite{ChernozhukovFernandez-ValLuo2018}.
\end{example}

\begin{example}[Average Weighted Welfare]\label{ex:AWW}
	For a weight (or assignment) function $\omega: \supp(W) \to [0,1]$, denote the average weighted welfare (AWW) as
	\begin{align*}
		\text{AWW}(\omega) &\coloneqq \E[\omega(W) Y_1 + (1 - \omega(W)) Y_0] = \theta_\text{AWW}(F_{Y_1 \mid W},F_{Y_0 \mid W}, F_W;\omega)
	\end{align*}
where 
\[
\centering
\scalebox{0.9}{$\displaystyle\
	\theta_\text{AWW}(F_{Y_1 \mid W},F_{Y_0 \mid W}, F_W;\omega) \coloneqq \int\left( \omega(w) \int y_1 \, dF_{Y_1 \mid W}(y_1 \mid w) + (1-\omega(w)) \int y_0 \, dF_{Y_0 \mid W}(y_0 \mid w)\right) \, dF_W(w).
$}
\]
\cite{KallusZhou2018} called the AWW the \emph{policy value}.
\end{example}

\begin{example}[Joint Distribution Function]\label{ex:Joint_dist}
	For $(y_1,y_0) \in \R^2$, the joint cumulative distribution function (cdf) of $(Y_1,Y_0)$ is
	\begin{align*}
		F_{Y_1,Y_0}(y_1,y_0) &\coloneqq \P(Y_1 \leq y_1, Y_0 \leq y_0) = \theta_\text{CDF}(F_{Y_1 \mid X,W},F_{Y_0 \mid X,W},C_{1,0 \mid X,W},F_{Y,X,W};y_1,y_0)
	\end{align*}
	where 
	\begin{align*}
	&\theta_\text{CDF}(F_{Y_1 \mid X,W},F_{Y_0 \mid X,W},C_{1,0 \mid X,W},F_{Y,X,W};y_1,y_0)\coloneqq\\ &\quad \sum_{x \in \{0,1\}}\int C_{1,0 \mid X,W}(F_{Y_1 \mid X,W}(y_1 \mid x, w),F_{Y_0 \mid X,W}(y_0 \mid x,w) \mid x,w)p_{x|w} \, dF_W(w).
	\end{align*}
\end{example}

\begin{example}[Distribution of Treatment Effects]\label{ex:DTE}
For $z \in \R$, the cumulative distribution function for the unit level treatment effect $Y_1 - Y_0$ (called the DTE) is
\begin{align*}
		\text{DTE}(z)
		\coloneqq F_{Y_1 - Y_0}(z) = \P(Y_1 - Y_0 \leq z) = \theta_\text{DTE}(F_{Y_1 \mid X,W},F_{Y_0 \mid X,W},C_{1,0 \mid X,W},F_{Y,X,W};z)
	\end{align*}
	where 
	\begin{align*}
	&\theta_\text{DTE}(F_{Y_1 \mid X,W},F_{Y_0 \mid X,W},C_{1,0 \mid X,W},F_{Y,X,W};z) \coloneqq\\
	& \quad \sum_{x \in \{0,1\}} \int \left( \int_{\{y_1 - y_0 \leq z\}} dC_{1,0 \mid X,W}(F_{Y_1 \mid X,W}(y_1 \mid x,w),F_{Y_0 \mid X,W}(y_0 \mid x,w) \mid x,w)\right) p_{x|w} \, dF_W(w).
	\end{align*}
\end{example}

\begin{example}[Quantiles of Treatment Effects]\label{ex:QDTE}
	For $\tau \in (0,1)$, the quantiles of the distribution function of treatment effect (QDTE) $Y_1 - Y_0$ is
	\begin{align*}
		Q_{Y_1 - Y_0}(\tau) &\coloneqq \inf\{z \in \R: F_{Y_1 - Y_0}(z) \geq \tau\} = \theta_\text{QDTE}(F_{Y_1 \mid X,W},F_{Y_0 \mid X,W},C_{1,0 \mid X,W},F_{Y,X,W};\tau)
	\end{align*}
	where 
\begin{align*}
	&\theta_\text{QDTE}(F_{Y_1 \mid X,W},F_{Y_0 \mid X,W},C_{1,0 \mid X,W},F_{Y,X,W};\tau) \\
	&\qquad \coloneqq \inf\{z \in \R: \theta_\text{DTE}(F_{Y_1 \mid X,W},F_{Y_0 \mid X,W},C_{1,0 \mid X,W},F_{Y,X,W};z) \geq \tau\}.
\end{align*}
\end{example}

The parameters in examples \ref{ex:CQTE}--\ref{ex:AWW} only depend on the distribution of potential outcomes through their marginal distributions given $(X,W)$, while the parameters in examples \ref{ex:Joint_dist}--\ref{ex:QDTE} also depend on their copulas. Under overlap and unconfoundedness, the parameters in \ref{ex:CQTE}--\ref{ex:AWW} are all point-identified. The parameters \ref{ex:Joint_dist}--\ref{ex:QDTE} are partially identified under overlap and unconfoundedness since the conditional copulas $C_{1,0 \mid X,W}$ are not identified from the joint distribution of $(Y,X,W)$. In other words, these parameters depend on the type of dependence between $Y_1$ and $Y_0$, and unconfoundedness does not reveal any information about this dependence. For example, \cite{FanPark2010} show the identified set for $F_{Y_1 - Y_0}(z)$, the DTE, is an interval and they provide a closed-form expression for its lower and upper bounds.

However, if unconfoundedness fails, all these parameters will be partially identified. The identified set for parameters that are partially identified under unconfoundedness becomes larger under failures of unconfoundedness. 

\section{Relaxing Unconfoundedness}\label{subsec:generalClassOfRelax}

We now consider relaxations of the unconfoundedness assumption. We will consider two related, general relaxations of unconfoundedness that encompass several disparate relaxations that were studied in the literature. We begin by considering a class of assumptions on the probabilities of treatment when conditioning on covariates $W$ and one of the potential outcomes.

\begin{assumption}[Marginal $c$-dependence]\label{assn:marginal_cdep}
	Let $(\lc(w,\eta),\uc(w,\eta))$ satisfy $0 < \lc(w,\eta) \leq p_{1|w} \leq \uc(w,\eta) < 1$ for all $w \in \supp(W)$. The potential outcomes satisfy \textit{marginal $c$-dependence} if, for $x \in \{0,1\}$,  
\begin{align*}	
	p_x(Y_x,w) \coloneqq \P(X=1 \mid Y_x,W=w) \in [\lc(w,\eta),\uc(w,\eta)]
\end{align*}	
almost surely conditional on $W=w$ for all $w\in\supp(W)$.
\end{assumption}

This assumption restricts the manner in which potential outcomes affect the treatment probability $p_x(y,w)$, which we call a \emph{latent propensity score}. We will use $c$-dependence assumptions to conduct sensitivity analyses for unconfoundedness. Here the sensitivity parameters are $\lc(w,\eta)$  and $\uc(w,\eta)$, which we refer to as bound functions. Like the notation in \cite{RambachanCostonKennedy2023}, we let $\eta$ be a possibly infinite-dimensional nuisance parameter that is point-identified from the distribution $F_{Y,X,W}$. The bound functions are also allowed to depend on the covariate value $w$. In principle, we can also allow the bounds to differ across $x \in \{ 0,1 \}$, but we do not include an $x$ subscript for simplicity. The specification of $\lc(w,\eta)$ and $\uc(w,\eta)$ is left implicit, which allows them to be functions of low-dimensional or scalar sensitivity parameters. For example, the marginal sensitivity model of \cite{Tan2006}, which depends on a single sensitivity parameter, can be viewed as a special case of marginal $c$-dependence. We show this in section \ref{sec:MSM}.

We can also see that setting $\lc(w,\eta) = \uc(w,\eta) = p_{1|w}$ yields unconfoundedness as a special case of marginal $c$-dependence, while letting $(\lc(w,\eta),\uc(w,\eta))$ approach $(0,1)$ implies that no restrictions on the dependence between $X$ and $Y_x$ (given covariates) are imposed. Note that we restrict the propensity score $p_{1|w}$ to lie within our specified bounds for $p_x(Y_x,w)$. If the propensity score were outside these bounds, then the assumption would be misspecified because, by the law of iterated expectations, $p_{1|w} = \E[\P(X=1 \mid Y_x,W = w) \mid W = w] \in [\lc(w,\eta),\uc(w,\eta)]$.
 
We also consider a closely related class of assumptions that restricts the probability of treatment given both potential outcomes. 
 
\begin{assumption}[Joint $c$-dependence]\label{assn:joint_cdep}
	Let $(\lc(w,\eta),\uc(w,\eta))$ satisfy $0 < \lc(w,\eta) \leq p_{1|w} \leq \uc(w,\eta) < 1$ for all $w \in \supp(W)$. The potential outcomes satisfy \textit{joint $c$-dependence} if  
\begin{align*}	
	p(Y_1,Y_0,w) \coloneqq \P(X=1 \mid Y_1,Y_0,W=w) \in [\lc(w,\eta),\uc(w,\eta)]
\end{align*}	
almost surely conditional on $W=w$ for all $w\in\supp(W)$.
\end{assumption} 

Joint $c$-dependence with bound functions $\lc(w,\eta)$ and $\uc(w,\eta)$ implies marginal $c$-dependence with the same bound functions. This is due to the law of iterated expectations. 

\begin{lemma}\label{lemma:joint_implies_marginal}
	Let Assumption \ref{assn:joint_cdep} hold for $(\lc(w,\eta),\uc(w,\eta))$. Then, Assumption \ref{assn:marginal_cdep} holds for $(\lc(w,\eta),\uc(w,\eta))$.
\end{lemma}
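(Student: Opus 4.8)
The plan is to reduce everything to the tower property of conditional expectations. Fix $x \in \{0,1\}$ and a covariate value $w \in \supp(W)$. The object to control is $p_x(Y_x,w) = \P(X=1 \mid Y_x, W=w) = \E[\indicator(X=1) \mid Y_x, W=w]$. Since the $\sigma$-algebra generated by $(Y_x,W)$ is coarser than the one generated by $(Y_1,Y_0,W)$, the tower property gives
\[
	\E[\indicator(X=1) \mid Y_x, W=w] = \E\big[\, \E[\indicator(X=1) \mid Y_1,Y_0,W=w] \, \big| \, Y_x, W=w \big] = \E\big[\, p(Y_1,Y_0,w) \, \big| \, Y_x, W=w \big].
\]
So $p_x(Y_x,w)$ is a conditional expectation (given $Y_x,W=w$) of the random variable $p(Y_1,Y_0,w)$.

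The second step is to push the bound through the conditional expectation. Assumption \ref{assn:joint_cdep} says that, conditional on $W=w$, we have $\lc(w,\eta) \leq p(Y_1,Y_0,w) \leq \uc(w,\eta)$ almost surely. Because $\lc(w,\eta)$ and $\uc(w,\eta)$ are (nonrandom) constants once $w$ is fixed, monotonicity of conditional expectation yields
\[
	\lc(w,\eta) = \E[\lc(w,\eta) \mid Y_x, W=w] \leq \E[p(Y_1,Y_0,w) \mid Y_x, W=w] \leq \E[\uc(w,\eta) \mid Y_x, W=w] = \uc(w,\eta)
\]
almost surely conditional on $W=w$. Combining with the first display gives $p_x(Y_x,w) \in [\lc(w,\eta),\uc(w,\eta)]$ almost surely conditional on $W=w$, which is exactly the conclusion of Assumption \ref{assn:marginal_cdep}. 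Since $w \in \supp(W)$ was arbitrary and the bound functions $(\lc,\uc)$ are the same throughout, this completes the argument; the overlap-type inequalities $0 < \lc(w,\eta) \leq p_{1|w} \leq \uc(w,\eta) < 1$ required by Assumption \ref{assn:marginal_cdep} hold because they are literally part of the hypothesis of Assumption \ref{assn:joint_cdep}.

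I do not expect a serious obstacle here; the only place demanding care is the measure-theoretic bookkeeping in the first step — writing $\P(X=1 \mid Y_x, W=w)$ as an iterated conditional expectation and justifying that the inner conditioning can be taken on $(Y_1,Y_0,W)$ rather than just $(Y_x,W)$ — together with being precise about the phrase ``almost surely conditional on $W=w$,'' i.e. that the null set can depend on $w$ but the bounds do not. Everything else is monotonicity of conditional expectation applied to constant bounds.
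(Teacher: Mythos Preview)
Your proposal is correct and follows essentially the same approach as the paper: fix $x$ and $w$, apply the law of iterated expectations to write $\E[X \mid Y_x, W=w] = \E[\E[X \mid Y_1,Y_0,W=w] \mid Y_x, W=w]$, and then observe that the inner expectation lies in $[\lc(w,\eta),\uc(w,\eta)]$ almost surely so the outer one does too. The paper's proof is a two-sentence version of your argument; your additional remarks on the measure-theoretic bookkeeping and the fact that the overlap inequalities carry over verbatim are accurate but not needed for the paper's level of detail.
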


\medskip

We next show that several unconfoundedness relaxations from recent related literature can be viewed as special cases of either marginal or joint $c$-dependence. 
 
\subsection{The Marginal Sensitivity Model}\label{sec:MSM}

\cite{Tan2006} proposed the Marginal Sensitivity Model (MSM), which restricts the odds ratio between propensity scores and treatment probabilities that also condition on the potential outcome $Y_x$, for $x = 0,1$. For $x \in \{0,1\}$, let
\begin{align*}
	R_x(Y_x,W) &\coloneqq \frac{\Prob(X=1\mid Y_x ,W)}{\Prob(X=0\mid Y_x ,W)} \bigg/\frac{\P(X=1 \mid W)}{\P(X=0 \mid W)} 
\end{align*} 
denote this odds ratio. When $Y_x$ is continuously distributed with respect to the Lebesgue measure, this ratio can also be expressed as ratios of conditional densities of $Y_x \mid X=1,W$ and $Y_x \mid X=0,W$.

\cite{Tan2006}'s MSM posits known bounds for these odds ratios.
\begin{definition}[Marginal Sensitivity Model (MSM)]
\label{def:MSM}
Let $\Lambda \in [1,+\infty)$ be known. The potential outcomes satisfy the \textit{Marginal Sensitivity Model} if 
\begin{align*}
	R_x(Y_x,w) \in \left[\Lambda^{-1},\Lambda\right] \text{ for } x \in \{0,1\}
\end{align*}
almost surely conditional on $W = w$ for all $w \in \supp(W)$.
\end{definition}

$\Lambda$ is a scalar sensitivity parameter. Setting $\Lambda = 1$ is equivalent to assuming unconfoundedness, and increasing $\Lambda$ allows for more dependence of latent propensity scores on potential outcomes. Variants of \cite{Tan2006}'s MSM whose odds ratios condition on both potential outcomes have also been considered. Similarly, these odds ratios may instead condition on an abstract confounder $U$ rather than potential outcomes. See, for example, \cite{DornGuo2023} and \cite{DornGuoKallus2024} for recent examples of these two variants. To distinguish it from the case where one conditions on the potential outcomes one at a time, we call the version that conditions on both potential outcomes the \textit{Joint} Sensitivity Model (JSM).

\begin{definition}[Joint Sensitivity Model (JSM)]
\label{def:JSM}
Let $\Lambda \in [1,+\infty)$ be known. The potential outcomes satisfy the \textit{Joint Sensitivity Model}  if 
\begin{align*}
	 R(Y_1,Y_0,w) \in \left[\Lambda^{-1},\Lambda\right] \text{ for } x \in \{0,1\}
\end{align*}
almost surely conditional on $W = w$ for all $w \in \supp(W)$, where
\begin{align*}
	 R(Y_1,Y_0,W) &\coloneqq \frac{\Prob(X=1\mid Y_1, Y_0, W)}{\Prob(X=0\mid Y_1,Y_0,W)} \bigg/ \frac{\P(X=1 \mid W)}{\P(X=0 \mid W)}.
\end{align*} 
\end{definition}
 
We now state generalizations of the MSM and JSM that allow their odds ratios to have arbitrary bounds, as opposed to bounds that have product equal to 1. We also allow their bounds to depend on covariates or nuisance parameters. We will continue distinguishing between \textit{marginal} sensitivity models, which condition on one potential outcome at a time, and \textit{joint} sensitivity models, which condition on both potential outcomes. 

\begin{definition}[Generalized Sensitivity Models] \label{def:GMSM}
Let $\lL(w,\eta) \in (0,1]$ and $\uL(w,\eta) \in [1,+\infty)$ for all $w \in \supp(W)$ where $\lL(w,\eta)$ and $\uL(w,\eta)$ are known. 
The potential outcomes satisfy the \textit{Generalized Marginal Sensitivity Model} (GMSM) if 
\begin{align*}
	R_x(Y_x,w) \in \left[\lL(w,\eta), \uL(w,\eta)\right] \text{ for } x \in \{0,1\}
\end{align*}
almost surely conditional on $W = w$ for all $w \in \supp(W)$.
They satisfy the \textit{Generalized Joint Sensitivity Model} (GJSM) if 
\begin{align*}
	R(Y_1,Y_0,w) \in \left[\lL(w,\eta), \uL(w,\eta)\right]
\end{align*}
almost surely conditional on $W = w$ for all $w \in \supp(W)$.
\end{definition}

\medskip

We can see that the MSM is a special case of the GMSM by setting $[\lL(w,\eta),\uL(w,\eta)] = [\Lambda^{-1},\Lambda]$. Similarly, the JSM is a special case of the GJSM. 

The GMSM is equivalent to marginal $c$-dependence because, for each bound function pair $[\lc(w,\eta),\uc(w,\eta)]$ under marginal $c$-dependence, there exists exactly one corresponding bound function pair $[\lL(w,\eta), \uL(w,\eta)]$ under the GMSM. The same link exists between joint $c$-dependence and the GJSM. We show this in the following proposition.

\begin{proposition}[Equivalence of Sensitivity Models]\label{prop:c-dep_GMSM_equivalence}\hfill
\begin{enumerate}
	\item Let marginal (joint) $c$-dependence hold with bound functions $[\lc(w,\eta),\uc(w,\eta)]$. Then the GMSM (GJSM) holds with bound functions
\begin{align}\label{eq:c-dep_GMSM_equivalence}
	\left[\lL(w,\eta),\uL(w,\eta)\right] &= \left[\frac{\lc(w,\eta)}{1-\lc(w,\eta)}\frac{p_{0|w}}{p_{1|w}},\frac{\uc(w,\eta)}{1-\uc(w,\eta)}\frac{p_{0|w}}{p_{1|w}}\right].
\end{align}\label{eq:c-dep_GMSM_equivalence2}
	\item Let the GMSM (GJSM) hold with bound functions $[\lL(w,\eta),\uL(w,\eta)]$. Then marginal (joint) $c$-dependence holds with bound functions
	\begin{align}
	[\lc(w,\eta),\uc(w,\eta)] &= \left[\frac{p_{1|w}\lL(w,\eta)}{p_{0|w} + p_{1|w}\lL(w,\eta)},\frac{p_{1|w}\uL(w,\eta)}{p_{0|w} + p_{1|w}\uL(w,\eta)}\right].
\end{align}
\end{enumerate}
\end{proposition}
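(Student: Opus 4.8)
The plan is to reduce both parts to a single scalar change-of-variables identity. Fix $w \in \supp(W)$ and write $p_x(Y_x,w) \coloneqq \P(X=1 \mid Y_x, W=w)$, so that $\P(X=0 \mid Y_x,W=w) = 1-p_x(Y_x,w)$. Substituting into the definition of $R_x$ gives
\[
R_x(Y_x,w) = \frac{p_x(Y_x,w)}{1-p_x(Y_x,w)} \cdot \frac{p_{0|w}}{p_{1|w}} = \phi_w\bigl(p_x(Y_x,w)\bigr), \qquad \phi_w(t) \coloneqq \frac{t}{1-t}\cdot\frac{p_{0|w}}{p_{1|w}}.
\]
The map $\phi_w : (0,1) \to (0,\infty)$ is a strictly increasing bijection with inverse $\phi_w^{-1}(R) = \frac{p_{1|w} R}{p_{0|w} + p_{1|w}R}$, and it satisfies the normalization $\phi_w(p_{1|w}) = 1$ (equivalently $\phi_w^{-1}(1) = p_{1|w}$). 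The same computation with $p(Y_1,Y_0,w) \coloneqq \P(X=1 \mid Y_1,Y_0,W=w)$ in place of $p_x(Y_x,w)$ yields $R(Y_1,Y_0,w) = \phi_w\bigl(p(Y_1,Y_0,w)\bigr)$. All of this is just algebra, and strict overlap (Assumption \ref{assn:overlap}) together with the assumed bounds guarantees $\phi_w$ is only ever evaluated on $(0,1)$.

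For part 1, suppose marginal $c$-dependence holds. Then, conditional on $W=w$, we have $p_x(Y_x,w) \in [\lc(w,\eta),\uc(w,\eta)] \subset (0,1)$ almost surely, where the inclusion in $(0,1)$ uses $0 < \lc(w,\eta) \le \uc(w,\eta) < 1$. Applying the increasing map $\phi_w$ to this almost sure inequality and invoking the displayed identity gives $R_x(Y_x,w) \in [\phi_w(\lc(w,\eta)),\phi_w(\uc(w,\eta))]$ almost surely conditional on $W=w$, and $[\phi_w(\lc(w,\eta)),\phi_w(\uc(w,\eta))]$ is precisely the interval in \eqref{eq:c-dep_GMSM_equivalence}. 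To conclude that the GMSM of Definition \ref{def:GMSM} holds, I would check that these new bound functions are admissible: $\lL(w,\eta) \in (0,1]$ and $\uL(w,\eta) \in [1,+\infty)$ follow from $\lc(w,\eta) \le p_{1|w} \le \uc(w,\eta)$, monotonicity of $\phi_w$, and $\phi_w(p_{1|w})=1$; and they remain known functions of the point-identified nuisance $\eta$ because $p_{x|w}$ is identified from $F_{Y,X,W}$. The joint case is verbatim with $p(Y_1,Y_0,w)$ and $R$ replacing $p_x(Y_x,w)$ and $R_x$.

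For part 2, suppose the GMSM holds. Then $R_x(Y_x,w) \in [\lL(w,\eta),\uL(w,\eta)] \subset (0,\infty)$ almost surely conditional on $W=w$; applying the increasing bijection $\phi_w^{-1}$ and using $p_x(Y_x,w) = \phi_w^{-1}(R_x(Y_x,w))$ gives $p_x(Y_x,w) \in [\phi_w^{-1}(\lL(w,\eta)),\phi_w^{-1}(\uL(w,\eta))]$, which is the stated interval. Its endpoints lie strictly in $(0,1)$ since $\lL(w,\eta),\uL(w,\eta) \in (0,\infty)$, and they satisfy $\phi_w^{-1}(\lL(w,\eta)) \le p_{1|w} \le \phi_w^{-1}(\uL(w,\eta))$ because $\lL(w,\eta) \le 1 \le \uL(w,\eta)$ and $\phi_w^{-1}(1)=p_{1|w}$, so the hypotheses on the bound functions in Assumption \ref{assn:marginal_cdep} (resp.\ Assumption \ref{assn:joint_cdep}) are met. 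Again the joint statement is identical after the obvious substitutions.

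I do not anticipate a genuine obstacle here: the substance is the bijection $\phi_w$ between latent propensity scores and odds ratios. The only points that need care are verifying the inverse formula and the normalization $\phi_w(p_{1|w})=1$ — this normalization is exactly what makes the endpoint and ``known/identified'' conditions of the two formulations correspond — and confirming that $\phi_w$ is applied only on $(0,1)$, which is where strict overlap and the assumed bounds are used so that the monotone-transformation argument is valid for the almost sure statements conditional on $W=w$.
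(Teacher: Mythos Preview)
Your proposal is correct and follows essentially the same approach as the paper: both arguments rest on the fact that $t \mapsto \tfrac{t}{1-t}\cdot\tfrac{p_{0|w}}{p_{1|w}}$ is a strictly increasing bijection from $(0,1)$ to $(0,\infty)$ sending $p_{1|w}$ to $1$, and both verify that the transformed bound functions satisfy the admissibility constraints of the target definition. Your presentation is slightly more unified in that you name the bijection $\phi_w$ and its inverse once and reuse them, whereas the paper writes out the monotonicity and range checks separately in each direction, but this is a stylistic rather than substantive difference.
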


\medskip

This proposition shows that the marginal $c$-dependence is equivalent to the generalized marginal sensitivity model. Similarly, joint $c$-dependence is equivalent to the generalized marginal sensitivity model.
 
\subsection{Conditional $c$-dependence}\label{sec:cdep}

\cite{MastenPoirier2018} studied a relaxation of unconfoundedness they called \textit{conditional $c$-dependence}, which assumed symmetric bounds on the latent propensity score.
\begin{definition}[Conditional $c$-dependence]
Let $c \in [0,1]$ be a known scalar sensitivity parameter. The potential outcomes satisfy \textit{conditional} $c$\textit{-dependence} if 
\begin{align*}
	p_x(Y_x,w) &\coloneqq \P(X=1 \mid Y_x,W = w) \in [p_{1|w} - c, p_{1|w} + c]
\end{align*}
almost surely conditional on $W=w$ for all $w \in \supp(W)$.
\end{definition}

\medskip

This is a special case of marginal $c$-dependence where the bounds equal
\begin{align*}
	\lc(w,\eta) = p_{1|w} - c \qquad \text{ and } \qquad \uc(w,\eta) = p_{1|w} + c.
\end{align*}
Here the nuisance parameter is $p_{1|(\cdot)}$, the propensity score function. Unconfoundedness is obtained by setting $c = 0$, while the no-assumption bounds are obtained for $c$'s equal to or larger than $\sup_{w \in \supp(W)} \max\{p_{1|w},p_{0|w}\}$. \cite{MastenPoirier2018} provided closed-form expressions for sharp bounds on the CQTE, CATE, ATE, QTE, and ATT when potential outcomes are continuously distributed or binary. \cite{MastenPoirierZhang2020} describe flexible parametric estimators of these bounds and provide nonstandard inference methods.

\section{General Identification Results}\label{sec:generalIdentificationResults}

Next we derive sharp bounds on a large class of target parameters under the relaxations described in section \ref{subsec:generalClassOfRelax}. We will study a class of parameters that includes all eleven examples in section \ref{subsec:setup} as special cases. Specifically, we will compute these parameters' sharp bounds, or their identified set, under marginal and joint $c$-dependence, which are equivalent to the GMSM and GJSM, respectively. %

\subsection{Bounds on Marginal Distributions}\label{subsec:cdf_bounds}

Before studying our general class of treatment effect parameters, we first consider bounds on the distribution functions of each potential outcome, given covariates. These cdfs are building blocks for these parameters and, as we will see,  analytical bounds on these cdfs will directly map into analytical bounds on these parameters. 

Specifically, we begin by analyzing the conditional cdf of the potential outcome $Y_x$ given the covariate value $w$, $F_{Y_x \mid W}(y \mid w) \coloneqq \P(Y_x \leq y \mid W=w)$. Under either marginal or joint $c$-dependence, we can show that this cdf is bounded above and below by two cdfs which form an envelope for $F_{Y_x \mid W}(y \mid w)$ for all values of $(y,w) \in \R \times \supp(W)$. 

Define the following functions:
\begin{center}
\resizebox{0.98\textwidth}{!}{$
\begin{aligned}
	\underline{F}_{Y_1 \mid W}(y \mid w) &= \max\left\{F_{Y \mid X,W}(y \mid 1,w)\frac{p_{1|w}}{\uc(w,\eta)}, \frac{\lc(w,\eta) - p_{1|w}}{\lc(w,\eta)} + F_{Y \mid X,W}(y \mid 1,w)\frac{p_{1|w}}{\lc(w,\eta)}\right\}\\
	\overline{F}_{Y_1 \mid W}(y \mid w) &= \min\left\{F_{Y \mid X,W}(y \mid 1,w)\frac{p_{1|w}}{\lc(w,\eta)}, \frac{\uc(w,\eta) - p_{1|w}}{\uc(w,\eta)} + F_{Y \mid X,W}(y \mid 1,w)\frac{p_{1|w}}{\uc(w,\eta)}\right\}\\
	\underline{F}_{Y_0 \mid W}(y \mid w) &= \max\left\{F_{Y \mid X,W}(y \mid 0,w)\frac{p_{0|w}}{1-\lc(w,\eta)}, \frac{p_{1|w}-\uc(w,\eta)}{1-\uc(w,\eta)} + F_{Y \mid X,W}(y \mid 0,w)\frac{p_{0|w}}{1-\uc(w,\eta)}\right\}\\
	\overline{F}_{Y_0 \mid W}(y \mid w) &= \min\left\{F_{Y \mid X,W}(y \mid 0,w)\frac{p_{0|w}}{1-\uc(w,\eta)}, \frac{p_{1|w}-\lc(w,\eta)}{1-\lc(w,\eta)} + F_{Y \mid X,W}(y \mid 0,w)\frac{p_{0|w}}{1-\lc(w,\eta)}\right\}.
\end{aligned}
$}
\end{center}
Viewed as functions of $y$, these four functions are cdfs since they are nondecreasing, right-continuous, and their limits as $y \rightarrow -\infty,+\infty$ equal 0 and 1, respectively. We show these four cdfs form bounds for $F_{Y_x \mid W}$ under marginal or joint $c$-dependence.
\begin{lemma}\label{lemma:cdf_bounds_margcdep}
	Let Assumption \ref{assn:overlap} hold. Let either Assumption \ref{assn:marginal_cdep} or \ref{assn:joint_cdep} hold. Then, for all $(y,w) \in \R \times \supp(W)$, 
\[
		\P(Y_1 \leq y \mid W=w) \in \left[\underline{F}_{Y_1 \mid W}(y \mid w),\overline{F}_{Y_1 \mid W}(y \mid w)\right]
\]
and
\[
	\P(Y_0 \leq y \mid W=w) \in \left[\underline{F}_{Y_0 \mid W}(y \mid w),\overline{F}_{Y_0 \mid W}(y \mid w)\right].
\]
\end{lemma}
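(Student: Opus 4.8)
The plan is to reduce everything to \textbf{marginal $c$-dependence} and then read off all four inequalities from elementary sandwiching arguments. By Lemma~\ref{lemma:joint_implies_marginal}, Assumption~\ref{assn:joint_cdep} implies Assumption~\ref{assn:marginal_cdep} with the same bound functions, so it suffices to work under marginal $c$-dependence. I would fix $w\in\supp(W)$, abbreviate $\lc=\lc(w,\eta)$ and $\uc=\uc(w,\eta)$, and record that Assumptions~\ref{assn:overlap} and~\ref{assn:marginal_cdep} give $0<\lc\le p_{1|w}\le\uc<1$, so every denominator below is strictly positive. The starting point is the identity, valid for all $y\in\R$,
\[
	F_{Y\mid X,W}(y\mid 1,w)\,p_{1|w}=\P(X=1,\,Y_1\le y\mid W=w)=\E\bigl[\,p_1(Y_1,w)\,\1(Y_1\le y)\mid W=w\,\bigr],
\]
where $p_1(\cdot,w)$ is a regular version of the latent propensity score $\P(X=1\mid Y_1=\cdot,W=w)$; the first equality uses $Y=Y_1$ on $\{X=1\}$ and the second is the law of iterated expectations conditional on $W=w$. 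The analogous identity for $Y_0$ is $F_{Y\mid X,W}(y\mid 0,w)\,p_{0|w}=\P(X=0,\,Y_0\le y\mid W=w)=\E[(1-p_0(Y_0,w))\,\1(Y_0\le y)\mid W=w]$, with $p_0(\cdot,w)$ a regular version of $\P(X=1\mid Y_0=\cdot,W=w)$.

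For $F_{Y_1\mid W}$ I would run two sandwiches. Since $p_1(Y_1,w)\in[\lc,\uc]$ almost surely conditional on $W=w$, bounding the right-hand expectation above by $\uc\,F_{Y_1\mid W}(y\mid w)$ and below by $\lc\,F_{Y_1\mid W}(y\mid w)$ and rearranging gives $F_{Y\mid X,W}(y\mid 1,w)\frac{p_{1|w}}{\uc}\le F_{Y_1\mid W}(y\mid w)\le F_{Y\mid X,W}(y\mid 1,w)\frac{p_{1|w}}{\lc}$, which supplies the first term in each of $\underline{F}_{Y_1\mid W}$ and $\overline{F}_{Y_1\mid W}$. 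Applying the same sandwich to the complementary event $\{Y_1>y\}$, using $\E[p_1(Y_1,w)\,\1(Y_1>y)\mid W=w]=p_{1|w}\bigl(1-F_{Y\mid X,W}(y\mid 1,w)\bigr)$, gives $\lc\bigl(1-F_{Y_1\mid W}(y\mid w)\bigr)\le p_{1|w}\bigl(1-F_{Y\mid X,W}(y\mid 1,w)\bigr)\le\uc\bigl(1-F_{Y_1\mid W}(y\mid w)\bigr)$, and rearranging produces the second term in each bound. Taking the larger of the two lower bounds and the smaller of the two upper bounds is precisely the displayed $\max$ and $\min$ defining $\underline{F}_{Y_1\mid W}$ and $\overline{F}_{Y_1\mid W}$.

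The $Y_0$ case is symmetric, now using $1-p_0(Y_0,w)\in[1-\uc,1-\lc]$ almost surely conditional on $W=w$ together with the second identity above. Sandwiching $\E[(1-p_0(Y_0,w))\,\1(Y_0\le y)\mid W=w]$ and, separately, $\E[(1-p_0(Y_0,w))\,\1(Y_0>y)\mid W=w]=p_{0|w}\bigl(1-F_{Y\mid X,W}(y\mid 0,w)\bigr)$, and then dividing by the positive quantities $1-\lc$ and $1-\uc$, produces the two candidate lower bounds $F_{Y\mid X,W}(y\mid 0,w)\frac{p_{0|w}}{1-\lc}$ and $\frac{1-\uc-p_{0|w}}{1-\uc}+F_{Y\mid X,W}(y\mid 0,w)\frac{p_{0|w}}{1-\uc}$, and the two candidate upper bounds $F_{Y\mid X,W}(y\mid 0,w)\frac{p_{0|w}}{1-\uc}$ and $\frac{1-\lc-p_{0|w}}{1-\lc}+F_{Y\mid X,W}(y\mid 0,w)\frac{p_{0|w}}{1-\lc}$. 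The last step is the algebraic identities $1-\uc-p_{0|w}=p_{1|w}-\uc$ and $1-\lc-p_{0|w}=p_{1|w}-\lc$, which rewrite these in the form stated for $\underline{F}_{Y_0\mid W}$ and $\overline{F}_{Y_0\mid W}$; taking $\max$ and $\min$ finishes that case. As $w\in\supp(W)$ and $y\in\R$ were arbitrary, the lemma follows.

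I expect the only point requiring genuine care — rather than a real obstacle — is the measure-theoretic bookkeeping around the latent propensity score: choosing a version $p_x(\cdot,w)$ so that the conditional law of iterated expectations applies, and passing cleanly from the ``almost surely conditional on $W=w$'' formulation of Assumption~\ref{assn:marginal_cdep} to the integral inequalities $\lc\,\E[\1(Y_x\le y)\mid W=w]\le\E[p_x(Y_x,w)\,\1(Y_x\le y)\mid W=w]\le\uc\,\E[\1(Y_x\le y)\mid W=w]$ (and likewise on $\{Y_x>y\}$). Everything after that is one-line rearrangement, with Assumption~\ref{assn:overlap} and $\lc,\uc\in(0,1)$ ensuring all the divisions are legitimate.
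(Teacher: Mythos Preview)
Your proposal is correct and follows essentially the same approach as the paper: reduce to marginal $c$-dependence via Lemma~\ref{lemma:joint_implies_marginal}, then sandwich on both $\{Y_x\le y\}$ and $\{Y_x>y\}$ using the bounds on the latent propensity score, and take $\max$/$\min$. The only cosmetic difference is the starting identity: the paper writes $F_{Y_1\mid W}(y\mid w)=\E\bigl[\1(Y\le y)/p_1(Y,w)\mid X=1,W=w\bigr]\,p_{1|w}$ and bounds $1/p_1$, whereas you write $F_{Y\mid X,W}(y\mid 1,w)\,p_{1|w}=\E[p_1(Y_1,w)\,\1(Y_1\le y)\mid W=w]$ and bound $p_1$ directly; the resulting four inequalities and the rest of the argument are identical.
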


\medskip

We note a few properties of these bounds. The bounds for $F_{Y_x \mid W}(y \mid w)$ collapse to a point if either $\lc(w,\eta) = p_{1|w}$ or $\uc(w,\eta) = p_{1|w}$. Also note that $F_{Y \mid X,W}(y \mid x,w)$ always lies within the bounds for $F_{Y_x \mid W}(y \mid w)$. This is because $c$-dependence never rules out unconfoundedness, and unconfoundedness implies that the distribution of $Y$ given $(X,W) = (x,w)$ equals that of $Y_x$ given $W = w$. 

These cdf bounds also yield cdf bounds under the GMSM or GJSM since they are equivalent to $c$-dependence as shown in Proposition \ref{prop:c-dep_GMSM_equivalence}. These bounds are also valid for the standard MSM or JSM, as they are special cases of marginal or joint $c$-dependence.

We now show these cdf bounds are sharp, or that they cannot be improved upon. This is the case under marginal or joint $c$-dependence. The cdf of $Y_x \mid W=w$ can also lie in the interior of these bounds, as we show that any convex linear combination of the upper and lower cdf bounds can be attained. 

Before establishing this, let $\mathcal{C}$ denote the set of all bivariate copulas and let
\begin{align*}
	\mathcal{C}_{1,0 \mid X,W} &= \Big\{ \{C_{1,0|x,w}\}_{x \in \{0,1\}, w \in \supp(W)} \text{ such that } C_{1,0|x,w} \in \mathcal{C} \Big\}
\end{align*}
denote the collection of all bivariate copulas across all treatment and covariate values $(x,w) \in \{0,1\} \times \supp(W)$. We also say that a distribution function for $(Y_1,Y_0) \mid X,W$ is compatible with the observed distribution $F_{Y,X,W}$ if
\begin{align}\label{eq:dist_compatibility}
	F_{Y_1 \mid X,W}(\cdot \mid 1,w) = F_{Y \mid X,W}(\cdot \mid 1,w) \qquad \text{ and } \qquad F_{Y_0 \mid X,W}(\cdot \mid 0,w) = F_{Y \mid X,W}(\cdot \mid 0,w)
\end{align}
for all $w\in\supp(W)$.

We now define the identified set for the distribution of $(Y_1,Y_0) \mid X,W$ from the observable distribution $F_{Y,X,W}$ under $c$-dependence. This set consists of all conditional cdfs and copulas that imply a distribution for $(Y_1,Y_0) \mid X,W$ that is both compatible with the data distribution $F_{Y,X,W}$ and with a $c$-dependence condition.

\begin{definition}[Identified Set]
For a given distribution of the observables $F_{Y,X,W}$ and bound functions $c \coloneqq (\lc(w,\eta),\uc(w,\eta))$, the identified set for $(F_{Y_1 \mid X,W}, F_{Y_0 \mid X,W},C_{1,0 \mid X,W})$ under marginal $c$-dependence is given by
\begin{align*}
	\mathcal{I}^{\text{marg}}(F_{Y,X,W};c) &\coloneqq \{(F_{Y_1 \mid X,W}, F_{Y_0 \mid X,W},C_{1,0 \mid X,W}): F_{Y_1,Y_0 \mid X,W} = C_{1,0 \mid X,W}(F_{Y_1 \mid X,W},F_{Y_0 \mid X,W}) \\
	&\qquad\qquad \text{ and } p_{1|(\cdot)} \text{ satisfy equation } \eqref{eq:dist_compatibility} \text{ and  Assumption \ref{assn:marginal_cdep}}\}.
\end{align*}
The identified set under joint $c$-dependence is given by
\begin{align*}
	\mathcal{I}^{\text{joint}}(F_{Y,X,W};c) &\coloneqq \{(F_{Y_1 \mid X,W}, F_{Y_0 \mid X,W},C_{1,0 \mid X,W}): F_{Y_1,Y_0 \mid X,W} = C_{1,0 \mid X,W}(F_{Y_1 \mid X,W},F_{Y_0 \mid X,W}) \\
	&\qquad\qquad \text{ and } p_{1|(\cdot)} \text{ satisfy equation } \eqref{eq:dist_compatibility} \text{ and  Assumption \ref{assn:joint_cdep}}\}.
\end{align*}
\end{definition}

In our later derivations, we sometimes refer to the identified set for $(F_{Y_1 \mid W},F_{Y_0 \mid W},C_{1,0 \mid X,W})$ instead, which we denote by $\mathcal{I}_0^{i}(F_{Y,X,W};c)$ for $i \in \{\text{marg}, \text{joint} \}$. Via equation \eqref{eq:relationship_F_YXW_FYW}, this set can be viewed as an affine transformation of the identified set for $(F_{Y_1 \mid X,W}, F_{Y_0 \mid X,W},C_{1,0 \mid X,W})$.

We now show some key properties of the cdfs and copulas in these identified sets. We begin with marginal $c$-dependence.
\begin{theorem}\label{thm:cdf_sharp_margcdep}
Let Assumptions \ref{assn:overlap} and \ref{assn:marginal_cdep} hold. For all $(\varepsilon,\gamma) \in [0,1]^2$ and for any $C_{1,0 \mid X,W} \in \mathcal{C}_{1,0 \mid X,W}$, 
\begin{align*}
	\left(\varepsilon \underline{F}_{Y_1 \mid W} + (1-\varepsilon) \overline{F}_{Y_1 \mid W},\gamma \underline{F}_{Y_0 \mid W} + (1-\gamma) \overline{F}_{Y_0 \mid W}, C_{1,0 \mid X,W}\right) \in \mathcal{I}_0^{\text{marg}}(F_{Y,X,W};c).
\end{align*}	
\end{theorem}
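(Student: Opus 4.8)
The plan is to exhibit, for each $(\varepsilon,\gamma)\in[0,1]^2$ and each copula collection $C_{1,0\mid X,W}$, one explicit distribution of $(Y_1,Y_0)\mid X,W$ lying in the identified set whose $Y_1\mid W$ and $Y_0\mid W$ marginals are exactly the prescribed convex combinations. The key simplification is that marginal $c$-dependence constrains $p_1(Y_1,w)=\P(X=1\mid Y_1,W=w)$ and $p_0(Y_0,w)=\P(X=1\mid Y_0,W=w)$ separately, and neither of these depends on the copula linking $Y_1$ and $Y_0$; so I can build the two counterfactual marginals independently and glue them with $C_{1,0\mid X,W}$ at the end via Sklar's theorem. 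I therefore focus on $Y_1$; the case of $Y_0$ is symmetric under the substitutions $p_{1|w}\leftrightarrow p_{0|w}$, $\lc\leftrightarrow 1-\uc$, $\uc\leftrightarrow 1-\lc$, and $F_{Y\mid X,W}(\cdot\mid1,w)\leftrightarrow F_{Y\mid X,W}(\cdot\mid0,w)$.

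Write $G_1(\cdot\mid w):=\varepsilon\,\underline{F}_{Y_1\mid W}(\cdot\mid w)+(1-\varepsilon)\,\overline{F}_{Y_1\mid W}(\cdot\mid w)$. Compatibility \eqref{eq:dist_compatibility} forces $F_{Y_1\mid X,W}(\cdot\mid1,w)=F_{Y\mid X,W}(\cdot\mid1,w)$, so by \eqref{eq:relationship_F_YXW_FYW} the only remaining degree of freedom is $F_{Y_1\mid X,W}(\cdot\mid0,w)$, and requiring the $Y_1\mid W$ marginal to equal $G_1$ pins it to
\[
	H_w(y)\;:=\;F_{Y_1\mid X,W}(y\mid0,w)\;=\;\frac{G_1(y\mid w)-p_{1|w}\,F_{Y\mid X,W}(y\mid1,w)}{p_{0|w}}.
\]
Everything then reduces to a single increment inequality: for all $a\le b$,
\[
	\frac{p_{1|w}}{\uc(w,\eta)}\,\bigl(F_{Y\mid X,W}(b\mid1,w)-F_{Y\mid X,W}(a\mid1,w)\bigr)\;\le\;G_1(b\mid w)-G_1(a\mid w)\;\le\;\frac{p_{1|w}}{\lc(w,\eta)}\,\bigl(F_{Y\mid X,W}(b\mid1,w)-F_{Y\mid X,W}(a\mid1,w)\bigr).
\]
I would prove this by observing that $\underline{F}_{Y_1\mid W}(\cdot\mid w)$ and $\overline{F}_{Y_1\mid W}(\cdot\mid w)$ are, respectively, a pointwise maximum and a pointwise minimum of two affine functions of $F_{Y\mid X,W}(\cdot\mid1,w)$ whose slopes are $p_{1|w}/\uc(w,\eta)$ and $p_{1|w}/\lc(w,\eta)$; a short case analysis on which branch attains the extremum at $a$ versus at $b$ shows the increment over $(a,b]$ of such a max or min always lies between these two slopes times the increment of $F_{Y\mid X,W}(\cdot\mid1,w)$, and this sandwich is preserved under the convex combination defining $G_1$.

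Given the increment inequality, two consequences follow. First, subtracting $p_{1|w}\bigl(F_{Y\mid X,W}(b\mid1,w)-F_{Y\mid X,W}(a\mid1,w)\bigr)$ and dividing by $p_{0|w}$ shows $H_w$ has nonnegative increments (using $\uc(w,\eta)<1$), so $H_w$ is nondecreasing; right-continuity is inherited, and $H_w(-\infty)=0$, $H_w(+\infty)=(1-p_{1|w})/p_{0|w}=1$, so $H_w$ is a genuine cdf, and by construction $p_{1|w}F_{Y\mid X,W}(\cdot\mid1,w)+p_{0|w}H_w=G_1(\cdot\mid w)$. Second, letting $\mu_w,\nu_w$ be the finite Borel measures induced by $F_{Y\mid X,W}(\cdot\mid1,w)$ and by $G_1(\cdot\mid w)=F_{Y_1\mid W}(\cdot\mid w)$, the increment inequality reads $\lc(w,\eta)\nu_w((a,b])\le p_{1|w}\mu_w((a,b])\le\uc(w,\eta)\nu_w((a,b])$; since the half-open intervals form a generating $\pi$-system and all measures are finite, this promotes to $\lc(w,\eta)\nu_w\le p_{1|w}\mu_w\le\uc(w,\eta)\nu_w$ as measures (here one argues, e.g., via outer regularity and a Hahn--Jordan decomposition, that a finite signed measure on $\R$ that is nonnegative on every half-open interval is a nonnegative measure). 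Because $p_{1|w}\mu_w\le\nu_w$, the latent propensity $p_1(\cdot,w)=\P(X=1\mid Y_1=\cdot,W=w)$ is a version of the Radon--Nikodym derivative $d(p_{1|w}\mu_w)/d\nu_w$, and the measure inequality forces it into $[\lc(w,\eta),\uc(w,\eta)]$ $\nu_w$-almost everywhere, i.e.\ almost surely conditional on $W=w$; this is Assumption \ref{assn:marginal_cdep} for $Y_1$.

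Finally, run the symmetric construction for $Y_0$ to obtain $F_{Y_0\mid X,W}(\cdot\mid1,w)$ with $F_{Y_0\mid W}(\cdot\mid w)=G_0(\cdot\mid w):=\gamma\underline{F}_{Y_0\mid W}(\cdot\mid w)+(1-\gamma)\overline{F}_{Y_0\mid W}(\cdot\mid w)$ and the corresponding $c$-dependence bound; then, for each $(x,w)$, apply Sklar's theorem to couple $F_{Y_1\mid X,W}(\cdot\mid x,w)$ and $F_{Y_0\mid X,W}(\cdot\mid x,w)$ via $C_{1,0|x,w}$ into a joint cdf with those two marginals. This step affects neither compatibility \eqref{eq:dist_compatibility} (which depends only on the matched marginals) nor the $c$-dependence verification (which depends only on the separate $Y_1$- and $Y_0$-marginals), so the resulting triple $(G_1,G_0,C_{1,0\mid X,W})$ lies in $\mathcal{I}_0^{\text{marg}}(F_{Y,X,W};c)$, as claimed. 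The main obstacle is the increment inequality together with its upgrade to a measure inequality: because the outcome distribution is unrestricted (atoms and singular parts allowed), one cannot argue pointwise with densities, and one must verify that the max/min-of-affine-pieces structure of $\underline{F}_{Y_1\mid W}$ and $\overline{F}_{Y_1\mid W}$ controls all increments, not merely cdf levels — this is exactly the gap between merely respecting the pointwise envelope of Lemma \ref{lemma:cdf_bounds_margcdep} and being genuinely attainable.
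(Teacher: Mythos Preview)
Your proposal is correct and takes a genuinely different route from the paper. The paper constructs the latent propensity scores explicitly: for the endpoint cdfs $\underline{F}_{Y_1\mid W}$ and $\overline{F}_{Y_1\mid W}$ it writes down step functions $\underline{p}_1(\cdot,w)$ and $\overline{p}_1(\cdot,w)$ taking the values $\lc,\uc$, and an intermediate value at one quantile (Appendix~A), proves via Lemma~\ref{lemma:preliminary_margcdep}.5 that these reproduce the bound cdfs through the integral identity $\E[\1(Y\le y)X/p_1(Y,w)]=F_{Y_1\mid W}(y\mid w)$, and then for the convex combination $G_1$ takes the \emph{harmonic} mean $p_1(\cdot,w;\varepsilon)=(\varepsilon\,\underline{p}_1^{-1}+(1-\varepsilon)\,\overline{p}_1^{-1})^{-1}$, checking via Lemma~\ref{lemma:propscore_equal_condexp} (itself a $\pi$-system argument) that this is indeed a version of $\P(X=1\mid Y_1,W=w)$ under the constructed law. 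You instead bypass the explicit propensity scores entirely: your increment inequality directly yields the measure sandwich $\lc\,\nu_w\le p_{1|w}\mu_w\le\uc\,\nu_w$, from which the Radon--Nikodym derivative $d(p_{1|w}\mu_w)/d\nu_w$ is automatically in $[\lc,\uc]$ a.e. What the paper's route buys is the closed-form switching structure of the extremal propensity scores, which it reuses heavily in the proof of Theorem~\ref{thm:cdf_sharp_jointcdep} (joint $c$-dependence), and which connects to the CVaR-type expressions in Section~\ref{sec:mainTreatmentEffectBounds}. What your route buys is brevity: you avoid all of Lemma~\ref{lemma:preliminary_margcdep} Parts~1--5 and the case-by-case verification there, replacing it with one increment computation and a standard extension of an inequality from half-open intervals to all Borel sets. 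Both proofs ultimately rest on the same $\pi$-system machinery, applied at different points.
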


This theorem shows that the four pairs of cdfs $(\underline{F}_{Y_1 \mid W},\underline{F}_{Y_0 \mid W})$, $(\overline{F}_{Y_1 \mid W},\underline{F}_{Y_0 \mid W})$, $(\underline{F}_{Y_1 \mid W},\overline{F}_{Y_0 \mid W})$, and $(\overline{F}_{Y_1 \mid W},\overline{F}_{Y_0 \mid W})$ are part of the identified set. This is obtained by varying $(\varepsilon,\gamma)$ over $\{(1,1),(0,1),(1,0),(0,0)\}$.  We show this by explicitly constructing latent propensity scores $p_1(Y_1,w)$ and $p_0(Y_0,w)$ that lie in $[\lc(w,\eta),\uc(w,\eta)]$ almost surely under the implied distribution of $(Y_1,Y_0) \mid W = w$. These propensity scores have a switching structure where they equal the lower/upper bound for low values of $Y_x$ and the upper/lower bound for large values of $Y_x$. For example, the propensity score $p_1(Y_1,w)$ associated with cdf upper bound $\overline{F}_{Y_1 \mid W}$ equals
\begin{align*}
	p_1(Y_1,w) &\coloneqq \begin{cases}\lc(w,\eta) &\text{ if } Y_1 < \overline{Q}_1\\
	\overline{A}_1 &\text{ if } Y_1 = \overline{Q}_1\\
	\uc(w,\eta) &\text{ if } Y_1 > \overline{Q}_1
	\end{cases}
\end{align*}
where
\begin{align*}
	\overline{Q}_1 &\coloneqq Q_{Y \mid X,W}\left(\frac{(\uc(w,\eta) - p_{1|w})\lc(w,\eta)}{(\uc(w,\eta) - \lc(w,\eta))p_{1|w}} \mid X=1,W = w\right)
\end{align*}
and
\begin{align*}
	\overline{A}_1 &\coloneqq \frac{\P(Y = \overline{Q}_1, X=1 \mid W=w)}{\overline{F}_{Y_1 \mid W}(\overline{Q}_1 \mid w) - \overline{F}_{Y_1 \mid W}(\overline{Q}_1- \mid w)}.
\end{align*}
Note that $\overline{A}_1 \in [\lc(w,\eta),\uc(w,\eta)]$. We denote $\lim_{q \nearrow \overline{Q}_1} \overline{F}_{Y_1 \mid W}(q \mid w)$ by $\overline{F}_{Y_1 \mid W}(\overline{Q}_1- \mid w)$. The propensity scores associated with the cdf bounds $\underline{F}_{Y_1 \mid W}$, $\overline{F}_{Y_0 \mid W}$, and $\underline{F}_{Y_0 \mid W}$ can all be found in Appendix \ref{appendix:notation}.

This switching structure of the latent propensity score was observed for conditional $c$-dependence by \citet[pages 335--339]{MastenPoirier2018}, and for the MSM in Proposition 2 of \cite{DornGuo2023}. Our sharpness proof implies that latent propensity score $p_1(Y_1,w)$ satisfies an integral constraint, namely that $\E[p_1(Y_1,W) \mid W = w]= \P(X=1 \mid W = w)$, in order to ensure it is compatible with the observed propensity score.

Theorem \ref{thm:cdf_sharp_margcdep} also shows that any convex linear combinations of these four cdf pairs lies in the identified set. As a result, the identified set for $F_{Y_x \mid W}(y \mid w)$ is the entire closed interval $[\underline{F}_{Y_x \mid W}(y \mid w),\overline{F}_{Y_x \mid W}(y \mid w)]$. Moreover, the identified set for the pair $(F_{Y_1 \mid W}(y \mid w),F_{Y_0 \mid W}(y \mid w))$ is the Cartesian product of their individual identified sets, meaning that fixing or knowing the conditional distribution of one potential outcome does not affect the identified set of the distribution of the other potential outcome.

Finally, Theorem \ref{thm:cdf_sharp_margcdep} proves that no conditional copulas are ruled out by marginal $c$-dependence. For example, marginal $c$-dependence allows $Y_1$ and $Y_0$ to be independent, comonotonic\footnote{This is also referred to as \textit{rank invariance}. For example, see the discussion in \cite{HeckmanSmithClements1997}.}, or countermonotonic given $X$ and $W$. 

A similar result is obtained under joint $c$-dependence.

\begin{theorem}\label{thm:cdf_sharp_jointcdep}
	Let Assumptions \ref{assn:overlap} and \ref{assn:joint_cdep} hold. For all $(\varepsilon,\gamma) \in [0,1]^2$ there exists $C_{1,0 \mid X,W} \in \mathcal{C}_{1,0 \mid X,W}$ such that
\begin{align*}
	(\varepsilon \underline{F}_{Y_1 \mid W} + (1-\varepsilon) \overline{F}_{Y_1 \mid W},\gamma \underline{F}_{Y_0 \mid W} + (1-\gamma) \overline{F}_{Y_0 \mid W}, C_{1,0 \mid X,W}) \in \mathcal{I}_0^{\text{joint}}(F_{Y,X,W};c).
\end{align*}	
\end{theorem}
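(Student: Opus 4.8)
The plan is to reduce the claim to the four ``corner'' cases $(\varepsilon,\gamma) \in \{0,1\}^2$ and then recover every intermediate case by a mixing argument. For the corners the idea is to take the explicit switching-structure latent propensity scores built in the proof of Theorem \ref{thm:cdf_sharp_margcdep} (the ones associated with $\underline{F}_{Y_1 \mid W}$, $\overline{F}_{Y_1 \mid W}$, $\underline{F}_{Y_0 \mid W}$, $\overline{F}_{Y_0 \mid W}$, listed in Appendix \ref{appendix:notation}) and glue them into a single joint latent propensity score by choosing the conditional copula appropriately.

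\textbf{Step 1 (corners).} Fix $w$ and consider, say, $(\varepsilon,\gamma) = (0,0)$, so the target marginals are $\overline{F}_{Y_1 \mid W}(\cdot \mid w)$ and $\overline{F}_{Y_0 \mid W}(\cdot \mid w)$. From the proof of Theorem \ref{thm:cdf_sharp_margcdep} we have a latent propensity score $p_1(\cdot,w)$ of switching form (equal to $\lc(w,\eta)$ below $\overline{Q}_1$, to $\uc(w,\eta)$ above it, and to $\overline{A}_1$ at $\overline{Q}_1$) which, with marginal law $\overline{F}_{Y_1 \mid W}(\cdot \mid w)$, reproduces $F_{Y \mid X,W}(\cdot \mid 1,w)$ among the treated and satisfies $\E[p_1(Y_1,W) \mid W=w] = p_{1\mid w}$; symmetrically there is $p_0(\cdot,w)$ and threshold $\overline{Q}_0$ for $Y_0$. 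I would then set the \emph{joint} latent propensity score to depend on $Y_1$ alone, $p(y_1,y_0,w) \coloneqq p_1(y_1,w) \in [\lc(w,\eta),\uc(w,\eta)]$, so that Assumption \ref{assn:joint_cdep} holds automatically, the implied $\P(X=1 \mid Y_1,W=w)$ equals $p_1(\cdot,w)$, and hence both compatibility on the $Y_1$ side and the correct observed propensity score hold. The remaining freedom is the conditional copula of $(Y_1,Y_0)$ given $W=w$; since $p$ does not involve $Y_0$, this copula is unconstrained by Assumption \ref{assn:joint_cdep}, and I would take the copula making $Y_0$ a deterministic monotone function of $Y_1$, with monotonicity direction and rank cutoff chosen to align the two switching regions. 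The reason this works is the algebraic identity $\overline{F}_{Y_1 \mid W}(\overline{Q}_1 \mid w) + \overline{F}_{Y_0 \mid W}(\overline{Q}_0 \mid w) = 1$ (and its analogues at the other corners, obtained by a direct computation from the explicit cdf-bound formulas), which makes $\{Y_1 > \overline{Q}_1\}$ coincide up to a null set with the appropriate side of $\overline{Q}_0$; this is exactly what forces $\E[p_1(Y_1,W) \mid Y_0 = y_0, W=w] = p_0(y_0,w)$ for a.e.\ $y_0$, which in turn is equivalent to compatibility on the $Y_0$ side, $F_{Y_0 \mid X,W}(\cdot \mid 0,w) = F_{Y \mid X,W}(\cdot \mid 0,w)$. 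The other three corners are handled the same way, with comonotone versus countermonotone copula selected according to whether $p_1(\cdot,w)$ and $p_0(\cdot,w)$ switch in the same or opposite direction; when $Y$ has atoms one splits the atom at the switching quantile using $\overline{A}_1$ and its analogues exactly as in the proof of Theorem \ref{thm:cdf_sharp_margcdep}. This produces, for each corner, an element of $\mathcal{I}_0^{\text{joint}}(F_{Y,X,W};c)$ with the stated marginals.

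\textbf{Step 2 (interior).} For each $w$, the set of joint laws of $(Y_1,Y_0,X) \mid W=w$ that (i) reproduce the propensity score $p_{1\mid w}$, (ii) are compatible with $F_{Y,X,W}$ in the sense of equation \eqref{eq:dist_compatibility}, and (iii) satisfy joint $c$-dependence is convex: (i) and (ii) are linear constraints on the law, while under a mixture $t\mu_1 + (1-t)\mu_2$ of two admissible laws the conditional probability $\P(X=1 \mid Y_1,Y_0,W=w)$ is, a.e., a convex combination of the corresponding conditional probabilities of $\mu_1$ and $\mu_2$ (with the Radon--Nikodym weights relative to a common dominating measure), hence still valued in $[\lc(w,\eta),\uc(w,\eta)]$. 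Mixing the four corner laws from Step 1 with weights $(1-\varepsilon)(1-\gamma)$, $(1-\varepsilon)\gamma$, $\varepsilon(1-\gamma)$, $\varepsilon\gamma$ then gives an admissible law whose $Y_1 \mid W=w$ and $Y_0 \mid W=w$ marginals are $\varepsilon \underline{F}_{Y_1 \mid W}(\cdot \mid w) + (1-\varepsilon)\overline{F}_{Y_1 \mid W}(\cdot \mid w)$ and $\gamma \underline{F}_{Y_0 \mid W}(\cdot \mid w) + (1-\gamma)\overline{F}_{Y_0 \mid W}(\cdot \mid w)$. Reading off its conditional copula for $x \in \{0,1\}$, and collecting over $w$, yields a $C_{1,0 \mid X,W} \in \mathcal{C}_{1,0 \mid X,W}$ with the desired property; the triple lies in $\mathcal{I}_0^{\text{joint}}(F_{Y,X,W};c)$ after the affine relabeling \eqref{eq:relationship_F_YXW_FYW}.

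I expect the main obstacle to be the verification inside Step 1 that the monotone gluing actually reproduces $F_{Y \mid X,W}(\cdot \mid 0,w)$ among the untreated — i.e., that $\E[p_1(Y_1,W) \mid Y_0 = y_0, W=w]$ equals the right switching function $p_0(y_0,w)$ for a.e.\ $y_0$ — since this is where the explicit forms of $\underline{F}_{Y_x \mid W}$, $\overline{F}_{Y_x \mid W}$, their thresholds, and the value-at-the-switching-point identity must all be combined, with additional atom bookkeeping when $Y$ is mixed continuous--discrete. The convexity argument in Step 2 is, by contrast, routine.
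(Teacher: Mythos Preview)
Your two–step architecture (four corner constructions, then a convex mixture) is exactly the paper's strategy, and your Step 2 is essentially its Lemma \ref{lemma:propscore_cvx_comb}: the Radon--Nikodym argument that a mixture of admissible joints has a joint latent propensity score that is a pointwise convex combination of the components' scores, hence still in $[\lc,\uc]$. That part is fine and matches the paper closely.

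The gap is in Step 1. You propose to take the joint latent propensity score to be a function of $Y_1$ alone, $p(y_1,y_0,w)=\overline{p}_1(y_1,w)$, and then choose a monotone coupling of $(Y_1,Y_0)\mid W$ so that the implied marginal score on the $Y_0$ side coincides with $\overline{p}_0(\cdot,w)$. The ``identity'' you appeal to, $\overline{F}_{Y_1 \mid W}(\overline{Q}_1 \mid w)+\overline{F}_{Y_0 \mid W}(\overline{Q}_0 \mid w)=1$, is only an inequality in general (Lemma \ref{lemma:preliminary_margcdep}.8 gives $\overline{F}_{Y_1 \mid W}(\overline{Q}_1 \mid w)\geq \frac{\uc-p_{1|w}}{\uc-\lc}$ and $\overline{F}_{Y_0 \mid W}(\overline{Q}_0 \mid w)\geq \frac{p_{1|w}-\lc}{\uc-\lc}$, with strict inequality when there are atoms). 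When the inequality is strict, under the countermonotone coupling the event $\{Y_1=\overline{Q}_1\}$ has positive probability and straddles several values of $Y_0$; your score is the constant $\overline{A}_1$ there, and a direct computation shows $\widetilde{\E}[(1-\overline{p}_1(Y_1,w))\1(Y_0\leq y)\mid W=w]\neq p_{0|w}F_{Y\mid X,W}(y\mid 0,w)$ for some $y<\overline{Q}_0$. So compatibility on the $Y_0$ side fails, and saying the atoms are handled ``exactly as in the proof of Theorem \ref{thm:cdf_sharp_margcdep}'' does not rescue it: the marginal atom value $\overline{A}_1$ is calibrated to the $Y_1$-marginal, not to the joint.

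The paper avoids this by reversing the roles of the two verifications: it specifies the law of $(Y_1,Y_0)\mid X{=}x,W$ directly with the Fr\'echet copula and the observed marginal on the $x$-side (so compatibility \eqref{eq:dist_compatibility} holds by construction), and then \emph{derives} the joint propensity score $p^{uu}(y_1,y_0,w)$ etc.\ (equations \eqref{eq:UL_propscore}--\eqref{eq:LL_propscore}), which genuinely depends on both coordinates and takes a separately computed value $B$ at the single joint atom $(\overline{Q}_1,\overline{Q}_0)$; the work is then the case-by-case verification that this $B$ lies in $[\lc,\uc]$ (the four displays \eqref{eq:Buu_term1}--\eqref{eq:Buu_term4}). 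In short: your heuristic ``$p$ depends on $Y_1$ alone'' is correct in the continuous case but breaks under atoms; the fix is to let the score depend on $(Y_1,Y_0)$ and do the joint-atom bookkeeping, which is precisely the content of the paper's Lemma \ref{lemma:attainability_jointcdep}.
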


The only difference between the two theorems concerns the dependence structures between $Y_1$ and $Y_0$. Theorem \ref{thm:cdf_sharp_margcdep} shows that all copulas are compatible with marginal $c$-dependence, while our proof of Theorem \ref{thm:cdf_sharp_jointcdep} only exhibits one copula for each pair of conditional distributions $(\varepsilon \underline{F}_{Y_1 \mid W} + (1-\varepsilon) \overline{F}_{Y_1 \mid W},\gamma \underline{F}_{Y_0 \mid W} + (1-\gamma) \overline{F}_{Y_0 \mid W})$.

\subsection{Bounds on Monotonic Parameters} \label{subsec:monotonic_parameter_bounds}

The sharp bounds provided in theorems \ref{thm:cdf_sharp_margcdep} and \ref{thm:cdf_sharp_jointcdep} can be used to deliver analytical expressions for sharp bounds on a large class of treatment effect parameters. We first define the identified set for a parameter $\theta$ defined in \eqref{eq:TE_param_general}.
\begin{definition}[Parameter Identified Sets]\label{def:param_ID_sets}
	Let $\theta(F_{Y_1 \mid X,W},F_{Y_0 \mid X,W},C_{1,0 \mid X,W},F_{Y,X,W})$ be a parameter. Its identified set under marginal $c$-dependence with bounds $c \coloneqq (\lc(w,\eta),\uc(w,\eta))$ is given by
	\begin{align*}
		\mathcal{I}^\text{marg}_{\theta}(F_{Y,X,W};c) &\coloneqq \left\{\theta(F_1,F_0,C,F_{Y,X,W}):(F_1, F_0,C) \in \mathcal{I}^{\text{marg}}(F_{Y,X,W};c)\right\}.
	\end{align*}
	For a parameter $\theta(F_{Y_1 \mid X,W},F_{Y_0 \mid X,W},F_{Y,X,W})$ that does not depend on the copula $C_{1,0 \mid X,W}$, its identified set under joint $c$-dependence is given by
	\begin{align*}
		\mathcal{I}^j_{\theta}(F_{Y,X,W};c) &\coloneqq \left\{\theta(F_1,F_0,F_{Y,X,W}):(F_1, F_0,C) \in \mathcal{I}^{\text{joint}}(F_{Y,X,W};c) \text{ for some } C \in \mathcal{C}_{1,0 \mid X,W}\right\}.
	\end{align*}
\end{definition}

These sets are the parameter values consistent with the known distribution of observables $F_{Y,X,W}$ and with a $c$-dependence condition. Without restrictions on how $\theta$ depends on the distribution of potential outcomes, these sets may take various shapes.

We focus on a class of scalar estimands that can be ordered with respect to first order stochastic dominance.

\begin{definition}[First-Order Stochastic Dominance]
Let $\mathcal{F}$ be the set of all univariate cdfs and let $F, G \in \mathcal{F}$. Say that $F$ first-order stochastically dominates $G$, denoted by $F \succeq G$, if $F(u) \leq G(u)$ for all $u \in \R$.
\end{definition}

Next we define our target class of parameters.

\begin{definition}[Monotonic Parameters]
Let $\theta: \mathcal{F} \rightarrow \R$ be a parameter. Say that $\theta$ is \textit{increasing} if $F \succeq G$ implies $\theta(F) \geq \theta(G)$. Say that $\theta$ is \textit{decreasing} if $-\theta$ is increasing, and say $\theta$ is \textit{monotonic} if it is either increasing or decreasing.
\end{definition}

Following \cite{Manski1997a}, monotonic parameters are also called $D$-parameters, or $D_1$-parameters. Also see \cite{Manski2003} or \cite{Stoye2010} who consider parameters that are increasing with respect to second-order stochastic dominance. 

As an example, consider a parameter $\theta(F_{Y_1 \mid W})$ that is increasing in $F_{Y_1 \mid W}$ and suppose $c$-dependence holds. Then
\begin{align*}
	\theta(F_{Y_1 \mid W}) \in \left[\theta(\overline{F}_{Y_1 \mid W}),\theta(\underline{F}_{Y_1 \mid W})\right]
\end{align*}
since $\overline{F}_{Y_1 \mid W} \preceq  F_{Y_1 \mid W} \preceq \underline{F}_{Y_1 \mid W}$, which holds by Lemma \ref{lemma:cdf_bounds_margcdep}. This interval cannot be made narrower since the cdf bounds $[\underline{F}_{Y_1 \mid W},\overline{F}_{Y_1 \mid W}]$ are sharp by theorems \ref{thm:cdf_sharp_margcdep} and \ref{thm:cdf_sharp_jointcdep}. Therefore, the identified set for $\theta(F_{Y_1 \mid W})$ is a subset of this closed interval that always contains its two endpoints. The interior of this interval is also part of the identified set if the functional $\theta$ is continuous in the sense that $\varepsilon \mapsto \theta(\varepsilon \underline{F}_{Y_1 \mid W} + (1-\varepsilon) \overline{F}_{Y_1 \mid W})$ is continuous. This type of continuity is implied by the continuity of the mapping $F \mapsto \theta(F)$ under the sup-distance metric. 

Assuming monotonicity of the parameter will help derive properties of its identified set. Monotonicity is a substantive restriction, but all eleven parameters from Section \ref{subsec:setup} satisfy it. This is formally established in Lemma \ref{lemma:monotonic parameters} below. We begin by considering monotonic parameters that do not depend on copulas.

\begin{theorem}\label{prop:IDset_monotonic_param}
Let $\theta(F_{Y_1 \mid W},F_{Y_0 \mid W},F_{Y,X,W})$ be increasing in $F_{Y_1 \mid W}(\cdot \mid w)$ and decreasing in $F_{Y_0 \mid W}(\cdot \mid w)$ for each $w\in\supp(W)$. Let Assumption \ref{assn:overlap} hold, and either Assumption \ref{assn:marginal_cdep} or \ref{assn:joint_cdep} hold. Then, for $i \in \{ \text{marg}, \text{joint} \}$ the convex hull of the identified set for $\theta(F_{Y_1 \mid W},F_{Y_0 \mid W},F_{Y,X,W})$ is the closed interval 
	\begin{align*}
		\mathcal{I}^i_{\theta}(F_{Y,X,W};c)
		&= \left[\inf_{(F_1,F_0,C) \in \mathcal{I}_0^{i}(F_{Y,X,W};c)}\theta(F_1,F_0,F_{Y,X,W}), \sup_{(F_1,F_0,C) \in \mathcal{I}^{i}_0(F_{Y,X,W};c)}\theta(F_1,F_0,F_{Y,X,W})\right]\\ 
		 &= \left[\theta(\overline{F}_{Y_1 \mid W},\underline{F}_{Y_0 \mid W},F_{Y,X,W}),\theta(\underline{F}_{Y_1 \mid W},\overline{F}_{Y_0 \mid W},F_{Y,X,W})\right].
	\end{align*}
	If $(\varepsilon,\gamma) \mapsto \theta(\varepsilon \underline{F}_{Y_1 \mid W} + (1-\varepsilon) \overline{F}_{Y_1 \mid W},\gamma \underline{F}_{Y_0 \mid W} + (1-\gamma) \overline{F}_{Y_0 \mid W},F_{Y,X,W})$ is continuous over $(\varepsilon,\gamma) \in [0,1]^2$, this interval equals the identified set.
\end{theorem}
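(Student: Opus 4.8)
The plan is to reduce the problem to the cdf-level sharpness results, theorems \ref{thm:cdf_sharp_margcdep} and \ref{thm:cdf_sharp_jointcdep}, plus Lemma \ref{lemma:cdf_bounds_margcdep}. First I would establish the two outer inclusions. By Lemma \ref{lemma:cdf_bounds_margcdep}, every triple $(F_1,F_0,C)$ in $\mathcal{I}_0^i(F_{Y,X,W};c)$ satisfies $\overline{F}_{Y_1\mid W}\preceq F_1 \preceq \underline{F}_{Y_1\mid W}$ and $\overline{F}_{Y_0\mid W}\preceq F_0 \preceq \underline{F}_{Y_0\mid W}$ (pointwise in $w$). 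Since $\theta$ is increasing in its first cdf argument and decreasing in its second (for each fixed $w$, and these orderings are preserved under the integration/composition defining $\theta$), monotonicity gives
\[
\theta(\overline{F}_{Y_1\mid W},\underline{F}_{Y_0\mid W},F_{Y,X,W}) \le \theta(F_1,F_0,F_{Y,X,W}) \le \theta(\underline{F}_{Y_1\mid W},\overline{F}_{Y_0\mid W},F_{Y,X,W})
\]
for every such triple. Hence the identified set is contained in the displayed closed interval, so its convex hull is as well.

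Next I would prove the endpoints are attained, which shows the interval is no larger than the convex hull of the identified set. This is where theorems \ref{thm:cdf_sharp_margcdep} and \ref{thm:cdf_sharp_jointcdep} are invoked: taking $(\varepsilon,\gamma)=(0,1)$ in those theorems places a triple of the form $(\overline{F}_{Y_1\mid W},\underline{F}_{Y_0\mid W},C)$ in $\mathcal{I}_0^i(F_{Y,X,W};c)$ for a suitable copula $C$ (any copula under marg; some copula under joint), and since $\theta$ does not depend on the copula, $\theta$ evaluated at this triple equals the lower endpoint. The choice $(\varepsilon,\gamma)=(1,0)$ gives the upper endpoint analogously. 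Combined with the inclusion from the previous paragraph and the fact that the endpoints lie in the identified set, the infimum and supremum of $\theta$ over $\mathcal{I}_0^i$ equal $\theta(\overline{F}_{Y_1\mid W},\underline{F}_{Y_0\mid W},F_{Y,X,W})$ and $\theta(\underline{F}_{Y_1\mid W},\overline{F}_{Y_0\mid W},F_{Y,X,W})$ respectively, establishing the second displayed equality and showing the convex hull of the identified set is exactly this interval.

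Finally, for the last sentence, I would use theorems \ref{thm:cdf_sharp_margcdep} and \ref{thm:cdf_sharp_jointcdep} in full strength: for every $(\varepsilon,\gamma)\in[0,1]^2$, the convex combination triple $(\varepsilon\underline{F}_{Y_1\mid W}+(1-\varepsilon)\overline{F}_{Y_1\mid W},\;\gamma\underline{F}_{Y_0\mid W}+(1-\gamma)\overline{F}_{Y_0\mid W},\;C)$ lies in $\mathcal{I}_0^i(F_{Y,X,W};c)$. Therefore the map $g(\varepsilon,\gamma):=\theta(\varepsilon\underline{F}_{Y_1\mid W}+(1-\varepsilon)\overline{F}_{Y_1\mid W},\gamma\underline{F}_{Y_0\mid W}+(1-\gamma)\overline{F}_{Y_0\mid W},F_{Y,X,W})$ takes values inside the identified set on all of $[0,1]^2$; it equals the lower endpoint at $(0,1)$ and the upper endpoint at $(1,0)$, so if $g$ is continuous, connectedness of $[0,1]^2$ and the intermediate value theorem imply $g$ surjects onto the whole closed interval. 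Hence every point of the interval is in the identified set, so the identified set equals the interval (it is then already convex, matching the convex-hull statement).

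I expect the main obstacle to be the bookkeeping in the first paragraph: one must check that the monotonicity hypothesis on $\theta$ --- stated as increasing in $F_{Y_1\mid W}(\cdot\mid w)$ and decreasing in $F_{Y_0\mid W}(\cdot\mid w)$ for each $w$ --- genuinely yields the global inequality $\theta(\overline{F}_{Y_1\mid W},\underline{F}_{Y_0\mid W},\cdot)\le\theta(F_1,F_0,\cdot)$ when the cdf orderings hold simultaneously at all $w$, i.e. that the stochastic-dominance comparison is compatible across covariate values and that monotonicity in each argument separately combines into joint monotonicity. This is routine but needs to be stated carefully, since $\theta$ may aggregate over $w$ (as in the ATE, QTE, AWW examples) and one has to confirm such aggregation preserves the first-order-stochastic-dominance monotonicity; the cleanest route is to note that the relevant parameters are compositions of pointwise-in-$w$ monotonic maps with an integral over $F_W$, and monotonicity is preserved under both operations.
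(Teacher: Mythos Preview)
Your proposal is correct and follows essentially the same route as the paper: use Lemma \ref{lemma:cdf_bounds_margcdep} plus monotonicity for the outer bounds, invoke Theorems \ref{thm:cdf_sharp_margcdep}/\ref{thm:cdf_sharp_jointcdep} at $(\varepsilon,\gamma)\in\{(0,1),(1,0)\}$ to attain the endpoints, and then use the full convex-combination attainability together with the continuity hypothesis and the intermediate value theorem to fill in the interior. Your final caveat about aggregation over $w$ is more worry than the paper carries---the monotonicity hypothesis on $\theta$ is stated precisely so that a pointwise-in-$w$ stochastic-dominance ordering of the conditional cdfs yields the global inequality directly, and the paper simply uses it that way without further justification.
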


This theorem shows that substituting the upper/lower cdf bounds delivers sharp bounds for any parameter that is monotonic in the first-order stochastic dominance sense. The result is derived under an assumption that the parameter is increasing in $F_{Y_1 \mid W}$ and decreasing in $F_{Y_0 \mid W}$, but it immediately generalizes to parameters that are increasing or decreasing in either or both conditional cdfs. For example, the cdf pair $(\overline{F}_{Y_1 \mid W},\underline{F}_{Y_0 \mid W})$ will maximize a parameter that is decreasing in $F_{Y_1 \mid W}$ and increasing in $F_{Y_0 \mid W}$, and the cdf pair $(\overline{F}_{Y_1 \mid W},\overline{F}_{Y_0 \mid W})$ will maximize (minimize) a parameter that is decreasing (increasing) in both $F_{Y_1 \mid W}$ and $F_{Y_0 \mid W}$. The identified set for these parameters always contains endpoints $\theta(\overline{F}_{Y_1 \mid W},\underline{F}_{Y_0 \mid W},F_{Y,X,W})$ and $\theta(\underline{F}_{Y_1 \mid W},\overline{F}_{Y_0 \mid W},F_{Y,X,W})$. It also contains all the values between these endpoints whenever the mapping $\theta$ is continuous in the appropriate sense.

We document the monotonicity of various building blocks for parameters of interest in the following technical lemma. We omit covariates $W$ for simplicity here, except in part 4 on QCATE because that parameter requires covariates to be nontrivial.

\begin{lemma}\label{lemma:monotonic parameters}
	Let Assumption \ref{assn:overlap} hold. Then, for $x \in \{0,1\}$ and $\tau \in (0,1)$,
	\begin{enumerate}
		\item $\theta_{\E}(F_{Y_x}) \coloneqq \int y \, dF_{Y_x}(y)$ is increasing and continuous in the sense that $\varepsilon \mapsto \theta_{\E}(\varepsilon F_{Y_x} + (1-\varepsilon)F_{Y_x}')$ is continuous for any $(F_{Y_x},F_{Y_x}')$ over $\varepsilon \in [0,1]$.
		
		\item $\theta_{Q}(F_{Y_x};\tau) \coloneqq F_{Y_x}^{-1}(\tau)$ is increasing.
		
		\item $\theta_{CQ}(F_{Y_x};\tau) \coloneqq F_{Y_x \mid X}^{-1}(\tau \mid 1-x)$ is increasing.
		
		\item $\theta_\text{QCATE}(F_{Y_1 \mid W},F_{Y_0 \mid W},F_W;\tau)$ (see Example \ref{ex:CATE_dist}) is decreasing in $F_{Y_1 \mid W}$ and increasing in $F_{Y_0 \mid W}$.
		
		\item $\theta_{\text{CDF}}(F_{Y_1},F_{Y_0},C;y_1,y_0) \coloneqq C(F_{Y_1}(y_0),F_{Y_0}(y_0))$ is decreasing in $F_{Y_1}$ and $F_{Y_0}$ for all $(y_1,y_0) \in \R^2$ and copulas $C$.
		
		\item $\theta_{\text{DTE}}(F_{Y_1},F_{Y_0},C;z) \coloneqq \int_{\{y_1 - y_0 \leq z\}} \, dC(F_{Y_1}(y_1),F_{Y_0}(y_0))$ is decreasing in $F_{Y_1}$ and increasing in $F_{Y_0}$ for all $z \in \R$ and copulas $C$.
	\end{enumerate}
\end{lemma}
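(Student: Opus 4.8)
The plan is to verify each of the six claims by unwinding the relevant functional and checking monotonicity (and, where asserted, continuity) directly against the definition of first-order stochastic dominance. Recall $F \succeq G$ means $F(u) \le G(u)$ for all $u$, i.e.\ $F$ puts less mass on every lower tail. Throughout I would repeatedly use the standard equivalences: (i) $F \succeq G$ iff $F^{-1}(\tau) \ge G^{-1}(\tau)$ for all $\tau \in (0,1)$, where $F^{-1}$ is the left-inverse quantile; and (ii) $F \succeq G$ iff $\int \phi \, dF \ge \int \phi \, dG$ for every bounded nondecreasing $\phi$ (and, under a uniform integrability / bounded-support hypothesis supplied by Assumption \ref{assn:overlap} together with whatever implicit support restrictions are in force, for $\phi(y)=y$ itself). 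These three facts do essentially all the work.

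For part 1, $\theta_{\E}(F) = \int y\, dF(y)$ is increasing by equivalence (ii) applied to $\phi(y)=y$; continuity along the segment $\varepsilon \mapsto \varepsilon F + (1-\varepsilon)F'$ is immediate because the map is affine in $\varepsilon$ (the integral is linear in the measure), hence continuous. For parts 2 and 3, $\theta_Q(F;\tau)=F^{-1}(\tau)$ and $\theta_{CQ}(F;\tau)=F_{Y_x\mid X}^{-1}(\tau\mid 1-x)$ are increasing by equivalence (i); note for part 3 that $F_{Y_x \mid X}(\cdot \mid 1-x)$ is, via equation \eqref{eq:relationship_F_YXW_FYW}, an affine increasing transformation of $F_{Y_x\mid X,W}(\cdot \mid 1-x, \cdot)$ with nonnegative weights $p_{1-x|w}$, so stochastic dominance of the conditionals passes to the marginal-over-$W$ object, and then equivalence (i) finishes it. For part 4, I would write $\theta_\text{QCATE}(\tau)=F_{\text{CATE}(W)}^{-1}(\tau)$ and observe from Example \ref{ex:CATE_dist} that $\text{CATE}(w)=\int y_1\,dF_{Y_1\mid W} - \int y_0\,dF_{Y_0\mid W}$ is, by part 1's argument, increasing in $F_{Y_1\mid W}(\cdot\mid w)$ and decreasing in $F_{Y_0\mid W}(\cdot\mid w)$ pointwise in $w$; increasing $\text{CATE}(w)$ for every $w$ shifts the distribution $F_{\text{CATE}(W)}$ up in the stochastic-dominance order, which by equivalence (i) raises every quantile; so $\theta_\text{QCATE}$ is decreasing in $F_{Y_1\mid W}$ and increasing in $F_{Y_0\mid W}$, as claimed.

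Parts 5 and 6 are the substantive ones and where I expect the main obstacle. For part 5, $\theta_\text{CDF}(F_{Y_1},F_{Y_0},C;y_1,y_0)=C(F_{Y_1}(y_1),F_{Y_0}(y_0))$: since every copula $C$ is coordinatewise nondecreasing, and $F_{Y_1}\succeq F_{Y_1}'$ means $F_{Y_1}(y_1)\le F_{Y_1}'(y_1)$ (similarly for $F_{Y_0}$), monotonicity of $C$ gives $C(F_{Y_1}(y_1),F_{Y_0}(y_1))\le C(F_{Y_1}'(y_1),F_{Y_0}'(y_1))$, i.e.\ $\theta_\text{CDF}$ is decreasing in each of $F_{Y_1}, F_{Y_0}$. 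For part 6 I would rewrite $\theta_\text{DTE}(F_{Y_1},F_{Y_0},C;z)=\int \1(y_1-y_0\le z)\,dC(F_{Y_1}(y_1),F_{Y_0}(y_0))=\P(V_1-V_0\le z)$ where $(V_1,V_0)$ has marginals $F_{Y_1},F_{Y_0}$ and copula $C$; then use the representation $V_x = F_{Y_x}^{-1}(U_x)$ with $(U_1,U_0)$ distributed according to $C$ on $[0,1]^2$. If $F_{Y_1}\succeq F_{Y_1}'$ then $F_{Y_1}^{-1}(u)\ge (F_{Y_1}')^{-1}(u)$ for all $u$ by equivalence (i), so $V_1-V_0$ is pathwise (i.e.\ for every draw of $(U_1,U_0)$) no smaller under $F_{Y_1}$ than under $F_{Y_1}'$, hence $\P(V_1-V_0\le z)$ can only decrease: $\theta_\text{DTE}$ is decreasing in $F_{Y_1}$. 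Symmetrically, raising $F_{Y_0}$ in the dominance order lowers $F_{Y_0}^{-1}(u)$, hence lowers $V_0$, hence raises $V_1-V_0$ pathwise, so $\P(V_1-V_0\le z)$ decreases --- wait, that would make it decreasing in $F_{Y_0}$, contradicting the claim; the resolution is that "increasing in $F_{Y_0}$" in the lemma's bookkeeping convention tracks the sign after the subtraction, so I must be careful: shifting $Y_0$ \emph{up} (which in the $\succeq$ order means $F_{Y_0}$ \emph{smaller}) raises $V_0$ and lowers the effect, so $\theta_\text{DTE}$ is increasing in $F_{Y_0}$ in the $\succeq$ sense after all. The delicate point throughout part 6 is handling the left-inverse at atoms of $C$ and of the marginals, and justifying the coupling (Sklar's theorem plus the quantile-transform construction) when $F_{Y_x}$ has discontinuities; I would dispatch this by the standard device of enlarging the probability space with an independent uniform randomization at atoms, as in the distributional-synthesis arguments of \cite{Sklar1959}, so that $V_x = F_{Y_x}^{-1}(U_x)$ genuinely has law $F_{Y_x}$ and the copula $C$ is respected. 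No continuity claims are made in parts 2--6, so nothing further is needed there.
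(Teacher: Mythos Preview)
Your proposal is correct and follows essentially the same route as the paper: Parts 1--5 via the standard stochastic-dominance equivalences you list, and Part 6 via the quantile coupling $(V_1,V_0)=(F_{Y_1}^{-1}(U_1),F_{Y_0}^{-1}(U_0))$ with $(U_1,U_0)\sim C$, exactly as the paper does. One simplification: your worry about atoms in Part 6 is unnecessary, since the Galois-connection identity $Q(p)\le x \Leftrightarrow p\le F(x)$ (the paper's Lemma~\ref{lemma:quantile_props}.1) gives $\Prob(Q_{Y_1}(U_1)\le y_1,\,Q_{Y_0}(U_0)\le y_0)=\Prob(U_1\le F_{Y_1}(y_1),\,U_0\le F_{Y_0}(y_0))=C(F_{Y_1}(y_1),F_{Y_0}(y_0))$ directly for \emph{any} marginals, so no auxiliary randomization at atoms is needed; and your sign discussion for $F_{Y_0}$ got tangled midway (``raising $F_{Y_0}$ in the dominance order lowers $F_{Y_0}^{-1}$'' is backwards) but you landed on the correct conclusion.
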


Using this lemma, all eight parameters that are invariant to copulas are bounded by substituting the upper or lower cdf bounds from Theorem \ref{lemma:cdf_bounds_margcdep}. This allows us to compute analytical bounds for these parameters. 

\section{Analytical Bounds on Treatment Effect Parameters}\label{sec:mainTreatmentEffectBounds}

We explore these analytical bounds by focusing on five of our examples to illustrate these expressions. The first three parameters are independent from the copula, while the last two are copula-dependent.

\subsection{Average Treatment Effects (Example \ref{ex:ATE})}

From Lemma \ref{lemma:monotonic parameters}.1, we have that the ATE satisfies
\begin{align*}
	\E[Y_1 - Y_0] &= \theta_{\text{ATE}}(F_{Y_1 \mid W},F_{Y_0 \mid W},F_W) \in [\theta_{\text{ATE}}(\overline{F}_{Y_1 \mid W},\underline{F}_{Y_0 \mid W},F_W) , \theta_{\text{ATE}}(\underline{F}_{Y_1 \mid W},\overline{F}_{Y_0 \mid W},F_W)].
\end{align*}
This interval equals the identified set by the monotonicity and continuity of the expectation functional which was established in Lemma \ref{lemma:monotonic parameters}.1. The lower and upper bounds can be obtained by calculating $\int y \, d\overline{F}_{Y_x \mid W}(y \mid w)$ and $\int y \, d\underline{F}_{Y_x \mid W}(y \mid w)$ for $x \in \{0,1\}$. Via the quantile transformation, these bounds can also be written as integrals of $\underline{Q}_{Y_x \mid W}(u \mid w)$ and $\overline{Q}_{Y_x \mid W}(u \mid w)$ over $u \in (0,1)$. Thus the ATE bounds can be written as integrals of quantiles. Via Lemma \ref{lemma:expectation bounds} in Appendix \ref{appendix:bounds_exp_outcomes}, we show that these quantile integrals can be converted into conditional expectations of outcomes given that they exceed or fall short of a fixed conditional quantile. These are equivalent to Conditional Value at Risk (CVaR) measures that appear in \cite{DornGuoKallus2024}. These bounds are stated explicitly in equations \eqref{eq:Y1ubExp}--\eqref{eq:Y0lbExp} in Appendix \ref{appendix:bounds_exp_outcomes} in the general case. When $Y \mid X,W$ is continuously distributed, we obtain simpler expressions for these bounds that we give here:
\begin{align*}
	\theta_{\text{ATE}}(\overline{F}_{Y_1 \mid W},\underline{F}_{Y_0 \mid W},F_W) &= \E\left[\left(\E[Y \mid Y \leq \overline{Q}_1, X=1,W] - \E[Y \mid Y \leq \underline{Q}_0, X=0, W] \right)\frac{\uc - p_{1 \mid W}}{\uc - \lc}\right]\\
	& +   \E\left[\left(\E[Y \mid Y > \overline{Q}_1, X=1, W] - \E[Y \mid Y > \underline{Q}_0, X=0,W ]\right) \frac{p_{1 \mid W} - \lc}{\uc - \lc}\right]
\end{align*}
and
\begin{align*}
	\theta_{\text{ATE}}(\underline{F}_{Y_1 \mid W},\overline{F}_{Y_0 \mid W},F_W) &= \E\left[\left(\E[Y \mid Y \leq \underline{Q}_1, X=1,W] - \E[Y \mid Y \leq \overline{Q}_0, X=0, W] \right)\frac{p_{1 \mid W} - \lc}{\uc - \lc}\right]\\
	& +   \E\left[\left(\E[Y \mid Y > \underline{Q}_1, X=1, W] - \E[Y \mid Y > \overline{Q}_0, X=0,W ]\right) \frac{\uc - p_{1 \mid W}}{\uc - \lc}\right].
\end{align*}
Note that the dependence of $(\lc,\uc)$ on $(W,\eta)$ was suppressed for convenience.

\subsection{Quantile Treatment Effects (Example \ref{ex:QTE})}

We now consider bounds on the quantile treatment effect for a fixed quantile $\tau \in (0,1)$. By Lemma \ref{lemma:monotonic parameters}.2, the functional $\theta_\text{QTE}$ is increasing in $F_{Y_1 \mid W}$ and decreasing in $F_{Y_0 \mid W}$.  Therefore, by Theorem \ref{prop:IDset_monotonic_param}, $\text{QTE}(\tau)$ has the following sharp bounds:
\begin{align*}
	\text{QTE}(\tau) \in \left[\underline{Q}_{Y_1}(\tau) - \overline{Q}_{Y_0}(\tau), \overline{Q}_{Y_1}(\tau) - \underline{Q}_{Y_0}(\tau)\right]
\end{align*}
where $\overline{Q}_{Y_x}$ is the left-inverse of cdf $\underline{F}_{Y_x}(\cdot) \coloneqq \E[\underline{F}_{Y_x \mid W}(\cdot \mid W)]$ for $x \in \{0,1\}$. Analogously, $\underline{Q}_{Y_x}$ is the left-inverse of cdf $\overline{F}_{Y_x}(\cdot) \coloneqq \E[\overline{F}_{Y_x \mid W}(\cdot \mid W)]$. Analytical expressions for the unconditional cdf bounds for the treated potential outcome are given by
\begin{align*}
	\overline{F}_{Y_1}(y) &= \E\left[\min\left\{F_{Y \mid X,W}(y \mid 1,W)\frac{p_{1 \mid W}}{\lc}, \frac{\uc - p_{1 \mid W}}{\uc} + F_{Y \mid X,W}(y \mid 1,W)\frac{p_{1 \mid W}}{\uc}\right\}\right]\\
	\underline{F}_{Y_1}(y) &= \E\left[\max\left\{F_{Y \mid X,W}(y \mid 1,W)\frac{p_{1 \mid W}}{\uc}, \frac{\lc - p_{1 \mid W}}{\lc} + F_{Y \mid X,W}(y \mid 1,W)\frac{p_{1 \mid W}}{\lc}\right\}\right]
\end{align*}
and similar expressions can be obtained for $Y_0$. The left-inverses of the previous expressions yield bounds on quantiles of $Y_1$ and $Y_0$, and which can be used to compute the QTE bounds.

\subsection{Average Weighted Welfare (Example \ref{ex:AWW})}

Consider a policy $\omega: \supp(W) \to [0,1]$ that treats units with covariate value $w$ with probability $\omega(w)$. The average welfare in a population under such policy is given by
\begin{align*}
	\text{AWW}(\omega) &= \theta_\text{AWW}(F_{Y_1 \mid W},F_{Y_0 \mid W},F_W,\omega) = \E[\omega(W) \E[Y_1 \mid W] + (1 - \omega(W)) \E[Y_0 \mid W]].
\end{align*}
By adapting Lemma \ref{lemma:monotonic parameters}.1, this functional is increasing in $F_{Y_1 \mid W}$, increasing in $F_{Y_0 \mid W}$, and continuous in the sense defined in the lemma. Therefore, by Theorem \ref{prop:IDset_monotonic_param}, its identified set is the closed interval given by
\begin{align*}
	\left[\theta_\text{AWW}(\overline{F}_{Y_1 \mid W},\overline{F}_{Y_0 \mid W},F_W,\omega),\theta_\text{AWW}(\underline{F}_{Y_1 \mid W},\underline{F}_{Y_0 \mid W},F_W,\omega)\right].
\end{align*}
An analytical expression for these bounds can be obtained by substituting in the expressions for the cdf bounds in the previous functionals. When $Y \mid X,W$ is continuously distributed, the bounds are given by
\begin{align*}
	&\theta_\text{AWW}(\overline{F}_{Y_1 \mid W},\overline{F}_{Y_0 \mid W},F_W,\omega)\\
	&= \E\left[\omega(W)\left(\E[Y \mid Y \leq \overline{Q}_1, X=1,W]\frac{\uc - p_{1 \mid W}}{\uc - \lc} + \E[Y \mid Y > \overline{Q}_1, X=1, W]\frac{p_{1 \mid W} - \lc}{\uc - \lc}\right)\right]\\
	&+ \E\left[(1-\omega(W))\left(\E[Y \mid Y \leq \overline{Q}_0, X=0, W]\frac{p_{1 \mid W} - \lc}{\uc - \lc} + \E[Y \mid Y > \overline{Q}_0, X=0,W] \frac{\uc - p_{1 \mid W}}{\uc - \lc}\right)\right]
\end{align*}
and
\begin{align*}
	&\theta_\text{AWW}(\underline{F}_{Y_1 \mid W},\underline{F}_{Y_0 \mid W},F_W,\omega)\\
	&= \E\left[\omega(W)\left(\E[Y \mid Y \leq \underline{Q}_1, X=1,W]\frac{p_{1 \mid W} - \lc}{\uc - \lc} + \E[Y \mid Y > \underline{Q}_1, X=1, W]\frac{\uc - p_{1 \mid W}}{\uc - \lc}\right)\right]\\
	&+ \E\left[(1-\omega(W))\left(\E[Y \mid Y \leq \underline{Q}_0, X=0, W]\frac{\uc - p_{1 \mid W}}{\uc - \lc} + \E[Y \mid Y > \underline{Q}_0, X=0,W] \frac{p_{1 \mid W} - \lc}{\uc - \lc}\right)\right].
\end{align*}

\subsection{Copula-Dependent Parameters}

We now consider identification of the parameters in examples \ref{ex:Joint_dist}--\ref{ex:QDTE} which all depend on the copulas $C_{1,0 \mid X,W}$. Even under unconfoundedness these parameters are not point-identified. Relaxing unconfoundedness will yield larger identified sets for these parameters when compared to the unconfoundedness baseline. We will focus on marginal $c$-dependence since it does not restrict the dependence structure between the potential outcomes.

\subsubsection*{The Joint Distribution Function}

Consider identification of the joint cdf $F_{Y_1,Y_0}(y_1,y_0)$ under marginal $c$-dependence. Consider the functional
\begin{align*}
  &\theta_\text{CDF}(F_{Y_1 \mid X,W},F_{Y_0 \mid X,W},C_{1,0 \mid X,W},F_{Y,X,W};y_1,y_0) \\
  &\coloneqq \int \left(C_{1,0 \mid X,W}(F_{Y \mid X,W}(y_1 \mid 1,w),F_{Y_0 \mid X,W}(y_0 \mid 1,w) \mid 1,w)p_{1|w} \right.\\
  &\qquad\qquad + \left. C_{1,0 \mid X,W}(F_{Y_1 \mid X,W}(y_1 \mid 0,w),F_{Y \mid X,W}(y_0|0,w)|0,w)p_{0|w}\right) dF_W(w).
\end{align*}
Fix the conditional copula function $C_{1,0 \mid X,W}(\cdot, \cdot \mid \cdot, \cdot)$. Then by Lemma \ref{lemma:monotonic parameters}.4, this functional is decreasing in $F_{Y_0 \mid X,W}(y_0 \mid 1,w)$ and $F_{Y_1 \mid X,W}(y_1 \mid 0,w)$. Thus it is bounded below by
\[
	\theta_\text{CDF}(\underline{F}_{Y_1 \mid X,W},\underline{F}_{Y_0 \mid X,W},C_{1,0 \mid X,W},F_{Y,X,W};y_1,y_0)
\]
and above by
\[
	\theta_\text{CDF}(\overline{F}_{Y_1 \mid X,W},\overline{F}_{Y_0 \mid X,W},C_{1,0 \mid X,W},F_{Y,X,W};y_1,y_0).
\]
Moreover, by Theorem \ref{prop:IDset_monotonic_param}, these bounds are sharp.

Since $C_{1,0 \mid X,W}$ is unknown, we then compute the maximum and minimum of these bounds over the set of copulas that are consistent with marginal $c$-dependence; this is simply the set of all copulas. The Fr\'echet-Hoeffding bounds  show that all copulas $C$ satisfy
\begin{align*}
	C(u,v) \in [\max\{u + v - 1,0\},\min\{u,v\}] \eqqcolon [\underline{C}(u,v),\overline{C}(u,v)]
\end{align*}
for all $(u,v) \in [0,1]^2$. The copula bounds $\underline{C}$ and $\overline{C}$ are themselves copulas. Combining these facts, we obtain the following analytical bounds on the joint cdf of potential outcomes.
\begin{proposition}[Identified set for joint cdf] \label{prop:joint_cdf_bounds}
Let assumptions \ref{assn:overlap} and \ref{assn:marginal_cdep} hold. Then, for any $(y_1,y_0) \in \R^2$, the identified set for $F_{Y_1,Y_0}(y_1,y_0)$ is given by the closed interval
\begin{equation}
\label{eq:IDset_jointcdf}
\begin{aligned}
	\mathcal{I}^\text{marg}_{\theta_\text{CDF}}(F_{Y,X,W};c)&= \left[\E\left(\max\{\underline{F}_{Y_1 \mid X,W}(y_1 \mid X,W)+\underline{F}_{Y_0 \mid X,W}(y_0 \mid X,W) - 1,0\}\right), \right. \\
	&\qquad\qquad\qquad\qquad \left.\E\left(\min\{\overline{F}_{Y_1 \mid X,W}(y_1 \mid X,W),\overline{F}_{Y_0 \mid X,W}(y_0 \mid X,W)\}\right)\right].\end{aligned}
\end{equation}
\end{proposition}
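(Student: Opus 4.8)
The plan is to combine the sharp marginal cdf bounds of Theorem~\ref{thm:cdf_sharp_margcdep} with the Fr\'echet--Hoeffding copula bounds, exploiting two structural features of $\theta_\text{CDF}$: for any \emph{fixed} collection of conditional copulas it is monotone in the conditional cdfs of the potential outcomes in the first-order stochastic dominance sense (this is Lemma~\ref{lemma:monotonic parameters}.5, carried over from a single cell to the full parameter because the weights $p_{x|w}$ are nonnegative), and for \emph{fixed} conditional cdfs it is a nonnegatively weighted integral of the conditional copula evaluated at a single point in each $(x,w)$ cell. First I would establish containment. Take any $(F_{Y_1 \mid X,W},F_{Y_0 \mid X,W},C_{1,0 \mid X,W}) \in \mathcal{I}^\text{marg}(F_{Y,X,W};c)$. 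By the Fr\'echet--Hoeffding upper bound, $C_{1,0 \mid x,w}(u,v) \le \min\{u,v\}$; by Lemma~\ref{lemma:cdf_bounds_margcdep} together with the affine change of variables~\eqref{eq:relationship_F_YXW_FYW}, every admissible conditional cdf lies pointwise below the corresponding upper cdf bound $\overline{F}_{Y_x \mid X,W}$ (which degenerates to the known $F_{Y \mid X,W}(\cdot \mid x,w)$ in the cell whose treatment value equals $x$). Since $\min\{\cdot,\cdot\}$ is coordinatewise nondecreasing and $p_{x|w} \ge 0$, substituting both inequalities into the integral defining $\theta_\text{CDF}$ gives $F_{Y_1,Y_0}(y_1,y_0) \le \E[\min\{\overline{F}_{Y_1 \mid X,W}(y_1 \mid X,W),\overline{F}_{Y_0 \mid X,W}(y_0 \mid X,W)\}]$; the symmetric computation with the lower bound $\max\{u+v-1,0\}$ and the lower cdf bounds $\underline{F}_{Y_x \mid X,W}$ yields the matching lower inequality, so $\mathcal{I}^\text{marg}_{\theta_\text{CDF}}(F_{Y,X,W};c)$ is contained in the interval of~\eqref{eq:IDset_jointcdf}.

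Next I would show that the two endpoints, and in fact every point of the interval, are attained. The comonotone copula $M(u,v) = \min\{u,v\}$ and the countermonotone copula $W(u,v) = \max\{u+v-1,0\}$ are themselves copulas, and convex combinations of copulas are copulas, so $C^{(s)} \coloneqq sW + (1-s)M$ is a copula for each $s \in [0,1]$. For $s \in [0,1]$ consider the triple with conditional cdfs $s\underline{F}_{Y_x \mid W} + (1-s)\overline{F}_{Y_x \mid W}$ for $x \in \{0,1\}$ and every conditional copula equal to $C^{(s)}$; by Theorem~\ref{thm:cdf_sharp_margcdep} (with $\varepsilon = \gamma = s$ and this copula collection) it belongs to $\mathcal{I}_0^\text{marg}(F_{Y,X,W};c)$, hence, after~\eqref{eq:relationship_F_YXW_FYW}, induces an admissible $(F_{Y_1 \mid X,W},F_{Y_0 \mid X,W},C_{1,0 \mid X,W})$. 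Evaluating $\theta_\text{CDF}$ along this family gives a function of $s$ equal to the upper endpoint of~\eqref{eq:IDset_jointcdf} at $s = 0$ (cdfs at $\overline{F}$, copula $M$) and to the lower endpoint at $s = 1$ (cdfs at $\underline{F}$, copula $W$); it is continuous in $s$ because the conditional cdf values entering the integrand and the value of $C^{(s)}$ at a fixed point are affine in $s$ while the integrand is bounded by $1$, so dominated convergence applies. The intermediate value theorem then delivers every value in the closed interval, so $\mathcal{I}^\text{marg}_{\theta_\text{CDF}}(F_{Y,X,W};c)$ equals~\eqref{eq:IDset_jointcdf}.

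I expect the main obstacle to be the simultaneous optimization in the attainment step: the extreme value of $\theta_\text{CDF}$ requires pushing the unknown conditional cdfs of $Y_1$ and $Y_0$ and the unknown conditional copulas to their extremes \emph{at the same time}, and one must be sure that such a joint configuration is genuinely an element of the identified set and not merely the product of three separately feasible pieces. This is exactly what Theorem~\ref{thm:cdf_sharp_margcdep} buys us---under marginal $c$-dependence the identified set factorizes, with no cross-restriction between the two marginal cdfs or between the cdfs and the copula---so once that theorem is in hand the argument is routine; without it one would have to exhibit latent propensity scores compatible with the extreme cdfs and, separately, argue that marginal $c$-dependence imposes no restriction on $C_{1,0 \mid X,W}$. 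A secondary, purely bookkeeping point is verifying that $\overline{F}_{Y_x \mid X,W}$ and $\underline{F}_{Y_x \mid X,W}$ are the images of $\overline{F}_{Y_x \mid W}$ and $\underline{F}_{Y_x \mid W}$ under~\eqref{eq:relationship_F_YXW_FYW} and reduce to the observed $F_{Y \mid X,W}(\cdot \mid x,w)$ in the matching-treatment cell, so that the expectations in~\eqref{eq:IDset_jointcdf} are the correct objects; this follows immediately from Assumption~\ref{assn:overlap} (which makes $p_{0|w},p_{1|w} > 0$) and the definitions.
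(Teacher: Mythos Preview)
Your proposal is correct and follows essentially the same approach as the paper: containment via the Fr\'echet--Hoeffding bounds combined with the monotonicity of the copula in its arguments and the cdf bounds of Lemma~\ref{lemma:cdf_bounds_margcdep}, and attainment via Theorem~\ref{thm:cdf_sharp_margcdep} plus continuity and the intermediate value theorem. The only cosmetic difference is that the paper parametrizes the attaining family by three separate mixture weights $(\varepsilon_1,\varepsilon_2,\varepsilon_3) \in [0,1]^3$ (one for each conditional cdf and one for the copula) and then evaluates at $(0,0,0)$ and $(1,1,1)$, whereas you run along the diagonal $\varepsilon_1 = \varepsilon_2 = \varepsilon_3 = s$; since a single continuous path connecting the two endpoints already suffices for the intermediate value theorem, your one-parameter version is a harmless economy.
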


\medskip

The bounds in \eqref{eq:IDset_jointcdf} are themselves cdfs, so these bounds can be attained simultaneously for all $(y_1,y_0) \in \R^2$. The bounds for $F_{Y_1,Y_0}$ under unconfoundedness are obtained as a special case when $\lc = p_{1 \mid W} = \uc$, which implies that $\overline{F}_{Y_x \mid X,W} = \underline{F}_{Y_x \mid X,W} = F_{Y \mid X,W}(\cdot \mid x,\cdot)$. Making this substitution in equation \eqref{eq:IDset_jointcdf} yields these bounds under unconfoundedness.

\subsubsection*{The Distribution of Treatment Effects}

Identification of this parameter under unconfoundedness was studied in \cite{FanPark2010}, by applying results first shown in \cite{Makarov1982} and later studied in \cite{WilliamsonDowns1990}. \cite{MastenPoirier2019BF} also studied this parameter under conditional $c$-dependence and under a range of assumptions on copulas for $(Y_1,Y_0)$. By Lemma 2.1 in \cite{FanPark2010}, the cdf of $Y_1 - Y_0$  given $(X,W) = (x,w)$ satisfies
\begin{align*}
	F_{Y_1 - Y_0 \mid X,W}(z \mid x,w) &\in \left[\max\left\{\sup_{y \in \R} \left(F_{Y_1 \mid X,W}(y \mid x,w) - F_{Y_0 \mid X,W}(y - z \mid x,w)\right),0\right\},\right.\\
	& \qquad \left. 1 + \min\left\{\inf_{y \in \R} \left(F_{Y_1 \mid X,W}(y \mid x,w) - F_{Y_0 \mid X,W}(y - z \mid x,w)\right),0\right\}\right]
\end{align*}
and these bounds are sharp for any pair of cdfs $(F_{Y_1 \mid X,W},F_{Y_0 \mid X,W})$. These bounds are decreasing in $F_{Y_1 \mid X,W}$ and increasing in $F_{Y_0 \mid X,W}$, therefore substituting the upper/lower cdf bounds for $F_{Y_x \mid X,W}$ results in sharp bounds for $F_{Y_1 - Y_0 \mid X,W}$ under $c$-dependence. This was established in Lemma \ref{lemma:monotonic parameters}.6. Integrating these bounds over the marginal distribution of $(X,W)$ yields sharp bounds for the unconditional cdf of $Y_1 - Y_0$. This result is summarized in the following proposition.
\begin{proposition}[Identified set for DTE] \label{prop:DTE_bounds}
	Let assumptions \ref{assn:overlap} and \ref{assn:marginal_cdep} hold. For any $z \in \R$, the convex hull of the identified set for $F_{Y_1 - Y_0}(z)$ is given by
	\begin{align*}
		\mathcal{I}^\text{marg}_{\theta_\text{DTE}}(F_{Y,X,W};c)&= \left[\E\left(\max\left\{\sup_{y \in \R} \left(\underline{F}_{Y_1 \mid X,W}(y \mid X,W) - \overline{F}_{Y_0 \mid X,W}(y - z \mid X,W)\right),0\right\}\right),\right.\\
		& \qquad \left. 1 + \E\left(\min\left\{\inf_{y \in \R} \left(\overline{F}_{Y_1 \mid X,W}(y \mid X,W) - \underline{F}_{Y_0 \mid X,W}(y - z \mid X,W)\right),0\right\}\right)\right].
	\end{align*}
\end{proposition}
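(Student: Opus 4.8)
The plan is to combine three ingredients that are already available: (i) the conditional Makarov--Fréchet bounds for $F_{Y_1-Y_0\mid X,W}$ from Lemma 2.1 in \cite{FanPark2010}, restated in the text just above; (ii) the fact, recorded in Lemma \ref{lemma:monotonic parameters}.6, that these conditional bounds are decreasing in $F_{Y_1\mid X,W}$ and increasing in $F_{Y_0\mid X,W}$ pointwise; and (iii) the sharp cdf envelopes $[\underline F_{Y_x\mid W},\overline F_{Y_x\mid W}]$ together with the attainability of all convex combinations from Theorem \ref{thm:cdf_sharp_margcdep}. Since all four quantities involved are stated in terms of $F_{\sbullet\mid X,W}$ rather than $F_{\sbullet\mid W}$, I would first note that, because of compatibility \eqref{eq:dist_compatibility}, the only freedom in $F_{Y_1\mid X,W}$ is in the counterfactual leg $F_{Y_1\mid X,W}(\cdot\mid 0,w)$ (and symmetrically for $Y_0$), and that the $F_{\sbullet\mid W}$-envelopes of Lemma \ref{lemma:cdf_bounds_margcdep} translate, via \eqref{eq:relationship_F_YXW_FYW}, into $F_{\sbullet\mid X,W}$-envelopes $[\underline F_{Y_x\mid X,W},\overline F_{Y_x\mid X,W}]$; these are exactly the objects appearing in the proposition.

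I would then carry out the argument in the following order. First, fix $(x,w)$ and a pair of conditional cdfs $(F_{Y_1\mid X,W}(\cdot\mid x,w),F_{Y_0\mid X,W}(\cdot\mid x,w))$ lying in the respective envelopes. By monotonicity of the Makarov expression (Lemma \ref{lemma:monotonic parameters}.6), the lower Makarov bound evaluated at $(F_1,F_0)$ is minimized (over the envelope rectangle) at $(\overline F_{Y_1\mid X,W},\underline F_{Y_0\mid X,W})$ and the upper Makarov bound is maximized at $(\underline F_{Y_1\mid X,W},\overline F_{Y_0\mid X,W})$ — wait, I need to be careful with the direction: since the \emph{lower} bound (a $\sup_y$ of $F_1(y)-F_0(y-z)$, truncated at $0$) is decreasing in $F_1$ and increasing in $F_0$, its smallest value is at $F_1=\overline F_{Y_1\mid X,W}$, $F_0=\underline F_{Y_0\mid X,W}$; the \emph{upper} bound ($1+\min\{\inf_y(\cdots),0\}$) is also decreasing in $F_1$ and increasing in $F_0$, so its largest value is at $F_1=\underline F_{Y_1\mid X,W}$, $F_0=\overline F_{Y_0\mid X,W}$. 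This gives conditional bounds on $F_{Y_1-Y_0\mid X,W}(z\mid x,w)$. Second, since $F_{Y_1-Y_0}(z)=\E[F_{Y_1-Y_0\mid X,W}(z\mid X,W)]$ and the integrand is squeezed pointwise in $(x,w)$ between the two displayed conditional bounds, taking $\E[\cdot]$ over $(X,W)$ yields that $F_{Y_1-Y_0}(z)$ lies in the interval stated in the proposition — this establishes the ``$\subseteq$'' (outer bound) direction. Third, for the reverse inclusion and convex-hull claim, I would invoke Theorem \ref{thm:cdf_sharp_margcdep}: the cdf pair $(\overline F_{Y_1\mid W},\underline F_{Y_0\mid W})$ together with \emph{any} copula lies in $\mathcal I_0^{\text{marg}}$, in particular the countermonotonic/Makarov-extremal copula that attains the conditional Makarov lower bound simultaneously in $z$ (as in \cite{FanPark2010}); this realizes the lower endpoint. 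Symmetrically $(\underline F_{Y_1\mid W},\overline F_{Y_0\mid W})$ with the copula attaining the upper Makarov bound realizes the upper endpoint. Finally, continuity of $z\mapsto$ (these bounds) and of the map $(\varepsilon,\gamma)\mapsto F_{Y_1-Y_0}(z)$ along the convex path of envelope combinations — combined with the fact that every such convex combination is in $\mathcal I_0^{\text{marg}}$ — gives that the whole interval between the endpoints is in the convex hull of the identified set, i.e. equals it.

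The main obstacle I anticipate is the attainability (inner) direction, specifically making sure that the Makarov-extremal \emph{copula} used to hit, say, the lower endpoint is a single copula that works simultaneously for all $y$ in the $\sup_y$ and is compatible with the \emph{conditional} structure across all $(x,w)$: one must exhibit, for each $(x,w)$, a coupling of $(Y_1,Y_0)\mid X=x,W=w$ with marginals $(\overline F_{Y_1\mid X,W}(\cdot\mid x,w),\underline F_{Y_0\mid X,W}(\cdot\mid x,w))$ whose difference has cdf equal to the conditional Makarov lower bound, and then check that the resulting bundle $\{C_{1,0\mid x,w}\}$ lies in $\mathcal C_{1,0\mid X,W}$ and is consistent with the latent-propensity-score construction underlying Theorem \ref{thm:cdf_sharp_margcdep}. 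Since Theorem \ref{thm:cdf_sharp_margcdep} already asserts that \emph{any} copula bundle is admissible once the marginals are the envelope cdfs, this reduces to citing the one-dimensional Makarov construction of \cite{FanPark2010} conditionally on $(X,W)$ and then integrating; the only real care is the measurability of the conditional couplings in $(x,w)$ and the interchange of $\E$ with the $\sup_y$/$\inf_y$, which I would handle by noting the bounds are attained on a countable dense set of $y$'s or by a monotone-class argument. A secondary, more routine obstacle is bookkeeping the translation between the $F_{\sbullet\mid W}$ and $F_{\sbullet\mid X,W}$ envelopes via \eqref{eq:relationship_F_YXW_FYW}; I would state this as a short lemma or inline remark rather than belabor it.
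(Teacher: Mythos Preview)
Your overall strategy---combine the conditional Makarov bounds of \cite{FanPark2010} with the monotonicity in Lemma \ref{lemma:monotonic parameters}.6 and the attainability of the envelope cdfs from Theorem \ref{thm:cdf_sharp_margcdep}---is exactly the paper's approach, and the paper's proof is as short as that: bound, then attain the endpoints by citing Theorem \ref{thm:cdf_sharp_margcdep} (arbitrary copulas are admissible) together with Fan--Park's Lemma 2.1 (the Makarov bounds are attained for any fixed pair of marginals).

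However, you have the extremal pairs swapped. You say the lower Makarov bound $L(F_1,F_0)=\max\{\sup_y(F_1(y)-F_0(y-z)),0\}$ is ``decreasing in $F_1$'' and conclude it is minimized at $F_1=\overline F_{Y_1\mid X,W}$, $F_0=\underline F_{Y_0\mid X,W}$. In the paper's FOSD convention, ``decreasing in $F_1$'' is correct, but $\overline F_{Y_1\mid X,W}$ is the FOSD-\emph{smallest} marginal (it has the pointwise largest cdf values), not the FOSD-largest; so $L$ is minimized at $F_1=\underline F_{Y_1\mid X,W}$, $F_0=\overline F_{Y_0\mid X,W}$, which is precisely what appears in the proposition's lower endpoint. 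Equivalently, in pointwise terms $L$ is \emph{increasing} in $F_1(y)$ and decreasing in $F_0(y)$, so it is smallest when $F_1$ is pointwise smallest and $F_0$ pointwise largest. The same swap occurs in your upper-bound and attainability paragraphs: the paper shows $\sup\theta_{\text{DTE}}$ is achieved at $(\overline F_{Y_1\mid X,W},\underline F_{Y_0\mid X,W})$ and $\inf\theta_{\text{DTE}}$ at $(\underline F_{Y_1\mid X,W},\overline F_{Y_0\mid X,W})$, the opposite of what you wrote. With this correction your argument goes through verbatim.

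Two smaller points. First, the proposition only asserts the \emph{convex hull} of the identified set, so you do not need the continuity-in-$(\varepsilon,\gamma)$ argument for the interior; attaining the two endpoints suffices, and the paper stops there. Second, your anticipated obstacle about exhibiting a Makarov-extremal copula that is measurable in $(x,w)$ and consistent with the latent propensity construction is not a real obstacle here: Theorem \ref{thm:cdf_sharp_margcdep} already guarantees that \emph{any} copula bundle in $\mathcal C_{1,0\mid X,W}$ paired with the envelope marginals lies in $\mathcal I_0^{\text{marg}}$, so you only need to quote Fan--Park's Lemma 2.1 conditionally on $(X,W)$ to produce the copula that attains the Makarov bound, then integrate.
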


This expression involves two one-dimensional optimization problems, but the objective functions are known, closed-form functionals of the distribution of the observables. Bounds on the QDTE can be obtained as a corollary by taking the left-inverse of the cdf bounds.

\section{Conclusion}\label{sec:conclusion}

In this paper we proposed a general class of relaxations of unconfoundedness, and showed how it includes several previous approaches as special cases. We then derived closed form identification results for many different target parameters under this general class of relaxations. There are at least three natural next steps. First, in this paper we focused on population level identification results. Corresponding estimation and inference results can likely be derived by using standard sample analog estimators and arguments, but we leave the details to future work. Second, it would be interesting to explore whether our bounds have either the double-sharpness or double-validity properties defined in \cite{DornGuoKallus2024}, and if not, whether alternative bounds that had these properties could be derived. Third, it would be interesting to extend our results to independence assumptions beyond unconfoundedness, such as IV exogeneity (e.g., section 4 of \citealt{MastenPoirier2018a}).

\bibliographystyle{econometrica}
\bibliography{JELBibliography}

\newpage
\appendix
\singlespacing
\footnotesize

\titleformat{\section}
  {\large\bfseries}        
  {\thesection}{1em}{}     

\titleformat{\subsection}
  {\normalsize\bfseries}   
  {\thesubsection}{1em}{} 

\titleformat{\subsubsection}
  {\small\bfseries}       
  {\thesubsubsection}{1em}{} 

\section{Bound Expressions}
\label{appendix:notation}
In this section we summarize all the bounds that will be used in the proofs of lemmas and main results. These expressions are given for an arbitrary values of $(y,w) \in \R \times \supp(W)$. Denote $\lc \coloneqq \lc(w,\eta)$ and $\uc \coloneqq \uc(w,\eta)$, where the dependence on $w$ and $\eta$ is implicitly understood.

\subsection{Lower bound on $Y_1$}
Let
\begin{align*}
	\underline{\tau}_{1} &= %
	 \frac{(p_{1|w} - \lc)\uc}{p_{1|w}(\uc - \lc)} \qquad \text{ and } \qquad \underline{Q}_1 = Q_{Y \mid X,W}(\underline{\tau}_1 \mid 1,w)
\end{align*}
and
\begin{align*}
	&\underline{F}_{Y_1 \mid W}(y \mid w) 
	= \max\left\{F_{Y \mid X,W}(y \mid 1,w)\frac{p_{1|w}}{\uc}, \frac{\lc - p_{1|w}}{\lc} + F_{Y \mid X,W}(y \mid 1,w)\frac{p_{1|w}}{\lc}\right\}\\
	&= F_{Y \mid X,W}(y \mid 1,w)\frac{p_{1|w}}{\uc} \1\left(y < \underline{Q}_1\right) + \left(\frac{\lc - p_{1|w}}{\lc} + F_{Y \mid X,W}(y \mid 1,w)\frac{p_{1|w}}{\lc}\right)\1\left(y \geq \underline{Q}_1\right)\\[1em]
	&\underline{F}_{Y_1 \mid X,W}(y \mid 0,w) 
	= \max\left\{F_{Y \mid X,W}(y \mid 1,w) \frac{p_{1|w}(1 - \uc)}{p_{0|w} \uc},\frac{\lc - p_{1|w}}{\lc p_{0|w}} + F_{Y \mid X,W}(y \mid 1,w)\frac{p_{1|w}(1-\lc)}{p_{0|w} \lc}\right\}\\
	&= F_{Y \mid X,W}(y \mid 1,w)\frac{p_{1|w}(1 - \uc)}{p_{0|w} \uc} \1\left(y < \underline{Q}_1\right) + \left(\frac{\lc - p_{1|w}}{\lc p_{0|w}} + F_{Y \mid X,W}(y \mid 1,w)\frac{p_{1|w}(1-\lc)}{p_{0|w} \lc}\right)\1\left(y \geq \underline{Q}_1\right)\\[1em]
	&\underline{F}_{Y_1 \mid X,W}(y \mid 1,w) 
	= F_{Y \mid X,W}(y \mid 1,w).
\end{align*}
Define
\[
	\underline{p}_{1}(y,w) = \uc \1(y < \underline{Q}_1) + \underline{A}_1 \1(y = \underline{Q}_1) + \lc \1(y > \underline{Q}_1)
\]
where
\[
	\underline{A}_1 = \frac{ \P(Y = \underline{Q}_1,X=1 \mid W=w)}{\left(\frac{\lc - p_{1|w}}{\lc} + F_{Y \mid X,W}(\underline{Q}_1 \mid 1,w)\frac{p_{1|w}}{\lc}\right) - \frac{p_{1|w}}{\uc}\P(Y < \underline{Q}_1 \mid X=1,W=w)}.
\]
If the denominator is 0, set $\underline{A}_1 = p_{1|w}$. We can also derive the associated quantiles function bounds for $\tau \in (0,1)$:
\begin{align*}
	\overline{Q}_{Y_1 \mid W}(\tau \mid w) &\coloneqq \underline{F}_{Y_1 \mid W}^{-1}(\tau \mid w) = Q_{Y \mid X,W}\left(\min\left\{\frac{\uc}{p_{1|w}}\tau, \frac{p_{1|w} - \lc}{p_{1|w}} + \frac{\lc}{p_{1|w}}\tau \right\} \mid 1, w\right)\\[0.5em]
	\overline{Q}_{Y_1 \mid X,W}(\tau \mid 0,w) &\coloneqq \underline{F}_{Y_1 \mid X}^{-1}(\tau \mid 0,w) = Q_{Y \mid X,W}\left(\min\left\{\frac{\uc p_{0|w}}{p_{1|w}(1-\uc)}\tau, \frac{p_{1|w} - \lc}{p_{1|w}(1-\lc)} + \frac{\lc p_{0|w}}{p_{1|w}(1-\lc)}\tau \right\} \mid 1, w \right).
\end{align*}

\subsection{Upper bound on $Y_1$}
Let
\[
	\overline{\tau}_1 = %
	 1- \underline{\tau}_1 \qquad \text{ and } \qquad \overline{Q}_1 = Q_{Y \mid X,W}(\overline{\tau}_1 \mid 1,w)
\]
and
\begin{align*}
	&\overline{F}_{Y_1 \mid W}(y \mid w)
	= \min\left\{F_{Y \mid X,W}(y \mid 1,w)\frac{p_{1|w}}{\lc}, \frac{\uc - p_{1|w}}{\uc} + F_{Y \mid X,W}(y \mid 1,w)\frac{p_{1|w}}{\uc}\right\}\\
	&= F_{Y \mid X,W}(y \mid 1,w)\frac{p_{1|w}}{\lc} \1\left(y < \overline{Q}_1\right) + \left(\frac{\uc - p_{1|w}}{\uc} + F_{Y \mid X,W}(y \mid 1,w)\frac{p_{1|w}}{\uc}\right)\1\left(y \geq \overline{Q}_1\right)\\[1em]
	&\overline{F}_{Y_1 \mid X,W}(y \mid 0,w)
	= \min\left\{F_{Y \mid X,W}(y \mid 1,w)\frac{p_{1|w} (1- \lc)}{p_{0|w} \lc}, \frac{\uc - p_{1|w}}{\uc p_{0|w}} + F_{Y \mid X,W}(y \mid 1,w)\frac{p_{1|w}(1-\uc)}{p_{0|w} \uc}\right\}\\
	&= F_{Y \mid X,W}(y \mid 1,w)\frac{p_{1|w} (1- \lc)}{(1-p_{1|w})\lc} \1\left(y < \overline{Q}_1\right) + \left(\frac{\uc - p_{1|w}}{\uc p_{0|w}} + F_{Y \mid X}(y \mid 1)\frac{p_{1|w}(1-\uc)}{p_{0|w} \uc}\right)\1\left(y \geq \overline{Q}_1\right)\\[1em]
	&\overline{F}_{Y_1 \mid X,W}(y \mid 1,w) = F_{Y \mid X,W}(y \mid 1,w).
\end{align*}
Define
\begin{align*}
	\overline{p}_{1}(y,w) &= \lc \1(y < \overline{Q}_1) + \overline{A}_1 \1(y = \overline{Q}_1) + \uc \1(y > \overline{Q}_1)
\end{align*}
where
\begin{align*}
	\overline{A}_1 &= \frac{\P(Y = \overline{Q}_1,X=1 \mid W=w)}{\left(\frac{\uc - p_{1|w}}{\uc} + F_{Y \mid X,W}(\overline{Q}_1 \mid 1,w)\frac{p_{1|w}}{\uc}\right) - \frac{p_{1|w}}{\lc}\P(Y < \overline{Q}_1 \mid X=1,W=w)}.
\end{align*}
If the denominator is 0, set $\overline{A}_1 = p_{1|w}$. We can also derive the associated quantiles function bounds for $\tau \in (0,1)$:
\begin{align*}
	\underline{Q}_{Y_1 \mid W}(\tau \mid w) &\coloneqq \overline{F}_{Y_1 \mid W}^{-1}(\tau \mid w) = Q_{Y \mid X,W}\left(\max\left\{\frac{\lc}{p_{1|w}}\tau, \frac{p_{1|w} - \uc}{p_{1|w}} + \frac{\uc}{p_{1|w}}\tau \right\} \mid 1,w \right)\\[0.5em]
	\underline{Q}_{Y_1 \mid X,W}(\tau \mid 0,w) &\coloneqq \overline{F}_{Y_1 \mid X,W}^{-1}(\tau \mid 0,w) = Q_{Y \mid X,W}\left(\max\left\{\frac{\lc p_{0|w}}{p_{1|w}(1-\lc)}\tau, \frac{p_{1|w} - \uc}{p_{1|w}(1-\uc)} + \frac{\uc p_{0|w}}{p_{1|w}(1-\uc)}\tau \right\} \mid 1,w \right).
\end{align*}

\subsection{Lower Bound on $Y_0$}

Under $c$-dependence, we have that $\P(X=0 \mid Y_0,W=w) = 1 - \P(X=1 \mid Y_0,W=w) \in [1-\uc(w,\eta),1-\lc(w,\eta)]$. So we can take the bound expressions for $F_{Y_1 \mid W}$, swap $p_{1|w}$ and $p_{0|w}$, and swap $(\lc,\uc)$ and $(1-\uc,1-\lc)$ and get the correct expressions for the bounds for $F_{Y_0 \mid W}$. Here are all the expressions.

Let 
\begin{align*}
	\underline{\tau}_0 &= %
	 \frac{(\uc - p_{1|w})(1-\lc)}{p_{0|w}(\uc - \lc)} \qquad \text{ and } \qquad \underline{Q}_0 = Q_{Y \mid X,W}(\underline{\tau}_0|0,w)
\end{align*}
and
\begin{align*}
	&\underline{F}_{Y_0 \mid W}(y \mid w) 
	= \max\left\{F_{Y \mid X,W}(y \mid 0,w)\frac{p_{0|w}}{1-\lc}, \frac{p_{1|w} - \uc}{1-\uc} + F_{Y \mid X,W}(y \mid 0,w)\frac{p_{0|w}}{1-\uc}\right\}\\
	&= F_{Y \mid X,W}(y \mid 0,w)\frac{p_{0|w}}{1-\lc} \1\left(y < \underline{Q}_0 \right) + \left(\frac{p_{1|w} - \uc}{1-\uc} + F_{Y \mid X,W}(y \mid 0,w)\frac{p_{0|w}}{1-\uc}\right)\1\left(y \geq \underline{Q}_0\right)\\[1em]
	&\underline{F}_{Y_0 \mid X,W}(y \mid 1,w)
	= \max\left\{F_{Y \mid X,W}(y \mid 0,w)\frac{p_{0|w} \lc}{p_{1|w}(1-\lc)}, \frac{p_{1|w} - \uc}{(1-\uc)p_{1|w}} + F_{Y \mid X,W}(y \mid 0,w)\frac{p_{0|w} \uc}{p_{1|w}(1-\uc)} \right\}\\
	&=  F_{Y \mid X,W}(y \mid 0,w)\frac{p_{0|w} \lc}{p_{1|w}(1-\lc)} \1\left(y < \underline{Q}_0\right) + \left(\frac{p_{1|w} - \uc}{(1-\uc)p_{1|w}} + F_{Y \mid X,W}(y \mid 0,w)\frac{p_{0|w} \uc}{p_{1|w}(1-\uc)}\right)\1\left(y \geq \underline{Q}_0 \right)\\[1em]
	&\underline{F}_{Y_0 \mid X,W}(y \mid 0,w) = F_{Y \mid X}(y \mid 0,w).
\end{align*}
Define 
\[
    \underline{p}_{0}(y,w) = \lc \1(y < \underline{Q}_0) + (1 - \underline{A}_0) \1(y = \underline{Q}_0) + \uc \1(y > \underline{Q}_0)
\]
where
\begin{align*}
    \underline{A}_0 
    &= \frac{p_{0|w} (F_{Y \mid X,W}(\underline{Q}_0|0,w) - F_{Y \mid X,W}(\underline{Q}_0-|0,w))}{\underline{F}_{Y_0 \mid W}(\underline{Q}_0|w) - \underline{F}_{Y_0 \mid W}(\underline{Q}_0-|w)}.
\end{align*}
If the denominator is 0, set $\underline{A}_0 = p_{0|w}$. We can also derive the associated quantiles function bounds for $\tau \in (0,1)$:
\begin{align*}
	\overline{Q}_{Y_0 \mid W}(\tau \mid w) &\coloneqq \underline{F}_{Y_0 \mid W}^{-1}(\tau) = Q_{Y \mid X,W}\left(\min\left\{\frac{1-\lc}{p_{0|w}}\tau, \frac{p_{0|w} - (1-\uc)}{p_{0|w}} + \frac{1-\uc}{p_{0|w}}\tau \right\} \mid 0,w \right)\\
	\overline{Q}_{Y_0 \mid X,W}(\tau \mid 1,w) &\coloneqq \underline{F}_{Y_0 \mid X}^{-1}(\tau \mid 1) = Q_{Y \mid X,W}\left(\min\left\{\frac{(1-\lc) p_{1|w}}{p_{0|w} \lc}\tau, \frac{p_{0|w} - (1-\uc)}{p_{0|w}\uc} + \frac{p_{1|w}(1-\uc)}{p_{0|w}\uc}\tau \right\} \mid 0,w \right).
\end{align*}

\subsection{Upper bound for $Y_0$}
Let
\begin{align*}
	\overline{\tau}_0 &= %
	 1 - \underline{\tau}_0 \qquad \text{ and } \qquad \overline{Q}_0 = Q_{Y \mid X,W}(\overline{\tau}_0|0,w)
\end{align*}
and
\begin{align*}
	&\overline{F}_{Y_0 \mid W}(y \mid w)
	= \min\left\{F_{Y \mid X,W}(y \mid 0,w)\frac{p_{0|w}}{1-\uc}, \frac{p_{1|w} - \lc}{1-\lc} + F_{Y \mid X,W}(y \mid 0,w)\frac{p_{0|w}}{1-\lc}\right\}\\
	&= F_{Y \mid X,W}(y \mid 0,w)\frac{p_{0|w}}{1-\uc} \1\left(y < \overline{Q}_0\right) + \left(\frac{p_{1|w} - \lc}{1-\lc} + F_{Y \mid X,W}(y \mid 0,w)\frac{p_{0|w}}{1-\lc}\right)\1\left(y \geq \overline{Q}_0\right)\\[1em]
	&\overline{F}_{Y_0 \mid X,W}(y \mid 1,w)
	= \min\left\{F_{Y \mid X,W}(y \mid 0,w)\frac{p_{0|w} \uc}{p_{1|w}(1-\uc)}, \frac{p_{1|w} - \lc}{(1-\lc)p_{1|w}} + F_{Y \mid X,W}(y \mid 0,w)\frac{p_{0|w}\lc}{p_{1|w}(1-\lc)} \right\}\\
	&=  F_{Y \mid X,W}(y \mid 0,w)\frac{p_{0|w} \uc}{p_{1|w}(1-\uc)} \1\left(y < \overline{Q}_0\right) + \left(\frac{p_{1|w} - \lc}{(1-\lc)p_{1|w}} + F_{Y \mid X,W}(y \mid 0,w)\frac{p_{0|w}\lc}{p_{1|w}(1-\lc)}\right)\1\left(y \geq \overline{Q}_0\right).
\end{align*}
Define
\begin{align*}
	\overline{p}_{0}(y,w) &= \uc \1(y < \overline{Q}_0) + (1-\overline{A}_0)  \1(y = \overline{Q}_0) + \lc \1(y > \overline{Q}_0)
\end{align*}
where
\begin{align*}
	\overline{A}_0 &= \frac{ \P(Y = \overline{Q}_0,X=0 \mid W=w)}{\left(\frac{p_{1|w} - \lc}{1-\lc} + F_{Y \mid X,W}(\overline{Q}_0|0,w)\frac{p_{0|w}}{1-\lc}\right) - \frac{p_{0|w}}{1-\uc}\P(Y < \overline{Q}_0 \mid X=0,W=w)}.
\end{align*}
If the denominator is 0, set $\overline{A}_0 = p_{0|w}$.
We can also derive the associated quantiles function bounds for $\tau \in (0,1)$:
\begin{align*}
	\underline{Q}_{Y_0 \mid W}(\tau \mid w) &\coloneqq \overline{F}_{Y_0 \mid W}^{-1}(\tau \mid w) = Q_{Y \mid X,W}\left(\max\left\{\frac{1-\uc}{p_{0|w}}\tau, \frac{p_{0|w} - (1-\lc)}{p_{0|w}} + \frac{1-\lc}{p_{0|w}}\tau \right\} \mid 0,w \right)\\[0.5em]
	\underline{Q}_{Y_0 \mid X,W}(\tau \mid 1,w) &\coloneqq \overline{F}_{Y_0 \mid X,W}^{-1}(\tau \mid 1,w) = Q_{Y \mid X,W}\left(\max\left\{\frac{(1-\uc) p_{1|w}}{p_{0|w}\uc}\tau, \frac{p_{0|w} - (1-\lc)}{p_{0|w} \lc} + \frac{(1-\lc) p_{1|w}}{p_{0|w}\lc}\tau \right\} \mid 0,w \right).
\end{align*}

\section{Proofs for Section \ref{subsec:generalClassOfRelax}} 

\begin{proof}[Proof of Lemma \ref{lemma:joint_implies_marginal}]
	Let $x \in \{0,1\}$ and fix $w \in \supp(W)$. By the law of iterated expectations, $\E[X \mid Y_x,W=w] = \E[\E[X \mid Y_1,Y_0,W=w] \mid Y_x,W=w]$. Since $\E[X \mid Y_1,Y_0,W=w] \in [\lc(w,\eta),\uc(w,\eta)]$ almost surely, we then have that $\E[X \mid Y_x,W=w] \in [\lc(w,\eta),\uc(w,\eta)]$ almost surely as well.
\end{proof}

\begin{proof}[Proof of Proposition \ref{prop:c-dep_GMSM_equivalence}]

\textbf{Part 1:} Suppose marginal $c$-dependence holds with bound functions $[\lc(w,\eta),\uc(w,\eta)]$. Fix $(x,w) \in \{0,1\} \times \supp(W)$. We have that
\begin{align}\label{eq:proof_cdep_GMSM_equivalence_1}
	R_x(Y_x,w) &= \frac{p_x(Y_x,w)}{1 - p_x(Y_x,w)} \Big/ \frac{p_{1|w}}{1-p_{1|w}} \in \left[\frac{\lc(w,\eta)}{1-\lc(w,\eta)}\Big/ \frac{p_{1|w}}{1-p_{1|w}},\frac{\uc(w,\eta)}{1-\uc(w,\eta)}\Big/ \frac{p_{1|w}}{1-p_{1|w}}\right],
\end{align}
where the inclusion holds from the mapping $a \mapsto a/(1-a)$ being strictly increasing over $a \in (0,1)$ and from $p_x(Y_x,w) \in [\lc(w,\eta),\uc(w,\eta)] \subset (0,1)$ almost surely. We note that
\[
	\frac{\uc(w,\eta)}{1-\uc(w,\eta)}\Big/ \frac{p_{1|w}}{1-p_{1|w}} \in \left[\frac{p_{1|w}}{1-p_{1|w}}\Big/ \frac{p_{1|w}}{1-p_{1|w}},+\infty\right) = [1, +\infty)
\]
where the inclusion holds from $\uc(w,\eta) \in [p_{1|w}, 1)$. Similarly,
\[
	\frac{\lc(w,\eta)}{1-\lc(w,\eta)}\Big/ \frac{p_{1|w}}{1-p_{1|w}} \in \left(\frac{0}{1-0}\Big/ \frac{p_{1|w}}{1-p_{1|w}}, \frac{p_{1|w}}{1-p_{1|w}}\Big/ \frac{p_{1|w}}{1-p_{1|w}}\right] = (0,1]
\]
from $\lc(w,\eta) \in (0,p_{1|w}]$. We conclude that the GMSM holds with the bound functions from equation \eqref{eq:c-dep_GMSM_equivalence}. Replacing marginal $c$-dependence with joint $c$-dependence delivers the same bounds for the GJSM.

\textbf{Part 2:} Suppose the GMSM holds with bound functions $\left[\lL(w,\eta),\uL(w,\eta)\right]$. Fix $(x,w) \in \{0,1\} \times \supp(W)$. We have that
\begin{align}\label{eq:proof_cdep_GMSM_equivalence_2}
	p_x(Y_x,w) &= \frac{p_{1|w} R_x(Y_x,w)}{p_{0|w} + p_{1|w} R_x(Y_x,w)} \in \left[\frac{p_{1|w}\lL(w,\eta)}{p_{0|w} + p_{1|w}\lL(w,\eta)},\frac{p_{1|w}\uL(w,\eta)}{p_{0|w} + p_{1|w}\uL(w,\eta)}\right].
\end{align}
The equality holds from inverting the equation $R_x(Y_x,w) = \frac{p_x(Y_x,w)}{1 - p_x(Y_x,w)} \Big/ \frac{p_{1|w}}{1-p_{1|w}}$ in $p_x(Y_x,w)$. The inclusion holds from the mapping $a \mapsto a/(1+a)$ being strictly increasing for $a \in [0,+\infty)$ and from $R_x(Y_x,w) \in \left[\lL(w,\eta), \uL(w,\eta)\right] \subset (0,+\infty)$ almost surely. We note that
\[
	\frac{p_{1|w}\lL(w,\eta)}{p_{0|w} + p_{1|w}\lL(w,\eta)} \in \left(\frac{p_{1|w} \cdot 0}{p_{0|w} + p_{1|w} \cdot 0},\frac{p_{1|w}\cdot 1}{p_{0|w} + p_{1|w}\cdot 1}\right] = (0,p_{1|w}]
\]
by $\lL(w,\eta) \in (0,1]$. Similarly,
\[
	\frac{p_{1|w}\uL(w,\eta)}{p_{0|w} + p_{1|w}\uL(w,\eta)} \in \left[\frac{p_{1|w} \cdot 1}{p_{0|w} + p_{1|w} \cdot 1},1\right) = [p_{1|w},1)
\]
by $\uL(w,\eta) \in [1,+\infty]$. We conclude that marginal $c$-dependence holds with the bound functions from equation \eqref{eq:c-dep_GMSM_equivalence2}. Replacing the GMSM with the GJSM delivers the same bounds for joint $c$-dependence.
\end{proof}

\section{Proofs for Section \ref{sec:generalIdentificationResults}}

\begin{proof}[Proof of Lemma \ref{lemma:cdf_bounds_margcdep}]
By Lemma \ref{lemma:joint_implies_marginal}, it suffices to show the desired results under Assumption \ref{assn:marginal_cdep}. Let $y \in \R$ and $w\in\supp(W)$ be fixed. Note that
\begin{align*}
	\E\left[\frac{\1(Y \leq y)}{p_1(Y,w)} \mid X=1, W=w\right] p_{1|w} 
    &= \E\left[\frac{\1(Y_1 \leq y)X}{p_1(Y_1,w)}|W=w\right] \\
    &= \E\left[\frac{\1(Y_1 \leq y)\E[X \mid Y_1,W=w]}{p_1(Y_1, w)}|W=w\right] \\
    &= \E\left[\1(Y_1 \leq y)| W = w\right] \\
    &= F_{Y_1 \mid W}(y \mid w),
\end{align*}
where the second equality follows from the law of iterated expectations and the third from $p_1(Y_1,w) \geq \lc(w,\eta) > 0$ almost surely by Assumption \ref{assn:marginal_cdep}. Likewise, we have 
\[
    \Exp\left[\frac{\1(Y > y)}{p_1(Y,w)}\mid X=1, W=w\right]p_{1|w} = 1-F_{Y_1 \mid W}(y \mid w).
\]
Therefore,
\begin{align*}
	F_{Y_1 \mid W}(y \mid w) 
        &= \E\left[\frac{\1(Y \leq y)}{p_1(Y,w)} \mid X=1,W=w\right] p_{1|w} \\
        &\leq \E\left[\frac{\1(Y \leq y)}{\lc(w,\eta)} \mid X=1,W=w\right] p_{1|w} \\
        &= F_{Y \mid X,W}(y \mid 1,w)\frac{p_{1|w}}{\lc(w,\eta)}
\end{align*}
and
\begin{align*}
	F_{Y_1 \mid W}(y \mid w) 
        &= 1 - \Prob(Y_1 > y \mid W=w) \\
        &= 1 - \E\left[\frac{\1(Y > y)}{p_1(Y,w)} \mid X=1,W=w\right] p_{1|w} \\
        &\leq 1 - \E\left[\frac{\1(Y > y)}{\uc(w,\eta)} \mid X=1,W=w\right] p_{1|w}  \\
        &= \frac{\uc(w,\eta) - p_{1|w}}{\uc(w,\eta)} + F_{Y \mid X,W}(y \mid 1,w)\frac{p_{1|w}}{\uc(w,\eta)}.
\end{align*}
The inequalities follow from $p_1(Y_1,w)^{-1} \in [\uc(w,\eta)^{-1},\lc(w,\eta)^{-1}]$ almost surely. By these two inequalities, 
\[
    F_{Y_1 \mid W}(y \mid w) \leq \min\left\{F_{Y \mid X,W}(y \mid 1,w)\frac{p_{1|w}}{\lc(w,\eta)}, \frac{\uc(w,\eta) - p_{1|w}}{\uc(w,\eta)} + F_{Y \mid X,W}(y \mid 1,w)\frac{p_{1|w}}{\uc(w,\eta)}\right\} = \overline{F}_{Y_1 \mid W}(y \mid w).
\]
Similarly,
\begin{align*}
    F_{Y_1 \mid W}(y \mid w) 
    &= \E\left[\frac{\1(Y \leq y)}{p_1(Y,w)} \mid X=1,W=w\right] p_{1|w}  \\
    &\geq \E\left[\frac{\1(Y \leq y)}{\uc(w,\eta)} \mid X=1,W=w\right] p_{1|w} \\
    &= F_{Y \mid X,W}(y \mid 1,w)\frac{p_{1|w}}{\uc(w,\eta)}
\end{align*}
and
\begin{align*}
	F_{Y_1 \mid W}(y \mid w) 
        &= 1 - \Prob(Y_1 > y \mid W=w) \\
        &= 1 - \E\left[\frac{\1(Y > y)}{p_1(Y,w)} \mid X=1,W\right] p_{1|w} \\
        &\geq 1 - \E\left[\frac{\1(Y > y)}{\lc(w,\eta)} \mid X=1,W=w\right] p_{1|w} \\
        &= \frac{\lc(w,\eta) - p_{1|w}}{\lc(w,\eta)} + F_{Y \mid X,W}(y \mid 1,w)\frac{p_{1|w}}{\lc(w,\eta)}.
\end{align*}
By these two inequalities, 
\[
    F_{Y_1 \mid W}(y \mid w) \geq \max\left\{F_{Y \mid X,W}(y \mid 1,w)\frac{p_{1 \mid W}}{\uc(w,\eta)}, \frac{\lc(w,\eta) - p_{1|w}}{\lc(w,\eta)} + F_{Y \mid X,W}(y \mid 1,w)\frac{p_{1|w}}{\lc(w,\eta)}\right\} = \underline{F}_{Y_1 \mid W}(y \mid w).
\]
Therefore we have established $F_{Y_1 \mid W}(y \mid w) \in [\underline{F}_{Y_1 \mid W}(y \mid w),\overline{F}_{Y_1 \mid W}(y \mid w)]$, as desired.

Next we establish the bounds for $F_{Y_0 \mid W}(y \mid w)$. We also have that 
\[
	\E\left[\frac{\1(Y \leq y)}{1-p_{0}(Y,w)} \mid X=0,W=w\right] p_{0|w}= F_{Y_0 \mid W}(y \mid w).
\] 
Furthermore, note that $\E[(1-X) \mid Y_0, W=w] = \Prob(X=0 \mid Y_0, W=w) \in [1-\uc(w,\eta),1-\lc(w,\eta)]$. Changing $X = 1$ to $X=0$, and $(\lc(w,\eta),\uc(w,\eta))$ to $(1-\uc(w,\eta),1-\lc(w,\eta))$  yields
\begin{align*}
	F_{Y_0 \mid W}(y \mid w) &\leq  F_{Y \mid X,W}(y \mid 0,w)\frac{p_{0|w}}{1-\uc(w,\eta)}\\
	F_{Y_0 \mid W}(y \mid w) &\leq \frac{p_{1|w}-\lc(w,\eta)}{1-\lc(w,\eta)} + F_{Y \mid X,W}(y \mid 0,w)\frac{p_{0|w}}{1-\lc(w,\eta)}\\
	F_{Y_0 \mid W}(y \mid w) &\geq  F_{Y \mid X,W}(y \mid 0,w)\frac{p_{0|w}}{1-\lc(w,\eta)}\\
	F_{Y_0 \mid W}(y \mid w) &\geq \frac{p_{1|w}-\uc(w,\eta)}{1-\uc(w,\eta)} + F_{Y \mid X}(y \mid 0)\frac{p_{0|w}}{1-\uc(w,\eta)}.
\end{align*}
almost surely. 
Therefore, $F_{Y_0 \mid W}(y \mid w) \in [\underline{F}_{Y_0 \mid W}(y \mid w),\overline{F}_{Y_0 \mid W}(y \mid w)]$.
\end{proof}

\subsection{Proof of Theorem \ref{thm:cdf_sharp_margcdep}} \label{appendix_thm2_proof}

We provide and show a number of preliminary lemmas that are used to prove Theorem \ref{thm:cdf_sharp_margcdep}.  This first lemma establishes some properties of cdf bounds for $Y_x$ given $(X,W)$.

\begin{lemma}[Bounds of CDFs] \label{lemma:cdf_properties}
Let assumptions \ref{assn:overlap} and \ref{assn:marginal_cdep} hold. Then, for $x\in\{0,1\}$ and $w\in \supp(W)$, 
\begin{enumerate}
	\item The functions $\underline{F}_{Y_x \mid X,w}(\cdot \mid 1-x,w)$ and $\overline{F}_{Y_x \mid X,W}(\cdot \mid 1-x,w)$, which are defined in Appendix \ref{appendix:notation}, are cdfs;
	\item For all $y \in \R$, 
        \begin{align*}
            &\underline{F}_{Y_x \mid X,W}(y \mid 1-x,w)p_{1-x|w} + F_{Y \mid X,W}(Y \mid X,w)p_{x|w} = \underline{F}_{Y_x \mid W}(y \mid w) \\
            &\overline{F}_{Y_x \mid X,W}(y \mid 1-x,w)p_{1-x|w} + F_{Y \mid X,W}(Y \mid X,w)p_{x|w} = \overline{F}_{Y_x \mid W}(y \mid w).
        \end{align*}
\end{enumerate}
\end{lemma}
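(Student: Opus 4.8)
The plan is to handle the two parts separately, and in each part to reduce to a single representative case. The symmetry observation recorded in Appendix~\ref{appendix:notation} --- namely that the bounds for $Y_0$ are obtained from those for $Y_1$ by swapping $p_{1|w}\leftrightarrow p_{0|w}$ and $(\lc,\uc)\leftrightarrow(1-\uc,1-\lc)$ --- lets me treat only $x=1$, and within that the lower-bound statement follows from the upper-bound one by the analogous manipulation. So everything reduces to analyzing the closed forms for $\overline{F}_{Y_1\mid X,W}(\cdot\mid 0,w)$ and $\underline{F}_{Y_1\mid X,W}(\cdot\mid 0,w)$ given in Appendix~\ref{appendix:notation}, together with the corresponding forms for $\overline{F}_{Y_1\mid W}(\cdot\mid w)$ and $\underline{F}_{Y_1\mid W}(\cdot\mid w)$.

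For part~1, I would first note that $y\mapsto F_{Y\mid X,W}(y\mid 1,w)$ is a cdf and that each of the two expressions inside the $\min$ (resp.\ $\max$) defining $\overline{F}_{Y_1\mid X,W}(\cdot\mid 0,w)$ (resp.\ $\underline{F}_{Y_1\mid X,W}(\cdot\mid 0,w)$) has the form $a+b\,F_{Y\mid X,W}(y\mid 1,w)$ with $b>0$ under Assumption~\ref{assn:overlap}; hence each affine piece is nondecreasing and right-continuous, and so is their pointwise minimum (maximum). It then remains to evaluate the limits as $y\to\pm\infty$: substituting $F_{Y\mid X,W}\to 0$ and $F_{Y\mid X,W}\to 1$ into the two affine pieces and using $0<\lc\le p_{1|w}\le\uc<1$, one of the two pieces tends to exactly the required value ($0$ at $-\infty$, $1$ at $+\infty$) while the other lies on the correct side of it, so the minimum (maximum) has the limits $0$ and $1$. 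Monotonicity then confines the range to $[0,1]$, and the function is a genuine cdf.

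For part~2, I would verify the claimed identity piecewise, exploiting that $\overline{F}_{Y_1\mid W}(\cdot\mid w)$ and $\overline{F}_{Y_1\mid X,W}(\cdot\mid 0,w)$ switch at the \emph{same} threshold $\overline{Q}_1$ (and likewise $\underline{F}_{Y_1\mid W}$, $\underline{F}_{Y_1\mid X,W}(\cdot\mid 0,w)$ at $\underline{Q}_1$). On each of the two intervals both sides of
$\overline{F}_{Y_1\mid X,W}(y\mid 0,w)\,p_{0|w} + F_{Y\mid X,W}(y\mid 1,w)\,p_{1|w} = \overline{F}_{Y_1\mid W}(y\mid w)$
are single affine functions of $F_{Y\mid X,W}(y\mid 1,w)$, so the identity reduces to matching the constant term and the coefficient of $F_{Y\mid X,W}(y\mid 1,w)$; e.g.\ on $\{y\ge\overline{Q}_1\}$ one checks $p_{0|w}\cdot\tfrac{p_{1|w}(1-\uc)}{p_{0|w}\uc}+p_{1|w}=\tfrac{p_{1|w}}{\uc}$ and $p_{0|w}\cdot\tfrac{\uc-p_{1|w}}{p_{0|w}\uc}=\tfrac{\uc-p_{1|w}}{\uc}$, both using $p_{0|w}+p_{1|w}=1$. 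The same one-line computations cover the lower bound, and the symmetry substitution covers $x=0$.

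There is no substantive obstacle here; the lemma is a bookkeeping exercise built on the explicit formulas of Appendix~\ref{appendix:notation}. The only place needing care is the limit computation in part~1, where one must confirm that the relevant $\min$/$\max$ of the two affine pieces equals \emph{exactly} $0$ and \emph{exactly} $1$ in the limit, not merely that it is nonnegative and at most $1$ --- this is precisely where the two-sided overlap inequalities $\lc\le p_{1|w}\le\uc$ are used rather than just strict positivity of the propensity score.
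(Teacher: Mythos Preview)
Your proposal is correct and follows essentially the same route as the paper's proof: both parts are handled by direct verification from the explicit formulas in Appendix~\ref{appendix:notation}, reducing by symmetry to a single representative case. The only stylistic difference is in Part~2: the paper works globally with the $\max$/$\min$ representation, using that $\max\{A,B\}\cdot p_{0|w}+C=\max\{A\,p_{0|w}+C,\,B\,p_{0|w}+C\}$ for $p_{0|w}>0$ to obtain the identity in one line without ever invoking the piecewise form or the common threshold $\overline{Q}_1$, whereas you split at $\overline{Q}_1$ and match coefficients on each piece; both arguments are valid and of comparable length.
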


\begin{proof}[Proof of Lemma \ref{lemma:cdf_properties}]

\noindent \textbf{Proof of Part 1:} We show that $\underline{F}_{Y_1 \mid X,W}(y \mid 0,w)$ is a cdf by showing it is nondecreasing, has limits $(0,1)$ when $y$ approaches $(-\infty,+\infty)$, and is right-continuous. The same arguments can be used to deduce that $\overline{F}_{Y_1 \mid X,W}(y \mid 0,w)$, $\underline{F}_{Y_0 \mid X,W}(y \mid 1,w)$, and $\overline{F}_{Y_0 \mid X,W}(y \mid 1,w)$ are also cdfs.

The function
\begin{align*}
	\underline{F}_{Y_1 \mid X,W}(y \mid 0,w) &= \max\left\{F_{Y \mid X,W}(y \mid 1,w)\frac{p_{1|w}(1 - \uc)}{p_{0|w} \uc},\frac{\lc - p_{1|w}}{\lc p_{0|w}} + F_{Y \mid X,W}(y \mid 1,w)\frac{p_{1|w}(1-\lc)}{\lc p_{0|w}}\right\}
\end{align*}
is nondecreasing in $y$ since each of its two arguments is nondecreasing in $y$, due to $F_{Y \mid X,W}(\cdot \mid 1,w)$ being a cdf. Then note that 
\[
    \lim_{y \to \infty} \underline{F}_{Y_1 \mid X,W}(y \mid 0,w) = \max\left\{\frac{p_{1|w}(1-\uc)}{p_{0|w} \uc},1\right\} = \max\left\{\frac{p_{1|w}- p_{1|w} \uc}{\uc - p_{1|w} \uc},1\right\} = 1,
\]
where the last equality follows by $p_{1|w} \leq \uc$. Also note that 
\[
   \lim_{y \to -\infty} \underline{F}_{Y_1 \mid X,W}(y \mid 0,w) = \max\left\{0,\frac{\lc - p_{1|w}}{\lc p_{0|w}}\right\} = 0,
\]
where the last equality follows by $\lc \leq p_{1|w}$. Finally, we can see that $\underline{F}_{Y_1 \mid X,W}(y \mid 0,w)$ is right-continuous with respect to $y$ since $F_{Y \mid X}(y \mid 1,w)$ is right-continuous and by the continuity of affine transformations and of the maximum function. Therefore, $\underline{F}_{Y_1 \mid X,W}(y \mid 0,w)$ is a cdf.

\bigskip

\noindent \textbf{Proof of Part 2:}
We show the first equality with $x = 1$, and the same arguments can be used to establish the equality for other cases. For $y \in \R$, the desired result follows by the following derivations:
\begin{align*}
	&\underline{F}_{Y_1 \mid X,W}(y \mid 0,w)p_{0|w} + F_{Y \mid X,W}(y \mid 1,w)p_{1|w}  \\
    &= \max\left\{F_{Y \mid X,W}(y \mid 1,w) \frac{p_{1|w}(1 - \uc)}{p_{0|w} \uc},\frac{\lc - p_{1|w}}{\lc p_{0|w}} + F_{Y \mid X,W}(y \mid 1,w)\frac{p_{1|w}(1-\lc)}{p_{0|w} \lc}\right\}p_{0|w} + F_{Y \mid X,W}(y \mid 1,w)p_{1|w} \\
	&= \max\left\{F_{Y \mid X,W}(y \mid 1,w) p_{1|w} \left(\frac{1 - \uc}{\uc} +1\right),\frac{\lc - p_{1|w}}{\lc} + F_{Y \mid X,W}(y \mid 1,w)p_{1|w}\left(\frac{1-\lc}{\lc} + 1\right)\right\}\\
	&= \max\left\{\frac{F_{Y \mid X,W}(y \mid 1,w)p_{1|w}}{\uc},\frac{\lc - p_{1|w}}{\lc} + \frac{F_{Y \mid X,W}(y \mid 1,w)p_{1|w}}{\lc}\right\}\\
	&= \underline{F}_{Y_1 \mid W}(y \mid w).
\end{align*}
Thus the proof is complete.
\end{proof}

\begin{lemma}\label{lemma:propscore_equal_condexp}
	 Let $x \in \{0,1\}$ and $w \in \supp(W)$. Suppose $m(\cdot)$ is a Borel measurable function and $\Prob(m(Y_x) \geq \delta|W=w) = 1$ for some $\delta > 0$. The following statements are equivalent:
	\begin{enumerate}
	\item Conditional on $W=w$, the following statement holds almost surely:
	\begin{equation}
	\label{eq:prop_equality_1_marg}
		m(Y_x) = \P(X=x \mid Y_x,W=w).
	\end{equation}
	\item For all $y \in \R$, the following equality holds:
		\begin{equation}
		\label{eq:prop_equality_2_marg}
			\E\left[\frac{\1(Y_x \leq y) \1(X = x)}{m(Y_x)}|W=w\right] = \P(Y_x \leq y \mid W=w).
		\end{equation}
	\end{enumerate}
\end{lemma}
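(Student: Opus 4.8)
The plan is to establish the two implications separately. Throughout, all expectations and almost-sure statements are understood with respect to the conditional distribution of $(Y_x,X)$ given $W=w$, and the hypothesis $\P(m(Y_x)\geq\delta\mid W=w)=1$ is used to guarantee that the integrands appearing below (such as $1/m(Y_x)$ and $\P(X=x\mid Y_x,W=w)/m(Y_x)$) are bounded by $\delta^{-1}$, so that every conditional expectation in the argument is well defined.

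For the direction \eqref{eq:prop_equality_1_marg} $\Rightarrow$ \eqref{eq:prop_equality_2_marg}, I would fix $y\in\R$ and condition additionally on $Y_x$ inside the expectation on the left of \eqref{eq:prop_equality_2_marg}. Since $\1(Y_x\leq y)/m(Y_x)$ is a function of $Y_x$, the law of iterated expectations gives
\[
  \E\!\left[\frac{\1(Y_x\leq y)\1(X=x)}{m(Y_x)}\,\bigg|\,W=w\right]
  =\E\!\left[\frac{\1(Y_x\leq y)}{m(Y_x)}\,\P(X=x\mid Y_x,W=w)\,\bigg|\,W=w\right],
\]
and substituting \eqref{eq:prop_equality_1_marg} and cancelling $m(Y_x)$ turns the right-hand side into $\E[\1(Y_x\leq y)\mid W=w]=\P(Y_x\leq y\mid W=w)$, which is exactly \eqref{eq:prop_equality_2_marg}. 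This step is essentially the same computation that appears in the proof of Lemma \ref{lemma:cdf_bounds_margcdep}.

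For the converse \eqref{eq:prop_equality_2_marg} $\Rightarrow$ \eqref{eq:prop_equality_1_marg}, the same iterated-expectations manipulation rewrites the left side of \eqref{eq:prop_equality_2_marg} as $\E[\1(Y_x\leq y)\,\P(X=x\mid Y_x,W=w)/m(Y_x)\mid W=w]$, so \eqref{eq:prop_equality_2_marg} becomes
\[
  \E\big[\1(Y_x\leq y)\,h(Y_x)\,\big|\,W=w\big]=0\qquad\text{for all }y\in\R,
\]
where $h(Y_x)\coloneqq \P(X=x\mid Y_x,W=w)/m(Y_x)-1$. Since $\P(X=x\mid Y_x,W=w)\in[0,1]$ and $m(Y_x)\geq\delta$, the function $h$ satisfies $|h(Y_x)|\leq\max\{1,\delta^{-1}-1\}$, so $B\mapsto\E[\1(Y_x\in B)\,h(Y_x)\mid W=w]$ is a finite signed measure $\nu$ on the Borel sets of $\R$. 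The displayed identity says the cumulative ``distribution function'' of $\nu$ vanishes, hence $\nu((a,b])=0$ for all $a<b$; decomposing $\nu$ into its Jordan components and invoking uniqueness of finite measures that agree on the generating $\pi$-system of half-open intervals (which contains the exhausting sequence $(-n,n]$) gives $\nu\equiv0$. Since $\nu$ has density $h$ with respect to the conditional law of $Y_x$ given $W=w$, it follows that $h(Y_x)=0$ almost surely conditional on $W=w$, i.e.\ $m(Y_x)=\P(X=x\mid Y_x,W=w)$ almost surely, which is \eqref{eq:prop_equality_1_marg}.

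I expect the main obstacle to be this last measure-theoretic step of the converse: passing from the vanishing of the cdf of the signed measure $h\,d\mu$ to the conclusion that $h$ vanishes $\mu$-almost everywhere. The argument is routine, but it genuinely relies on $\nu$ being \emph{finite}, and this is precisely where the uniform lower bound $m\geq\delta>0$ is needed; without it $h$ could fail to be integrable and the uniqueness argument would not apply directly. Everything else is bookkeeping with the law of iterated expectations.
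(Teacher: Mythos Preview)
Your proposal is correct and follows essentially the same route as the paper: both directions use the law of iterated expectations in the same way, and for the converse both reduce to showing that a bounded $\sigma(Y_x)$-measurable function integrating to zero against every set $\{Y_x\leq y\}$ must vanish almost surely. The only cosmetic difference is that the paper invokes this last fact by citing a uniqueness theorem for integrals over a $\pi$-system directly (Theorem 34.1 in Billingsley), whereas you unpack it via the Jordan decomposition of the associated signed measure; the content is identical.
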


\begin{proof}[Proof of Lemma \ref{lemma:propscore_equal_condexp}]

We first prove that \eqref{eq:prop_equality_1_marg} implies \eqref{eq:prop_equality_2_marg}. This follows from the law of iterated expectations:
\begin{align*}
	\E\left[\frac{\1(Y_x \leq y) \1(X = x)}{m(Y_x)} \mid W=w\right] 
	&= \E\left[\frac{\1(Y_x \leq y) \E[\1(X=x) \mid Y_x, W=w]}{m(Y_x)} \mid W=w\right] \\
	&=  \E\left[\frac{\1(Y_x \leq y) \Prob(X=x \mid Y_x, W=w)}{m(Y_x)} \mid W=w\right] \\
	&= \E[\1(Y_x \leq y) \mid W=w] \\
	& = \P(Y_x \leq y \mid W=w),
\end{align*}
where we use \eqref{eq:prop_equality_1_marg} and the assumption $m(Y_x) \geq \delta > 0$ for the third equality.

Next, we prove that \eqref{eq:prop_equality_2_marg} implies \eqref{eq:prop_equality_1_marg}. To show this result, we first establish a few key facts:
	\begin{enumerate}

	\item By the law of iterated expectations, \eqref{eq:prop_equality_2_marg} implies
	\begin{align*}
		\E\left[\1(Y_x \leq y) \frac{\P(X=x \mid Y_x,W=w) - m(Y_x)}{m(Y_x)}|W=w\right] = 0
	\end{align*}
	for all $y \in \R^2$. 

	\item For $y \in \R$, define the preimage from a half-space on $\R^2$ as 
	\[
		I_{y} = \{\omega \in \Omega: Y_x(\omega) \leq y\},
	\]
	where $\Omega$ denotes $Y_x$'s sample space. Let $\mathcal{A} = \{I_{y}: y\in\R\}$. We then note that $\mathcal{A}$ is closed under intersection since 
	\[
		I_{y}\cap I_{y'} = I_{\min\{y,y'\}} \in \mathcal{A}\quad \text{ for any } y, y' \in \R.
	\]
	This, combined with the non-emptyness of $\mathcal{A}$, implies that $\mathcal{A}$ is a $\pi$-system.

	\item The sample space can be written as a countable union of sets in $\mathcal{A}$ since 
	\[
		\Omega = \{\omega\in\Omega: Y_x(\omega) < \infty\} = \bigcup_{n=1}^\infty I_{n}.
	\]

	\item The random variable $[\P(X=x \mid Y_x,W=w) - m(Y_x)]/m(Y_x)$ is measurable with respect to the $\sigma$-algebra generated by $Y_x$ due to the Borel measurability of $m(\cdot)$, and it is integrable since 
        \[
            \left|\frac{\P(X=x \mid Y_x,W=w) - m(Y_x)}{m(Y_x)}\right| \leq \frac{\P(X=x \mid Y_x,W=w)}{m(Y_x)} + 1 \leq \frac{1}{\delta} + 1 < +\infty,
        \]
        where the first inequality follows by trangle inequality, and the second inequality follows by the assumption that $m(Y_x) \geq \delta > 0$ almost surely.
	\end{enumerate}

	Given the above facts, it follows by \citet[Theorem 34.1]{Billingsley1995} 
        that 
	\[
		\frac{\Prob(X=x| Y_x,W=w) - m(Y_x)}{m(Y_x)} = 0 \quad \text{with probability one conditional on } W=w.
	\]
	From this equality we conclude that $\Prob(X=x \mid Y_x,W=w) = m(Y_x)$ with probability one conditional on $W=w$ since $m(Y_x) \geq \delta > 0$ almost surely.  So the desired result has been established.
\end{proof}

The following lemma is a subset of Lemma 21.1 in \cite{Vaart2000}, so we omit its proof.

\begin{lemma}[Properties of CDFs and Quantiles]\label{lemma:quantile_props}
	Let $p \in (0,1)$ and $x \in \R$. Let $F$ be a cdf and $Q(p) = \inf\{z \in \R: F(z) \geq p\}$ be its quantile function. Then,
	\begin{enumerate}
		\item $Q(p) \leq x$ if and only if $p \leq F(x)$;
		\item $F(Q(p)) \geq p$ where equality can fail only if $F$ is discontinuous at $Q(p)$;
		\item $F(Q(p)-) \leq p$.
	\end{enumerate}
\end{lemma}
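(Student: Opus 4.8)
The plan is to derive all three parts from a single preliminary observation: writing $S \coloneqq \{z \in \R : F(z) \geq p\}$, the set $S$ is a nonempty half-line that is bounded below, so $Q(p) = \inf S$ is a finite real number, and — crucially — $Q(p) \in S$. Nonemptiness and boundedness below follow from $F(z) \to 1$ and $F(z) \to 0$ as $z \to \pm\infty$ together with $p \in (0,1)$; the fact that the infimum is attained follows from right-continuity of $F$, by taking a sequence $z_n \downarrow Q(p)$ with $F(z_n) \geq p$ and passing to the limit, which gives $F(Q(p)) \geq p$. Since $F$ is nondecreasing, $S$ is then exactly $[Q(p),\infty)$. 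I would establish this claim first, since parts (1) and (2) both rest on it.

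Part (2) is then immediate: the statement $Q(p) \in S$ is precisely $F(Q(p)) \geq p$. For the clause that equality can fail only at a discontinuity, I would invoke part (3): if $F$ is continuous at $Q(p)$, then $F(Q(p)) = F(Q(p)-) \leq p$, which combined with $F(Q(p)) \geq p$ forces equality; contrapositively, $F(Q(p)) > p$ can only occur when $F$ jumps at $Q(p)$.

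Part (3) is a direct consequence of the definition of infimum: every $z < Q(p)$ satisfies $z \notin S$, i.e. $F(z) < p$, so the left limit $F(Q(p)-) = \lim_{z \uparrow Q(p)} F(z)$ is at most $p$. Part (1) splits into two implications: if $p \leq F(x)$ then $x \in S$, so $Q(p) = \inf S \leq x$; conversely, if $Q(p) \leq x$, then monotonicity of $F$ together with $Q(p) \in S$ gives $F(x) \geq F(Q(p)) \geq p$.

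The whole argument is elementary, and the only delicate point worth flagging is the claim $Q(p) \in S$ — that the infimum defining $Q(p)$ is actually attained. This is exactly where right-continuity of $F$ enters, and it is the one step that would fail for an arbitrary nondecreasing function; everything else is bookkeeping with monotonicity and the definition of the infimum. (Since this lemma is a subset of \citet[Lemma 21.1]{Vaart2000}, this sketch merely reconstructs the standard argument.)
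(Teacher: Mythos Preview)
Your proof is correct and complete. The paper itself omits the proof entirely, simply citing \cite{Vaart2000}, Lemma 21.1; your proposal reconstructs the standard argument that the citation points to, so there is nothing to compare.
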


This next lemma is a compendium of properties of the cdf bounds. Its results are used throughout our proofs for the main theorems.

\begin{lemma}[Preliminary Results] \label{lemma:preliminary_margcdep}
Let $w\in\supp(W)$ and suppose assumptions \ref{assn:overlap} and \ref{assn:marginal_cdep} hold with $\lc(w,\eta) < p_{1|w} < \uc(w,\eta)$. Then, for $x \in \{0,1\}$,
\begin{enumerate}
	\item $\underline{\tau}_x$, $\overline{\tau}_x \in (0,1)$;
	\item $\overline{F}_{Y_x \mid W}(y \mid w)$ is continuous at $y = \overline{Q}_x$ if and only if $\P(Y = \overline{Q}_x \mid X=x, W=w) = 0$, and $\underline{F}_{Y_x \mid W}(y \mid w)$ is continuous at $y = \underline{Q}_x$ if and only if $\P(Y = \underline{Q}_x \mid X=x,W=w) = 0$;
	\item $\overline{A}_1, \underline{A}_1, 1-\overline{A}_0, 1-\underline{A}_0 \in [\lc,\uc]$;
	\item $\overline{p}_{x}(Y_x,w), \underline{p}_{x}(Y_x,w) \in [\lc,\uc]$ almost surely;
	\item For all $y \in \R$,
	\begin{align*}
	\E\left[\frac{\1(Y \leq y)X}{\underline{p}_{1}(Y,w)}|W = w\right] = \underline{F}_{Y_1 \mid W}(y \mid w)
	\quad &\text{ and } \quad 
	\E\left[\frac{\1(Y \leq y)X}{\overline{p}_{1}(Y,w)}|W = w\right] = \overline{F}_{Y_1 \mid W}(y \mid w), \\
	\E\left[\frac{\1(Y \leq y)(1-X)}{1-\underline{p}_{0}(Y,w)}|W=w\right] = \underline{F}_{Y_0 \mid W}(y \mid w) 
	\quad &\text{ and } \quad 
	\E\left[\frac{\1(Y \leq y)(1-X)}{1-\overline{p}_{0}(Y,w)}|W=w\right] = \overline{F}_{Y_0 \mid W}(y \mid w).
	\end{align*}
 
	\item For all $(y_1,y_0) \in \R^2$, the following inequalities are equivalent:
	\begin{enumerate}
	\item $\overline{F}_{Y_x \mid W}(y_x|w) \leq \underline{F}_{Y_{1-x}|W}(y_{1-x}|w)$;
	\item $F_{Y \mid X,W}(y_x|x,w) \leq \underline{F}_{Y_{1-x} \mid X,W}(y_{1-x}|x,w)$;
	\item $\overline{F}_{Y_x \mid X,W}(y_x \mid 1-x,w) \leq F_{Y \mid X,W}(y_{1-x} \mid 1-x,w)$.
	\end{enumerate} 
	Also, the following inequalities are equivalent:
	\begin{enumerate}
	\item[(d)] $\overline{F}_{Y_x \mid W}(y_x|w) \geq \underline{F}_{Y_{1-x}|W}(y_{1-x}|w)$;
	\item[(e)] $F_{Y \mid X,W}(y_x|x,w) \geq \underline{F}_{Y_{1-x} \mid X,W}(y_{1-x}|x,w)$;
	\item[(f)] $\overline{F}_{Y_x \mid X,W}(y_x \mid 1-x,w) \geq F_{Y \mid X,W}(y_{1-x} \mid 1-x,w)$.
	\end{enumerate}
 
	\item The following inequalities are equivalent:
	\begin{enumerate}
	\item $F_{Y \mid X,W}(\overline{Q}_1- \mid 1,w) \leq \underline{F}_{Y_0 \mid X,W}(\underline{Q}_0- \mid 1,w)$;
	\item $\overline{F}_{Y_1 \mid X,W}(\overline{Q}_1-|0,w) \leq F_{Y \mid X,W}(\underline{Q}_0-|0,w)$;
	\item $\overline{F}_{Y_1 \mid W}(\overline{Q}_1- \mid w) \leq \underline{F}_{Y_0 \mid W}(\underline{Q}_0-|w)$.
	\end{enumerate}
	Also, the following inequalities are equivalent:
	\begin{enumerate}
	\item[(d)] $F_{Y \mid X,W}(\underline{Q}_1- \mid 1,w) \leq \overline{F}_{Y_0 \mid X,W}(\overline{Q}_0- \mid 1,w)$;
	\item[(e)] $\underline{F}_{Y_1 \mid X,W}(\underline{Q}_1-|0,w) \leq F_{Y \mid X,W}(\overline{Q}_0-|0,w)$;
	\item[(f)] $\underline{F}_{Y_1 \mid W}(\underline{Q}_1- \mid w) \leq \overline{F}_{Y_0 \mid W}(\overline{Q}_0-|w)$.
	\end{enumerate}

	\item The following inequalities hold:
	\[
	\max\left\{\overline{F}_{Y_1 \mid W}(\overline{Q}_1- \mid w), \underline{F}_{Y_0 \mid W}(\underline{Q}_0-|w)\right\} \leq \frac{\uc - p_{1|w}}{\uc - \lc} \leq \min\left\{\overline{F}_{Y_1 \mid W}(\overline{Q}_1 \mid w), \underline{F}_{Y_0 \mid W}(\underline{Q}_0|w)\right\}
	\]
	and
	\[
	\max\left\{\overline{F}_{Y_0 \mid W}(\overline{Q}_0-|w), \underline{F}_{Y_1 \mid W}(\underline{Q}_1- \mid w)\right\} \leq \frac{p_{1|w} - \lc}{\uc - \lc} \leq \min\left\{\overline{F}_{Y_0 \mid W}(\overline{Q}_0|w), \underline{F}_{Y_1 \mid W}(\underline{Q}_1 \mid w)\right\}.
	\]
 
	\item The following inequalities hold:
	\[
	\max\left\{{F}_{Y \mid X,W}(\overline{Q}_1- \mid 1,w), \underline{F}_{Y_0 \mid X,W}(\underline{Q}_0- \mid 1,w)\right\}
	 \leq \overline{\tau}_1 \leq 
	\min\left\{{F}_{Y \mid X,W}(\overline{Q}_1 \mid 1,w), \underline{F}_{Y_0 \mid X,W}(\underline{Q}_0 \mid 1,w)\right\}
	\]
	 and  
	\[
	 \max\left\{\overline{F}_{Y_0 \mid X,W}(\overline{Q}_0- \mid 1,w), {F}_{Y \mid X,W}(\underline{Q}_1- \mid 1,w)\right\}
	 \leq \underline{\tau}_1 \leq 
	 \min\left\{\overline{F}_{Y_0 \mid X,W}(\overline{Q}_0 \mid 1,w), {F}_{Y \mid X,W}(\underline{Q}_1 \mid 1,w)\right\}.
	\]
 
    \item For all $(y_1,y_0) \in \R^2$, the following inequalities are equivalent:
    \begin{enumerate}
    	\item $\overline{F}_{Y_x \mid W}(y_x|w) +  \overline{F}_{Y_{1-x}|W}(y_{1-x}|w) \geq 1$;
    	\item $F_{Y \mid X,W}(y_x|x,w) + \overline{F}_{Y_{1-x} \mid X,W}(y_{1-x}|x,w) \geq 1$;
    	\item $\overline{F}_{Y_x \mid X,W}(y_x \mid 1-x,w) + F_{Y \mid X,W}(y_{1-x} \mid 1-x,w) \geq 1$. 
    \end{enumerate}
    Also, the following inequalities are equivalent:
    \begin{enumerate}
    	\item[(d)] $\underline{F}_{Y_x \mid W}(y_x|w) +  \underline{F}_{Y_{1-x}|W}(y_{1-x}|w) \geq 1$;
    	\item[(e)] $F_{Y \mid X,W}(y_x|x,w) + \underline{F}_{Y_{1-x} \mid X,W}(y_{1-x}|x,w) \geq 1$;
    	\item[(f)] $\underline{F}_{Y_x \mid X,W}(y_x \mid 1-x,w) + F_{Y \mid X,W}(y_{1-x} \mid 1-x,w) \geq 1$.
    \end{enumerate}

    \item The following inequalities are equivalent:
    \begin{enumerate}
    \item $\overline{F}_{Y_1 \mid W}(\overline{Q}_1 \mid w) + \overline{F}_{Y_0 \mid W}(\overline{Q}_0-|w) \geq 1$;
    \item $F_{Y \mid X,W}(\overline{Q}_1 \mid 1,w) + \overline{F}_{Y_0 \mid X,W}(\overline{Q}_0- \mid 1,w) \geq 1$;
    \item $\overline{F}_{Y_1 \mid X,W}(\overline{Q}_1|0,w) + F_{Y \mid X,W}(\overline{Q}_0-|0,w) \geq 1$. 
    \end{enumerate}
    Also, the following inequalities are equivalent:
    \begin{enumerate}
    \item[(d)] $\overline{F}_{Y_1 \mid W}(\overline{Q}_1- \mid w) + \overline{F}_{Y_0 \mid W}(\overline{Q}_0|w) \geq 1$;
    \item[(e)] $F_{Y \mid X,W}(\overline{Q}_1- \mid 1,w) + \overline{F}_{Y_0 \mid X,W}(\overline{Q}_0 \mid 1,w) \geq 1$;
    \item[(f)] $\overline{F}_{Y_1 \mid X,W}(\overline{Q}_1-|0,w) + F_{Y \mid X,W}(\overline{Q}_0|0,w) \geq 1$.
    \end{enumerate}
\end{enumerate}
\end{lemma}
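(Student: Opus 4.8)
The plan is to treat Lemma \ref{lemma:preliminary_margcdep} as a catalogue and exploit two reductions. First, since the hypothesis assumes $\lc(w,\eta)<p_{1|w}<\uc(w,\eta)$, the bounds are nondegenerate; and since the $Y_0$-bounds arise from the $Y_1$-bounds by the substitution $p_{1|w}\leftrightarrow p_{0|w}$, $(\lc,\uc)\leftrightarrow(1-\uc,1-\lc)$, $X=1\leftrightarrow X=0$ recorded in Appendix \ref{appendix:notation}, I would prove each statement on the $Y_1$ side and invoke this symmetry for $Y_0$. The structural fact used everywhere is that each of $\overline{F}_{Y_1\mid X,W}(\cdot\mid 0,w)$, $\underline{F}_{Y_1\mid X,W}(\cdot\mid 0,w)$, $\overline{F}_{Y_1\mid W}(\cdot\mid w)$, $\underline{F}_{Y_1\mid W}(\cdot\mid w)$ is a $\min$ or $\max$ of two affine functions of $F_{Y\mid X,W}(\cdot\mid 1,w)$, the two pieces crossing exactly at $F_{Y\mid X,W}=\overline{\tau}_1$ (for the upper bounds) or $=\underline{\tau}_1$ (for the lower bounds); by Lemma \ref{lemma:quantile_props}.1 this makes the "switching'' representations in the Appendix legitimate and lets left limits pass through affine maps, $\min$, and $\max$. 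Part 1 is then the elementary computation $\underline{\tau}_1=\frac{(p_{1|w}-\lc)\uc}{p_{1|w}(\uc-\lc)}\in(0,1)$ (positivity from $\lc<p_{1|w}$, $<1$ from $p_{1|w}<\uc$) and $\overline{\tau}_1=1-\underline{\tau}_1$. For part 2 I would compute the jump of $\overline{F}_{Y_1\mid W}$ at $\overline{Q}_1$, namely $\frac{\uc-p_{1|w}}{\uc}+F_{Y\mid X,W}(\overline{Q}_1\mid1,w)\frac{p_{1|w}}{\uc}-F_{Y\mid X,W}(\overline{Q}_1-\mid1,w)\frac{p_{1|w}}{\lc}$, and use $F_{Y\mid X,W}(\overline{Q}_1-\mid1,w)\le\overline{\tau}_1\le F_{Y\mid X,W}(\overline{Q}_1\mid1,w)$ from Lemma \ref{lemma:quantile_props} together with the fact that the two affine pieces agree at $\overline{\tau}_1$ to conclude the jump vanishes exactly when $\P(Y=\overline{Q}_1\mid X=1,W=w)=0$.

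Part 3 reuses that same jump identity: $\overline{A}_1$ is $\P(Y=\overline{Q}_1,X=1\mid W=w)$ divided by this jump, and after clearing denominators the inequalities $\overline{A}_1\le\uc$ and $\overline{A}_1\ge\lc$ reduce respectively to $F_{Y\mid X,W}(\overline{Q}_1-\mid1,w)\le\overline{\tau}_1$ and $F_{Y\mid X,W}(\overline{Q}_1\mid1,w)\ge\overline{\tau}_1$, which hold by Lemma \ref{lemma:quantile_props}; the degenerate case (atom zero, $\overline{A}_1:=p_{1|w}$) is trivial, and the other three quantities follow by symmetry. Part 4 is then immediate because $\overline{p}_1(\cdot,w)$ takes only the values $\lc$, $\overline{A}_1$, $\uc$. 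Part 5 is a direct calculation: split the expectation on $\{Y<\overline{Q}_1\}$, $\{Y=\overline{Q}_1\}$, $\{Y>\overline{Q}_1\}$, use $\E[\1(\cdot)X\mid W=w]=p_{1|w}\,\P(\cdot\mid X=1,W=w)$, substitute the definition of $\overline{A}_1$ into the middle term, and check that the pieces telescope to the relevant branch of $\overline{F}_{Y_1\mid W}(y\mid w)$ (same for the other three identities).

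The substantive parts are the equivalences 6, 7, 10, 11. For part 6, apply Lemma \ref{lemma:cdf_properties}.2 to write $\overline{F}_{Y_x\mid W}(y_x\mid w)$ and $\underline{F}_{Y_{1-x}\mid W}(y_{1-x}\mid w)$ as the $p_{1-x|w}$/$p_{x|w}$-weighted averages of the conditional cdfs appearing in (b) and (c); subtracting, inequality (a) becomes $p_{1-x|w}\cdot[(c)\text{-difference}]+p_{x|w}\cdot[(b)\text{-difference}]\le0$, so (b) or (c) each implies (a). For the converse I would prove (b)$\iff$(c) from the branch structure: the two affine pieces of $\overline{F}_{Y_x\mid X,W}(\cdot\mid1-x,w)$, viewed as a function of $F_{Y\mid X,W}(\cdot\mid x,w)$, are the inverse affine maps of the two affine pieces of $\underline{F}_{Y_{1-x}\mid X,W}(\cdot\mid x,w)$, viewed as a function of $F_{Y\mid X,W}(\cdot\mid1-x,w)$, piece by piece; hence "$F_{Y\mid X,W}(y_x\mid x,w)\le i$-th piece of $\underline{F}_{Y_{1-x}\mid X,W}$'' is equivalent to "$F_{Y\mid X,W}(y_{1-x}\mid1-x,w)\ge i$-th piece of $\overline{F}_{Y_x\mid X,W}$'', and taking the disjunction over $i$ turns the $\max$ in (b) into the $\min$ in (c). The $\ge$ versions (d)–(f) and parts 10–11 (sums $\ge1$) follow by the identical bookkeeping with one inequality reversed; parts 7 and 11 are these same arguments evaluated at $\overline{Q}_1$, $\overline{Q}_1-$, $\underline{Q}_0$, $\underline{Q}_0-$ (and $\overline{Q}_0$, $\overline{Q}_0-$), using that left limits commute with the bound maps and that the Lemma \ref{lemma:cdf_properties}.2 identity persists in the limit.

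Finally, parts 8 and 9 are obtained by evaluating the explicit affine branches of $\overline{F}_{Y_1\mid W}$, $\underline{F}_{Y_0\mid W}$, $F_{Y\mid X,W}(\cdot\mid1,w)$, and $\underline{F}_{Y_0\mid X,W}(\cdot\mid1,w)$ at $\overline{Q}_1,\overline{Q}_1-$ and $\underline{Q}_0,\underline{Q}_0-$, bounding $F_{Y\mid X,W}$ there by $\overline{\tau}_1$ or $\underline{\tau}_0$ via Lemma \ref{lemma:quantile_props}, and observing that plugging $F_{Y\mid X,W}=\overline{\tau}_1$ (resp. $\underline{\tau}_0$) into these pieces returns exactly $\frac{\uc-p_{1|w}}{\uc-\lc}$ (resp. $\frac{p_{1|w}-\lc}{\uc-\lc}$, $\overline{\tau}_1$, $\underline{\tau}_1$). \textbf{Main obstacle.} The real work is keeping parts 6, 7, 10, 11 orderly: carrying the index $x$ generically while the Appendix formulas are written for $x=1$, verifying the "branch-inverse'' claim for each pairing (upper-with-lower in part 6, upper-with-upper in part 10) without drowning in cases, and routing the degenerate sub-cases (zero atoms, the $\overline{A}_x:=p_{1|w}$ / $\underline{A}_x$ conventions, and the boundary pieces of the $\min$/$\max$) so that the switching identities and the left-limit manipulations remain valid.
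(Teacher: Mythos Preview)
Your outline for Parts 1--5, 8, and 9 tracks the paper's proof almost exactly: the same jump computation, the same use of Lemma~\ref{lemma:quantile_props}.2--3 to trap $F_{Y\mid X,W}$ between $\overline{\tau}_x$ at $\overline{Q}_x-$ and $\overline{Q}_x$, and the same three-region split for Part 5. Your symmetry reduction ($p_1\leftrightarrow p_0$, $(\lc,\uc)\leftrightarrow(1-\uc,1-\lc)$) is exactly the one the paper invokes.

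For Parts 6 and 10 you take a genuinely different route. The paper converts each inequality into an equivalent condition of the form $y_0\geq \overline{R}_i(y_1)$ via Lemma~\ref{lemma:quantile_props}.1 and then proves $\overline{R}_1=\overline{R}_2=\overline{R}_3$ by case analysis on $y_1\lessgtr\overline{Q}_1$. Your approach instead proves (b)$\iff$(c) directly by observing that the two affine pieces of $\overline{F}_{Y_x\mid X,W}(\cdot\mid1-x,w)$ are the piece-wise inverses of those of $\underline{F}_{Y_{1-x}\mid X,W}(\cdot\mid x,w)$, so that ``$u\le$ $i$-th piece'' $\iff$ ``$i$-th piece$'\le v$'', and the disjunction over $i$ carries the $\max$ to the $\min$. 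This is correct and arguably cleaner once the inverse relationship is checked. One ordering issue: your sentence ``so (b) or (c) each implies (a)'' is not justified from the weighted-average identity alone---that identity only gives (b)$\wedge$(c)$\Rightarrow$(a). You need (b)$\iff$(c) \emph{first}; then (b)$\Rightarrow$(c)$\Rightarrow$ both differences $\le0\Rightarrow$(a), and the contrapositive gives the converse. Just reorder the argument.

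For Parts 7 and 11 the paper does something simpler than invoking Part 6/10 at limit points: at $\overline{Q}_1-$, $\underline{Q}_0-$, $\overline{Q}_0-$ the relevant cdfs sit on a \emph{single} known branch (no $\min$/$\max$), so the paper just computes the three differences explicitly and shows $\Delta_2=\tfrac{p_1}{\lc}\Delta_1$, $\Delta_3=\tfrac{p_1(1-\lc)}{p_0\lc}\Delta_1$ (and analogously for Part 11). Your plan to rerun the full Part 6/10 machinery at these points would work, but it is more than you need; the paper's proportionality shortcut is the cleaner execution here.
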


\begin{proof}[Proof of Lemma \ref{lemma:preliminary_margcdep}]

For brevity, we omit covariates $w \in \supp(W)$ and drop notation referring on conditional probability $\cdot \mid w$ and $\cdot \mid W=w$ from this proof. However, note that our arguments hold when conditioning on $W = w$ throughout. 

\bigskip

\noindent \textbf{Proof of Part 1:}
First consider $\underline{\tau}_1$:
\begin{align*}
	\underline{\tau}_1 &= \frac{p_1 - \lc}{\uc - \lc} \frac{\uc}{p_1} = \frac{p_1 \uc - \lc \uc}{p_1 \uc - \lc p_1} < \frac{p_1 \uc - \lc \uc}{p_1 \uc - \lc \uc} = 1,
\end{align*}
where the inequality is strict because $p_1 < \uc$ and $\lc > 0$. Similarly, 
\begin{align*}
	\underline{\tau}_1 &= \frac{p_1 - \lc}{\uc - \lc} \frac{\uc}{p_1} > \frac{\lc - \lc}{\uc - \lc} \frac{\uc}{p_1} = 0
\end{align*}
where the inequality is strict because $p_1 > \lc$ and $\uc > 0$. Thus, $\underline{\tau}_1 \in (0,1)$. Since $\overline{\tau}_1 = 1- \underline{\tau}_1$, $\overline{\tau}_1 \in (0,1)$ as well. The proofs for $\overline{\tau}_0$ and $\underline{\tau}_0$ are similar.

\bigskip

\noindent \textbf{Proof of Part 2:} First consider the statement involving $\overline{Q}_x$ with $x = 1$. We show the following inequality
\begin{equation}
\label{eq:Y1cdf_ub_cts}
    \frac{p_1}{\uc}\P(Y=\overline{Q}_1 \mid X=1) \leq \overline{F}_{Y_1}(\overline{Q}_1) - \overline{F}_{Y_1}(\overline{Q}_1-) \leq  \frac{p_1}{\lc}\P(Y=\overline{Q}_1 \mid X=1).
\end{equation}
From this inequality and Assumption \ref{assn:marginal_cdep} that $0<\lc\leq p_1\leq \uc$, we will conclude that $\overline{F}_{Y_1}(y)$ is continuous at $y = \overline{Q}_1$ if and only if $\P(Y = \overline{Q}_1\mid X=1) = 0$.

To show the lower bound inequality in \eqref{eq:Y1cdf_ub_cts}, note that 
\begin{align*}
	\overline{F}_{Y_1}(\overline{Q}_1) - \overline{F}_{Y_1}(\overline{Q}_1-) &= \frac{\uc - p_1}{\uc} + F_{Y \mid X}(\overline{Q}_1 \mid 1)\frac{p_1}{\uc} - \frac{p_1}{\lc}F_{Y \mid X}(\overline{Q}_1- \mid 1)\\
	&= \frac{\uc - p_1}{\uc} - p_1 F_{Y \mid X}(Q_{Y \mid X}(\overline{\tau}_1 \mid 1)- \mid 1)\frac{\uc - \lc}{\lc \uc} + \frac{p_1}{\uc}\P(Y = \overline{Q}_1 \mid X=1)\\
	&\geq \frac{\uc - p_1}{\uc} - p_1 \overline{\tau}_1 \frac{\uc - \lc}{\lc \uc} + \frac{p_1}{\uc}\P(Y = \overline{Q}_1 \mid X=1)\\
	&= \frac{p_1}{\uc}\P(Y = \overline{Q}_1 \mid X=1), 
\end{align*} 
The first line holds by the definition of $\overline{F}_{Y_1}$. The third line holds by Lemma \ref{lemma:quantile_props}.3. The last line holds by the definition of $\overline{\tau}_1$. 

Likewise, we also have the following derivation:
\begin{align*}
	\overline{F}_{Y_1}(\overline{Q}_1) - \overline{F}_{Y_1}(\overline{Q}_1-) 
        &= \frac{\uc - p_1}{\uc} + F_{Y \mid X}(\overline{Q}_1 \mid 1)\frac{p_1}{\uc} - \frac{p_1}{\lc}F_{Y \mid X}(\overline{Q}_1- \mid 1)\\
	&= \frac{\uc - p_1}{\uc} - p_1 F_{Y \mid X}(Q_{Y \mid X}(\overline{\tau}_1 \mid 1) \mid 1)\frac{\uc - \lc}{\lc \uc} + \frac{p_1}{\lc}\P(Y = \overline{Q}_1 \mid X=1)\\
	&\leq \frac{\uc - p_1}{\uc} - p_1 \overline{\tau}_1 \frac{\uc - \lc}{\lc \uc} + \frac{p_1}{\lc}\P(Y = \overline{Q}_1 \mid X=1)\\
	&= \frac{p_1}{\lc}\P(Y = \overline{Q}_1 \mid X=1), 
\end{align*} 
where we use Lemma \ref{lemma:quantile_props}.2 in the third line. This establishes the upper bound inequality in \eqref{eq:Y1cdf_ub_cts}. So the desired result follows. The proofs for the statements involving $\underline{Q}_1$, $\overline{Q}_0$, and $\underline{Q}_0$ are similar by establishing the following bounds:
\begin{equation}
\label{eq:cdf_discts_bd}
\begin{aligned}
    & \underline{F}_{Y_1}(\underline{Q}_{1}) - \underline{F}_{Y_1}(\underline{Q}_1-) \in \left[\frac{p_1}{\uc} \P(Y=\underline{Q}_1 \mid X=1), \quad \frac{p_1}{\lc} \P(Y=\underline{Q}_1 \mid X=1)\right] \\
    &\overline{F}_{Y_0}(\overline{Q}_{0}) - \overline{F}_{Y_0}(\overline{Q}_0-) \in \left[\frac{p_0}{1-\lc}\P(Y = \overline{Q}_0\mid X=0), \quad \frac{p_0}{1-\uc} \P(Y = \overline{Q}_0\mid X=0)\right]\\
    & \underline{F}_{Y_0}(\underline{Q}_{0}) - \underline{F}_{Y_0}(\underline{Q}_0-) \in \left[\frac{p_0}{1-\lc}\P(Y=\underline{Q}_0\mid X=0), \quad \frac{p_0}{1-\uc}\P(Y=\underline{Q}_0\mid X=0)\right],
\end{aligned}
\end{equation}
which can be derived by similar steps to those above.

\bigskip

\noindent \textbf{Proof of Part 3:}
First consider $\overline{A}_1$ if the denominator is positive, as $\overline{A}_1 = p_1$ is trivially bounded in $[\lc, \uc]$ by Assumption \ref{assn:marginal_cdep} if the denominator becomes zero. By its definition, we have
\begin{align*}
    \overline{A}_1
    = \frac{p_1 \P(Y = \overline{Q}_1 \mid X=1)}{\left(\frac{\uc - p_1}{\uc} + F_{Y \mid X}(\overline{Q}_1 \mid 1)\frac{p_1}{\uc}\right) - \frac{p_1}{\lc}\P(Y < \overline{Q}_1 \mid X=1)} 
    = \frac{p_1 \P(Y = \overline{Q}_1 \mid X=1)}{\overline{F}_{Y_1}(\overline{Q}_1) - \overline{F}_{Y_1}(\overline{Q}_1-)}.
\end{align*}
From the inequality in equation \eqref{eq:Y1cdf_ub_cts}, we deduce that $ \P(Y = \overline{Q}_1 \mid X=1) > 0$, and 
\[
    \frac{p_1 \P(Y = \overline{Q}_1 \mid X=1)}{\overline{F}_{Y_1}(\overline{Q}_1) - \overline{F}_{Y_1}(\overline{Q}_1-)} \in [\lc, \uc].
\]
So this concludes that $\overline{A}_1 \in [\lc, \uc]$. Similarly, the results for $\underline{A}_1$, $1-\overline{A}_0$, and $1-\underline{A}_0$ can be deduced by inequalities \eqref{eq:cdf_discts_bd}.

\bigskip

\noindent \textbf{Proof of Part 4:} These propensity scores can only take the values $\lc, \uc, \overline{A}_1, \underline{A}_1, 1-\overline{A}_0$, and $1-\underline{A}_0$. By Part 3, these values all lie in $[\lc,\uc]$.

\bigskip

\noindent \textbf{Proof of Part 5:} First we show that  $\E\left[\1(Y \leq y)X/\underline{p}_1(Y)\right] = \underline{F}_{Y_1}(y)$ for all $y \in \R$. The proof for  $\overline{F}_{Y_1}$ is similar by interchanging $\lc$ with $\uc$ and thus omitted.

To prove the desired identity, we split the analysis in three cases depending on the value of $y \in \R$. For $y < \underline{Q}_1$, we have $\1(Y\leq y)/\underline{p}_1(Y) = \1(Y\leq y)/\uc$ and thus
\begin{align*}
	\E\left[\frac{\1(Y \leq y)X}{\underline{p}_1(Y)}\right] &= \E\left[\frac{\1(Y \leq y)X}{\uc}\right] = \frac{F_{Y \mid X}(y \mid 1)p_1}{\uc} = \underline{F}_{Y_1}(y).
\end{align*}

When $y = \underline{Q}_1$, we can write
\begin{align*}
	\E\left[\frac{\1(Y \leq \underline{Q}_1)X}{\underline{p}_1(Y)}\right] 
        &= \E\left[\frac{\1(Y <\underline{Q}_1)X}{\uc}\right] +\E\left[\frac{\1(Y = \underline{Q}_1)X}{\underline{A}_1}\right]\\
	&= \frac{\P(Y < \underline{Q}_1 \mid X=1)p_1}{\uc} + \P(Y=\underline{Q}_1 \mid X=1)p_1\left(\frac{\P(Y=\underline{Q}_1 \mid X=1)p_1}{\underline{F}_{Y_1}(\underline{Q}_1) - \underline{F}_{Y_1}(\underline{Q}_1-)}\right)^{-1}\\
	&= \frac{\P(Y < \underline{Q}_1 \mid X=1)p_1}{\uc} + \underline{F}_{Y_1}(\underline{Q}_1) - \underline{F}_{Y_1}(\underline{Q}_1-)\\
	&= \underline{F}_{Y_1}(\underline{Q}_1-) + \underline{F}_{Y_1}(\underline{Q}_1) - \underline{F}_{Y_1}(\underline{Q}_1-)\\
	&= \underline{F}_{Y_1}(\underline{Q}_1).
\end{align*}

Finally, when $y > \underline{Q}_1$, we can write
\begin{align*}
	\E\left[\frac{\1(Y \leq y)X}{\underline{p}_1(Y)}\right] 
        &= \E\left[\frac{\1(Y \leq \underline{Q}_1)X}{\underline{p}_1(Y)}\right] +\E\left[\frac{\1(\underline{Q}_1< Y \leq y)X}{\lc}\right]\\
	&= \underline{F}_{Y_1}(\underline{Q}_1) + \frac{\left(F_{Y \mid X}(y \mid 1) - F_{Y \mid X}(\underline{Q}_1 \mid 1)\right)p_1}{\lc}\\
	&= \frac{\lc - p_1}{\lc} + F_{Y \mid X}(\underline{Q}_1 \mid 1)\frac{p_1}{\lc} + \frac{\left(F_{Y \mid X}(y \mid 1) - F_{Y \mid X}(\underline{Q}_1 \mid 1)\right)p_1}{\lc}\\
	&= \frac{\lc - p_1}{\lc} + F_{Y \mid X}(y \mid 1)\frac{p_1}{\lc}\\
	&= \underline{F}_{Y_1}(y).
\end{align*}
Thus the desired identity holds for all $y \in \R$.

Next we show the identity $\E\left[\1(Y\leq y)(1-X)/(1-\underline{p}_0(Y))\right] = \underline{F}_{Y_0}(y)$ for all $y \in \R$. The proof for $\overline{F}_{Y_0}$ is similar by interchanging $\lc$ with $\uc$ and thus omitted. Similar to above, we split the analysis in three cases depending on the value of $y \in \R$. For $y < \underline{Q}_0$, we have $ \indicator(Y \leq y)/(1-\underline{p}_0(Y)) = \indicator(Y \leq y)/(1-\lc)$ and thus 
\begin{align*}
    \E\left[\frac{1(Y\leq y) (1-X)}{1-\underline{p}_0(Y)}\right]
    = \E\left[\frac{1(Y\leq y) (1-X)}{1-\lc}\right] 
    = \frac{F_{Y \mid X}(y \mid 0)p_0}{1-\lc}
    = \underline{F}_{Y_0}(y).
\end{align*}
When $y = \underline{Q}_0$, we can write
\begin{align*}
     \E\left[\frac{1(Y\leq y) (1-X)}{1-\underline{p}_0(Y)}\right]
     &= \Exp\left[\frac{\1(Y < \underline{Q}_0)(1-X)}{1-\lc}\right] + \E\left[\frac{\1(Y = \underline{Q}_0) (1-X)}{\underline{A}_0}\right] \\
     &= \frac{\P(Y< \underline{Q}_0 \mid X=0)p_0}{1-\lc} + \P(Y = \underline{Q}_0 \mid X=0)p_0 \left(\frac{\P(Y = \underline{Q}_0 \mid X=0)p_0}{\underline{F}_{Y_0}(\underline{Q}_0) - \underline{F}_{Y_0}(\underline{Q}_0-)}\right)^{-1} \\
     &= \frac{\P(Y< \underline{Q}_0 \mid X=0)p_0}{1-\lc} + \underline{F}_{Y_0}(\underline{Q}_0) - \underline{F}_{Y_0}(\underline{Q}_0-) \\
     &= \underline{F}_{Y_0}(\underline{Q}_0-) + \underline{F}_{Y_0}(\underline{Q}_0) - \underline{F}_{Y_0}(\underline{Q}_0-) \\
     &= \underline{F}_{Y_0}(\underline{Q}_0).
\end{align*}
When $y > \underline{Q}_0$, we can write
\begin{align*}
    \E\left[\frac{1(Y\leq y) (1-X)}{1-p_0(Y)}\right]
    &= \E\left[\frac{1(Y\leq \underline{Q}_0)(1-X)}{1- \underline{p}_0(Y)} \right] + \E\left[\frac{\1(\underline{Q}_0 < Y \leq y)(1-X)}{1-\uc}\right] \\
    &= \underline{F}_{Y_0}(\underline{Q}_0) + \frac{\left(F_{Y \mid X}(y \mid 0) - F_{Y \mid X}(\underline{Q}_0|0)\right)p_0}{1-\uc} \\
    &= \frac{p_1 - \uc}{1 - \uc} + F_{Y \mid X}(\underline{Q}_0|0)\frac{p_0}{1-\uc} + \frac{\left(F_{Y \mid X}(y \mid 0) - F_{Y \mid X}(\underline{Q}_0|0)\right)p_0}{1-\uc} \\
    &= \frac{p_1 - \uc}{1 - \uc} + F_{Y \mid X}(y \mid 0)\frac{p_0}{1-\uc} \\
    &= \underline{F}_{Y_0}(y).
\end{align*}
Thus the desired identity has been established for all $y \in \R$.

\bigskip

\noindent \textbf{Proof of Part 6:} We begin by considering the first sequence of equivalences between (a), (b), and (c) for $x = 1$. By Lemma \ref{lemma:quantile_props}.1.
\begin{align*}
	\overline{F}_{Y_1}(y_1) \leq \underline{F}_{Y_0}(y_0) &\Longleftrightarrow \overline{R}_1(y_1) \coloneqq \overline{Q}_{Y_0}(\overline{F}_{Y_1}(y_1)) \leq y_0\\
	F_{Y \mid X}(y_1 \mid 1) \leq \underline{F}_{Y_0 \mid X}(y_0 \mid 1) &\Longleftrightarrow \overline{R}_2(y_1) \coloneqq \overline{Q}_{Y_0 \mid X}(F_{Y \mid X}(y_1 \mid 1) \mid 1) \leq y_0\\
	\overline{F}_{Y_1 \mid X}(y_1 \mid 0) \leq {F}_{Y_0 \mid X}(y_0|0) &\Longleftrightarrow \overline{R}_3(y_1) \coloneqq Q_{Y \mid X}(\overline{F}_{Y_1 \mid X}(y_1 \mid 0)|0) \leq y_0.
\end{align*}
The equivalence relationship for the statements on the left hand side holds if $\overline{R}_1(y_1) = \overline{R}_2(y_1) = \overline{R}_3(y_1)$ for all $y_1 \in \R$. By direct calculation, we can see that
\begin{align*}
	\overline{R}_2(y_1) = \overline{R}_3(y_1) = Q_{Y \mid X}\left(\min\left\{\frac{(1-\lc) p_1}{p_0 \lc}F_{Y \mid X}(y_1 \mid 1), \frac{p_0 - (1-\uc)}{p_0\uc} + \frac{p_1(1-\uc)}{p_0\uc}F_{Y \mid X}(y_1 \mid 1) \right\} \mid 0 \right)
\end{align*}
and thus it remains to show that $\overline{R}_1(y_1) = \overline{R}_2(y_1)$. Recall that
\begin{align*}
    \overline{R}_1(y_1) 
    &= \overline{Q}_{Y_0}(\overline{F}_{Y_1}(y_1))\\
    &= Q_{Y \mid X}\left(\min\left\{\frac{1-\lc}{p_0}\overline{F}_{Y_1}(y_1), \frac{p_0 - (1-\uc)}{p_0} + \frac{1-\uc}{p_0}\overline{F}_{Y_1}(y_1) \right\} \mid 0 \right).
\end{align*}
We split the proof into two cases. First consider $y_1 < \overline{Q}_1 = Q_{Y \mid X}(\overline{\tau}_1 \mid 1)$. In such case we have 
\begin{equation}
\label{eq:cdf_bound_1}
    F_{Y \mid X}(y_1 \mid 1) < \overline{\tau}_1 = \frac{(\uc - p_1)\lc}{(\uc - \lc)p_1}
\end{equation}
by Lemma \ref{lemma:quantile_props}.1, and 
\begin{equation}
\label{eq:cdf_bound_2}
    \overline{F}_{Y_1}(y_1) = F_{Y \mid X}(y \mid 1) \frac{p_1}{\lc}.
\end{equation}
Using equations \eqref{eq:cdf_bound_1} and \eqref{eq:cdf_bound_2}, it can be verified that
\[
    \frac{1-\lc}{p_0}\overline{F}_{Y_1}(y_1) < \frac{p_0 - (1-\uc)}{p_0} + \frac{1-\uc}{p_0}\overline{F}_{Y_1}(y_1).
\]
This implies 
\begin{equation}
\label{eq:R1_left}
    \overline{R}_1(y_1) = Q_{Y \mid X}\left(\frac{1-\lc}{p_0}\overline{F}_{Y_1}(y_1)|0\right) =  Q_{Y \mid X}\left(\frac{(1-\lc)p_1}{p_0 \lc}F_{Y \mid X}(y_1 \mid 1)|0\right)\qquad \text{if } y_1 < \overline{Q}_1.
\end{equation}
Next consider $y \geq \overline{Q}_1$. Then we have $F_{Y \mid X}(y_1 \mid 1) \geq \overline{\tau}_1$ by Lemma \ref{lemma:quantile_props}.1, and 
\[
    \overline{F}_{Y_1}(y_1) = \frac{\uc - p_1}{\uc} + F_{Y \mid X}(y \mid 1)\frac{p_1}{\uc}.
\]
These two implications lead to the following inequality
\[
    \frac{1-\lc}{p_0}\overline{F}_{Y_1}(y_1) \geq \frac{p_0 - (1-\uc)}{p_0} + \frac{1-\uc}{p_0}\overline{F}_{Y_1}(y_1),
\]
which further implies
\begin{equation}
\label{eq:R1_right}
\begin{aligned}
    \overline{R}_1(y_1) 
    &= Q_{Y \mid X}\left(\frac{p_0 - (1-\uc)}{p_0} + \frac{1-\uc}{p_0}\overline{F}_{Y_1}(y_1)|0\right) \\
    &= Q_{Y \mid X}\left(\frac{p_0 - (1-\uc)}{p_0\uc} + \frac{p_1(1-\uc)}{p_0\uc}F_{Y \mid X}(y_1 \mid 1)| 0 \right) \qquad \text{if } y_1 \geq \overline{Q}_1.
\end{aligned}
\end{equation}
By Lemma \ref{lemma:quantile_props}.1, $y_1 < \overline{Q}_1$ is equivalent to $F_{Y \mid X}(y_1 \mid 1) < \overline{\tau}_1$, and it is further equivalent to
\[
    \frac{(1-\lc) p_1}{p_0 \lc}F_{Y \mid X}(y_1 \mid 1) < \frac{p_0 - (1-\uc)}{p_0\uc} + \frac{p_1(1-\uc)}{p_0\uc}F_{Y \mid X}(y_1 \mid 1).
\]
From this, \eqref{eq:R1_left}, and \eqref{eq:R1_right}, we deduce that
\begin{align*}
    \overline{R}_2(y_1) 
    &= Q_{Y \mid X}\left(\min\left\{\frac{(1-\lc) p_1}{p_0 \lc}F_{Y \mid X}(y_1 \mid 1), \frac{p_0 - (1-\uc)}{p_0\uc} + \frac{p_1(1-\uc)}{p_0\uc}F_{Y \mid X}(y_1 \mid 1) \right\} \mid 0 \right) \\
    &= Q_{Y \mid X}\left(\frac{(1-\lc) p_1}{p_0 \lc}F_{Y \mid X}(y_1 \mid 1)|0\right) \1(y < \overline{Q}_1) + Q_{Y \mid X}\left(\frac{\uc - p_1}{p_0\uc} + \frac{p_1(1-\uc)}{p_0\uc}F_{Y \mid X}(y_1 \mid 1) | 0\right) \1(y \geq \overline{Q}_1) \\
    &= \overline{R}_1(y_1).
\end{align*}
Therefore, the desired conclusion for $x=1$ has been established. Similar arguments can be used to show that the same conclusion also holds for $x=0$, and to show the second set of equivalences between (d), (e), and (f).

\bigskip

\noindent \textbf{Proof of Part 7:} 
We first consider the equivalence of the statement (a), (b), and (c). We can write
\[
		\Delta_1 \coloneqq F_{Y \mid X}(\overline{Q}_1- \mid 1) - \underline{F}_{Y_0 \mid X}(\underline{Q}_0- \mid 1) = F_{Y \mid X}(\overline{Q}_1- \mid 1) - \frac{p_0 \lc F_{Y \mid X}(\underline{Q}_0-|0)}{p_1(1-\lc)}
\]
From this, we note that
	\begin{align*}
		\Delta_2 &\coloneqq \overline{F}_{Y_1}(\overline{Q}_1-) - \underline{F}_{Y_0}(\underline{Q}_0-) = \frac{p_1 F_{Y \mid X}(\overline{Q}_1- \mid 1)}{\lc} - \frac{p_0 F_{Y \mid X}(\underline{Q}_0-|0)}{1-\lc} = \frac{p_1}{\lc}\Delta_1,
        \end{align*}
and
        \begin{align*}
		\Delta_3 &\coloneqq \overline{F}_{Y_1 \mid X}(\overline{Q}_1-|0) - F_{Y \mid X}(\underline{Q}_0-|0) = \frac{p_1(1-\lc)}{p_0 \lc} F_{Y \mid X}(\overline{Q}_1- \mid 1) - F_{Y \mid X}(\underline{Q}_0-|0) = \frac{p_1(1-\lc)}{p_0 \lc}\Delta_1.
	\end{align*}
The desired result follows by noting that $\Delta_1$, $\Delta_2$, and $\Delta_3$ all have the same sign. The proof of the equivalence of statements (d), (e), and (f) is similar and thus omitted.

\bigskip

\noindent \textbf{Proof of Part 8:} We have that
\begin{align*}
	\overline{F}_{Y_1}(\overline{Q}_1-) &= \frac{F_{Y \mid X}(Q_{Y \mid X}(\overline{\tau}_1 \mid 1)- \mid 1)p_1}{\lc} \leq \overline{\tau}_1 \frac{p_1}{\lc} = \frac{\uc - p_1}{\uc - \lc}
\end{align*}
by Lemma \ref{lemma:quantile_props}.3. Similarly,
\begin{align*}
	\overline{F}_{Y_1}(\overline{Q}_1) &= \frac{\uc - p_1}{\uc} + \frac{F_{Y \mid X}(Q_{Y \mid X}(\overline{\tau}_1 \mid 1) \mid 1)p_1}{\uc} \geq \frac{\uc - p_1}{\uc} + \overline{\tau}_1 \frac{p_1}{\uc} = \frac{\uc - p_1}{\uc - \lc}.
\end{align*}
by Lemma \ref{lemma:quantile_props}.2. 
The other inequalities can be shown in a similar manner. Their derivations are thus omitted.

\bigskip

\noindent \textbf{Proof of Part 9:} We have that 
\[
    F_{Y \mid X}(\overline{Q}_1 -  \mid 1) = F_{Y \mid X}\left(Q_{Y \mid X} (\overline{\tau}_1  \mid 1) -  \mid 1\right) \leq \overline{\tau}_1
\]
by Lemma \ref{lemma:quantile_props}.3. Similarly, 
\[
    F_{Y \mid X}(\overline{Q}_1 \mid 1) = F_{Y \mid X}\left(Q_{Y \mid X} (\overline{\tau}_1 \mid 1) \mid 1\right) \geq \overline{\tau}_1
\]
by Lemma \ref{lemma:quantile_props}.2. The same arguments can be applied to $\underline{F}_{Y_0 \mid X}(\underline{Q}_0- | 1)$ and $\underline{F}_{Y_0 \mid X}(\underline{Q}_0 | 1)$. So we have
\[
    \underline{F}_{Y_0 \mid X}(\underline{Q}_0- | 1) = F_{Y \mid X}(Q_{Y \mid X}(\underline{\tau_0}|0)-|0) \frac{p_0\lc}{p_1(1-\lc)} \leq \frac{\underline{\tau_0}p_0\lc}{p_1(1-\lc)} = \frac{\lc(\uc - p_1)}{p_1(\uc - \lc)} = \overline{\tau}_1.
\]
via Lemma \ref{lemma:quantile_props}.3, and 
\[
    \underline{F}_{Y_0 \mid X}(\underline{Q}_0 | 1) = \frac{p_1 - \uc}{(1-\uc)p_1} + F_{Y \mid X}(Q_{Y \mid X}(\underline{\tau_0}|0)|0) \frac{p_0\uc}{p_1(1-\uc)} \geq  \frac{p_1 - \uc}{(1-\uc)p_1} + \frac{\underline{\tau}_0p_0\uc}{p_1(1-\uc)}  = \overline{\tau}_1.
\]
via Lemma \ref{lemma:quantile_props}.2. The proofs for the other inequalities are similar and thus omitted.

\bigskip

\noindent \textbf{Proof of Part 10:}
We begin by considering the first set of equivalences between (a), (b), and (c) when $x = 1$, and the equivalences for $x=0$ are identical. By Lemma \ref{lemma:quantile_props}.1, we have the following equivalence relationships:
\begin{align*}
	\overline{F}_{Y_1}(y_1) + \overline{F}_{Y_0}(y_0) &\geq 1 \Longleftrightarrow y_1 \geq \underline{R}_1(y_0) \coloneqq \underline{Q}_{Y_1}(1 - \overline{F}_{Y_0}(y_0))\\
	F_{Y \mid X}(y_1 \mid 1) + \overline{F}_{Y_0 \mid X}(y_0 \mid 1) &\geq 1 \Longleftrightarrow y_1 \geq \underline{R}_2(y_0) \coloneqq Q_{Y \mid X}(1 - \overline{F}_{Y_0 \mid X}(y_0 \mid 1) \mid 1)\\
	\overline{F}_{Y_1 \mid X}(y_1 \mid 0) + F_{Y \mid X}(y_0|0) &\geq 1 \Longleftrightarrow y_1 \geq \underline{R}_3(y_0) \coloneqq \underline{Q}_{Y_1 \mid X}(1 - F_{Y \mid X}(y_0|0) \mid 1).
\end{align*}
To prove the equivalence of statements on the left, it suffices to show that $\underline{R}_1(y_0) = \underline{R}_2(y_0) = \underline{R}_3(y_0)$ for all $y_0 \in \R$. First, we directly compute $\underline{R}_2(y_0)$ and $\underline{R}_3(y_0)$:
\begin{align*}
	\underline{R}_2(y_0) &= Q_{Y \mid X}(1 - \overline{F}_{Y_0 \mid X}(y_0 \mid 1) \mid 1)\\
	&= Q_{Y \mid X}\left(1 - \min\left\{\frac{p_0 \uc F_{Y \mid X}(y_0|0)}{p_1(1-\uc)}, \frac{p_1 - \lc + p_0 \lc F_{Y \mid X}(y_0|0)}{p_1(1-\lc)}\right\} \mid 1\right)\\
	&= Q_{Y \mid X}\left(\max\left\{1 - \frac{p_0 \uc F_{Y \mid X}(y_0|0)}{p_1(1-\uc)},1 - \frac{p_1 - \lc + p_0 \lc F_{Y \mid X}(y_0|0)}{p_1(1-\lc)}\right\} \mid 1\right)\\
	&= Q_{Y \mid X}\left(\max\left\{1 - \frac{p_0 \uc F_{Y \mid X}(y_0|0)}{p_1(1-\uc)}, \frac{p_1(1-\lc) - p_1 + \lc - p_0 \lc F_{Y \mid X}(y_0|0)}{p_1(1-\lc)}\right\} \mid 1\right)\\
	&= Q_{Y \mid X}\left(\max\left\{1 - \frac{p_0 \uc F_{Y \mid X}(y_0|0)}{p_1(1-\uc)}, \frac{\lc p_0(1- F_{Y \mid X}(y_0|0))}{p_1(1-\lc)}\right\} \mid 1\right),
\end{align*}
and
\begin{align*}
	\underline{R}_3(y_0) &= \underline{Q}_{Y_1 \mid X}(1 - F_{Y \mid X}(y_0|0) \mid 1)\\
	&= Q_{Y \mid X}\left(\max\left\{\frac{\lc p_0(1 - F_{Y \mid X}(y_0|0))}{p_1(1-\lc)},\frac{p_1 - \uc + \uc p_0(1-F_{Y \mid X}(y_0|0))}{p_1(1-\uc)}\right\} \mid 1\right) \\
	&= Q_{Y \mid X}\left(\max\left\{\frac{\lc p_0(1 - F_{Y \mid X}(y_0|0))}{p_1(1-\lc)},1 - \frac{ \uc p_0 F_{Y \mid X}(y_0|0))}{p_1(1-\uc)}\right\} \mid 1\right).
\end{align*}
Since $y_0$ was arbitrary, we conclude that $\underline{R}_2(y_0) = \underline{R}_3(y_0)$ for all $y_0 \in \R$.

We next establish that $\underline{R}_1(y_0) = \underline{R}_2(y_0)$ for all $y_0 \in \R$. Note that 
\begin{align*}
	\underline{R}_1(y_0) 
        &= \underline{Q}_{Y_1}(1 - \overline{F}_{Y_0}(y_0))\\
	&= Q_{Y \mid X}\left(\max\left\{\frac{\lc}{p_1}(1 - \overline{F}_{Y_0}(y_0)), \frac{p_1 - \uc}{p_1} + \frac{\uc}{p_1}(1 - \overline{F}_{Y_0}(y_0))\right\} \mid 1\right)\\
        &= Q_{Y \mid X}\left(\max\left\{\frac{\lc}{p_1}(1 - \overline{F}_{Y_0}(y_0)), 1 - \frac{\uc}{p_1}\overline{F}_{Y_0}(y_0)\right\} \mid 1\right).
 \end{align*}
If $y_0 < \overline{Q}_0$, Lemma \ref{lemma:quantile_props}.1 implies
\begin{align*}
    \overline{F}_{Y_0}(y_0) = \frac{p_0}{1-\uc} F_{Y \mid X}(y \mid 0) 
    \quad \text{and} \quad 
    F_{Y \mid X}(y_0|0) < \overline{\tau}_0 = \frac{(p_1 - \lc)(1-\uc)}{(\uc-\lc) p_0}.
\end{align*}
These two (in)equalities imply that
\[
    \frac{\lc}{p_1}(1 - \overline{F}_{Y_0}(y_0)) < 1 - \frac{\uc}{p_1}\overline{F}_{Y_0}(y_0).
\]
Then it follows that
\begin{equation}
\label{eq:R1_left_lem10}
    \underline{R}_1(y_0) =  Q_{Y \mid X}\left(1 - \frac{\uc}{p_1}\overline{F}_{Y_0}(y_0) \mid 1\right) = Q_{Y \mid X}\left(1 - \frac{\uc p_0F_{Y \mid X}(y_0|0)}{p_1(1-\uc)}| 1\right)\quad \text{if } y_0 < \overline{Q}_0.
\end{equation}
Similarly, it can be verified that 
\begin{equation}
\label{eq:R1_right_lem10}
    \underline{R}_1(y_0) = Q_{Y \mid X}\left(\frac{\lc}{p_1}(1 - \overline{F}_{Y_0}(y_0)) \mid 1\right) = Q_{Y \mid X}\left(\frac{\lc p_0(1 - F_{Y \mid X}(y_0|0))}{p_1(1-\lc)} \mid 1\right) \quad \text{if } y_0 \geq \overline{Q}_0.
\end{equation}

By Lemma \ref{lemma:quantile_props}.1, $y_0 < \overline{Q}_0$ is equivalent to $F_{Y \mid X}(y_0|0) < \overline{\tau}_0$, and it is further equivalent to 
\[
    1 - \frac{\uc p_0F_{Y \mid X}(y_0|0)}{p_1(1-\uc)} > \frac{\lc p_0(1 - F_{Y \mid X}(y_0|0))}{p_1(1-\lc)}.
\]
Therefore, we can write 
\begin{equation}
\label{eq:R2_lem10}
    \underline{R}_2(y_0) = Q_{Y \mid X}\left(1 - \frac{\uc p_0F_{Y \mid X}(y_0|0)}{p_1(1-\uc)} \mid 1\right)\1(y_0 < \overline{Q}_0) + Q_{Y \mid X}\left(\frac{\lc p_0(1 - F_{Y \mid X}(y_0|0))}{p_1(1-\lc)} \mid 1\right)\1(y_0 \geq \overline{Q}_0).
\end{equation}
By combining \eqref{eq:R1_left_lem10}, \eqref{eq:R1_right_lem10}, and \eqref{eq:R2_lem10}, we note that $\underline{R}_2(y_0) = \underline{R}_1(y_0)$ for all $y_0 \in \R$, as desired. The proof for the second set of equivalences between (d), (e), and (f) is similar and thus omitted.

\medskip

\textbf{Proof of Part 11}: We show the first set of equivalences between (a), (b), and (c), and the proof for the second set of equivalences between (d), (e), and (f) follows similar arguments and thus omitted. First, we expand
\[
    \Delta_1' 
    \coloneqq F_{Y \mid X}(\overline{Q}_1 \mid 1) + \overline{F}_{Y_0 \mid X}(\overline{Q}_0- \mid 1) - 1
    = F_{Y \mid X}(\overline{Q}_1 \mid 1) + F_{Y \mid X}(\overline{Q}_0-|0) \frac{p_0\uc}{p_1(1-\uc)} - 1.
\]
Next, note that 
\begin{align*}
    \Delta_2' 
    &\coloneqq \overline{F}_{Y_1}(\overline{Q}_1) + \overline{F}_{Y_0}(\overline{Q}_0-) - 1 \\
    &= \frac{\uc-p_1}{\uc} + F_{Y \mid X}(\overline{Q}_1 \mid 1) \frac{p_1}{\uc} + F_{Y \mid X}(\overline{Q}_0-|0) \frac{p_0}{1-\uc} - 1 \\
    &= \frac{p_1}{\uc}\left[F_{Y \mid X}(\overline{Q}_1 \mid 1) + F_{Y \mid X}(\overline{Q}_0-|0)\frac{p_0\uc}{p_1(1-\uc)} - 1\right] \\
    &= \frac{p_1}{\uc} \Delta_1',
\end{align*}
and 
\begin{align*}
    \Delta_3'
    &\coloneqq \overline{F}_{Y_1 \mid X}(\overline{Q}_1|0) + F_{Y \mid X}(\overline{Q}_0-|0) - 1 \\
    &= \frac{\uc-p_1}{\uc p_0} + F_{Y \mid X}(\overline{Q}_1 \mid 1) \frac{p_1(1-\uc)}{p_0\uc} + F_{Y \mid X}(\overline{Q}_0-|0) - 1 \\
    &= \frac{p_1(1-\uc)}{p_0\uc} \left[F_{Y \mid X}(\overline{Q}_1 \mid 1) + F_{Y \mid X}(\overline{Q}_0-|0)\frac{p_0\uc}{p_1(1-\uc)}  + \frac{\uc-p_1}{p_1(1-\uc)} - \frac{(1-p_1)\uc}{p_1(1-\uc)}\right] \\
    &= \frac{p_1(1-\uc)}{p_0\uc} \left[F_{Y \mid X}(\overline{Q}_1 \mid 1) + F_{Y \mid X}(\overline{Q}_0-|0)\frac{p_0\uc}{p_1(1-\uc)} - 1 \right] \\
    &= \frac{p_1(1-\uc)}{p_0\uc} \Delta_1'.
\end{align*}
Therefore, the desired result follows by noting that $\Delta_1'$, $\Delta_2'$, $\Delta_3'$ all have the same sign.
\end{proof}

\begin{proof}[Proof of Theorem \ref{thm:cdf_sharp_margcdep}]
Fix a $w\in\supp(W)$ and $(\varepsilon,\gamma, C_{1,0 \mid 1,w}, C_{1,0|0,w}) \in [0,1]^2\times \mathcal{C}^2$. We prove this theorem by constructing a probability distribution $\widetilde{\P}$ for $(Y_1,Y_0,X)$ conditional on $W=w$ such that for all $y \in \R$, $x \in \{0,1\}$, and $(y_1, y_0) \in \R^2$, the following conditions hold:
\begin{enumerate}
	\item $\widetilde{\P}(Y_1 \leq y \mid W = w) = \varepsilon \underline{F}_{Y_1 \mid W}(y \mid w) + (1-\varepsilon) \overline{F}_{Y_1 \mid W}(y \mid w)$ and \\ $\widetilde{\P}(Y_0 \leq y \mid W = w) = \gamma \underline{F}_{Y_0 \mid W}(y \mid w) + (1-\gamma) \overline{F}_{Y_0 \mid W}(y \mid w)$ ;
	\item $\widetilde{\P}(X=x \mid W = w) = p_{x|w}$;
	\item $\widetilde{\P}(Y_x \leq y \mid X=x,W=w) = F_{Y \mid X,W}(Y \mid X,w)$;
        \item $\widetilde{\P}(Y_1\leq y_1, Y_0\leq y_0 \mid X=x,W=w) = C_{1,0|x,w}(\widetilde{\Prob}(Y_1\leq y_1 \mid X=x,W=w), \widetilde{\Prob}(Y_1\leq y_0 \mid X=x,W=w))$;
	\item $\widetilde{\P}(X=1 \mid Y_x,W=w) \in [\lc(w,\eta),\uc(w,\eta)]$ $\widetilde{\P}$-almost surely.
\end{enumerate}

Condition 1 requires that an arbitrary convex combination of marginal cdf bounds stated in Theorem \ref{thm:cdf_sharp_margcdep} can be achieved by the constructed measure. Condition 4 then states that any bivariate copula $C_{1,0|x,w}$ is also achievable. Conditions 2 and 3 require the constructed measure generate the same distribution of $(Y,X)$ as the observed data conditional on $W=w$. Finally, Condition 5 requires the marginal $c$-dependence Assumption \ref{assn:marginal_cdep} to be satisfied for the constructed measure when conditioning on $W=w$. As a result, the constructed measure $\widetilde{\P}$ generates the marginal cdfs and copulas in Theorem \ref{thm:cdf_sharp_margcdep} and satisfies all the requirements in the definition of identified set $\mathcal{I}_0^\text{marg}(F_{Y,X,W})$.

For the conciseness of the proof, we write $C_{1,0|x,w}$ as $C_{x,w}$ for $x \in \{0,1\}$ so that subscripts of copulas denote the conditioning variables. 

Let
\begin{equation}
\label{eq:sharpness_margcdep}
\begin{aligned} 
    \widetilde{\P}(Y_1 \leq y_1, Y_0 \leq y_0, X = x \mid W = w) 
    &= x C_{1,w}(F_{Y \mid X,W}(y_1 \mid 1,w),F_0(y_0 \mid 1,w;\gamma)) p_{1|w} \\
    &\quad + (1-x) C_{0,w}(F_1(y_1 \mid 0,w;\varepsilon),F_{Y \mid X,W}(y_0|0,w)) p_{0|w}. 
\end{aligned}
\end{equation}
where
\begin{align*}
	F_0(y_0 \mid 1,w;\gamma) &= \gamma \underline{F}_{Y_0 \mid X,W}(y_0 \mid 1,w) + (1-\gamma) \overline{F}_{Y_0 \mid X,W}(y_0 \mid 1,w)\\
	F_1(y_1 \mid 0,w;\varepsilon) &= \varepsilon \underline{F}_{Y_1 \mid X,W}(y_1 \mid 0,w) + (1-\varepsilon) \overline{F}_{Y_1 \mid X,W}(y_1 \mid 0,w).
\end{align*}
Since convex combinations of cdfs are cdfs, and by Lemma \ref{lemma:cdf_properties}.1, both $F_0(\cdot \mid 1,w;\gamma)$ and $F_1(\cdot \mid 0,w;\varepsilon)$ are cdfs. By Sklar's Theorem \citep[Theorem 2.3.3]{Nelsen2006}, the expression in \eqref{eq:sharpness_margcdep} is a joint distribution function for $(Y_1,Y_0,X)$ conditional on $W=w$.

For the rest of the proof, we very conditions 1-5 for the constructed measure $\widetilde{\P}$.

\medskip

\noindent \textbf{Verifying Condition 1:}
For $y \in \R$, we can see that
\begin{align*}
	\widetilde{\P}(Y_1 \leq y \mid W=w) &= \sum_{x \in \{0,1\}}\lim_{y_0 \to +\infty} \widetilde{\P}(Y_1 \leq y, Y_0 \leq y_0, X=x \mid W=w) \\
	&= \lim_{y_0 \to +\infty} C_{1,w}(F_{Y \mid X,W}(y_1 \mid 1,w),F_0(y_0 \mid 1,w;\gamma)) p_{1|w} \\
	&\quad + \lim_{y_0 \to +\infty} C_{0,w}(F_1(y_1 \mid 0,w;\varepsilon),F_{Y \mid X,W}(y_0|0,w)) p_{0|w} \\
	&= C_{1,w}(F_{Y \mid X,W}(y_1 \mid 1,w),1) p_{1|w} + C_{0,w}(F_1(y_1 \mid 0,w;\varepsilon),1) p_{0|w} \\
	&= F_{Y \mid X,W}(y_1 \mid 1,w) p_{1|w} + F_1(y_1 \mid 0,w;\varepsilon)p_{0|w} \\
	&= \varepsilon (F_{Y \mid X,W}(y_1 \mid 1,w) p_{1|w} + \underline{F}_{Y_1 \mid X,W}(y_1 \mid 0,w)p_{0|w}) \\
	&\quad + (1-\varepsilon)(F_{Y \mid X,W}(y_1 \mid 1,w) p_{1|w} + \overline{F}_{Y_1 \mid X,W}(y_1 \mid 0,w)p_{0|w})\\
	&= \varepsilon \underline{F}_{Y_1 \mid W}(y \mid w) + (1-\varepsilon) \overline{F}_{Y_1 \mid W}(y \mid w).
\end{align*}
The third line holds since $C_{x,w}(1,u) = C_{x,w}(u,1) = u$ for $x \in \{0,1\}$ and $u \in [0,1]$. The last line holds by Lemma \ref{lemma:cdf_properties}.2. 

Likewise, 
\begin{align*}
	\widetilde{\P}(Y_0 \leq y \mid W=w) &= \sum_{x \in \{0,1\}} \lim_{y_1 \to +\infty} \widetilde{\P}(Y_1 \leq y_1, Y_0 \leq y, X=x \mid W=w)\\
	&= \gamma \underline{F}_{Y_0 \mid W}(y \mid w) + (1-\gamma) \overline{F}_{Y_0 \mid W}(y \mid w).
\end{align*}

\medskip

\noindent \textbf{Verifying Condition 2:}
For $x \in \{0,1\}$, we have that
\begin{align*}
	\widetilde{\P}(X=x \mid W=w) 
	&= \lim_{y_1,y_0 \to \infty} \widetilde{\P}(Y_1 \leq y_1, Y_0 \leq y_0, X = x \mid W=w) \\
	&= x C_{1,w}(1,1) p_{1|w} + (1-x) C_{0,w}(1,1) p_{0|w} \\
	&= x p_{1|w} + (1-x)p_{0|w} \\
	&= p_{x|w}.
\end{align*}
The third equality uses the fact that $C_{x,w}(1,1) = 1$ for $x\in\{0,1\}$.

\medskip 

\noindent \textbf{Verifying Condition 3:}
For $x \in \{0,1\}$ and $y \in \R$, we have that
\begin{align*}
	\widetilde{\P}(Y_x \leq y \mid X=x, W=w) &= \lim_{y' \to +\infty} \frac{\widetilde{\P}(Y_x \leq y, Y_{1-x} \leq y',X=x \mid W=w)}{\widetilde{\P}(X=x \mid W=w)}\\
	&= \frac{x C_{1,w}(F_{Y \mid X,W}(y \mid 1,w),1)p_{1|w} + (1-x) C_{0,w}(1,F_{Y \mid X,W}(y \mid 0,w))p_{0|w}}{p_{x|w}}\\
	&= \frac{x F_{Y \mid X,W}(y \mid 1,w)p_{1|w} + (1-x) F_{Y \mid X,W}(y \mid 0)p_{0|w}}{p_{x|w}}\\
	&= \frac{F_{Y \mid X,W}(Y \mid X,w)p_{x|w}}{p_{x|w}}\\
	&= F_{Y \mid X,W}(Y \mid X,w).
\end{align*}
The third line holds again by $C_x(1,u) = C_x(u,1) = u$ for $x \in \{0,1\}$ and $u \in [0,1]$. The last line follows by Assumption \ref{assn:overlap} that $p_{x|w} > 0$ for $x \in \{0,1\}$.

\medskip

\noindent \textbf{Verifying Condition 4:} First, following similar steps for verifying condition 3, we have
\begin{equation}
\label{eq:cond_margprob}
    \begin{aligned}
    \widetilde{\P}(Y_x \leq y \mid X=1-x,W=w)
    &= \frac{\lim_{y' \to +\infty}\widetilde{\P}(Y_x \leq y, Y_{1-x}\leq y', X=1-x \mid W=w)}{\widetilde{\P}(X=1-x \mid W=w)} \\
    &= \frac{(1-x)C_{1,w}(1,F_0(y_0 \mid 1,w;\gamma))p_{1|w} + xC_{0,w}(F_1(y_1 \mid 0,w;\epsilon), 1)p_{0|w}}{p_{1-x|w}} \\
    &= \frac{(1-x)p_{1|w} F_0(y_0 \mid 1,w;\gamma) + xp_{0|w} F_1(y_1 \mid 0,w;\epsilon)}{p_{1-x|w}} \\
    &= (1-x)F_0(y_0 \mid 1,w;\gamma) + xF_1(y_1 \mid 0,w;\epsilon).
\end{aligned}
\end{equation}
Then for $(y_1,y_0) \in \R^2$, it follows that
\begin{align*}
    &\widetilde{\P}(Y_1 \leq y_1, Y_0 \leq y_0 | X = x) \\
    &= xC_{1,w}(F_{Y \mid X,W}(y_1 \mid 1,w), F_0(y_0 \mid 1,w;\gamma)) + (1-x)C_{0,w}(F_1(y_1 \mid 0,w;\epsilon), F_{Y \mid X,W}(y_0|0,w)) \\
    &= xC_{1,w}(\widetilde{\P}(Y_1\leq y_1 \mid X=1,W=w), \widetilde{\P}(Y_0\leq y_0 \mid X=1,W=w)) \\
    &\quad + (1-x)C_{0,w}(\widetilde{\P}(Y_1\leq y_1 \mid X=0,W=w), \widetilde{\P}(Y_0\leq y_0 \mid X=0,W=w)) \\
    &= C_{x,w}(\widetilde{\P}(Y_1\leq y_1 \mid X=x,W=w), \widetilde{\P}(Y_0\leq y_0 \mid X=x,W=w)).
\end{align*}
The second line holds by Condition 3 and equation \eqref{eq:cond_margprob}.

\medskip

\noindent \textbf{Verifying Condition 5:} In this part, we establish an explicit formula of the propensity score function under $\widetilde{\P}$ and show that it is contained in $[\lc(w,\eta), \uc(w,\eta)]$ almost surely. To achieve this goal, we divide the analysis into two cases. 

\textbf{Case 1:} Consider the case where $p_{1|w} = \lc(w,\eta)$. By direct calculation,
\[
    \underline{F}_{Y_1 \mid W}(y \mid w) = \overline{F}_{Y_1 \mid W}(y \mid w) = F_{Y \mid X,W}(y \mid 1,w)
    \quad\text{and}\quad
    \underline{F}_{Y_0 \mid W}(y \mid w) = \overline{F}_{Y_0 \mid W}(y \mid w) = F_{Y \mid X,W}(y \mid 0,w).
\]
Based on condition 1 we verified above, we have
\[
    \widetilde{\P}(Y_1 \leq y_1 \mid W=w) = F_{Y \mid X,W}(y_1 \mid 1,w)
    \quad\text{and}\quad
    \widetilde{\P}(Y_0 \leq y_0 \mid W=w) = F_{Y \mid X,W}(y_1 \mid 0,w).
\]
Since $p_{1|w} = \lc(w,\eta)$, by Assumption \ref{assn:marginal_cdep}, it is straightforwardly to see that $\P(X=1 \mid Y_1,W=w) = \P(X = 1 \mid Y_0,W=w) = \lc(w,\eta)$ almost surely, which further implies
\begin{align*}
    \widetilde\E\left[\frac{\indicator[Y \leq y_1]X}{\lc(w,\eta)}|W=w\right] 
    &= \E\left[\frac{\indicator[Y \leq y_1]X}{\lc(w,\eta)}|W=w\right] \\
    &= F_{Y \mid X,W}(y_1 \mid 1,w) \\
    &= \widetilde{\P}(Y_1 \leq y_1 \mid W=w)
\end{align*}
and
\begin{align*}
    \widetilde\E\left[\frac{\indicator[Y \leq y_0](1-X)}{1-\lc(w,\eta)}|W=w\right] 
    &= \E\left[\frac{\indicator[Y \leq y_0](1-X)}{1-\lc(w,\eta)}|W=w\right] \\
    &= F_{Y \mid X,W}(y_0|0,w) \\
    &= \widetilde{\P}(Y_0 \leq y_0 \mid W=w).
\end{align*}
Following Lemma \ref{lemma:propscore_equal_condexp}, this implies $\widetilde{\E}(X \mid Y_1,W=w) = \widetilde{\E}(X \mid Y_0,W=w) = \lc(w,\eta)$ almost surely under $\widetilde{\P}$, which is naturally bounded within $[\lc(w,\eta), \uc(w,\eta)]$, as desired. The proof for the case where $p_{1|w} = \uc(w,\eta)$ follows the same argument by interchanging $\lc(w,\eta)$ with $\uc(w,\eta)$ and thus omitted.

\textbf{Case 2:} Consider the case where $\lc(w,\eta) < p_{1|w} < \uc(w,\eta)$. Define
\begin{align*}
	p_1(y,w;\varepsilon) &= \frac{1}{\varepsilon \underline{p}_1(y,w)^{-1} + (1-\varepsilon)\overline{p}_1(y,w)^{-1}},
\end{align*}
where $\overline{p}_1(y,w)$ and $\underline{p}_1(y,w)$ are defined in Appendix \ref{appendix:notation}.

By Lemma \ref{lemma:preliminary_margcdep}.4, $\underline{p}_1(Y_1,w), \overline{p}_1(Y_1,w) \in [\lc(w,\eta),\uc(w,\eta)]$ almost surely. Therefore,
\begin{align*}
	p_1(Y_1,w;\varepsilon) &= \frac{1}{\varepsilon \underline{p}_1(Y_1,w)^{-1} + (1-\varepsilon)\overline{p}_1(Y_1,w)^{-1}} \leq  \frac{1}{\varepsilon \uc(w,\eta)^{-1} + (1-\varepsilon)\uc(w,\eta)^{-1}} = \uc(w,\eta)
\end{align*}
and 
\begin{align*}
	p_1(Y_1,w;\varepsilon) &= \frac{1}{\varepsilon \underline{p}_1(Y_1,w)^{-1} + (1-\varepsilon)\overline{p}_1(Y_1,w)^{-1}} \geq  \frac{1}{\varepsilon \lc(w,\eta)^{-1} + (1-\varepsilon)\lc(w,\eta)^{-1}} = \lc(w,\eta).
\end{align*}
Therefore $p_1(Y_1,w;\varepsilon) \in [\lc(w,\eta),\uc(w,\eta)]$ almost surely. 

Next we will show that $\widetilde{\E}[X \mid Y_1,W=w] = p_1(Y_1,w;\varepsilon)$ via Lemma \ref{lemma:propscore_equal_condexp} by verifying that 
\[
\widetilde{\E}\left[\frac{\1(Y_1 \leq y)X}{p_1(Y_1,w;\varepsilon)}|W=w\right] = \widetilde{\P}(Y_1 \leq y \mid W=w) = \varepsilon \underline{F}_{Y_1 \mid W}(y \mid w) + (1-\varepsilon) \overline{F}_{Y_1 \mid W}(y \mid w), \text{ for all } y \in \R.
\] 
To show this, we have the following derivations:
\begin{align*}
	\widetilde{\E}\left[\frac{\1(Y_1 \leq y)X}{p_1(Y_1,w;\varepsilon)}|W=w\right] 
	&= \E\left[\frac{\1(Y \leq y)X}{p_1(Y,w;\varepsilon)}|W=w\right]\\
	&= \E\left[\1(Y \leq y)X\left(\frac{\varepsilon}{\underline{p}_1(Y,w)} + \frac{1-\varepsilon}{\overline{p}_1(Y,w)}\right) \mid W=w\right]\\
	&= \varepsilon \E\left[\frac{\1(Y \leq y)X}{\underline{p}_1(Y,w)}|W=w\right] + (1-\varepsilon) \E\left[\frac{\1(Y \leq y)X}{\overline{p}_1(Y,w)}|W=w\right]\\
	&= \varepsilon \underline{F}_{Y_1 \mid W}(y \mid w) + (1-\varepsilon) \overline{F}_{Y_1 \mid W}(y \mid w).
\end{align*}
The first equality holds by noting that the distribution of $Y_1$ conditional $X=1$ and $W=w$ under $\widetilde{\P}$ is the same as the one under the population $\P$ as verified by condition 2. The last equality follows by applying Lemma \ref{lemma:preliminary_margcdep}.5. 

For the cdf of $Y_0$, define
\begin{align*}
	p_0(Y_0,w;\gamma) &= 1 - \frac{1}{\gamma (1-\underline{p}_0(Y_0,w))^{-1} + (1-\gamma)(1-\overline{p}_0(Y_0,w))^{-1}}.
\end{align*}
Since $1 - \underline{p}_0(Y_0,w), 1-\overline{p}_0(Y_0,w) \in [1-\uc(w,\eta),1-\lc(w,\eta)]$ almost surely, we have that
\begin{align*}
	p_0(Y_0,w;\gamma) 
	&= 1 - \frac{1}{\gamma (1-\underline{p}_0(Y_0,w))^{-1} + (1-\gamma)(1-\overline{p}_0(Y_0,w))^{-1}} \\
	&\leq 1 - \frac{1}{\gamma (1-\uc(w,\eta))^{-1} + (1-\gamma)(1-\uc(w,\eta))^{-1}} \\
	&= \uc(w,\eta),
\end{align*}
and
\begin{align*}
	p_0(Y_0,w;\gamma) 
	&= 1 - \frac{1}{\gamma (1-\underline{p}_0(Y_0,w))^{-1} + (1-\gamma)(1-\overline{p}_0(Y_0,w))^{-1}} \\
	&\geq 1 - \frac{1}{\gamma (1-\lc(w,\eta))^{-1} + (1-\gamma)(1-\lc(w,\eta))^{-1}} \\
	&= \lc(w,\eta).
\end{align*}
Therefore, $p_0(Y_0,w;\eta) \in [\lc(w,\eta),\uc(w,\eta)]$ almost surely. We can also see that
\begin{align*}
	\widetilde{\E}\left[\frac{\1(Y_0 \leq y)(1-X)}{1 - p_0(Y_0,w;\gamma)}|W=w\right] 
	&= \E\left[\frac{\1(Y \leq y)(1-X)}{1 - p_0(Y,w;\eta)}|W=w\right] \\
	&= \E\left[\1(Y \leq y)(1-X)\left(\frac{\gamma}{1 - \underline{p}_0(Y_0,w)} + \frac{1-\gamma}{1-\overline{p}_0(Y_0,w)}\right) \mid W=w\right]\\
	&= \gamma \E\left[\frac{\1(Y \leq y)(1-X)}{1 - \underline{p}_0(Y_0,w)}|W=w\right] + (1-\gamma)\E\left[\frac{\1(Y \leq y)(1-X)}{1 - \overline{p}_0(Y_0,w)}|W=w\right]\\
	&= \gamma \underline{F}_{Y_0 \mid W}(y \mid w) + (1-\gamma) \overline{F}_{Y_0 \mid W}(y \mid w),
\end{align*}
where the last equality follows by Lemma \ref{lemma:preliminary_margcdep}.5. Therefore, by Lemma \ref{lemma:propscore_equal_condexp}, $\widetilde{\P}(X=1 \mid Y_0,W=w) = p_0(Y_0,w;\eta) \in [\lc(w,\eta),\uc(w,\eta)]$ almost surely, which concludes the proof.
\end{proof}

\subsection{Proof of Theorem \ref{thm:cdf_sharp_jointcdep}}

This appendix provides a proof of Theorem \ref{thm:cdf_sharp_jointcdep} and all of its auxiliary lemmas.  
First, we define four latent propensity score functions. For $w\in\supp(W)$, let
\begin{align}\label{eq:UL_propscore}
	p^{ul}(y_1,y_0,w;B) &= \begin{cases}
							\lc &\text{ if } y_1 \leq \overline{Q}_1, y_0 \leq \underline{Q}_0, (y_1,y_0) \neq (\overline{Q}_1,\underline{Q}_0)\\
							B &\text{ if } (y_1,y_0) = (\overline{Q}_1,\underline{Q}_0)\\
							\uc &\text{ if } y_1 \geq \overline{Q}_1, y_0 \geq  \underline{Q}_0, (y_1,y_0) \neq (\overline{Q}_1,\underline{Q}_0)\\
							p_{1|w} &\text{ otherwise,}
						\end{cases}
\end{align}

\begin{align}\label{eq:UU_propscore}
	p^{uu}(y_1,y_0,w;B) &= \begin{cases}
							\lc &\text{ if } y_1 \leq \overline{Q}_1, y_0 \geq \overline{Q}_0, (y_1,y_0) \neq (\overline{Q}_1,\overline{Q}_0)\\
							B &\text{ if } (y_1,y_0) = (\overline{Q}_1,\overline{Q}_0)\\
							\uc &\text{ if } y_1 \geq \overline{Q}_1, y_0 \leq \overline{Q}_0, (y_1,y_0) \neq (\overline{Q}_1,\overline{Q}_0)\\
							p_{1|w} &\text{ otherwise,}
						\end{cases}
\end{align}

\begin{align}\label{eq:LU_propscore}
	p^{lu}(y_1,y_0,w;B) &= \begin{cases}
							\uc &\text{ if } y_1\leq \overline{Q}_1, y_0 \leq,\underline{Q}_0, (y_1,y_0) \neq (\overline{Q}_1,\underline{Q}_0)\\
							B &\text{ if } (y_1,y_0) = (\overline{Q}_1,\underline{Q}_0)\\
							\lc &\text{ if } y_1 \geq \overline{Q}_1, y_0 \geq \underline{Q}_0, (y_1,y_0) \neq (\overline{Q}_1,\underline{Q}_0)\\
							p_{1|w} &\text{ otherwise,}
						\end{cases}
\end{align}

\begin{align}\label{eq:LL_propscore}
	p^{ll}(y_1,y_0,w;B) &= \begin{cases}
							\uc &\text{ if } y_1 \leq \overline{Q}_1, y_0 \geq \overline{Q}_0, (y_1,y_0) \neq (\overline{Q}_1,\overline{Q}_0)\\
							B &\text{ if } (y_1,y_0) = (\overline{Q}_1,\overline{Q}_0)\\
							\lc &\text{ if } y_1 \geq \overline{Q}_1, y_0 \leq \overline{Q}_0, (y_1,y_0) \neq (\overline{Q}_1,\overline{Q}_0)\\
							p_{1|w} &\text{ otherwise.}
						\end{cases}
\end{align}

By appropriately specifying the constant $B$ in these propensity scores, we can show that they correspond to the propensity scores $\P(X=1 \mid Y_1,Y_0,W=w)$ under joint $c$-dependence for all four pairs of cdf bounds. Before showing this, we state and prove three auxiliary lemmas.

\begin{lemma}\label{lemma:propscore_equal_condexp_v2}
	Let $w\in\supp(W)$. Suppose $m(\cdot)$ is a Borel measurable function and $\P(m(Y_1,Y_0) > \delta|W=w) = 1$ for some $\delta > 0$. The following statements are equivalent:
	\begin{enumerate}
	\item Conditional on $W=w$, the following statement holds almost surely:
	\begin{equation}
	\label{eq:prop_equality_1}
		m(Y_1,Y_0) = \E[X \mid Y_1,Y_0,W=w].
	\end{equation}
        \item For all $(y_1,y_0) \in \R^2$, the following equality holds:
            \begin{equation}
            \label{eq:prop_equality_3}
                \E\left[\1(Y_1\leq y_1, Y_0 \leq y_0) m(Y_1, Y_0) \mid W=w\right] = \P(Y_1\leq y_1, Y_0\leq y_0, X=1 \mid W=w).
            \end{equation}
        \end{enumerate}
\end{lemma}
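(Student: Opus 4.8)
The plan is to mirror the proof of Lemma~\ref{lemma:propscore_equal_condexp}, replacing the univariate half-lines $\{Y_x\leq y\}$ by the bivariate rectangles $\{Y_1\leq y_1,\,Y_0\leq y_0\}$; both implications then reduce to the law of iterated expectations together with a standard uniqueness statement for integrals over a generating $\pi$-system.

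For \eqref{eq:prop_equality_1} $\Rightarrow$ \eqref{eq:prop_equality_3}, I would condition on $(Y_1,Y_0)$ inside the expectation and apply the law of iterated expectations:
\begin{align*}
\E\left[\1(Y_1\leq y_1, Y_0\leq y_0)\, m(Y_1,Y_0)\mid W=w\right]
&= \E\left[\1(Y_1\leq y_1, Y_0\leq y_0)\,\E[X\mid Y_1,Y_0,W=w]\mid W=w\right]\\
&= \E\left[\1(Y_1\leq y_1, Y_0\leq y_0)\,X\mid W=w\right]\\
&= \P(Y_1\leq y_1, Y_0\leq y_0, X=1\mid W=w),
\end{align*}
using \eqref{eq:prop_equality_1} in the first step and $X\in\{0,1\}$ in the last.

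For the converse, \eqref{eq:prop_equality_3} $\Rightarrow$ \eqref{eq:prop_equality_1}, I would first let $(y_1,y_0)\to(+\infty,+\infty)$ in \eqref{eq:prop_equality_3} to obtain $\E[m(Y_1,Y_0)\mid W=w]=\P(X=1\mid W=w)<\infty$, so that $g:=\E[X\mid Y_1,Y_0,W=w]-m(Y_1,Y_0)$ is $\sigma(Y_1,Y_0)$-measurable and, since $|g|\leq 1+m(Y_1,Y_0)$, integrable conditional on $W=w$. Combining \eqref{eq:prop_equality_3} with the law-of-iterated-expectations identity established in the first step (applied to the right-hand side of \eqref{eq:prop_equality_3}) gives $\E[\1(Y_1\leq y_1, Y_0\leq y_0)\,g\mid W=w]=0$ for all $(y_1,y_0)\in\R^2$. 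The collection $\mathcal{A}=\{\{Y_1\leq y_1, Y_0\leq y_0\}:(y_1,y_0)\in\R^2\}$ is a nonempty $\pi$-system, because $\{Y_1\leq y_1, Y_0\leq y_0\}\cap\{Y_1\leq y_1', Y_0\leq y_0'\}=\{Y_1\leq\min\{y_1,y_1'\}, Y_0\leq\min\{y_0,y_0'\}\}$; it generates $\sigma(Y_1,Y_0)$; and the sample space is the countable union $\bigcup_{n=1}^\infty\{Y_1\leq n, Y_0\leq n\}$ of its members. Then \citet[Theorem 34.1]{Billingsley1995} applies and yields $g=0$ almost surely conditional on $W=w$, which is exactly \eqref{eq:prop_equality_1}. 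The hypothesis $m(Y_1,Y_0)\geq\delta>0$ plays the same innocuous role as in Lemma~\ref{lemma:propscore_equal_condexp}, ensuring all conditional objects are well defined.

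The computations here are routine; the only point that needs care is that \eqref{eq:prop_equality_3} is assumed only at finite $(y_1,y_0)$, so the integrability of $m(Y_1,Y_0)$ must be secured from the limiting case before the uniqueness theorem can be invoked. Apart from this there is no real obstacle: the argument is structurally identical to that of Lemma~\ref{lemma:propscore_equal_condexp}, with $\sigma(Y_x)$ replaced by $\sigma(Y_1,Y_0)$ and the generating $\pi$-system taken to be the lower-left rectangles rather than half-lines.
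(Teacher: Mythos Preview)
Your proposal is correct and follows essentially the same approach as the paper's proof: both directions use the law of iterated expectations, and the converse applies \citet[Theorem 34.1]{Billingsley1995} to the $\pi$-system of lower-left rectangles $\{Y_1\leq y_1,\,Y_0\leq y_0\}$ after establishing integrability of $m(Y_1,Y_0)$ via the limit $(y_1,y_0)\to(+\infty,+\infty)$. The only cosmetic difference is that the paper spells out explicitly why $\sigma(\mathcal{A})=\sigma(Y_1,Y_0)$, whereas you assert it; and as you note, the hypothesis $m(Y_1,Y_0)\geq\delta>0$ is not actually used in this lemma (unlike in Lemma~\ref{lemma:propscore_equal_condexp}, where division by $m$ requires it).
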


\begin{proof}[Proof of Lemma \ref{lemma:propscore_equal_condexp_v2}]
We first show  \eqref{eq:prop_equality_1} implies \eqref{eq:prop_equality_3}, note that 
\begin{equation}
\label{eq:prop_equality_sufficiency}
\begin{aligned}
    \P\left(Y_1 \leq y_1, Y_0 \leq y_0, X=1 \mid W = w\right)
    &= \E\left[\1[Y_1\leq y_1, Y_0\leq y_0]X \mid W = w\right] \\
    &= \E\left(\E\left[\1[Y_1\leq y_1, Y_0\leq y_0]X| Y_1, Y_0, W = w\right]|W = w\right) \\
    &= \E\left(\1[Y_1\leq y_1, Y_0\leq y_0]\Exp[X| Y_1, Y_0, W=w]|W = w\right) \\
    &= \E\left(\1[Y_1 \leq y_1, Y_0 \leq y_0] m(Y_1, Y_0) \mid W = w\right).
\end{aligned}
\end{equation}
where we use the law of iterated expectation in the second line and use \eqref{eq:prop_equality_1} in the last line of derivation.

Next, we show that \eqref{eq:prop_equality_3} implies \eqref{eq:prop_equality_1}. To establish this result, we first note a few key facts:
\begin{enumerate}
        \item Following from the last two lines of \eqref{eq:prop_equality_sufficiency}, the law of iterated expectations implies       
        \[
            \E[\1[Y_1\leq y_1, Y_0\leq y_0]\E(X \mid Y_1, Y_0, W=w) \mid W=w] = \E[\1[Y_1\leq y_1, Y_0\leq y_0] m(Y_1,Y_0) \mid W=w]
        \]
        for each $(y_1, y_0) \in \R^2$.
        \item For $(y_1,y_0) \in \R^2$, define the preimage from a half-space on $\R^2$:
        \[
        	I_{y_1, y_0} = \{\omega\in \Omega: Y_1(\omega)\leq y_1, Y_0(\omega)\leq y_0\}
        \]
        and let $\mathcal{A}_2 \coloneqq \left\{I_{y_1, y_0}: (y_1, y_0) \in \R^2\right\}$. Similar to the proof of lemma \ref{lemma:propscore_equal_condexp}, the class of sets $\mathcal{A}_2$ is a $\pi$-system.
        \item The sample space can be written as a countable union of sets in $\mathcal{A}_2$:
        \[
        	\Omega = \{\omega\in\Omega: Y_1(\omega)<\infty, Y_0(\omega)<\infty\} = \bigcup_{n=1}^\infty I_{n,n}.
        \]
        \item The random variable $m(Y_1, Y_0)$ is measurable with respect to the $\sigma$-algebra generated by $(Y_1, Y_0)$ due to the Borel measurability of $m(\cdot)$, and it is integrable since $\E(m(Y_1,Y_0) \mid W=w) = \Prob(X=1 \mid W=w) < \infty$ by sending $y_1$ and $y_0$ to infinity in \eqref{eq:prop_equality_sufficiency}.
        
       	\item The $\sigma$-algebra generated by $\mathcal{A}_2$ equals the $\sigma$-algebra generated by $(Y_1, Y_0)$, i.e.,
	\[
		\sigma(\mathcal{A}_2) = \sigma(Y_1, Y_0).
	\] 
	To show this, define the mapping $f:\Omega \to \R^2$ as $f(\omega) \mapsto (Y_1(\omega), Y_0(\omega))$ and $\mathcal{F} = \{(-\infty, y_1]\times (-\infty, y_0]: (y_1, y_0)\in\R^2\}$. Note that 
	\[
	\sigma(\mathcal{A}_2) = \sigma(f^{-1}(\mathcal{F})) = f^{-1}(\sigma(\mathcal{\mathcal{F}})).
	\]
	Since the Borel $\sigma$-algebra on $\R^2$ can be generated by elements in $\mathcal{F}$, we have $\sigma(\mathcal{F}) = \mathcal{B}(\R^2)$. This implies
	\[
		f^{-1}(\sigma(\mathcal{F})) = f^{-1}(\mathcal{B}(\R^2)) \coloneqq \sigma(Y_1, Y_0).
	\]
	Therefore the desired conclusion holds.
\end{enumerate}

Given the above results, it follows by \citet[Theorem 34.1]{Billingsley1995} 
that
\[
	\Exp[X \mid Y_1,Y_0,W=w] = m(Y_1, Y_0),
\]
almost surely conditional on $W=w$, as desired.
\end{proof}

\begin{lemma}\label{lemma:cond_vs_uncond_comonotonicity}
Let $w\in\supp(W)$. Consider a probability distribution defined by
\begin{align*}
	\widetilde{\P}(Y_1 \leq y_1, Y_0 \leq y_0, X=x \mid W=w) 
	&= \min\{F_{Y \mid X,W}(y_1 \mid 1,w),\underline{F}_{Y_0 \mid X,W}(y_0 \mid 1,w)\}p_{1|w} x \\
	&\quad + \min\{\overline{F}_{Y_1 \mid X,W}(y_1 \mid 0,w),F_{Y \mid X,W}(y_0|0,w)\}p_{0|w} (1-x),
\end{align*}
then 
\begin{align*}
	\widetilde{\P}(Y_1 \leq y_1, Y_0 \leq y_0 \mid W=w) &= \min\{\overline{F}_{Y_1 \mid W}(y_1 \mid w),\underline{F}_{Y_0 \mid W}(y_0 \mid w)\}.
\end{align*}
Also for the following distribution,
\begin{align*}
	\widetilde{\P}(Y_1 \leq y_1, Y_0 \leq y_0, X=x \mid W=w) 
	&= \min\{F_{Y \mid X,W}(y_1 \mid 1,w),\overline{F}_{Y_0 \mid X,W}(y_0 \mid 1,w)\}p_{1|w} x \\
	&\quad + \min\{\underline{F}_{Y_1 \mid X,W}(y_1 \mid 0,w),F_{Y \mid X,W}(y_0|0,w)\}p_{0|w} (1-x)
\end{align*}
implies
\begin{align*}
	\widetilde{\P}(Y_1 \leq y_1, Y_0 \leq y_0 \mid W=w) &= \min\{\underline{F}_{Y_1}(y_1 \mid w),\overline{F}_{Y_0}(y_0 \mid w)\}.
\end{align*}
\end{lemma}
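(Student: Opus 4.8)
Fix $w\in\supp(W)$. Since $\widetilde{\P}(Y_1\leq y_1,Y_0\leq y_0\mid W=w)=\sum_{x\in\{0,1\}}\widetilde{\P}(Y_1\leq y_1,Y_0\leq y_0,X=x\mid W=w)$, the first claim amounts to the identity $\min\{F_{Y\mid X,W}(y_1\mid 1,w),\underline{F}_{Y_0\mid X,W}(y_0\mid 1,w)\}p_{1|w}+\min\{\overline{F}_{Y_1\mid X,W}(y_1\mid 0,w),F_{Y\mid X,W}(y_0\mid 0,w)\}p_{0|w}=\min\{\overline{F}_{Y_1\mid W}(y_1\mid w),\underline{F}_{Y_0\mid W}(y_0\mid w)\}$, and the second claim is the analogous identity with $\underline{F}_{Y_0\mid X,W}$, $\overline{F}_{Y_1\mid X,W}$, $\overline{F}_{Y_1\mid W}$, $\underline{F}_{Y_0\mid W}$ replaced by $\overline{F}_{Y_0\mid X,W}$, $\underline{F}_{Y_1\mid X,W}$, $\underline{F}_{Y_1\mid W}$, $\overline{F}_{Y_0\mid W}$. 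I would first dispose of the degenerate case $p_{1|w}\in\{\lc(w,\eta),\uc(w,\eta)\}$: there, by direct computation (equivalently, via Lemma \ref{lemma:cdf_properties}.2), all four conditional cdfs on the left collapse to $F_{Y\mid X,W}(\cdot\mid 1,w)$ and $F_{Y\mid X,W}(\cdot\mid 0,w)$, so both sides reduce to $\min\{F_{Y\mid X,W}(y_1\mid 1,w),F_{Y\mid X,W}(y_0\mid 0,w)\}$. Hence we may assume $\lc(w,\eta)<p_{1|w}<\uc(w,\eta)$, which is exactly the hypothesis under which Lemma \ref{lemma:preliminary_margcdep}.6 applies.

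For the first identity I would split on the sign of $\overline{F}_{Y_1\mid W}(y_1\mid w)-\underline{F}_{Y_0\mid W}(y_0\mid w)$. If $\overline{F}_{Y_1\mid W}(y_1\mid w)\leq\underline{F}_{Y_0\mid W}(y_0\mid w)$, parts (a)--(c) of Lemma \ref{lemma:preliminary_margcdep}.6 (taken with the roles $x=1$) give simultaneously $F_{Y\mid X,W}(y_1\mid 1,w)\leq\underline{F}_{Y_0\mid X,W}(y_0\mid 1,w)$ and $\overline{F}_{Y_1\mid X,W}(y_1\mid 0,w)\leq F_{Y\mid X,W}(y_0\mid 0,w)$, so the two minima on the left equal $F_{Y\mid X,W}(y_1\mid 1,w)$ and $\overline{F}_{Y_1\mid X,W}(y_1\mid 0,w)$, and their $p_{1|w},p_{0|w}$-weighted sum equals $\overline{F}_{Y_1\mid W}(y_1\mid w)$ by Lemma \ref{lemma:cdf_properties}.2, which is the claimed right-hand side. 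In the opposite case, parts (d)--(f) of Lemma \ref{lemma:preliminary_margcdep}.6 make the two minima equal $\underline{F}_{Y_0\mid X,W}(y_0\mid 1,w)$ and $F_{Y\mid X,W}(y_0\mid 0,w)$, whose weighted sum equals $\underline{F}_{Y_0\mid W}(y_0\mid w)$ by Lemma \ref{lemma:cdf_properties}.2. The second identity is handled identically after relabeling, invoking Lemma \ref{lemma:preliminary_margcdep}.6 with the roles $x=0$ and the matching $\underline{F}$/$\overline{F}$ versions of Lemma \ref{lemma:cdf_properties}.2; I would just be careful to cite Lemma \ref{lemma:cdf_properties}.2 for all four index combinations, since it is proved there explicitly only for one and asserted for the rest.

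There is no serious obstacle here; the argument is bookkeeping. The one point worth flagging is conceptual rather than technical: the reason the within-arm comonotone (``$\min$'') couplings aggregate to the comonotone coupling of the \emph{marginal} cdf bounds --- something false for unrelated conditional marginals --- is precisely the alignment guaranteed by Lemma \ref{lemma:preliminary_margcdep}.6, namely that at any $(y_1,y_0)$ the pair $(F_{Y\mid X,W}(\cdot\mid 1,w),\underline{F}_{Y_0\mid X,W}(\cdot\mid 1,w))$ on arm $1$, the pair $(\overline{F}_{Y_1\mid X,W}(\cdot\mid 0,w),F_{Y\mid X,W}(\cdot\mid 0,w))$ on arm $0$, and the aggregated pair $(\overline{F}_{Y_1\mid W}(\cdot\mid w),\underline{F}_{Y_0\mid W}(\cdot\mid w))$ are ordered in the same direction. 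Making that the organizing observation keeps the case analysis clean.
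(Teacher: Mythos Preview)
Your proposal is correct and takes essentially the same approach as the paper's proof: both dispose of the degenerate case $p_{1|w}\in\{\lc,\uc\}$ by direct collapse of the bound cdfs, and in the nondegenerate case both use the equivalences in Lemma~\ref{lemma:preliminary_margcdep}.6 to align the two within-arm $\min$ comparisons with the aggregate one, then invoke Lemma~\ref{lemma:cdf_properties}.2 to sum. The paper writes the case split via indicator functions rather than an explicit sign split, but this is only a cosmetic difference.
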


\begin{proof}[Proof of Lemma \ref{lemma:cond_vs_uncond_comonotonicity}]
	Consider the first statement with $p_{1|w} = \lc$. Then it follows that $\overline{F}_{Y_1 \mid X,W}(y_1 \mid 0,w) = \overline{F}_{Y_1 \mid W}(y_1 \mid w) = F_{Y \mid X,W}(y_1 \mid 1,w)$ and $\underline{F}_{Y_0 \mid X,W}(y_0 \mid 1,w) = \underline{F}_{Y_0 \mid W}(y_0 \mid w) = F_{Y \mid X,W}(y_0|0,w)$. Therefore,
    \[
        \widetilde{\P}(Y_1\leq y_1, Y_0\leq y_0, X=x \mid W=w) = \min\left\{F_{Y \mid X,W}(y_1 \mid 1,w), F_{Y \mid X,W}(y_0|0,w)\right\} p_{x|w}.
    \]
    This implies  
    \begin{align*}
        \widetilde{\P}(Y_1\leq y_1, Y_0\leq y_0 \mid W=w) 
        &= \min\left\{F_{Y \mid X,W}(y_1 \mid 1,w), F_{Y \mid X,W}(y_0|0,w)\right\} \\
        &= \min\left\{\overline{F}_{Y_1 \mid W}(y_1 \mid w), \underline{F}_{Y_0 \mid W}(y_0 \mid w)\right\}
    \end{align*}
    as desired. The proof for the case where $p_{1|w} = \uc$ follows the same arguments and thus omitted.
    
    Next consider $\lc < p_{1|w} < \uc$. We have that
	\begin{align*}
		&\widetilde{\P}(Y_1 \leq y_1, Y_0 \leq y_0 \mid W=w) \\
		&=  \min\{F_{Y \mid X,W}(y_1 \mid 1,w),\underline{F}_{Y_0 \mid X,W}(y_0 \mid 1,w)\}p_{1|w} + \min\{\overline{F}_{Y_1 \mid X,W}(y_1 \mid 0,w),F_{Y \mid X,W}(y_0|0,w)\}p_{0|w} \\
		&= F_{Y \mid X,W}(y_1 \mid 1,w)p_{1|w} \1(\underline{F}_{Y_0 \mid X,W}(y_0 \mid 1,w) \geq F_{Y \mid X,W}(y_1 \mid 1,w)) \\
		&\quad + \underline{F}_{Y_0 \mid X,W}(y_0 \mid 1,w)p_{1|w} \1(\underline{F}_{Y_0 \mid X,W}(y_0 \mid 1,w) < F_{Y \mid X,W}(y_1 \mid 1,w))\\
		&\quad + \overline{F}_{Y_1 \mid X,W}(y_1 \mid 0,w)p_{0|w} \1(F_{Y \mid X,W}(y_0|0,w) \geq \overline{F}_{Y_1 \mid X,W}(y_1 \mid 0,w)) \\
		&\quad + F_{Y \mid X,W}(y_0|0,w)p_{0|w} \1(F_{Y \mid X,W}(y_0|0,w) < \overline{F}_{Y_1 \mid X,W}(y_1 \mid 0,w)) \\
        &= \left(F_{Y \mid X,W}(y_1 \mid 1,w)p_{1|w} + \overline{F}_{Y_1 \mid X,W}(y_1 \mid 0,w)p_{0|w}\right)\1[\underline{F}_{Y_0 \mid W}(y_0 \mid w) \geq \overline{F}_{Y_1 \mid W}(y_1 \mid w)] \\
        & \quad + \left(\underline{F}_{Y_0 \mid X,W}(y_0 \mid 1,w)p_{1|w} + F_{Y \mid X,W}(y_0|0,w)p_{0|w}\right)\1[\underline{F}_{Y_0 \mid W}(y_0 \mid w) < \overline{F}_{Y_1 \mid W}(y_1 \mid w)] \\
        &= \min\{\overline{F}_{Y_1 \mid W}(y_1 \mid w), \underline{F}_{Y_0 \mid W}(y_0 \mid w) \}.
	\end{align*}
	The third equality follows by the first set of equivalences in Lemma \ref{lemma:preliminary_margcdep}.6 after setting $x = 1$. The last equality follows by Lemma \ref{lemma:cdf_properties}.2. The second statement follows similar arguments but instead uses Lemma \ref{lemma:preliminary_margcdep}.6 by setting $x = 0$. Therefore, the proof is complete.
\end{proof}

\begin{lemma}\label{lemma:cond_vs_uncond_counter-monotonicity}
Let $w\in \supp(W)$. Consider a probability distribution defined by
\begin{align*}
	\widetilde{\P}(Y_1 \leq y_1, Y_0 \leq y_0, X=x \mid W=w) 
	&= \max\{F_{Y \mid X,W}(y_1 \mid 1,w) + \overline{F}_{Y_0 \mid X,W}(y_0 \mid 1,w) - 1, 0\}p_{1|w} x \\
	&\quad + \max\{\overline{F}_{Y_1 \mid X,W}(y_1 \mid 0,w) + F_{Y \mid X,W}(y_0|0,w) - 1, 0\}p_{0|w} (1-x)
\end{align*}
then 
\begin{align*}
	\widetilde{\P}(Y_1 \leq y_1, Y_0 \leq y_0 \mid W=w) &= \max\{\overline{F}_{Y_1 \mid W}(y_1 \mid w) + \overline{F}_{Y_0 \mid W}(y_0 \mid w) - 1, 0\}.
\end{align*}
Also for the following distribution,
\begin{align*}
	\widetilde{\P}(Y_1 \leq y_1, Y_0 \leq y_0, X=x \mid W=w) 
	&= \max\{F_{Y \mid X,W}(y_1 \mid 1,w) + \underline{F}_{Y_0 \mid X,W}(y_0 \mid 1,w) - 1, 0\}p_{1|w}x \\
	&\quad + \max\{\underline{F}_{Y_1 \mid X,W}(y_1 \mid 0,w) + F_{Y \mid X,W}(y_0|0,w) - 1, 0\}p_{0|w}(1-x)
\end{align*}
implies 
\begin{align*}
	\widetilde{\P}(Y_1 \leq y_1, Y_0 \leq y_0 \mid W=w) &= \max\{\underline{F}_{Y_1 \mid W}(y_1 \mid w) + \underline{F}_{Y_0 \mid W}(y_0 \mid w) - 1, 0\}.
\end{align*}
\end{lemma}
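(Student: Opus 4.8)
The plan is to follow the proof of Lemma~\ref{lemma:cond_vs_uncond_comonotonicity} almost verbatim, using the Fr\'echet--Hoeffding lower bound $\max\{u+v-1,0\}$ in place of the upper bound $\min\{u,v\}$, and invoking Lemma~\ref{lemma:preliminary_margcdep}.10 wherever that proof used Lemma~\ref{lemma:preliminary_margcdep}.6. First I would dispose of the degenerate cases $p_{1|w}=\lc$ and $p_{1|w}=\uc$: by direct evaluation of the formulas in Appendix~\ref{appendix:notation} one has, in either case, $\overline{F}_{Y_1 \mid X,W}(\cdot \mid 0,w)=\overline{F}_{Y_1 \mid W}(\cdot \mid w)=F_{Y \mid X,W}(\cdot \mid 1,w)$ and $\overline{F}_{Y_0 \mid X,W}(\cdot \mid 1,w)=\overline{F}_{Y_0 \mid W}(\cdot \mid w)=F_{Y \mid X,W}(\cdot \mid 0,w)$ (and likewise with underbars), so the displayed joint law of $(Y_1,Y_0,X)$ reduces to $\max\{F_{Y \mid X,W}(y_1 \mid 1,w)+F_{Y \mid X,W}(y_0 \mid 0,w)-1,0\}\,p_{x|w}$; summing over $x\in\{0,1\}$ gives exactly $\max\{\overline{F}_{Y_1 \mid W}(y_1 \mid w)+\overline{F}_{Y_0 \mid W}(y_0 \mid w)-1,0\}$, and the same computation handles the second distribution.

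For the generic case $\lc<p_{1|w}<\uc$, I would marginalize the displayed law over $x$ to obtain
\[
	\widetilde{\P}(Y_1 \leq y_1, Y_0 \leq y_0 \mid W=w) = \max\{A-1,0\}\,p_{1|w} + \max\{B-1,0\}\,p_{0|w},
\]
where $A \coloneqq F_{Y \mid X,W}(y_1 \mid 1,w) + \overline{F}_{Y_0 \mid X,W}(y_0 \mid 1,w)$ and $B \coloneqq \overline{F}_{Y_1 \mid X,W}(y_1 \mid 0,w) + F_{Y \mid X,W}(y_0 \mid 0,w)$. The key structural fact, supplied by Lemma~\ref{lemma:preliminary_margcdep}.10 with $x=1$, is that the inequalities $A\geq 1$, $B\geq 1$, and $\overline{F}_{Y_1 \mid W}(y_1 \mid w)+\overline{F}_{Y_0 \mid W}(y_0 \mid w)\geq 1$ are mutually equivalent, so the two positive parts are simultaneously active or simultaneously zero. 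On the active region I would expand $(A-1)p_{1|w}+(B-1)p_{0|w}=Ap_{1|w}+Bp_{0|w}-1$ and regroup the four summands of $Ap_{1|w}+Bp_{0|w}$ as $\bigl[F_{Y \mid X,W}(y_1 \mid 1,w)p_{1|w}+\overline{F}_{Y_1 \mid X,W}(y_1 \mid 0,w)p_{0|w}\bigr]+\bigl[\overline{F}_{Y_0 \mid X,W}(y_0 \mid 1,w)p_{1|w}+F_{Y \mid X,W}(y_0 \mid 0,w)p_{0|w}\bigr]$, which by Lemma~\ref{lemma:cdf_properties}.2 equals $\overline{F}_{Y_1 \mid W}(y_1 \mid w)+\overline{F}_{Y_0 \mid W}(y_0 \mid w)$; hence the sum equals $\overline{F}_{Y_1 \mid W}(y_1 \mid w)+\overline{F}_{Y_0 \mid W}(y_0 \mid w)-1$, which is nonnegative there and so coincides with the claimed maximum. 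On the complementary region both positive parts vanish and the claimed maximum is likewise zero. Combining the two regions proves the first identity; the second follows by the identical argument with $\overline{F}$ replaced by $\underline{F}$ throughout, using the equivalences (d)--(f) of Lemma~\ref{lemma:preliminary_margcdep}.10 together with the lower-bound half of Lemma~\ref{lemma:cdf_properties}.2.

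The only delicate point---and the sense in which this is not a pure cut-and-paste of the $\min$-copula case---is the bookkeeping of the $\max\{\cdot,0\}$ truncations: one must verify that the two positive parts switch on and off together, which is precisely the content of the equivalences already established in Lemma~\ref{lemma:preliminary_margcdep}.10, so no new inequalities are needed. Everything else is algebraic regrouping plus an application of Lemma~\ref{lemma:cdf_properties}.2, and I expect the write-up to be short.
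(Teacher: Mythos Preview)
Your proposal is correct and matches the paper's proof essentially step for step: both handle the degenerate cases $p_{1|w}\in\{\lc,\uc\}$ by collapsing the bound cdfs to $F_{Y\mid X,W}$, then in the generic case use Lemma~\ref{lemma:preliminary_margcdep}.10 to ensure the two $\max\{\cdot,0\}$ truncations are simultaneously active, and finish by regrouping via Lemma~\ref{lemma:cdf_properties}.2. Your explicit case split into active/inactive regions is exactly how the paper combines the two positive parts into a single $\max$.
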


\begin{proof}[Proof of Lemma \ref{lemma:cond_vs_uncond_counter-monotonicity}]

Consider the first statement. Similar arguments from the proof of Lemma \ref{lemma:cond_vs_uncond_comonotonicity} can be used to establish the desired result for $p_{1|w} = \lc$ or $p_{1|w} = \uc$. Thus we consider the case where $\lc < p_{1|w} < \uc$. Then we have that 
\begin{align*}
    &\widetilde{\P}(Y_1\leq y_1, Y_0\leq y_0 \mid W=w) \\
    &= \max\{F_{Y \mid X,W}(y_1 \mid 1,w) + \overline{F}_{Y_0 \mid X,W}(y_0 \mid 1,w) - 1, 0\}p_{1|w}  \\
    &\quad + \max\{\overline{F}_{Y_1 \mid X,W}(y_1 \mid 0,w) + F_{Y \mid X,W}(y_0|0,w) - 1, 0\}p_{0|w} \\
    &= \max\{p_{1|w}(F_{Y \mid X,W}(y_1 \mid 1,w) + \overline{F}_{Y_0 \mid X,W}(y_0 \mid 1,w) - 1),0\} \\
    &\quad + \max\left\{p_{0|w}(\overline{F}_{Y_1 \mid X,W}(y_1 \mid 0,w) + F_{Y \mid X,W}(y_0|0,w) - 1), 0\right\} \\
    &= \max\left\{\sum_{x=0,1}p_{x|w}F_{Y \mid X,W}(y_x|x,w) + p_{0|w}\overline{F}_{Y_1 \mid X,W}(y_1 \mid 0,w) + p_{1|w}\overline{F}_{Y_0 \mid X,W}(y_0 \mid 1,w) - (p_{1|w} + p_{0|w}), 0\right\} \\
    &= \max\{\overline{F}_{Y_1 \mid W}(y_1 \mid w), \overline{F}_{Y_0 \mid W}(y_0 \mid w) - 1, 0\}.
\end{align*}
The second equality follows by the first set of equivalences in Lemma \ref{lemma:preliminary_margcdep}.10 by setting $x=1$. The last equality holds by Lemma \ref{lemma:cdf_properties}.2. The second statement follows similar arguments but instead uses Lemma  \ref{lemma:preliminary_margcdep}.10 on the second set of equivalences regarding lower bounds of cdfs. Therefore, the proof is complete.
\end{proof}

\begin{lemma}\label{lemma:attainability_jointcdep}
Let assumptions \ref{assn:overlap} and \ref{assn:joint_cdep} hold. Let $\overline{C}_{1,0 \mid X,W}$ and $\underline{C}_{1,0 \mid X,W}$ denote classes of comonotonic (and counter-monotonic) copulas where $\overline{C}_{1,0|x,w}(u,v) = \min\{u,v\}$ and $\underline{C}_{1,0|x,w}(u,v) = \max\{u+v-1,0\}$ for all $(x,w) \in \{0,1\} \times \supp(W)$. Then each of the following terms is contained in the identified set $\mathcal{I}_0^j(F_{Y,X,W})$:
\begin{enumerate}
\item $(\overline{F}_{Y_1 \mid W},\underline{F}_{Y_0 \mid W}, \overline{C}_{1,0 \mid X,W})$;
\item $(\overline{F}_{Y_1 \mid W},\overline{F}_{Y_0 \mid W}, \underline{C}_{1,0 \mid X,W})$;
\item $(\underline{F}_{Y_1 \mid W},\overline{F}_{Y_0 \mid W}, \overline{C}_{1,0 \mid X,W})$;
\item $(\underline{F}_{Y_1 \mid W},\underline{F}_{Y_0 \mid W}, \underline{C}_{1,0 \mid X,W})$.
\end{enumerate}
\end{lemma}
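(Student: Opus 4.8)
The plan is to establish each of the four items by exhibiting, for every fixed $w\in\supp(W)$, an explicit conditional distribution $\widetilde\P$ of $(Y_1,Y_0,X)$ given $W=w$ that realizes the claimed triple and meets all the requirements defining the identified set $\mathcal{I}_0^j(F_{Y,X,W})$. As in the proof of Theorem~\ref{thm:cdf_sharp_margcdep}, if $\lc(w,\eta)=p_{1|w}$ or $\uc(w,\eta)=p_{1|w}$ the cdf bounds collapse to $F_{Y\mid X,W}(\cdot\mid x,w)$ and unconfoundedness holds, so the claim is trivial there and I would assume $\lc(w,\eta)<p_{1|w}<\uc(w,\eta)$ throughout. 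By the symmetry $\uc\leftrightarrow\lc$, $\overline F\leftrightarrow\underline F$, $Y_1\leftrightarrow Y_0$, it suffices to treat item~1 in detail; items~2 and~4 follow by running the same steps on the two displays of Lemma~\ref{lemma:cond_vs_uncond_counter-monotonicity}, and item~3 on the second display of Lemma~\ref{lemma:cond_vs_uncond_comonotonicity}, with the latent propensity scores \eqref{eq:UU_propscore}, \eqref{eq:LU_propscore}, \eqref{eq:LL_propscore} (built from the threshold quantiles appropriate to each cdf-bound pair) replacing \eqref{eq:UL_propscore}.

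For item~1 I would take $\widetilde\P$ to be the first distribution displayed in Lemma~\ref{lemma:cond_vs_uncond_comonotonicity}. By Lemma~\ref{lemma:cdf_properties}.1 the conditional cdfs in that formula are genuine cdfs, so Sklar's theorem makes $\widetilde\P$ a well-defined law. Sending $y_0\to+\infty$ (resp.\ $y_1\to+\infty$) in the defining formula and dividing by $p_{1|w}$ (resp.\ $p_{0|w}$), using $C(u,1)=C(1,u)=u$, gives $\widetilde\P(Y_1\le y\mid X=1,W=w)=F_{Y\mid X,W}(y\mid 1,w)$ and $\widetilde\P(Y_0\le y\mid X=0,W=w)=F_{Y\mid X,W}(y\mid 0,w)$, i.e.\ compatibility~\eqref{eq:dist_compatibility}; sending both arguments to $+\infty$ gives $\widetilde\P(X=1\mid W=w)=p_{1|w}$; the $\min$ form shows the copula of $(Y_1,Y_0)$ given $(X,W)=(x,w)$ is the comonotone copula $\overline C_{1,0\mid X,W}$; and Lemma~\ref{lemma:cond_vs_uncond_comonotonicity} itself supplies $\widetilde\P(Y_1\le y_1,Y_0\le y_0\mid W=w)=\min\{\overline F_{Y_1\mid W}(y_1\mid w),\underline F_{Y_0\mid W}(y_0\mid w)\}$, whose marginals are $\overline F_{Y_1\mid W}$ and $\underline F_{Y_0\mid W}$. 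So the only nontrivial point left is Assumption~\ref{assn:joint_cdep}.

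To verify it I would show that $\widetilde\P(X=1\mid Y_1,Y_0,W=w)=p^{ul}(Y_1,Y_0,w;B)$ for the constant $B\in[\lc,\uc]$ obtained, in the spirit of $\overline A_1$ in Appendix~\ref{appendix:notation}, as the ratio of the $\widetilde\P$-mass at $\{X=1,(Y_1,Y_0)=(\overline Q_1,\underline Q_0)\}$ to the total $\widetilde\P$-mass at $(\overline Q_1,\underline Q_0)$, with $B=p_{1|w}$ if the latter vanishes. Granting this, Assumption~\ref{assn:joint_cdep} is immediate since $p^{ul}$ takes only the values $\lc,\uc,p_{1|w},B$, all lying in $[\lc,\uc]$. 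To prove the identity I would invoke Lemma~\ref{lemma:propscore_equal_condexp_v2}, which reduces it to checking
\[
\widetilde\E\!\left[\1(Y_1\le y_1,\,Y_0\le y_0)\,p^{ul}(Y_1,Y_0,w;B)\mid W=w\right]=\widetilde\P(Y_1\le y_1,\,Y_0\le y_0,\,X=1\mid W=w)
\]
for all $(y_1,y_0)\in\R^2$. The left side is evaluated by splitting on the signs of $y_1-\overline Q_1$ and $y_0-\underline Q_0$: $p^{ul}$ equals $\lc$ on the closed lower-left quadrant, $\uc$ on the closed upper-right quadrant, $p_{1|w}$ on the two off-diagonal quadrants, and $B$ at the corner $(\overline Q_1,\underline Q_0)$. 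Under the comonotone coupling the two off-diagonal quadrants carry no $\widetilde\P$-mass, and this is exactly where the aligned-threshold facts $\overline F_{Y_1\mid W}(\overline Q_1-\mid w)\le\frac{\uc-p_{1|w}}{\uc-\lc}\le\overline F_{Y_1\mid W}(\overline Q_1\mid w)$ and $\underline F_{Y_0\mid W}(\underline Q_0-\mid w)\le\frac{\uc-p_{1|w}}{\uc-\lc}\le\underline F_{Y_0\mid W}(\underline Q_0\mid w)$ of Lemma~\ref{lemma:preliminary_margcdep}.8, together with their conditional-on-$(X,W)$ analogues in Lemma~\ref{lemma:preliminary_margcdep}.9, come in. The mass on the remaining pieces is read off the construction, and the equivalences of Lemma~\ref{lemma:preliminary_margcdep}.6--7 and Lemma~\ref{lemma:cdf_properties}.2 rewrite everything in terms of $F_{Y\mid X,W}(\cdot\mid 1,w)$ and $\underline F_{Y_0\mid X,W}(\cdot\mid 1,w)$, which reproduces the right side; the corner case forces precisely the stated $B$. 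The remaining items are handled by the same argument, using parts~10--11 of Lemma~\ref{lemma:preliminary_margcdep} in place of parts~6--7 for the two items ($2$ and $4$) built from the counter-monotone copula $\underline C$.

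I expect the region-by-region verification of that identity to be the main obstacle. Because the candidate latent propensity genuinely depends on both potential outcomes, one has to (i) confirm that the comonotone (resp.\ counter-monotone) coupling puts no mass on the two off-diagonal quadrants --- which is not automatic and rests on the ``aligned thresholds'' content of Lemma~\ref{lemma:preliminary_margcdep}.8--9 --- and (ii) calibrate the single constant $B$ at the corner so that the integral identity is exact in the presence of atoms at the corner quantiles. Once those two points are settled, the compatibility checks, the copula identification, and the $[\lc,\uc]$-containment of $p^{ul}$ are routine bookkeeping.
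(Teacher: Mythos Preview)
Your proposal is correct and follows essentially the same route as the paper: the same comonotone/counter-monotone constructions for $\widetilde\P$, the same latent propensity scores $p^{ul},p^{uu},p^{lu},p^{ll}$ with the corner constant $B$ defined as the mass ratio at the threshold point, the reduction via Lemma~\ref{lemma:propscore_equal_condexp_v2} to the integral identity, and the five-region case split using Lemma~\ref{lemma:preliminary_margcdep}.6--9 (and 10--11 for the counter-monotone items). One point you underplay: showing $B\in[\lc,\uc]$ is not quite ``in the spirit of $\overline A_1$''---the paper needs a four-case analysis (depending on the ordering of $\overline F_{Y_1\mid W}(\overline Q_1\pm)$ versus $\underline F_{Y_0\mid W}(\underline Q_0\pm)$, with Lemmas~\ref{lemma:preliminary_margcdep}.6--9 controlling each case) rather than a single sandwich inequality, so that verification is a second nontrivial piece alongside the region-by-region identity.
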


\begin{proof}[Proof of Lemma \ref{lemma:attainability_jointcdep}]

\textbf{Proof of Part 1}: Fix a $w\in\supp(W)$. We prove the first statement by constructing a probability distribution $\widetilde{\P}$ for $(Y_1,Y_0,X)$ conditional on $W=w$ such that for all $y \in \R$ and $x \in \{0,1\}$, we have
\begin{enumerate}
	\item $\widetilde{\P}(Y_1 \leq y \mid W=w) = \overline{F}_{Y_1 \mid W}(y \mid w)$ and  $\widetilde{\P}(Y_0 \leq y \mid W=w) =  \underline{F}_{Y_0 \mid W}(y \mid w)$;
	\item $\widetilde{\P}(X=x \mid W=w) = p_{x|w}$;
	\item $\widetilde{\P}(Y_x \leq y \mid X=x,W=w) = F_{Y \mid X,W}(Y \mid X,w)$;
	\item $\widetilde{\P}(Y_1 \leq y_1, Y_0\leq y_0 \mid X=x, W=w) = \min\left\{\widetilde{\P}(Y_1\leq y_1 \mid X=x, W=w), \widetilde{\P}(Y_0\leq y_0 \mid X=x, W=w)\right\}$;
	\item $\widetilde{\P}(X=1 \mid Y_1,Y_0,W=w) \in [\lc,\uc]$ for $\widetilde{\P}$-almost surely.
\end{enumerate}

Similar to the arguments in the proof of Theorem \ref{thm:cdf_sharp_margcdep}, Conditions 1--5 ensures that the constructed distribution $\widetilde{\P}$ generates the desired marginal cdfs and copulas in Lemma \ref{lemma:attainability_jointcdep}.1 and staisfies all the requirements in the definition of identified set $\mathcal{I}_0^j(F_{Y,X,W})$. 

Let 
\begin{equation}
\label{eq:UL_sharpness_jointcdep}
\begin{aligned} 
	\widetilde{\P}(Y_1 \leq y_1, Y_0 \leq y_0, X = x \mid W=w) 
	&= x \min\{F_{Y \mid X,W}(y_1 \mid 1,w),\underline{F}_{Y_0 \mid X,W}(y_0 \mid 1,w)\} p_{1|w} \\
	&\quad + (1-x) \min\{\overline{F}_{Y_1 \mid X,W}(y_1 \mid 0,w),F_{Y \mid X,W}(y_0|0,w)\} p_{0|w}.
\end{aligned}
\end{equation}
By Lemma \ref{lemma:cdf_properties}.1, $\underline{F}_{Y_0 \mid X,W}(y_0 \mid 1,w)$ and $\overline{F}_{Y_1 \mid X,W}(y_1 \mid 0,w)$ are cdfs. Also note that $(u,v) \mapsto \min\{u,v\}$ is the comonotonic copula. By Sklar's Theorem, $\widetilde{\P}$ is a joint distribution function for $(Y_1,Y_0,X)$ conditional on $W=w$. 

Following the same steps as in the proof of Theorem \ref{thm:cdf_sharp_margcdep}, it can be shown show that conditions 1--4 are satisfied because the distribution in \eqref{eq:UL_sharpness_jointcdep} is the same as in \eqref{eq:sharpness_margcdep} but for a specific rather than an arbitrary choice of copulas. By Lemma \ref{lemma:cond_vs_uncond_comonotonicity}, $\widetilde{\P}$ implies the following co-monotonic joint distribution of $(Y_1, Y_0)$:
\begin{equation}\label{eq:joint_UL_cdf}
    \widetilde{\P}(Y_1 \leq y_1, Y_0 \leq y_0 \mid W=w) 
    =
    \min\left\{\overline{F}_{Y_1 \mid W}(y_1 \mid w), \underline{F}_{Y_0 \mid W}(y_0 \mid w)\right\}.
\end{equation}

To show condition 5 holds, and thus complete the proof, we construct a function $p^{ul}$ such that $p^{ul}(Y_1,Y_0) = \widetilde{\E}[X \mid Y_1,Y_0,W=w]$ and $p^{ul}(Y_1,Y_0,w) \in [\lc,\uc]$ almost surely under $\widetilde{\P}$. 

First consider $p_{1|w} = \lc$, then we have 
\[
	\underline{F}_{Y_0 \mid W}(y_0 \mid w) = \underline{F}_{Y_0 \mid X,W}(y_0 \mid 1,w) = F_{Y \mid X,W}(y_0|0,w)
\]
and 
\[
	\overline{F}_{Y_1 \mid W}(y_1 \mid w) = \overline{F}_{Y_1 \mid X,W}(y_1 \mid 0,w) = F_{Y \mid X,W}(y_1 \mid 1,w).
\] 
This implies the following derivation
\begin{align*}
    \widetilde{\E}\left[\1(Y_1\leq y_1, Y_0\leq y_0) \lc|W=w\right] 
    &= p_{1|w} \widetilde{\P}(Y_1\leq y_1, Y_0\leq y_0 \mid W=w) \\
    &= p_{1|w}\min\left\{F_{Y \mid X,W}(y_1 \mid 1,w), {F}_{Y \mid X,W}(y_0|0,w)\right\} \\
    &= \widetilde{\P}(Y_1 \leq y_1, Y_0\leq y_0, X=1 \mid W=w).
\end{align*}
The first line holds by $\lc = p_{1|w}$, the second line holds by \eqref{eq:joint_UL_cdf}, and the last line holds by \eqref{eq:UL_sharpness_jointcdep}. Following Lemma \ref{lemma:propscore_equal_condexp_v2}, we conclude that $\widetilde{\P}(X=1 \mid Y_1, Y_0, W=w) = \lc$ almost surely, which is naturally bounded between $\lc$ and $\uc$. The proof of the case where $p_{1|w} = \uc$ is similar and thus omitted. 

Next consider $\lc < p_{1|w} < \uc$. Let $p^{ul}(Y_1,Y_0,w) = p^{ul}(Y_1,Y_0,w;B^{ul})$ defined in \eqref{eq:UL_propscore}, where
\begin{align*}
	B^{ul} 
	&= \frac{\widetilde{\P}(Y_1 = \overline{Q}_1, Y_0 = \underline{Q}_0,X=1 \mid W=w)}{\widetilde{\P}(Y_1 = \overline{Q}_1, Y_0 = \underline{Q}_0 \mid W=w)}
\end{align*}
whenever $\widetilde{\P}(Y_1 = \overline{Q}_1, Y_0 = \underline{Q}_0 \mid W=w) > 0$. Let $B^{ul} = p_{1|w}$ otherwise.

We verify that $B^{ul} \in [\lc,\uc]$ if the denominator is nonzero. 

First, note that the denominator of $B^{ul}$ can be expanded below
\begin{align*}
	&\widetilde{\P}(Y_1 = \overline{Q}_1, Y_0 = \underline{Q}_0 \mid W=w) \\
	&= \widetilde{\P}(Y_1 \leq \overline{Q}_1, Y_0 \leq \underline{Q}_0 \mid W=w) - \widetilde{\P}(Y_1 \leq \overline{Q}_1, Y_0 < \underline{Q}_0 \mid W=w) \\
	&\quad - \widetilde{\P}(Y_1 < \overline{Q}_1, Y_0 \leq \underline{Q}_0 \mid W=w) + \widetilde{\P}(Y_1 < \overline{Q}_1, Y_0 < \underline{Q}_0 \mid W=w)\\
	&= \min\{\overline{F}_{Y_1 \mid W}(\overline{Q}_1 \mid w),\underline{F}_{Y_0 \mid W}(\underline{Q}_0|w)\} - \min\{\overline{F}_{Y_1 \mid W}(\overline{Q}_1 \mid w),\underline{F}_{Y_0 \mid W}(\underline{Q}_0-|w)\} \\
	&\quad - \min\{\overline{F}_{Y_1 \mid W}(\overline{Q}_1- \mid w),\underline{F}_{Y_0 \mid W}(\underline{Q}_0|w)\} + \min\{\overline{F}_{Y_1 \mid W}(\overline{Q}_1- \mid w),\underline{F}_{Y_0 \mid W}(\underline{Q}_0-|w)\},
\end{align*}
where the second equality holds via \eqref{eq:joint_UL_cdf}. By Lemma \ref{lemma:preliminary_margcdep}.8, this expression simplifies to
\begin{align*}
	&\widetilde{\P}(Y_1 = \overline{Q}_1, Y_0 = \underline{Q}_0 \mid W=w) \\
	&= \min\{\overline{F}_{Y_1 \mid W}(\overline{Q}_1 \mid w),\underline{F}_{Y_0 \mid W}(\underline{Q}_0|w)\} - \underline{F}_{Y_0 \mid W}(\underline{Q}_0-|w) \\
	&\quad  - \overline{F}_{Y_1 \mid W}(\overline{Q}_1- \mid w) + \min\{\overline{F}_{Y_1 \mid W}(\overline{Q}_1- \mid w),\underline{F}_{Y_0 \mid W}(\underline{Q}_0-|w)\}\\
	&= \min\{\overline{F}_{Y_1 \mid W}(\overline{Q}_1 \mid w),\underline{F}_{Y_0 \mid W}(\underline{Q}_0|w)\} - \max\{\overline{F}_{Y_1 \mid W}(\overline{Q}_1- \mid w),\underline{F}_{Y_0 \mid W}(\underline{Q}_0-|w)\}.
\end{align*}

Second, we expand the numerator of $B^{ul}$. We have that 
\[
	\widetilde{\P}(Y_1 = \overline{Q}_1, Y_0 = \underline{Q}_0, X=1 \mid W=w) = \widetilde{\P}(Y_1 = \overline{Q}_1, Y_0 = \underline{Q}_0 \mid X=1,W=w) p_{1|w}
\]
and that
\begin{align*}
	&\widetilde{\P}(Y_1 = \overline{Q}_1, Y_0 = \underline{Q}_0 \mid X=1,W=w)\\
	&= \min\{{F}_{Y \mid X,W}(\overline{Q}_1 \mid 1,w),\underline{F}_{Y_0 \mid X,W}(\underline{Q}_0 \mid 1,w)\} - \min\{{F}_{Y \mid X,W}(\overline{Q}_1 \mid 1,w),\underline{F}_{Y_0 \mid X,W}(\underline{Q}_0- \mid 1,w)\}\\
	& \quad - \min\{{F}_{Y \mid X,W}(\overline{Q}_1- \mid 1,w),\underline{F}_{Y_0 \mid X,W}(\underline{Q}_0 \mid 1,w)\} + \min\{{F}_{Y \mid X,W}(\overline{Q}_1- \mid 1,w),\underline{F}_{Y_0 \mid X,W}(\underline{Q}_0- \mid 1,w)\}\\
	&= \min\{{F}_{Y \mid X,W}(\overline{Q}_1 \mid 1,w),\underline{F}_{Y_0 \mid X,W}(\underline{Q}_0 \mid 1,w)\} - \underline{F}_{Y_0 \mid X,W}(\underline{Q}_0- \mid 1,w) \\
	&\quad - {F}_{Y \mid X,W}(\overline{Q}_1- \mid 1,w) + \min\{{F}_{Y \mid X,W}(\overline{Q}_1- \mid 1,w),\underline{F}_{Y_0 \mid X,W}(\underline{Q}_0- \mid 1,w)\}\\
	&= \min\{{F}_{Y \mid X,W}(\overline{Q}_1 \mid 1,w),\underline{F}_{Y_0 \mid X,W}(\underline{Q}_0 \mid 1,w)\} - \max\{{F}_{Y \mid X,W}(\overline{Q}_1- \mid 1,w),\underline{F}_{Y_0 \mid X,W}(\underline{Q}_0- \mid 1,w)\},
\end{align*}
where the second to last equality follows from Lemma \ref{lemma:preliminary_margcdep}.9.

From Part 6 and 7 of Lemma \ref{lemma:preliminary_margcdep}, we observe that $B^{ul}$ can take four possible values as follows:
\begin{align}
    B^{ul} 
        & =\frac{\widetilde{\P}(Y_1 = \overline{Q}_1, Y_0 = \underline{Q}_0 \mid X=1,W=w)p_{1|w}}{\widetilde{\P}(Y_1 = \overline{Q}_1, Y_0 = \underline{Q}_0 \mid W=w)} \notag \\
        &= \frac{(F_{Y \mid X,W}(\overline{Q}_1 \mid 1,w) - {F}_{Y \mid X,W}(\overline{Q}_1- \mid 1,w))p_{1|w}}{\overline{F}_{Y_1 \mid W}(\overline{Q}_1 \mid w) - \overline{F}_{Y_1 \mid W}(\overline{Q}_1- \mid w)}
        \1\left(
        \begin{array}{c}
        \overline{F}_{Y_1 \mid W}(\overline{Q}_1 \mid w) \leq \underline{F}_{Y_0 \mid W} (\underline{Q}_0|w), \\
        \overline{F}_{Y_1 \mid W}(\overline{Q}_1- \mid w) > \underline{F}_{Y_0 \mid W}(\underline{Q}_0-|w)
        \end{array}
        \right)\label{eq:Bul_term1}\\
	&+\frac{({F}_{Y \mid X,W}(\overline{Q}_1 \mid 1,w) - \underline{F}_{Y_0 \mid X,W}(\underline{Q}_0- \mid 1,w))p_{1|w}}{\overline{F}_{Y_1 \mid W}(\overline{Q}_1 \mid w) - \underline{F}_{Y_0 \mid W}(\underline{Q}_0-|w)}
		\1\left(
		\begin{array}{c}
		\overline{F}_{Y_1 \mid W}(\overline{Q}_1 \mid w) \leq \underline{F}_{Y_0 \mid W}(\underline{Q}_0|w), \\
		\overline{F}_{Y_1 \mid W}(\overline{Q}_1- \mid w) \leq \underline{F}_{Y_0 \mid W}(\underline{Q}_0-|w)
		\end{array}\right)\label{eq:Bul_term2}\\
	&+ \frac{(\underline{F}_{Y_0 \mid X,W}(\underline{Q}_0 \mid 1,w) - {F}_{Y \mid X,W}(\overline{Q}_1- \mid 1,w))p_{1|w}}{\underline{F}_{Y_0 \mid W}(\underline{Q}_0|w) - \overline{F}_{Y_1 \mid W}(\overline{Q}_1- \mid w)}	
		\1\left(
		\begin{array}{c}
		\overline{F}_{Y_1 \mid W}(\overline{Q}_1 \mid w) > \underline{F}_{Y_0 \mid W}(\underline{Q}_0|w), \\
		\overline{F}_{Y_1 \mid W}(\overline{Q}_1- \mid w) > \underline{F}_{Y_0 \mid W}(\underline{Q}_0-|w)
		\end{array}\right)\label{eq:Bul_term3}\\
	&+\frac{(\underline{F}_{Y_0 \mid X,W}(\underline{Q}_0 \mid 1,w) - \underline{F}_{Y_0 \mid X,W}(\underline{Q}_0- \mid 1,w))p_{1|w}}{\underline{F}_{Y_0 \mid W}(\underline{Q}_0|w) - \underline{F}_{Y_0 \mid W}(\underline{Q}_0-|w)}	
		\1\left(
		\begin{array}{c}
		\overline{F}_{Y_1 \mid W}(\overline{Q}_1 \mid w) > \underline{F}_{Y_0 \mid W}(\underline{Q}_0|w), \\
		\overline{F}_{Y_1 \mid W}(\overline{Q}_1- \mid w) \leq \underline{F}_{Y_0 \mid W}(\underline{Q}_0-|w)
		\end{array}\right).\label{eq:Bul_term4}
\end{align}
All the terms have positive denominators since we focus on the case where $\widetilde{\P}(Y_1 = \overline{Q}_1, Y_0 = \underline{Q}_0 \mid W=w) > 0$. As shown in Lemma \ref{lemma:preliminary_margcdep}.4, terms \eqref{eq:Bul_term1} and \eqref{eq:Bul_term4} lie in $[\lc,\uc]$. 

Next we examine the term \eqref{eq:Bul_term2}, which can be written as follows
\begin{align*}
   	&\frac{({F}_{Y \mid X,W}(\overline{Q}_1 \mid 1,w) - \underline{F}_{Y_0 \mid X,W}(\underline{Q}_0- \mid 1,w))p_{1|w}}{\overline{F}_{Y_1 \mid W}(\overline{Q}_1 \mid w) - \underline{F}_{Y_0 \mid W}(\underline{Q}_0-|w)} \\
    &= \frac{F_{Y \mid X,W}(\overline{Q}_1 \mid 1,w)p_{1|w} - F_{Y \mid X,W}(\underline{Q}_0-|0,w)\frac{p_{0|w}\lc}{1-\lc}}{\frac{\uc-p_{1|w}}{\uc} + F_{Y \mid X,W}(\overline{Q}_1 \mid 1,w)\frac{p_{1|w}}{\uc} - F_{Y \mid X,W}(\underline{Q}_0-|0,w)\frac{p_{0|w}}{1-\lc}} \\
    &= \lc + \frac{(\uc-\lc)p_{1|w}}{\uc}\frac{F_{Y \mid X,W}(\overline{Q}_1 \mid 1,w) - \overline{\tau}_1}{\frac{\uc-p_{1|w}}{\uc} + F_{Y \mid X,W}(\overline{Q}_1 \mid 1,w)\frac{p_{1|w}}{\uc} - F_{Y \mid X,W}(\underline{Q}_0-|0,w)\frac{p_{0|w}}{1-\lc}} \\
    &= \lc + \frac{(\uc-\lc)p_{1|w}}{\uc}\frac{F_{Y \mid X,W}(\overline{Q}_1 \mid 1,w) - \overline{\tau}_1}{\overline{F}_{Y_1 \mid W}(\overline{Q}_1 \mid w) - \underline{F}_{Y_0 \mid W}(\underline{Q}_0-|w)}\\
    &\geq \lc
\end{align*}
where the last line follows by $\uc \geq \lc$ and $F_{Y \mid X,W}(\overline{Q}_1 \mid 1,w) \geq \overline{\tau}_1$ via Lemma \ref{lemma:quantile_props}.2. Also note that 
\begin{align*}
   \frac{({F}_{Y \mid X,W}(\overline{Q}_1 \mid 1,w) - \underline{F}_{Y_0 \mid X,W}(\underline{Q}_0- \mid 1,w))p_{1|w}}{\overline{F}_{Y_1 \mid W}(\overline{Q}_1 \mid w) - \underline{F}_{Y_0 \mid W}(\underline{Q}_0-|w)}
   &= \uc + \frac{p_{0|w}(\uc-\lc)}{1-\lc}\frac{F_{Y \mid X,W}(\underline{Q}_0-|0,w) - \underline{\tau}_0}{\overline{F}_{Y_1 \mid W}(\overline{Q}_1 \mid w) - \underline{F}_{Y_0 \mid W}(\underline{Q}_0-|w)} \leq \uc,
\end{align*}
where the inequality follows by $\uc \geq \lc$ and $F_{Y \mid X,W}(\underline{Q}_0-|0,w) \leq \underline{\tau}_0$ via Lemma \ref{lemma:quantile_props}.3. Thus we have shown the term \eqref{eq:Bul_term2} is bounded within $[\lc,\uc]$.

Then consider the term \eqref{eq:Bul_term3}. Following the same arguments, we have
\begin{align*}
    \frac{(\underline{F}_{Y_0 \mid X,W}(\underline{Q}_0 \mid 1,w) - {F}_{Y \mid X,W}(\overline{Q}_1- \mid 1,w))p_{1|w}}{\underline{F}_{Y_0 \mid W}(\underline{Q}_0|w) - \overline{F}_{Y_1 \mid W}(\overline{Q}_1- \mid w)}
    &= \lc + \frac{p_{0|w}(\uc - \lc)}{1-\uc} \frac{F_{Y \mid X,W}(\underline{Q}_0|0,w) - \underline{\tau}_0}{\underline{F}_{Y_0 \mid W}(\underline{Q}_0|w) - \overline{F}_{Y_1 \mid W}(\overline{Q}_1- \mid w)} \geq \lc, 
\end{align*}
and
\begin{align*}
    &\frac{(\underline{F}_{Y_0 \mid X,W}(\underline{Q}_0 \mid 1,w) - {F}_{Y \mid X,W}(\overline{Q}_1- \mid 1,w))p_{1|w}}{\underline{F}_{Y_0 \mid W}(\underline{Q}_0|w) - \overline{F}_{Y_1 \mid W}(\overline{Q}_1- \mid w)} = \uc + \frac{p_{1|w}(\uc-\lc)}{\lc}\frac{F_{Y \mid X,W}(\overline{Q}_1- \mid 1.w) - \overline{\tau}_1}{\underline{F}_{Y_0 \mid W}(\underline{Q}_0|w) - \overline{F}_{Y_1 \mid W}(\overline{Q}_1- \mid w)} \leq \uc.
\end{align*}
So we have shown that all the four terms \eqref{eq:Bul_term1}--\eqref{eq:Bul_term4} are bounded within $[\lc,\uc]$, thus concluding $B^{ul} \in [\lc,\uc]$, which then establishes $p^{ul}(Y_1, Y_0, w) \in [\lc, \uc]$ almost surely.

To finish this proof, we demonstrate that $\widetilde{\E}[X \mid Y_1,Y_0,W=w] = p^{ul}(Y_1,Y_0,w)$ almost surely. To do so, we use Lemma \ref{lemma:propscore_equal_condexp_v2} and show that
\begin{equation}
\label{eq:Pul_verification}
\begin{aligned}
	\widetilde{\E}\left[\1(Y_1 \leq y_1, Y_0 \leq y_0)p^{ul}(Y_1,Y_0,w) \mid W=w\right] 
	&= \widetilde{\P}(Y_1 \leq y_1, Y_0 \leq y_0, X=1 \mid W=w)  \\
	&= p_{1|w}\min\{{F}_{Y \mid X,W}(y_1 \mid 1,w), \underline{F}_{Y_0 \mid X,W}(y_0 \mid 1,w)\}
\end{aligned}
\end{equation}
for all $(y_1,y_0)\in\R^2$. To complete the proof, we break this up into different cases.

\medskip

\noindent \textbf{(Part 1) Case 1:} $y_1 < \overline{Q}_1$ and $y_0 < \underline{Q}_0$. 

\medskip

In this case, $p^{ul}(Y_1, Y_0, w) \1[Y_1 \leq y_1, Y_0 \leq y_0] = \lc \1[Y_1 \leq y_1, Y_0 \leq y_0]$. Thus we have
\begin{align*}
    \widetilde{\E}\left[\1(Y_1\leq y_1, Y_0\leq y_0) p^{ul}(Y_1, Y_0, w) \mid W=w\right]
    &= \lc\min\left\{\overline{F}_{Y_1 \mid W}(y_1 \mid w), \underline{F}_{Y_0 \mid W}(y_0 \mid w)\right\} \\
    &= \lc\min\left\{F_{Y \mid X,W}(y_1 \mid 1,w) \frac{p_{1|w}}{\lc}, F_{Y \mid X,W}(y_0|0,w)\frac{p_{0|w}}{1-\lc}\right\} \\
    &= \min\left\{F_{Y \mid X,W}(y_1 \mid 1,w)p_{1|w}, F_{Y \mid X,W}(y_0|0,w)\frac{p_{0|w}\lc}{1-\lc}\right\} \\
    &= p_{1|w}\min\left\{F_{Y \mid X,W}(y_1 \mid 1,w), \underline{F}_{Y_0 \mid X,W}(y_0 \mid 1,w)\right\}.
\end{align*}
The second line holds by the assumption that $y_1 < \overline{Q}_1$ and $y_0 < \underline{Q}_0$. Therefore, we have shown that \eqref{eq:Pul_verification} holds.

\medskip

\noindent \textbf{(Part 1) Case 2:} $y_1 \geq \overline{Q}_1$ and $y_0 < \underline{Q}_0$. 

\medskip

First, note that the joint cdf from \eqref{eq:joint_UL_cdf} implies
\begin{equation}
\label{eq:Pul_supp}
    \widetilde{\P}(Y_1 > \overline{Q}_1, Y_0 < \underline{Q}_0 \mid W=w) = \widetilde{\P}(Y_1 < \overline{Q}_1, Y_0 > \underline{Q}_0 \mid W=w) = 0.
\end{equation}
These equalities follow by
\begin{align*}
    \widetilde{\P}(Y_1 > \overline{Q}_1, Y_0 < \underline{Q}_0 \mid W=w) 
    &= \widetilde{\P}(Y_0 < \underline{Q}_0 \mid W=w) - \widetilde{\P}(Y_0 < \underline{Q}_0, Y_1 \leq \overline{Q}_1 \mid W=w) \\
    &= \underline{F}_{Y_0 \mid W}(\underline{Q}_0-|w) - \min\left\{\overline{F}_{Y_1 \mid W}(\overline{Q}_1 \mid w), \underline{F}_{Y_0 \mid W}(\underline{Q}_0-|w)\right\} \\
    &= \underline{F}_{Y_0 \mid W}(\underline{Q}_0-|w) - \underline{F}_{Y_0 \mid W}(\underline{Q}_0-|w) \\
    &= 0,
\end{align*}
where the third line holds by Lemma \ref{lemma:preliminary_margcdep}.8. Similarly, 
\begin{align*}
    \widetilde{\P}(Y_1 < \overline{Q}_1, Y_0 > \underline{Q}_0 \mid W=w)
    &= \widetilde{\P}(Y_1 < \overline{Q}_1 \mid W=w) - \widetilde{\P}(Y_1 < \overline{Q}_1, Y_0 \leq \underline{Q}_0 \mid W=w) \\
    &= \overline{F}_{Y_1 \mid W}(\overline{Q}_1- \mid w) - \min\left\{\overline{F}_{Y_1 \mid W}(\overline{Q}_1- \mid w), \underline{F}_{Y_0 \mid W}(\underline{Q}_0|w)\right\} \\
    &= \overline{F}_{Y_1 \mid W}(\overline{Q}_1- \mid w) - \overline{F}_{Y_1 \mid W}(\overline{Q}_1- \mid w) \\
    &=0.
\end{align*}

Based on \eqref{eq:Pul_supp}, we can decompose the left-hand-side term of  \eqref{eq:Pul_verification} as below
\begin{align*}
    &\widetilde{\E}(\1(Y_1\leq y_1, Y_0\leq y_0)p^{ul}(Y_1,Y_0,w) \mid W=w) \\
    &= \lc\widetilde{\P}(Y_1 \leq \overline{Q}_1, Y_0 \leq y_0 \mid W=w) + p_{1|w}\cdot \widetilde{\P}(\overline{Q}_1 < Y_1 \leq y_1, Y_0 \leq y_0 \mid W=w) \\
    &= \lc \widetilde{\P}(Y_1 \leq \overline{Q}_1, Y_0 \leq y_0 \mid W=w) \\
    &= \lc \min\left\{\overline{F}_{Y_1 \mid W}(\overline{Q}_1 \mid w), \underline{F}_{Y_0 \mid W}(y_0 \mid w)\right\}.
\end{align*}
Note that $\overline{F}_{Y_1 \mid W}(\overline{Q}_1 \mid w) \geq \underline{F}_{Y_0 \mid W}(\underline{Q}_0-|w) \geq  \underline{F}_{Y_0 \mid W}(y_0 \mid w)$ by Lemma \ref{lemma:preliminary_margcdep}.8 and the condition that $y_0 < \underline{Q}_0$. We have 
\[
     \lc \min\left\{\overline{F}_{Y_1 \mid W}(\overline{Q}_1 \mid w), \underline{F}_{Y_0 \mid W}(y_0 \mid w)\right\}
     = \lc \underline{F}_{Y_0 \mid W}(y_0 \mid w) = p_{1|w} \underline{F}_{Y_0 \mid X,W}(y_0 \mid 1,w).
\]
By Lemma \ref{lemma:preliminary_margcdep}.6 (the second set of equivalent results), $\overline{F}_{Y_1 \mid W}(\overline{Q}_1 \mid w) \geq \underline{F}_{Y_0 \mid W}(y_0 \mid w)$ also implies 
\[
	\underline{F}_{Y_0 \mid X,W}(y_0 \mid 1,w) \leq {F}_{Y \mid X,W}(\overline{Q}_1 \mid 1,w) \leq F_{Y \mid X,W}(y_1 \mid 1,w),
\]
hence we have 
\[
    p_{1|w} \underline{F}_{Y_0 \mid X,W}(y_0 \mid 1,w) = p_{1|w} \min\left\{F_{Y \mid X,W}(y_1 \mid 1,w), \underline{F}_{Y_0 \mid X,W}(y_0 \mid 1,w)\right\}.
\]
Combining those results then yields \eqref{eq:Pul_verification}, as desired.

\medskip

\noindent \textbf{(Part 1) Case 3:} $y_1 < \overline{Q}_1$ and $y_0 \geq \underline{Q}_0$. 

\medskip

Following similar arguments in Case 2, we have the following equality
\begin{align*}
    &\widetilde{\E}\left(\1(Y_1\leq y_1, Y_0\leq y_0) p^{ul}(Y_1, Y_0, w) \mid W=w\right) \\
    &= \lc\widetilde{\P}(Y_1\leq y_1, Y_0 \leq \underline{Q}_0 \mid W=w) + p_{1|w} \cdot \widetilde{\P}(Y_1\leq y_1, \underline{Q}_0 < Y_0 \leq y_0 \mid W=w) \\
    &= \lc\min\left\{\overline{F}_{Y_1 \mid W}(y_1 \mid w), \underline{F}_{Y_0 \mid W}(\underline{Q}_0|w)\right\} \\
    &= \lc \overline{F}_{Y_1 \mid W}(y_1 \mid w) \\
    &= p_{1|w} F_{Y \mid X,W}(y_1 \mid 1,w) \\
    &= p_{1|w} \min\left\{F_{Y \mid X,W}(y_1 \mid 1,w), \underline{F}_{Y_0 \mid X,W}(y_0 \mid 1,w)\right\}
\end{align*}
where we use \eqref{eq:Pul_supp} in the second equality, the third equality follows by Lemma \ref{lemma:preliminary_margcdep}.8, and the condition that $y_1 < \overline{Q}_1$, and the last line holds by Lemma \ref{lemma:preliminary_margcdep}.9, where we deduce that 
\[
	F_{Y \mid X,W}(y_1 \mid 1,w) \leq F_{Y \mid X,W}(\overline{Q}_1- \mid 1,w) \leq \underline{F}_{Y_0 \mid X,W}(\underline{Q}_0 \mid 1,w) \leq  \underline{F}_{Y_0 \mid X,W}(y_0 \mid 1,w).
\] 
Therefore, we established \eqref{eq:Pul_verification}.

\medskip

\noindent \textbf{(Part 1) Case 4:} $y_1 = \overline{Q}_1$ and $y_0 = \underline{Q}_0$. 

\medskip

We can decompose the LHS probability of \eqref{eq:Pul_verification} as follows:
\begin{align*}
    &\widetilde{\E}(\1(Y_1\leq y_1, Y_0\leq y_0) p^{ul}(Y_1,Y_0,w) \mid W=w) \\
    &=          \widetilde{\E}(\1(Y_1 = y_1, Y_0 = y_0) p^{ul}(Y_1,Y_0) \mid W=w) +      \widetilde{\E}(\1(Y_1\leq y_1, Y_0 < y_0) p^{ul}(Y_1,Y_0,w) \mid W=w)  \\
    &\quad+      \widetilde{\E}(\1(Y_1 < y_1, Y_0\leq y_0) p^{ul}(Y_1,Y_0,w) \mid W=w) -      \widetilde{\E}(\1(Y_1 < y_1, Y_0 < y_0) p^{ul}(Y_1,Y_0,w) \mid W=w). \\
    &= B^{ul}\widetilde{\P}(Y_1 = \overline{Q}_1, Y_0 = \underline{Q}_0 \mid W=w) \\
    &\quad + \lim_{u\nearrow \underline{Q}_0} \widetilde{\E}(\1(Y_1\leq \overline{Q}_1, Y_0 \leq u \mid w=w) p^{ul}(Y_1,Y_0,w) \mid W=w) \\
    &\quad + \lim_{v\nearrow \overline{Q}_1} \widetilde{\E}(\1(Y_1\leq v, Y_0 \leq \underline{Q}_0) p^{ul}(Y_1,Y_0,w) \mid W=w) \\
    &\quad - \lim_{v\nearrow \overline{Q}_1, u\nearrow \underline{Q}_0} \widetilde{\E}(\1(Y_1\leq v, Y_0\leq u) p^{ul}(Y_1,Y_0,w) \mid W=w) \\
    &= \widetilde{\P}(Y_1 = \overline{Q}_1, Y_0 = \underline{Q}_0, X=1 \mid W=w) +     \lim_{u\nearrow \underline{Q}_0} \widetilde{\P}(Y_1\leq \overline{Q}_1, Y_0 \leq u, X=1 \mid W=w) \\
    &\quad+     \lim_{v\nearrow \overline{Q}_1} \widetilde{\P}(Y_1\leq v, Y_0\leq \underline{Q}_0, X=1 \mid W=w) - \lim_{v\nearrow \overline{Q}_1, u\nearrow \underline{Q}_0}\widetilde{\P}(Y_1\leq v, Y_0\leq u, X=1 \mid W=w) \\
    &= \widetilde{\P}(Y_1 \leq \overline{Q}_1, Y_0\leq \overline{Q}_0, X=1 \mid W=w).
\end{align*}
The second equality holds by the monotone convergence theorem. The third equality holds by the conclusion proved in Case 1--3. The last equality holds by the continuity of probability measure. Thus \eqref{eq:Pul_verification} has been verified.

\medskip

\noindent \textbf{(Part 1) Case 5:} $(y_1, y_0) \geq (\overline{Q}_1, \underline{Q}_0)$. 

\medskip

Given the above results, we have the following derivation:
\begin{align*}
    & \widetilde{\E}\left(\1(Y_1\leq y_1, Y_0\leq y_0) p^{ul}(Y_1, Y_0,w) \mid W=w\right) \\
    &=  \widetilde{\E}\left(1(Y_1\leq \overline{Q}_1, Y_0\leq \underline{Q}_0) p^{ul}(Y_1, Y_0,w) \mid W=w\right) \\
    &\quad  + \widetilde{\E}\left( \uc \left[\1(Y_1\leq y_1, Y_0\leq y_0 \mid W=w) - \1(Y_1\leq \overline{Q}_1, Y_0\leq \underline{Q}_0)\right]|W=w\right) \\
    &= \widetilde{\P}(Y_1\leq \overline{Q}_1, Y_0\leq \underline{Q}_0, X=1 \mid W=w) + \uc\left(\widetilde{\P}(Y_1\leq y_1, Y_0\leq y_0 \mid W=w) - \widetilde{\P}(Y_1\leq \overline{Q}_1, Y_0\leq \underline{Q}_0) \mid W=w\right) \\
    &= \uc \min\left\{\overline{F}_{Y_1 \mid W}(y_1 \mid w), \underline{F}_{Y_0 \mid W}(y_0 \mid w)\right\} + p_{1|w} \min\left\{F_{Y \mid X,W}(\overline{Q}_1 \mid 1,w), \underline{F}_{Y_0 \mid X,W}(\underline{Q}_0 \mid 1,w)\right\} \\
    &\quad  - \uc\min\left\{\overline{F}_{Y_1 \mid W}(\overline{Q}_1 \mid w), \underline{F}_{Y_0 \mid W}(\underline{Q}_0|w)\right\} \\
    &= \uc \min\left\{\overline{F}_{Y_1 \mid W}(y_1 \mid w), \underline{F}_{Y_0 \mid W}(y_0 \mid w)\right\} \\
    &\quad + \left[F_{Y \mid X,W}(\overline{Q}_1 \mid 1,w)p_{1|w} - \overline{F}_{Y_1 \mid W}(\overline{Q}_1 \mid w)\uc\right]\1(\overline{F}_{Y_1 \mid W}(\overline{Q}_1 \mid w) \leq \underline{F}_{Y_0 \mid W}(\underline{Q}_0|w)) \\
    &\quad + \left[\underline{F}_{Y_0 \mid X,W}(\underline{Q}_0 \mid 1,w)p_{1|w} - \underline{F}_{Y_0 \mid W}(\underline{Q}_0|w)\uc\right]\1(\overline{F}_{Y_1 \mid W}(\overline{Q}_1 \mid w) > \underline{F}_{Y_0 \mid W}(\underline{Q}_0|w)) \\
    &=  \uc \min\left\{\overline{F}_{Y_1 \mid W}(y_1 \mid w), \underline{F}_{Y_0 \mid W}(y_0 \mid w)\right\} - (\uc-p_{1|w}) \\
    &= \min\left\{F_{Y \mid X,W}(y_1 \mid 1,w)p_{1|w}, \frac{p_{1|w}-\uc}{1-\uc} + F_{Y \mid X,W}(y_0|0,w)\frac{p_{0|w}\uc}{1-\uc}\right\} \\
    &= p_{1|w}\min\left\{F_{Y \mid X,W}(y_1 \mid 1,w) , \underline{F}_{Y_0 \mid X,W}(y_0|0,w)\right\},
\end{align*}
where the first equality follows by \eqref{eq:Pul_supp} that $(Y_1,Y_0)$ has no mass on the off-diagonal area, the second equality follows by the result established in Case 4 above, and the fourth equality follows by Lemma \ref{lemma:preliminary_margcdep}.6. Thus we have verified \eqref{eq:Pul_verification}.

Since $\R^2$ is partitioned by these 5 cases, we have established that $\widetilde{\E}[X \mid Y_1,Y_0,W=w] = p^{ul}(Y_1,Y_0,w)$ almost surely, which concludes the proof of Part 1. 

\bigskip

\noindent \textbf{Proof of Part 2}: We prove this by constructing a probability distribution $\widetilde{\P}$ for $(Y_1,Y_0,X)$ conditional on $W=w$ such that for all $y \in \R$ and $x \in \{0,1\}$, we have
\begin{enumerate}
	\item $\widetilde{\P}(Y_1 \leq y \mid W=w) = \overline{F}_{Y_1 \mid W}(y \mid w)$ and  $\widetilde{\P}(Y_0 \leq y \mid W=w) =  \overline{F}_{Y_0 \mid W}(y \mid w)$;
	\item $\widetilde{\P}(X=x \mid W=w) = p_{x|w}$;
	\item $\widetilde{\P}(Y_x \leq y \mid X=x,W=w) = F_{Y \mid X,W}(Y \mid X,w)$;
	\item $\widetilde{\P}(Y_1 \leq y_1, Y_0 \leq y_0 \mid X=x, W=w) = \max\left\{\widetilde{\P}(Y_1\leq y_1 \mid X=x, W=w) + \widetilde{\P}(Y_0\leq y_0 \mid X=x, W=w) - 1, 0\right\}$;
	\item $\widetilde{\P}(X=1 \mid Y_1,Y_0,W=w) \in [\lc,\uc]$ for $\widetilde{\P}$-almost surely.
\end{enumerate}

Let 
\begin{equation}\label{eq:UU_sharpness_jointcdep}
\begin{aligned} 
    \widetilde{\P}(Y_1 \leq y_1, Y_0 \leq y_0, X = x \mid W=w) 
    &= x \max\{F_{Y \mid X,W}(y_1 \mid 1,w) + \overline{F}_{Y_0 \mid X,W}(y_0 \mid 1,w)- 1, 0\} p_{1|w} \\
    &\quad + (1-x) \max\{\overline{F}_{Y_1 \mid X,W}(y_1 \mid 0,w) + F_{Y \mid X,W}(y_0|0,w)-1,0\} p_{0|w}.
\end{aligned}
\end{equation}
By Lemma \ref{lemma:cdf_properties}.1, $\overline{F}_{Y_0 \mid X,W}(y_0 \mid 1,w)$ and $\overline{F}_{Y_1 \mid X,W}(y_1 \mid 0,w)$ are cdfs. Also note that $(u,v) \mapsto \max\{u + v - 1, 0\}$ is the counter-monotonic copula. Following Sklar's Theorem, $\widetilde{\P}$ is a joint distribution function for $(Y_1,Y_0,X)$ conditional on $W=w$. 

Following the same steps as in the proof of Theorem \ref{thm:cdf_sharp_margcdep}, we can show that conditions 1--4 are satisfied because the distribution in \eqref{eq:UU_sharpness_jointcdep} is the same as in \eqref{eq:sharpness_margcdep} but for a specific rather than an arbitrary choice of copulas. By Lemma \ref{lemma:cond_vs_uncond_counter-monotonicity}, $\widetilde{\P}$ leads to the following counter-monotonic joint distribution of $(Y_1,Y_0)$:
\begin{equation}\label{eq:joint_UU_cdf}
    \widetilde{\P}(Y_1\leq y_1, Y_0\leq y_0 \mid W=w)
    = 
    \max\left\{\overline{F}_{Y_1 \mid W}(y_1 \mid w) + \overline{F}_{Y_0 \mid W}(y_0 \mid w) - 1, 0\right\}.
\end{equation}

To show condition 5 holds, and thus complete the proof, we must find a function $p^{uu}$ such that $p^{uu}(Y_1,Y_0,w) = \widetilde{\E}[X \mid Y_1,Y_0,W=w]$ and $p^{uu}(Y_1,Y_0,w) \in [\lc,\uc]$ almost surely under $\widetilde{\P}$.

First consider $p_{1|w} = \lc$, then we have 
\[
	\overline{F}_{Y_1 \mid X,W}(y_1 \mid 0,w) = \overline{F}_{Y_1 \mid W}(y_1 \mid w) = F_{Y \mid X,W}(y_1 \mid 1,w)
\]
and 
\[
	\overline{F}_{Y_0 \mid X,W}(y_0 \mid 1,w) = \overline{F}_{Y_0 \mid W}(y_0 \mid w) = F_{Y \mid X,W}(y_0|0,w).
\] 
This implies the following derivations:
\begin{align*}
    \widetilde{\E}\left[\1(Y_1 \leq y_1, Y_0 \leq y_0)\lc|W=w\right]
    &= p_{1|w}\widetilde{\P}(Y_1\leq y_1, Y_0\leq y_0 \mid W=w) \\
    &= p_{1|w}\max\left\{F_{Y \mid X,W}(y_1 \mid 1,w) + F_{Y \mid X,W}(y_0|0,w) - 1, 0\right\} \\
    &= \widetilde{\P}(Y_1 \leq y_1, Y_0\leq y_0, X=1 \mid W=w).
\end{align*}
The first line holds by $\lc = p_{1|w}$, the second line holds by \eqref{eq:joint_UU_cdf}, and the last line holds by \eqref{eq:UU_sharpness_jointcdep}. Following Lemma \ref{lemma:propscore_equal_condexp_v2}, we conclude that $\widetilde{\P}(X=1 \mid Y_1, Y_0, W=w) = \lc \in [\lc, \uc]$ almost surely. The proof of the case where $p_{1|w} = \uc$ is similar and thus omitted.

Next consider $\lc < p_{1|w} < \uc$. Let $p^{uu}(Y_1,Y_0,w) = p^{uu}(Y_1,Y_0,w;B^{uu})$ defined in \eqref{eq:UU_propscore}, where
\begin{align*}
	B^{uu} &= \frac{\widetilde{\P}(Y_1 = \overline{Q}_1, Y_0 = \overline{Q}_0,X=1 \mid W=w)}{\widetilde{\P}(Y_1 = \overline{Q}_1, Y_0 = \overline{Q}_0 \mid W=w)}
\end{align*}
whenever $\widetilde{\P}(Y_1 = \overline{Q}_1, Y_0 = \overline{Q}_0 \mid W=w) > 0$. Set $B^{uu} = p_{1|w}$ otherwise.

We verify that $B^{uu} \in [\lc,\uc]$ if the denominator is nonzero. 

First, note that the denominator of $B^{uu}$ can be expanded below
\begin{align*}
	&\widetilde{\P}(Y_1 = \overline{Q}_1, Y_0 = \overline{Q}_0 \mid W=w)\\
	&= \widetilde{\P}(Y_1 \leq \overline{Q}_1, Y_0 \leq \overline{Q}_0 \mid W=w) - \widetilde{\P}(Y_1 \leq \overline{Q}_1, Y_0 < \overline{Q}_0 \mid W=w) \\
	&\quad - \widetilde{\P}(Y_1 < \overline{Q}_1, Y_0 \leq \overline{Q}_0 \mid W=w) + \widetilde{\P}(Y_1 < \overline{Q}_1, Y_0 < \overline{Q}_0 \mid W=w)\\
	&= \max\{\overline{F}_{Y_1 \mid W}(\overline{Q}_1 \mid w) + \overline{F}_{Y_0 \mid W}(\overline{Q}_0|w) - 1, 0\} - \max\{\overline{F}_{Y_1 \mid W}(\overline{Q}_1 \mid w) + \overline{F}_{Y_0 \mid W}(\overline{Q}_0-|w) - 1, 0\}\\
	&\quad - \max\{\overline{F}_{Y_1 \mid W}(\overline{Q}_1- \mid w) + \overline{F}_{Y_0 \mid W}(\overline{Q}_0|w)-1,0\} + \max\{\overline{F}_{Y_1 \mid W}(\overline{Q}_1- \mid w) + \overline{F}_{Y_0 \mid W}(\overline{Q}_0-|w)-1,0\}.
\end{align*}
where the second equality holds via \eqref{eq:joint_UU_cdf}. By Lemma \ref{lemma:preliminary_margcdep}.8, we observe that 
\begin{align*}
	\overline{F}_{Y_1 \mid W}(\overline{Q}_1 \mid w) + \overline{F}_{Y_0 \mid W}(\overline{Q}_0|w) - 1 &\geq \frac{\uc- p_{1|w}}{\uc - \lc} + \frac{p_{1|w} - \lc}{\uc - \lc} - 1 = 0\\
	\overline{F}_{Y_1 \mid W}(\overline{Q}_1- \mid w) + \overline{F}_{Y_0 \mid W}(\overline{Q}_0-|w) - 1 &\leq \frac{\uc- p_{1|w}}{\uc - \lc} + \frac{p_{1|w} - \lc}{\uc - \lc} - 1 = 0, 
\end{align*}
hence this expression simplifies to
\begin{align*}
	&\widetilde{\P}(Y_1 = \overline{Q}_1, Y_0 = \underline{Q}_0 \mid W=w) \\
	&= \overline{F}_{Y_1 \mid W}(\overline{Q}_1 \mid w) + \overline{F}_{Y_0 \mid W}(\overline{Q}_0|w) - 1 \\
	&\quad - \max\{\overline{F}_{Y_1 \mid W}(\overline{Q}_1 \mid w) + \overline{F}_{Y_0 \mid W}(\overline{Q}_0-|w) - 1, 0\}  -\max\{\overline{F}_{Y_1 \mid W}(\overline{Q}_1- \mid w) + \overline{F}_{Y_0 \mid W}(\overline{Q}_0|w)-1,0\} \\
	&= \min\left\{ 1-  \overline{F}_{Y_0 \mid W}(\overline{Q}_0-|w), \overline{F}_{Y_1 \mid W}(\overline{Q}_1 \mid w)\right\} + \min\left\{ 1-  \overline{F}_{Y_1 \mid W}(\overline{Q}_1- \mid w), \overline{F}_{Y_0 \mid W}(\overline{Q}_0|w)\right\} -1.
\end{align*}

Second, we expand the numerator of $B^{uu}$. We have that 
\[
	\widetilde{\P}(Y_1 = \overline{Q}_1, Y_0 = \overline{Q}_0,X=1 \mid W=w) = \widetilde{\P}(Y_1 = \overline{Q}_1, Y_0 = \overline{Q}_0 \mid X=1,W=w) p_{1|w}
\]
and that
\begin{align*}
	&\widetilde{\P}(Y_1 = \overline{Q}_1, Y_0 = \overline{Q}_0 \mid X=1, W=w)\\
	&= \max\{F_{Y \mid X,W}(\overline{Q}_1 \mid 1,w) + \overline{F}_{Y_0 \mid X,W}(\overline{Q}_0 \mid 1,w) - 1, 0\} \\
	&\quad - \max\{F_{Y \mid X,W}(\overline{Q}_1 \mid 1,w) + \overline{F}_{Y_0 \mid X,W}(\overline{Q}_0- \mid 1,w) - 1, 0\}\\
	&\quad - \max\{F_{Y \mid X,W}(\overline{Q}_1- \mid 1,w) + \overline{F}_{Y_0 \mid X,W}(\overline{Q}_0 \mid 1,w)-1,0\} \\
	&\quad + \max\{F_{Y \mid X,W}(\overline{Q}_1- \mid 1,w) + \overline{F}_{Y_0 \mid X,W}(\overline{Q}_0- \mid 1,w)-1,0\}\\
	&= F_{Y \mid X,W}(\overline{Q}_1 \mid 1,w) + \overline{F}_{Y_0 \mid X,W}(\overline{Q}_0 \mid 1,w) - 1 - \max\{F_{Y \mid X,W}(\overline{Q}_1 \mid 1,w) + \overline{F}_{Y_0 \mid X,W}(\overline{Q}_0- \mid 1,w) - 1, 0\}\\
	&\quad - \max\{F_{Y \mid X,W}(\overline{Q}_1- \mid 1,w) + \overline{F}_{Y_0 \mid X,W}(\overline{Q}_0 \mid 1,w)-1,0\} + 0\\
	&= \min\left\{ 1-  \overline{F}_{Y_0 \mid X,W}(\overline{Q}_0- \mid 1,w), F_{Y \mid X,W}(\overline{Q}_1 \mid 1,w) \right\} + \min\left\{ 1-  F_{Y \mid X,W}(\overline{Q}_1- \mid 1,w), \overline{F}_{Y_0 \mid X,W}(\overline{Q}_0 \mid 1,w)\right\} -1.
\end{align*}
where the second to last equality follows from Lemma \ref{lemma:preliminary_margcdep}.9, where we note that $\overline{\tau}_1 + \underline{\tau}_1 = 1$.

From Lemma \ref{lemma:preliminary_margcdep}.11, $B^{uu}$ can take four possible values as follows:
\begin{align}
	B^{uu} &= \frac{\widetilde{\P}(Y_1 = \overline{Q}_1, Y_0 = \overline{Q}_0 \mid X=1,W=w)p_{1|w}}{\widetilde{\P}(Y_1 = \overline{Q}_1, Y_0 = \overline{Q}_0 \mid W=w)} \notag\\
        &= \frac{\left( 1-  \overline{F}_{Y_0 \mid X,W}(\overline{Q}_0- \mid 1,w) -  F_{Y \mid X,W}(\overline{Q}_1- \mid 1,w)\right) p_{1|w}}{1-  \overline{F}_{Y_0 \mid W}(\overline{Q}_0-|w) -  \overline{F}_{Y_1 \mid W}(\overline{Q}_1- \mid w)}
        \1\left(
        \begin{array}{c}
        \overline{F}_{Y_1 \mid W}(\overline{Q}_1 \mid w) +  \overline{F}_{Y_0 \mid W}(\overline{Q}_0-|w) \geq 1, \\
        \overline{F}_{Y_1 \mid W}(\overline{Q}_1- \mid w) +  \overline{F}_{Y_0 \mid W}(\overline{Q}_0|w) \geq 1
        \end{array}\right) \label{eq:Buu_term1}\\
	&+ \frac{\left( F_{Y \mid X,W}(\overline{Q}_1 \mid 1,w) -  F_{Y \mid X,W}(\overline{Q}_1- \mid 1,w)\right) p_{1|w}}{\overline{F}_{Y_1 \mid W}(\overline{Q}_1 \mid w) -  \overline{F}_{Y_1 \mid W}(\overline{Q}_1- \mid w)}
		\1\left(
		\begin{array}{c}
		\overline{F}_{Y_1 \mid W}(\overline{Q}_1 \mid w) +  \overline{F}_{Y_0 \mid W}(\overline{Q}_0-|w) < 1,  \\
		\overline{F}_{Y_1 \mid W}(\overline{Q}_1- \mid w) +  \overline{F}_{Y_0 \mid W}(\overline{Q}_0|w) \geq 1
		\end{array}\right)\label{eq:Buu_term2}\\
	&+ \frac{\left( \overline{F}_{Y_0 \mid X,W}(\overline{Q}_0 \mid 1,w) -  \overline{F}_{Y_0 \mid X,W}(\overline{Q}_0- \mid 1,w)\right) p_{1|w}}{\overline{F}_{Y_0 \mid W}(\overline{Q}_0|w) -  \overline{F}_{Y_0 \mid W}(\overline{Q}_0-|w)}
		\1\left(
		\begin{array}{c}
		\overline{F}_{Y_1 \mid W}(\overline{Q}_1 \mid w) +  \overline{F}_{Y_0 \mid W}(\overline{Q}_0-|w) \geq 1, \\
		\overline{F}_{Y_1 \mid W}(\overline{Q}_1- \mid w) +  \overline{F}_{Y_0 \mid W}(\overline{Q}_0|w) < 1
		\end{array}\right)\label{eq:Buu_term3}\\
	&+ \frac{\left( \overline{F}_{Y_0 \mid X,W}(\overline{Q}_0 \mid 1,w) +  F_{Y \mid X,W}(\overline{Q}_1 \mid 1,w) - 1\right) p_{1|w}}{\overline{F}_{Y_0 \mid W}(\overline{Q}_0|w) +  \overline{F}_{Y_1 \mid W}(\overline{Q}_1 \mid w) - 1}
		\1\left(
		\begin{array}{c}
		\overline{F}_{Y_1 \mid W}(\overline{Q}_1 \mid w) +  \overline{F}_{Y_0 \mid W}(\overline{Q}_0-|w) < 1, \\
		\overline{F}_{Y_1 \mid W}(\overline{Q}_1- \mid w) +  \overline{F}_{Y_0 \mid W}(\overline{Q}_0|w) < 1
		\end{array}\right)\label{eq:Buu_term4}
\end{align}
As shown in Lemma \ref{lemma:preliminary_margcdep}.3, terms \eqref{eq:Buu_term2} and \eqref{eq:Buu_term3} lie in $[\lc,\uc]$. 

Next we examine the term \eqref{eq:Buu_term1}, which can be written as below
\begin{align*}
    &\frac{\left( 1-  \overline{F}_{Y_0 \mid X,W}(\overline{Q}_0- \mid 1,w) -  F_{Y \mid X,W}(\overline{Q}_1- \mid 1,w)\right) p_{1|w}}{1-  \overline{F}_{Y_0 \mid W}(\overline{Q}_0-|w) -  \overline{F}_{Y_1 \mid W}(\overline{Q}_1- \mid w)} \\
    &= \frac{p_{1|w}\left(1-F_{Y \mid X,W}(\overline{Q}_0-|0,w) \frac{p_{0|w}\uc}{p_{1|w}(1-\uc)} - F_{Y \mid X,W}(\overline{Q}_1- \mid 1,w)\right)}{1-F_{Y \mid X,W}(\overline{Q}_0-|0,w) \frac{p_{0|w}}{1-\uc} - F_{Y \mid X,W}(\overline{Q}_1- \mid 1,w) \frac{p_{1|w}}{\lc}} \\
    &= \lc + \frac{p_{1|w}-\lc - F_{Y \mid X,W}(\overline{Q}_0-|0,w)\frac{p_{0|w}(\uc-\lc)}{1-\uc}}{\widetilde{\P}(Y_1 = \overline{Q}_1, Y_0 = \overline{Q}_0 \mid W=w)} \\
    &\geq \lc + \frac{p_{1|w}-\lc - \overline{\tau}_0\frac{p_{0|w}(\uc-\lc)}{1-\uc}}{\widetilde{\P}(Y_1 = \overline{Q}_1, Y_0 = \overline{Q}_0 \mid W=w)} \\
    &=\lc,
\end{align*}
where the inequality follows by Lemma \ref{lemma:quantile_props}.3 that $F_{Y \mid X,W}(\overline{Q}_0-|0,w) = F_{Y \mid X,W}(Q_{Y \mid X,W}(\overline{\tau}_0|0,w)-|0,w) \leq \overline{\tau}_0$. Also note that
\begin{align*}
    \frac{\left( 1-  \overline{F}_{Y_0 \mid X,W}(\overline{Q}_0- \mid 1,w) -  F_{Y \mid X,W}(\overline{Q}_1- \mid 1,w)\right) p_{1|w}}{1-  \overline{F}_{Y_0 \mid W}(\overline{Q}_0-|w) -  \overline{F}_{Y_1 \mid W}(\overline{Q}_1- \mid w)}
    &= \uc + \frac{(p_{1|w}-\uc) + F_{Y \mid X,W}(\overline{Q}_1- \mid 1,w)\frac{p_{1|w}(\uc-\lc)}{\lc}}{\widetilde{\P}(Y_1=\overline{Q}_1, Y_0 = \overline{Q}_0 \mid W=w)} \\
    &\leq \uc + \frac{(p_{1|w}-\uc) + \overline{\tau}_1\frac{p_{1|w}(\uc-\lc)}{\lc}}{\widetilde{\P}(Y_1=\overline{Q}_1, Y_0 = \overline{Q}_0 \mid W=w)} \\
    &= \uc,
\end{align*}
where the inequality follows by Lemma \ref{lemma:quantile_props}.3 that $F_{Y \mid X,W}(\overline{Q}_1- \mid 1,w) = F_{Y \mid X,W}(Q_{Y \mid X,W}(\overline{\tau}_1 \mid 1,w)- \mid 1,w) \leq \overline{\tau}_1$. Then we have shown that the term \eqref{eq:Buu_term1} is bounded between $\lc$ and $\uc$. 

Then consider the term \eqref{eq:Buu_term4}. Following the same arguments, we have
\begin{align*}
    &\frac{\left( \overline{F}_{Y_0 \mid X,W}(\overline{Q}_0 \mid 1,w) +  F_{Y \mid X,W}(\overline{Q}_1 \mid 1,w) - 1\right) p_{1|w}}{\overline{F}_{Y_0 \mid W}(\overline{Q}_0|w) +  \overline{F}_{Y_1 \mid W}(\overline{Q}_1 \mid w) - 1}
    = \lc + \frac{\frac{p_{1|w}(\uc - \lc)}{\uc}F_{Y \mid X,W}(\overline{Q}_1 \mid 1,w) - \frac{(\uc-p_{1|w})\lc}{\uc}}{\widetilde{\P}(Y_1 = \overline{Q}_1, Y_0 = \overline{Q}_0 \mid W=w)} \geq \lc \\
    &\frac{\left( \overline{F}_{Y_0 \mid X,W}(\overline{Q}_0 \mid 1,w) +  F_{Y \mid X,W}(\overline{Q}_1 \mid 1,w) - 1\right) p_{1|w}}{\overline{F}_{Y_0 \mid W}(\overline{Q}_0|w) +  \overline{F}_{Y_1 \mid W}(\overline{Q}_1 \mid w) - 1}
    = \uc + \frac{\frac{(p_{1|w}-\lc)(1-\uc)}{1-\lc}-F_{Y \mid X,W}(\overline{Q}_0|0,w)\frac{p_{0|w}(\uc-\lc)}{1-\lc}}{\widetilde{\P}(Y_1 = \overline{Q}_1, Y_0 = \overline{Q}_0 \mid W=w)} \leq \uc,
\end{align*}
where inequalities follow by Lemma \ref{lemma:quantile_props}.2 that $F_{Y \mid X,W}(\overline{Q}_x|x,w) \geq \overline{\tau}_x$ for $x=0,1$. So we have shown that all four terms \eqref{eq:Buu_term1}--\eqref{eq:Buu_term4} are bounded within $[\lc,\uc]$, thus concluding $B^{uu} \in [\lc,\uc]$, which then establishes $p^{uu}(Y_1, Y_0, w) \in [\lc,\uc]$ almost surely.

To finish this proof, we demonstrate that $\widetilde{\E}[X \mid Y_1,Y_0,W=w] = p^{uu}(Y_1,Y_0,w)$ almost surely. To do so, we use Lemma \ref{lemma:propscore_equal_condexp_v2} and show that
\begin{equation}
\label{eq:Puu_verification}
\begin{aligned}
	\widetilde{\E}\left[\1(Y_1 \leq y_1, Y_0 \leq y_0) ~ p^{uu}(Y_1,Y_0,w) \mid W=w\right] 
	&= \widetilde{\P}(Y_1 \leq y_1, Y_0 \leq y_0, X=1 \mid W=w) \\
	&= p_{1|w}\max\{{F}_{Y \mid X,W}(y_1 \mid 1,w) + \overline{F}_{Y_0 \mid X,W}(y_0 \mid 1,w) -1 , 0\}
\end{aligned}
\end{equation}
for all $(y_1,y_0)\in\R^2$. To complete the proof, we break this up into different cases.

\medskip

\noindent \textbf{(Part 2) Case 1:} $y_1 < \overline{Q}_1$ and $y_0 < \overline{Q}_0$.

\medskip

First, note that the joint cdf from \eqref{eq:joint_UU_cdf} implies
\begin{equation}
\label{eq:Puu_supp}
    \widetilde{\P}(Y_1 < \overline{Q}_1, Y_0 < \overline{Q}_0 \mid W=w) =
    \widetilde{\P}(\overline{Q}_1 < Y_1, \overline{Q}_0 < Y_0 \mid W=w) = 0.
\end{equation}
These equalities can be verified by the arguments below:
\begin{align*}
    \widetilde{\P}(Y_1 < \overline{Q}_1, Y_0 < \overline{Q}_0 \mid W=w)
    &= \max\left\{\overline{F}_{Y_1 \mid W}(\overline{Q}_1- \mid w) + \overline{F}_{Y_0 \mid W}(\overline{Q}_0-|w) - 1, 0\right\} \\
    &\leq \max\left\{\frac{\uc-p_{1|w}}{\uc-\lc} + \frac{p_{1|w}-\lc}{\uc-\lc} - 1, 0\right\} \\
    &= 0,
\end{align*}
where the inequality follows by Lemma \ref{lemma:preliminary_margcdep}.8, and similarly,
\begin{align*}
    \widetilde{\P}(\overline{Q}_1 < Y_1, \overline{Q}_0 < Y_0 \mid W=w)
    &= 1 - \widetilde{\P}(Y_1 \leq \overline{Q}_1 \mid W=w) - \widetilde{\P}(Y_0 \leq \overline{Q}_0 \mid W=w) \\
    &\quad + \widetilde{\P}(Y_1\leq \overline{Q}_1, Y_0 \leq \overline{Q}_0 \mid W=w) \\
    &= 1 - \overline{F}_{Y_1 \mid W}(\overline{Q}_1 \mid w) - \overline{F}_{Y_0 \mid W}(\overline{Q}_0|w) \\
    &\quad + \max\left\{\overline{F}_{Y_1 \mid W}(\overline{Q}_1 \mid w) + \overline{F}_{Y_0 \mid W}(\overline{Q}_0|w) - 1, 0\right\} \\
    &= 1 - \min\left\{1, \overline{F}_{Y_1 \mid W}(\overline{Q}_1 \mid w) + \overline{F}_{Y_0 \mid W}(\overline{Q}_0|w)\right\} \\
    &\leq 1 - \min\left\{1,\frac{\uc-p_{1|w}}{\uc-\lc} + \frac{p_{1|w}-\lc}{\uc-\lc}\right\} \\
    &=0,
\end{align*}
where the inequality follows by Lemma \ref{lemma:preliminary_margcdep}.8. 

On the one hand, \eqref{eq:Puu_supp} implies
\begin{align*}
    \widetilde{\E}(\1(Y_1\leq y_1, Y_0\leq y_0)p^{uu}(Y_1, Y_0, w) \mid W=w)
    &= p_{1|w} \widetilde{\P}(Y_1\leq y_1, Y_0\leq y_0 \mid W=w) \\
    &\leq p_{1|w} \widetilde{\P}(Y_1 < \overline{Q}_1, Y_0 < \overline{Q}_0 \mid W=w) \\
    &= 0.
\end{align*}
This shows $\widetilde{\E}(\1(Y_1\leq y_1, Y_0\leq y_0)p^{uu}(Y_1, Y_0, w) \mid W=w) = 0$ due to the construction that $p^{uu}(Y_1, Y_0, w)$ is non-negative. On the other hand, 
\begin{align*}
\widetilde{\P}(Y_1\leq y_1, Y_0\leq y_0, X=1 \mid W=w) 
    &=p_{1|w} \max\left\{F_{Y \mid X,W}(y_1 \mid 1,w) + \overline{F}_{Y_0 \mid X,W}(y_0 \mid 1,w) - 1, 0\right\} \\
    & \leq  p_{1|w} \max\left\{F_{Y \mid X,W}(\overline{Q}_1- \mid 1,w) + \overline{F}_{Y_0 \mid X,W}(\overline{Q}_0- \mid 1,w) - 1, 0\right\} \\
    & \leq p_{1|w} \max\left\{\overline{\tau}_1 + \underline{\tau}_1 - 1, 0\right\} \\
    &= 0,
\end{align*}
where the second inequality follows by Lemma \ref{lemma:preliminary_margcdep}.9. 
This implies $\widetilde{\P}(Y_1\leq y_1, Y_0\leq y_0, X=1 \mid W=w) = 0$, thus  establishing \eqref{eq:Puu_verification}, as desired.

\medskip

\noindent \textbf{(Part 2) Case 2:} $y_1 \geq \overline{Q}_1$, $y_0 < \overline{Q}_0$.

\medskip

First, note that \eqref{eq:Puu_supp} implies that
\begin{align*}
    &\widetilde{\E}(\1(Y_1\leq y_1, Y_0\leq y_0)p^{uu}(Y_1, Y_0, w) \mid W=w) \\
    &= \widetilde{\E}(\1(\overline{Q}_1 \leq Y_1\leq y_1, Y_0\leq y_0)\uc|W=w) + \widetilde{\E}(\1(Y_1 < \overline{Q}_1, Y_0\leq y_0)p_{1|w}|W=w) \\
    &= \widetilde{\E}(\1(\overline{Q}_1 \leq Y_1\leq y_1, Y_0\leq y_0)\uc|W=w) + \widetilde{\E}(\1(Y_1 < \overline{Q}_1, Y_0\leq y_0)\uc|W=w) \\
    &= \widetilde{\E}(\1(Y_1\leq y_1, Y_0\leq y_0)\uc|W=w),
\end{align*}
where the second equality follows by the fact that $\widetilde{\P}$ takes no mass on $\{Y_1 < \overline{Q}_1, Y_0 \leq y_0\} \subseteq \{Y_1 < \overline{Q}_1, Y_0 < \overline{Q}_0\}$ by \eqref{eq:Puu_supp}. Next we expand the last expression 
\begin{align*}
    & \widetilde{\E}(\1(Y_1\leq y_1, Y_0\leq y_0)\uc|W=w) \\
    &= \uc \widetilde{\P}(Y_1 \leq y_1, Y_0 \leq y_0 \mid W=w) \\
    &= \uc \max\left\{\overline{F}_{Y_1 \mid W}(y_1 \mid w) + \overline{F}_{Y_0 \mid W}(y_0 \mid w) - 1, 0\right\} \\
    &= \uc \max\left\{\frac{\uc-p_{1|w}}{\uc} + F_{Y \mid X,W}(y_1 \mid 1,w) \frac{p_{1|w}}{\uc} + F_{Y \mid X,W}(y_0|0,w)\frac{p_{0|w}}{1-\uc} - 1, 0\right\} \\
    &= p_{1|w} \max\left\{F_{Y \mid X,W}(y_1 \mid 1,w) + \frac{\uc-p_{1|w}}{p_{1|w}} + F_{Y \mid X,W}(y_0|0,w) \frac{p_{0|w}\uc}{p_{1|w}(1-\uc)} - \frac{\uc}{p_{1|w}}, 0\right\} \\
    &= p_{1|w} \max\left\{F_{Y \mid X,W}(y_1 \mid 1,w)  + F_{Y \mid X,W}(y_0|0,w) \frac{p_{0|w}\uc}{p_{1|w}(1-\uc)} - 1, 0\right\} \\
    &= p_{1|w} \max\left\{F_{Y \mid X,W}(y_1 \mid 1,w) + \overline{F}_{Y_0 \mid X,W}(y_0 \mid 1,w) -1, 0\right\},
\end{align*}
thus establishing \eqref{eq:Puu_verification}, as desired.

\medskip

\noindent \textbf{(Part 2) Case 3:} $y_1 < \overline{Q}_1$, $y_0 \geq \overline{Q}_0$.

\medskip

Similar to the proof of case 2 above, we have
\[
    \widetilde{\E}(\1(Y_1\leq y_1, Y_0\leq y_0) p^{uu}(Y_1,Y_0,w) \mid W=w)
    =
    \widetilde{\E}(\1(Y_1\leq y_1, Y_0\leq y_0) \lc|W=w)
\]
due to \eqref{eq:Puu_supp}. Next we expand the expression on the right hand side.
\begin{align*}
    &\widetilde{\E}(\1(Y_1\leq y_1, Y_0\leq y_0) \lc|W=w) \\
    &= \lc \max\left\{\overline{F}_{Y_1 \mid W}(y_1 \mid w) + \overline{F}_{Y_0 \mid W}(y_0 \mid w) - 1, 0\right\} \\
    &= \lc \max\left\{F_{Y \mid X,W}(y_1 \mid 1,w) \frac{p_{1|w}}{\lc} + \frac{p_{1|w} - \lc}{1-\lc} + F_{Y \mid X,W}(y_0|0,w) \frac{p_{0|w}}{1-\lc} - 1, 0\right\} \\
    &= p_{1|w} \max\left\{F_{Y \mid X,W}(y_1 \mid 1,w) + \frac{(p_{1|w}-\lc)\lc}{p_{1|w}(1-\lc)} + F_{Y \mid X,W}(y_0|0,w)\frac{p_{0|w}\lc}{p_{1|w}(1-\lc)} - \frac{\lc}{p_{1|w}}, 0\right\} \\
    &= p_{1|w} \max\left\{F_{Y \mid X,W}(y_1 \mid 1,w) + \frac{p_{1|w}-\lc}{p_{1|w}(1-\lc)} + F_{Y \mid X,W}(y_0|0,w) \frac{p_{0|w}\lc}{p_{1|w}(1-\lc)} - 1, 0\right\} \\
    &= p_{1|w} \max\left\{F_{Y \mid X,W}(y_1 \mid 1,w) + \overline{F}_{Y_0 \mid X,W}(y_0 \mid 1,w) - 1, 0\right\},
\end{align*}
thus establishing \eqref{eq:Puu_verification}, as desired.

\medskip
\noindent \textbf{(Part 2) Case 4:} $y_1 = \overline{Q}_1$, $y_0 = \overline{Q}_0$.
\medskip

Note that the equality \eqref{eq:Puu_verification} can be established following the same arguments from the proof of Part 1, case 4. Once the results are established for cases 1--3, the equality \eqref{eq:Puu_verification} holds for $y_1 = \overline{Q}_1$, $y_0 = \overline{Q}_0$ by applying monotone convergence theorem and continuity of measure. To this end, the proof is omitted.

\medskip
\noindent \textbf{(Part 2) Case 5:} $(y_1,y_0) \geq (\overline{Q}_1, \overline{Q}_0)$.
\medskip

We start by noting that 
\begin{align*}
    & \widetilde{\E}(\1(Y_1\leq y_1, Y_0\leq y_0)p^{uu}(Y_1,Y_0,w) \mid W=w) \\
    &= p_{1|w} \widetilde{\P}(Y_1 \in (\overline{Q}_1, y_1], Y_0 \in (\overline{Q}_0, y_0]|W=w) + \lc \widetilde{\P}(Y_1 \leq \overline{Q}_1, Y_0 \in (\overline{Q}_0, y_0]|W=w) \\
    &\quad + \uc \widetilde{\P}(Y_1 \in (\overline{Q}_1, y_1], Y_0 \leq \overline{Q}_0) \mid W=w)  + \widetilde{\E}(\1(Y_1\leq \overline{Q}_1, Y_0\leq \overline{Q}_0)p^{uu}(Y_1,Y_0,w) \mid W=w) \\
    &= \lc \left(\widetilde{\P}(Y_1 \leq \overline{Q}_1, Y_0 \in (\overline{Q}_0, y_0]|W=w) + \widetilde{\P}(Y_1 > \overline{Q}_1, Y_0 \in (\overline{Q}_0, y_0]) \mid W=w\right) \\
    &\quad + \uc \left(\widetilde{\P}(Y_1 \in (\overline{Q}_1, y_1], Y_0 \leq \overline{Q}_0) \mid W=w) + \widetilde{\P}(Y_1 \in (\overline{Q}_1, y_1], Y_0 > \overline{Q}_0)) \mid W=w\right) \\
    &\quad + \widetilde{\E}(\1(Y_1\leq \overline{Q}_1, Y_0\leq \overline{Q}_0)p^{uu}(Y_1,Y_0,w) \mid W=w) \\
    &= \lc \widetilde{\P}(Y_0\leq (\overline{Q}_0, y_0]|W=w) + \uc\widetilde{\P}(Y_1 \in (\overline{Q}_1, y_1]|W=w) + \widetilde{\E}(\1(Y_1\leq \overline{Q}_1, Y_0\leq \overline{Q}_0)p^{uu}(Y_1,Y_0,w) \mid W=w),
\end{align*}
where the second equality follows by \eqref{eq:Puu_supp} that $\widetilde{\P}$ takes no mass on diagonal area. Next we expand the last line.
\begin{align*}
    &\lc \widetilde{\P}(Y_0\leq (\overline{Q}_0, y_0]|W=w) + \uc\widetilde{\P}(Y_1 \in (\overline{Q}_1, y_1]|W=w) + \widetilde{\E}(\1(Y_1\leq \overline{Q}_1, Y_0\leq \overline{Q}_0)p^{uu}(Y_1,Y_0) \mid W=w) \\
    &= \lc\left[\overline{F}_{Y_0 \mid W}(y_0 \mid w) - \overline{F}_{Y_0 \mid W}(\overline{Q}_0|w)\right]
    	+ \uc\left[\overline{F}_{Y_1 \mid W}(y_1 \mid w) - \overline{F}_{Y_1 \mid W}(\overline{Q}_0|w)\right] \\
    &\quad + \widetilde{\P}(Y_1 \leq \overline{Q}_1, Y_0\leq \overline{Q}_0, X=1 \mid W=w) \\
    &= \frac{p_{0|w}\lc}{1-\lc}\left(F_{Y \mid X,W}(y_0|0,w) - F_{Y \mid X,W}(\overline{Q}_0|0,w)\right) + p_{1|w}\left(F_{Y \mid X,W}(y_1 \mid 1,w) - F_{Y \mid X,W}(\overline{Q}_1 \mid 1,w)\right) \\
    &\quad + p_{1|w}\max\left\{F_{Y \mid X,W}(\overline{Q}_1 \mid 1,w) + \overline{F}_{Y_0 \mid X,W}(\overline{Q}_0 \mid 1,w) - 1, 0\right\} \\
    &= \frac{p_{0|w}\lc}{1-\lc}\left(F_{Y \mid X,W}(y_0|0,w) - F_{Y \mid X,W}(\overline{Q}_0|0,w)\right) + p_{1|w}\left(F_{Y \mid X,W}(y_1 \mid 1,w) - F_{Y \mid X,W}(\overline{Q}_1 \mid 1,w)\right)  \\
    &\quad + p_{1|w} \left(F_{Y \mid X,W}(\overline{Q}_1 \mid 1,w) + \overline{F}_{Y_0 \mid X,W}(\overline{Q}_0 \mid 1,w) - 1\right) \\
    &= p_{1|w}\left[F_{Y \mid X,W}(y_1 \mid 1,w) + \frac{p_{1|w}-\lc}{p_{1|w}(1-\lc)} + \frac{p_{0|w}\lc}{p_{1|w}(1-\lc)} F_{Y \mid X,W}(y_0|0,w) - 1\right] \\
    &= p_{1|w}\left[F_{Y \mid X,W}(y_1 \mid 1,w) + \overline{F}_{Y_0 \mid X,W}(y_0 \mid 1,w) - 1\right] \\
    &= p_{1|w}\max\left\{F_{Y \mid X,W}(y_1 \mid 1,w) + \overline{F}_{Y_0 \mid X,W}(y_0 \mid 1,w) - 1, 0\right\}.
\end{align*}
The third and the last equality hold by the following derivation 
\begin{align*}
    F_{Y \mid X,W}(y_1 \mid 1,w) + \overline{F}_{Y_0 \mid X,W}(y_0 \mid 1,w) - 1
    &\geq F_{Y \mid X,W}(\overline{Q}_1 \mid 1,w) + \overline{F}_{Y_0 \mid X,W}(\overline{Q}_0 \mid 1,w) - 1 \\
    &\geq \overline{\tau}_1 + \underline{\tau}_1 - 1 \\
    &= 0,
\end{align*}
where the second inequality follows by Lemma \ref{lemma:preliminary_margcdep}.9. Hence we have established  \eqref{eq:Puu_verification}, as desired.

Since $\R^2$ is partitioned by these 5 cases, we haven shown that $\widetilde{\E}[X \mid Y_1,Y_0,W=w] = p^{uu}(Y_1,Y_0,w)$ almost surely, which concludes the proof of Part 2.

\bigskip

\noindent \textbf{Proof of Part 3:} One can show that $(\underline{F}_{Y_1 \mid W}, \overline{F}_{Y_0 \mid W}, \overline{C}_{1,0 \mid X,W})$ can be achieved by the joint distribution of $(Y_1, Y_0, X)$ conditional on $W=w$ constructed as below:
\begin{align*}
    &\widetilde{\P}(Y_1\leq y_1, Y_0\leq y_0, X=1 \mid W=w) \\  
    &= x\min\left\{F_{Y \mid X,W}(y_1 \mid 1,w), \overline{F}_{Y_0 \mid X,W}(y_0 \mid 1,w)\right\}p_{1|w} + 
    (1-x)\min\left\{\underline{F}_{Y_1 \mid X,W}(y_1 \mid 0,w), F_{Y \mid X,W}(y_0|0,w)\right\}p_{0|w}.
\end{align*}
It can be verified that this joint distribution satisfies the following 5 conditions: for all $y \in \R$ and $x \in \{0,1\}$, 
\begin{enumerate}
    \item $\widetilde{\P}(Y_1\leq y \mid W=w) = \underline{F}_{Y_1 \mid W}(y \mid w)$ and $\widetilde{\P}(Y_0\leq y \mid W=w) = \overline{F}_{Y_0}(y \mid w)$;
    \item $\widetilde{\P}(X=x \mid W=w) = p_{x|w}$;
    \item $\widetilde{\P}(Y_x\leq y \mid X=x,W=w) = F_{Y \mid X,W}(Y \mid X,w)$;
    \item $\widetilde{\P}(Y_1\leq y_1, Y_0\leq y_0 \mid X=x, W=w) = \min\left\{\widetilde{\P}(Y_1\leq y_1 \mid X=x,W=w), \widetilde{\P}(Y_0\leq y_0 \mid X=x,W=w)\right\}$;
    \item $\widetilde{\E}(X \mid Y_1, Y_0, W=w) = p^{lu}(Y_1,Y_0,w;B^{lu}) \in [\lc,\uc]$, for $\widetilde{\P}$-almost surely with 
    \[
        B^{lu} = \frac{\widetilde{\P}(Y_1 = y_1, Y_0 = y_0, X=1 \mid W=w)}{\widetilde{\P}(Y_1 = y_1, Y_0 = y_0 \mid W=w)}.
    \]
\end{enumerate}
The arguments are similar to the proof of Part 1 and thus omitted.

\bigskip

\noindent \textbf{Proof of Part 4}: One can show that $(\underline{F}_{Y_1 \mid W}, \underline{F}_{Y_0 \mid W}, \underline{C}_{1,0 \mid X,W})$ can be achieved by the joint distribution of $(Y_1, Y_0, X)$ conditional on $W=w$ constructed as below:
\begin{align*}
    \widetilde{\P}(Y_1\leq y_1, Y_0\leq y_0, X=1 \mid W=w) 
    &= 
    x\max\left\{F_{Y \mid X,W}(y_1 \mid 1,w) + \underline{F}_{Y_0 \mid X,W}(y_0 \mid 1,w) - 1, 0\right\}p_{1|w} \\
    &\quad + (1-x)\max\left\{\underline{F}_{Y_1 \mid X,W}(y_1 \mid 0,w) + F_{Y \mid X,W}(y_0|0,w) - 1, 0\right\}p_{0|w}.
\end{align*}
It can be verified that this joint distribution satisfies the following four conditions: for all $y \in \R$ and $x \in \{0,1\}$,
\begin{enumerate}
    \item $\widetilde{\P}(Y_1\leq y \mid W=w) = \underline{F}_{Y_1 \mid W}(y \mid w)$ and $\widetilde{\P}(Y_0\leq y \mid W=w) = \underline{F}_{Y_0 \mid W}(y \mid w)$;
    \item $\widetilde{\P}(X=x \mid W=w) = p_{x|w}$;
    \item $\widetilde{\P}(Y_x\leq y \mid X=x,W=w) = F_{Y \mid X,W}(Y \mid X,w)$;
    \item $\widetilde{\P}(Y_1\leq y_1, Y_0\leq y_0 \mid X=x, W=w) = \max\left\{\widetilde{\P}(Y_1\leq y_1 \mid X=x,W=w) + \widetilde{\P}(Y_0\leq y_0 \mid X=x,W=w) - 1, 0\right\}$;
    \item $\widetilde{\E}(X \mid Y_1, Y_0, W=w) = p^{ll}(Y_1,Y_0, w;B^{ll}) \in [\lc,\uc]$, for $\widetilde{\P}$-almost surely with 
    \[
        B^{ll} = \frac{\widetilde{\P}(Y_1 = y_1, Y_0 = y_0, X=1 \mid W=w)}{\widetilde{\P}(Y_1 = y_1, Y_0 = y_0 \mid W=w)}.
    \]
\end{enumerate}
The arguments are similar to the proof of Part 2 and thus omitted.
\end{proof}

Suppose $\{F^{k}_{Y_1,Y_0,X}\}_{k=1}^K$ is a set of valid cdfs of $(Y_1, Y_0, X)$ whose support of $X$ is $\{0,1\}$. Consider a mixture of cdfs defined as 
\begin{equation}
\label{eq:mix_dist}
	F^{\text{mix}}_{Y_1,Y_0,X}(y_1, y_0, x) = \sum_{k=1}^K a_k F^k_{Y_1,Y_0,X}(y_1, y_0, x)\end{equation}
where $a_k \in [0,1]$ for all $k \in \{1,\ldots,K\}$, $\sum_{k=1}^K a_k = 1$, and $(y_1, y_0, x)\in \R^2\times \{0,1\}$. 

Let $p^k(Y_1, Y_0) \coloneqq \Exp^k[X\mid Y_1, Y_0]$, where $\E^k$ denotes the expectation under cdf $F_{Y_1, Y_0, X}^k$ for $k \in \{1,\ldots,K,\text{mix}\}$. The next lemma will be used to show that mixtures of distributions satisfying joint $c$-dependence also satisfy joint $c$-dependence.
\begin{lemma}
\label{lemma:propscore_cvx_comb}
There exists a sequence of Borel measurable function $\{f_k(\cdot,\cdot)\}_{k=1}^K: \R^2 \to \R$ such that
\begin{enumerate}
	\item $f_k(Y_1, Y_0) \in [0,1]$ for each $k \in \{1,\ldots,K\}$, and $\sum_{k=1}^K f_k(Y_1, Y_0) = 1$,
	\item $p^{\text{mix}}(Y_1, Y_0) = \sum_{k=1}^K f_k(Y_1, Y_0) p^k(Y_1, Y_0)$,
\end{enumerate} 
almost surely under the distribution $F^{\text{mix}}_{Y_1,Y_0,X}$.
\end{lemma}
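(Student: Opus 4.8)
The plan is to build the weights $f_k$ from Radon--Nikodym derivatives of the component marginals of $(Y_1,Y_0)$ with respect to the mixture marginal. Let $\mu^k$ denote the marginal law of $(Y_1,Y_0)$ induced by $F^k_{Y_1,Y_0,X}$, and let $\mu^{\text{mix}} = \sum_{k=1}^K a_k \mu^k$ be the marginal law under $F^{\text{mix}}_{Y_1,Y_0,X}$. For any $k$ with $a_k = 0$, set $f_k \equiv 0$; such $k$ play no role below. For the remaining $k$, $\mu^k \leq a_k^{-1}\mu^{\text{mix}}$, so $\mu^k \ll \mu^{\text{mix}}$ and we may fix a Borel-measurable version of the Radon--Nikodym derivative $g_k := d\mu^k/d\mu^{\text{mix}} \geq 0$. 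Define $f_k(y_1,y_0) := a_k\, g_k(y_1,y_0)$.

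First I would verify Part 1. Each $f_k$ is Borel measurable and nonnegative by construction, and $\sum_{k=1}^K f_k = \sum_{k=1}^K a_k g_k$ is a density of $\sum_{k=1}^K a_k \mu^k = \mu^{\text{mix}}$ with respect to $\mu^{\text{mix}}$, hence equals $1$ for $\mu^{\text{mix}}$-almost every $(y_1,y_0)$, i.e. $\sum_{k=1}^K f_k(Y_1,Y_0) = 1$ almost surely under $F^{\text{mix}}_{Y_1,Y_0,X}$. Since the summands are nonnegative and sum to $1$ almost surely, each $f_k(Y_1,Y_0) \in [0,1]$ almost surely as well.

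Next I would establish Part 2 from the defining property of conditional expectation. Fix a bounded Borel function $h:\R^2\to\R$. Using $p^k(Y_1,Y_0) = \E^k[X\mid Y_1,Y_0]$, the tower property, and the change of measure $d\mu^k = g_k\, d\mu^{\text{mix}}$,
\[
\E^{\text{mix}}[X\, h(Y_1,Y_0)] = \sum_{k=1}^K a_k \E^k[X\, h(Y_1,Y_0)] = \sum_{k=1}^K a_k \int p^k\, h \, d\mu^k = \int \Big(\sum_{k=1}^K a_k g_k\, p^k\Big) h \, d\mu^{\text{mix}}.
\]
On the other hand, $\E^{\text{mix}}\big[\big(\sum_{k} f_k p^k\big)h(Y_1,Y_0)\big] = \int \big(\sum_{k} f_k p^k\big) h\, d\mu^{\text{mix}}$ by definition of $\mu^{\text{mix}}$ as the $(Y_1,Y_0)$-marginal of $F^{\text{mix}}$, and $\sum_{k} f_k p^k = \sum_{k} a_k g_k p^k$. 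Thus the two expectations agree for every bounded $h$; since $\sum_{k} f_k(Y_1,Y_0)p^k(Y_1,Y_0)$ is $\sigma(Y_1,Y_0)$-measurable and bounded (a convex combination of the $p^k \in [0,1]$ by Part 1), it is a version of $\E^{\text{mix}}[X\mid Y_1,Y_0] = p^{\text{mix}}(Y_1,Y_0)$, which is Part 2.

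The argument is essentially routine once the Radon--Nikodym construction is in place; the only point requiring care is the measure-theoretic bookkeeping --- ensuring the ``$\in[0,1]$'' and ``sums to $1$'' claims hold simultaneously, that every almost-sure statement is taken with respect to $F^{\text{mix}}$ rather than the individual components, and that degenerate weights $a_k = 0$ are handled --- all of which is dispatched above. When this lemma is invoked to show that a mixture of distributions satisfying joint $c$-dependence again satisfies joint $c$-dependence, the same reasoning applies verbatim after conditioning on $W=w$ throughout, and the conclusion $p^{\text{mix}} \in [\lc,\uc]$ then follows because a convex combination of values in $[\lc,\uc]$ lies in $[\lc,\uc]$.
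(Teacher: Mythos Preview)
Your proof is correct and follows essentially the same approach as the paper: both construct the weights as $f_k = d(a_k\mu^k)/d\mu^{\text{mix}}$ via Radon--Nikodym and then verify that $\sum_k f_k p^k$ is a version of $p^{\text{mix}}$. Your verification via arbitrary bounded test functions is slightly more streamlined than the paper's, which works with rectangular test sets and invokes the $\pi$--$\lambda$ theorem together with a uniqueness result from Billingsley, but the substance is identical.
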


\begin{proof}[Proof of Lemma \ref{lemma:propscore_cvx_comb}]
Let $\P^k$ denote the probability taken under cdf $F_{Y_1, Y_0, X}^k$ for $k \in \{1,\ldots,K,\text{mix}\}$. Then it follows by the definition of conditional probability that
\begin{align*}
	\P^\text{mix}(Y_1\leq y_1, Y_0\leq y_0, X = 1)
&= \Exp^\text{mix}\left[\indicator[Y_1\leq y_1, Y_0\leq y_0]p^\text{mix}(Y_1, Y_0)\right] \\
&= \int_{u \leq y_1, v\leq y_0} p^\text{mix}(u, v) \, d F^\text{mix}_{Y_1, Y_0},
\end{align*}
where the last line denotes the Lebesgue-Stieltjes integral with respect to the cdf $F_{Y_1, Y_0}^\text{mix}$.

Likewise, for each $k \in \{1,\ldots,K\}$, we have 
\begin{align*}
 \P^k(Y_1 \leq y_1, Y_0 \leq y_0, X = 1) = \int_{u \leq y_1, v\leq y_0} p^k(u, v) \, d F^k_{Y_1, Y_0}.
\end{align*}

Since \eqref{eq:mix_dist} implies that 
\[
	\P^\text{mix}(Y_1 \leq y_1, Y_0 \leq y_0, X = 1) = \sum_{k=1}^K a_k \P^k(Y_1\leq y_1, Y_0\leq y_0, X=1),
\]
we have
\begin{equation}
\label{eq:prob_cvx_comb}
	\int_{w \leq y_1, v\leq y_0} p^\text{mix}(w, v) \, d F^\text{mix}_{Y_1, Y_0} = \sum_{k=1}^K a_k \int_{u\leq y_1, v\leq y_0} p^k(u,v) \, dF^k_{Y_1,Y_0}.
\end{equation}

Following the Carathéodory extension theorem (e.g., \citet[Theorem 1.4.9]{Ash2000}), there exists unique Lebesgue-Stieltjes measures $\nu^{\text{mix}}$ and $\{\nu^k\}_{k=1}^K$ defined on $(\R^2, \mathcal{B}(\R^2))$ that  are consistent with cdfs $F^{\text{mix}}_{Y_1, Y_0}$ and $\{F^k_{Y_1,Y_0}\}_{k=1}^K$, respectively. Combined with \eqref{eq:mix_dist}, this implies
\begin{equation}\label{eq:measure_cvx_comb_rectangle}
	\nu^{\text{mix}}(A) =  \sum_{k=1}^K a_k \nu^k(A), \quad \text{for all } A \in \mathcal{A} \coloneqq \{(-\infty, y_0]\times (-\infty, y_1]: (y_0, y_1) \in \R^2 \}.
\end{equation}
It can be seen that $\mathcal{A}$ is a $\pi$-system, $\sigma(\mathcal{A}) = \mathcal{B}(\R^2)$, and the class of sets satisfying \eqref{eq:measure_cvx_comb_rectangle} constitutes a $\lambda$-system. Following from $\pi-\lambda$ Theorem, 
\begin{equation}\label{eq:measure_cvx_comb}
	\nu^{\text{mix}}(A) = \sum_{k=1}^K a_k \nu^k(A), \quad \text{for all } A \in \mathcal{B}(\R^2).
\end{equation}
From the above identity \eqref{eq:measure_cvx_comb}, we note that $a_k \nu^k \ll \nu^{\text{mix}}$ for all $k \in \{1,\ldots,K\}$. By Radon-Nikodym Theorem (e.g., \citet[p.386, Problem 54.1]{Royden2010}), there exist nonnegative Borel measurable functions $d (a_k \nu^k)/d \nu^{\text{mix}}$ such that the following equalities hold for all Borel sets $A \in \mathcal{B}(\R^2)$ and for each $k \in \{1,\ldots,K\}$:
\begin{align*}
	\int_A  p^k(u,v)  \frac{d (a_k \nu^k)}{d \nu^{\text{mix}}} \, d\nu^{\text{mix}} = \int_A p^{k}(u,v) \, d(a_k \nu^k)  = \int_A a_k p^{k}(u,v) \, d\nu^k.
\end{align*}

Taking $A = (-\infty, y_1] \times (-\infty, y_0]$ and combining these equalities across $k \in \{1,\ldots, K\}$ then gives 
\begin{align*}
\int_{u\leq y_1, v\leq y_0}  \sum_{k=1}^K p^{k}(u,v)  \frac{d (a_k \nu^k)}{d \nu^{\text{mix}}} \, d\nu^{\text{mix}}  &= \sum_{k=1}^K a_k \int_{u\leq y_1, v\leq y_0} p^k(u,v) \, d\nu^k\\
&= \int_{u\leq y_1, v\leq y_0} p^{\text{mix}}(u,v) \, d\nu^{\text{mix}}
\end{align*}
The second equality holds by \eqref{eq:prob_cvx_comb}. Since $\mathcal{A}$ constitutes a $\pi$-system, and $\R^2$ can be written as a countable union of elements in the class $\mathcal{A}$, applying \citet[Theorem 16.10.(iii)]{Billingsley1995} then leads to the following equality: 
\[
	p^{\text{mix}}(u,v) = \sum_{k=1}^K p^k(u,v)\frac{d(a_k\nu^k)}{d\nu^{\text{mix}}}(u,v)
\]
almost surely with respect to the measure $\nu^{\text{mix}}$ on $(\R^2, \mathcal{B}(\R^2))$. Replacing $(u,v)$ with random variables $(Y_1,Y_0)$, the above equality is equivalent to
\[
	p^{\text{mix}}(Y_1, Y_0) = \sum_{k=1}^K p^k(Y_1,Y_0)\frac{d(a_k\nu^k)}{d\nu^{\text{mix}}}(Y_1,Y_0)
\]
almost surely under the distribution $F^{\text{mix}}_{Y_1, Y_0}$. Next we show that the weights add up to one and they are non-negative.

Following from \eqref{eq:measure_cvx_comb}, the conclusion in \citet[p.386, Problem 54.2]{Royden2010} implies
\[
	\sum_{k=1}^K \frac{d (a_k \nu^k)}{d \nu^{\text{mix}}} = \frac{d\nu^{\text{mix}}}{d\nu^{\text{mix}}} = 1, \quad \text{almost surely-}\nu^{\text{mix}}.
\]
By definition, Radon-Nikodym derivatives are nonnegative. So we have
\[
	\frac{d (a_k \nu^k)}{d \nu^{\text{mix}}} = 1 - \sum_{j\neq k}\frac{d (a_j \nu^j)}{d \nu^m} \in [0,1],  \quad \text{almost surely-}\nu^{\text{mix}}.
\]
Therefore, we have shown that $p^{\text{mix}}(Y_1, Y_0)$ is a convex combination of $\{p^k(Y_1,Y_0)\}_{k=1}^K$ almost surely, with weights $d(a_k\nu^k)/d\nu^{\text{mix}}$ being a non-negative measurable function of $(Y_1, Y_0)$, as desired.
\end{proof}

\begin{proof}[Proof of Theorem \ref{thm:cdf_sharp_jointcdep}]
Fix a $w\in\supp(W)$ and $(\varepsilon, \gamma) \in [0,1]^2$, we prove this by constructing a probability distribution $\widetilde{\P}$ for $(Y_1,Y_0,X)$ conditional on $W=w$ such that for all $y \in \R$ and $x \in \{0,1\}$, the following conditions hold
\begin{enumerate}
	\item $\widetilde{\P}(Y_1 \leq y \mid W=w) = \varepsilon \underline{F}_{Y_1 \mid W}(y \mid w) + (1-\varepsilon) \overline{F}_{Y_1 \mid W}(y \mid w)$ and  \\
	$\widetilde{\P}(Y_0 \leq y \mid W=w) =  \gamma \underline{F}_{Y_0 \mid W}(y \mid w) + (1-\gamma)\overline{F}_{Y_0 \mid W}(y \mid w)$;
	\item $\widetilde{\P}(X=x \mid W=w) = p_{x|w}$;
	\item $\widetilde{\P}(Y_x \leq y \mid X=x,W=w) = F_{Y \mid X,W}(Y \mid X,w)$;
	\item $\widetilde{\P}(X=1 \mid Y_1,Y_0,W=w) \in [\lc,\uc]$ for $\widetilde{\P}$-almost surely.
\end{enumerate}
Compared to the proof of Theorem \ref{thm:cdf_sharp_margcdep}, we remove the requirement that the copulas between $Y_1$ and $Y_0$ conditional on $(X,W)$ can be arbitrary.

As in Lemma \ref{lemma:attainability_jointcdep}, we have constructed the following four joint (conditional) cdfs of $(Y_1,Y_0,X) \mid W=w$:
\begin{align*}
    F^{ul}(y_1, y_0, x|w) 
    &= x\min\left\{F_{Y \mid X,W}(y_1 \mid 1,w), \underline{F}_{Y_0 \mid X}(y_0 \mid 1)\right\}p_{1|w} \\
    &\quad + (1-x)\min\left\{\overline{F}_{Y_1 \mid X,W}(y_1 \mid 0,w), F_{Y \mid X,W}(y_0|0,w)\right\}p_{0|w} \\
    F^{lu}(y_1, y_0, x|w) 
    &= x\min\left\{F_{Y \mid X,W}(y_1 \mid 1), \overline{F}_{Y_0 \mid X,W}(y_0 \mid 1,w)\right\}p_{1|w} \\
    &\quad + (1-x)\min\left\{\underline{F}_{Y_1 \mid X,W}(y_1 \mid 0,w), F_{Y \mid X,W}(y_0|0,w)\right\}p_{0|w} \\
    F^{uu}(y_1, y_0, x|w) 
    &= x\max\left\{F_{Y \mid X,W}(y_1 \mid 1,w) + \overline{F}_{Y_0 \mid X,W}(y_0 \mid 1,w) - 1, 0\right\}p_{1|w} \\
    &\quad + (1-x)\min\left\{\overline{F}_{Y_1 \mid X,W}(y_1 \mid 0,w) + F_{Y \mid X,W}(y_0|0,w) - 1, 0\right\}p_{0|w} \\
    F^{ll}(y_1, y_0, x|w) 
    &= x\max\left\{F_{Y \mid X,W}(y_1 \mid 1,w) + \underline{F}_{Y_0 \mid X,W}(y_0 \mid 1,w) - 1, 0\right\}p_{1|w} \\
    &\quad + (1-x)\min\left\{\underline{F}_{Y_1 \mid X,W}(y_1 \mid 0,w) + F_{Y \mid X,W}(y_0|0,w) - 1, 0\right\}p_{0|w}.
\end{align*}
Let the joint distribution of $(Y_1, Y_0, X) \mid W=w$ be defined as below:
\begin{equation}\label{eq:mixture_sharpness_jointcdep}
\begin{aligned}
    \widetilde{\P}(Y_1\leq y_1, Y_0\leq y_0, X=x \mid W=w)  
    &= \gamma \left[ \varepsilon F^{ll}(y_1,y_0,x|w) + (1-\varepsilon)F^{ul}(y_1,y_0,x|w) \right] \\
    &\quad + (1-\gamma) \left[\varepsilon F^{lu}(y_1,y_0,x|w) + (1-\varepsilon) F^{uu}(y_1,y_0,x|w)\right].
\end{aligned}
\end{equation}

As shown in Lemma \ref{lemma:attainability_jointcdep}, the functions $F^{ll}$, $F^{ul}$, $F^{lu}$, and $F^{uu}$ are valid cdfs, hence their convex combination \eqref{eq:mixture_sharpness_jointcdep} also yields a valid cdf. Next we verify that this cdf satisfies conditions 1-4 listed above, thus concluding that $(\varepsilon\underline{F}_{Y_1 \mid W} + (1-\varepsilon)\overline{F}_{Y_1 \mid W}, \gamma\underline{F}_{Y_0 \mid W} + (1-\gamma)\overline{F}_{Y_0 \mid W})$ is in the identified set.

\medskip
\noindent \textbf{Verifying Condition 1:} For $y \in \R$, we have 
\begin{align*}
    &\widetilde{\P}(Y_1 \leq y \mid W=w) \\
    &= \lim_{y_0 \to +\infty}\widetilde{\P}(Y_1 \leq y, Y_0 \leq y_0, X=1 \mid W=w) + \lim_{y_0 \to +\infty}\widetilde{\P}(Y_1 \leq y, Y_0 \leq y_0, X=0 \mid W=w) \\
    &= \sum_{x=0,1}\gamma \left[ \varepsilon \lim_{y_0 \to +\infty}F^{ll}(y,y_0,x|w) + (1-\varepsilon)\lim_{y_0 \to +\infty}F^{ul}(y,y_0,x|w) \right] \\
    &\quad + \sum_{x=0,1} (1-\gamma) \left[\varepsilon \lim_{y_0 \to +\infty} F^{lu}(y,y_0,x|w) + (1-\varepsilon) \lim_{y_0 \to +\infty}F^{uu}(y,y_0,x|w)\right] \\
    &= \gamma \left[\varepsilon \underline{F}_{Y_1 \mid W}(y \mid w) + (1-\varepsilon) \overline{F}_{Y_1 \mid W}(y \mid w)\right] + (1-\gamma) \left[\varepsilon \underline{F}_{Y_1 \mid W}(y \mid w) + (1-\varepsilon) \overline{F}_{Y_1 \mid W}(y \mid w)\right] \\
    &= \varepsilon \underline{F}_{Y_1 \mid W}(y \mid w) + (1-\varepsilon) \overline{F}_{Y_1 \mid W}(y \mid w),
\end{align*}
where the third equality uses the conclusion from condition 1 in the proof of Lemma \ref{lemma:attainability_jointcdep}.

Likewise, 
\begin{align*}
    &\widetilde{\P}(Y_0 \leq y \mid W=w) \\
    &= \lim_{y_1 \to +\infty}\widetilde{\P}(Y_1 \leq y_1, Y_0 \leq y, X=1 \mid W=w) + \lim_{y_1 \to +\infty}\widetilde{\P}(Y_1 \leq y_1, Y_0 \leq y_0, X=0 \mid W=w) \\
    &= \sum_{x=0,1} \gamma \left[ \varepsilon \lim_{y_1 \to +\infty} F^{ll}(y_1,y,x|w) + (1-\varepsilon)\lim_{y_1 \to +\infty}F^{ul}(y_1,y,x|w) \right] \\
    &\quad + \sum_{x=0,1} (1-\gamma) \left[\varepsilon \lim_{y_1 \to +\infty} F^{lu}(y_1,y,x|w) + (1-\varepsilon)\lim_{y_1 \to +\infty} F^{uu}(y_1,y,x|w)\right] \\
    &= \gamma\left[\varepsilon \underline{F}_{Y_0 \mid W}(y \mid w) + (1-\varepsilon) \underline{F}_{Y_0 \mid W}(y \mid w)\right] + (1-\gamma) \left[\varepsilon \overline{F}_{Y_0 \mid W}(y \mid w) + (1-\varepsilon) \overline{F}_{Y_0 \mid W}(y \mid w)\right] \\
    &= \gamma \underline{F}_{Y_0 \mid W}(y \mid w) + (1-\gamma) \overline{F}_{Y_0 \mid W}(y \mid w).
\end{align*}

\medskip
\noindent \textbf{Verifying Condition 2:} For $x\in \{0,1\}$, we have that 
\begin{align*}
    \widetilde{\P}(X=x \mid W=w) 
    &= \lim_{y_0,y_1 \to +\infty}\widetilde{\P}(Y_1 \leq y_1, Y_0 \leq y_0, x \mid W=w) \\
    &= \gamma \left[ \varepsilon \lim_{y_0,y_1 \to +\infty} F^{ll}(y_1,y_0,x|w) + (1-\varepsilon)\lim_{y_0,y_1 \to +\infty} F^{ul}(y_1,y_0,x|w) \right]  \\
    &\quad + (1-\gamma) \left[\varepsilon \lim_{y_0,y_1 \to +\infty} F^{lu}(y_1,y_0,x|w) + (1-\varepsilon) \lim_{y_0,y_1 \to +\infty} F^{uu}(y_1,y_0,x|w)\right] \\
    &= \gamma \left[ \varepsilon p_{x|w} + (1-\varepsilon) p_{x|w} \right] +
    (1-\gamma) \left[\varepsilon p_{x|w} + (1-\varepsilon) p_{x|w} \right] \\
    &= p_{x|w},
\end{align*}
where the third equality follows by the condition 2 in the proof of Lemma \ref{lemma:attainability_jointcdep}.

\medskip
\noindent \textbf{Verifying Condition 3:} Similar to the proof of condition 2, condition 3 follows by the fact that all the cdfs $F^{ul}$, $F^{lu}$, $F^{uu}$, and $F^{ll}$ satisfy condition 3 as argued in Lemma \ref{lemma:attainability_jointcdep}. Hence it follows that their convex combination $\widetilde{\P}$ also satisfies this condition.

\medskip
\noindent \textbf{Verifying Condition 4:} As in the proof of Lemma \ref{lemma:attainability_jointcdep}, we established propensity score functions $p^{ul}$, $p^{lu}$, $p^{uu}$, and $p^{ll}$ under the cdfs $F^{ul}$, $F^{lu}$, $F^{uu}$, and $F^{ll}$, respectively. Applying Lemma \ref{lemma:propscore_cvx_comb} to the conditional mixture distribution $\widetilde{\P}$ gives the propensity score as below
\begin{align*}
    &\widetilde{\E}(X \mid Y_1,Y_0, W=w) \\
    &= \omega^{ul}(Y_1,Y_0)p^{ul}(Y_1,Y_0,w) + \omega^{lu}(Y_1,Y_0)p^{lu}(Y_1,Y_0,w) + \omega^{uu}(Y_1,Y_0)p^{uu}(Y_1,Y_0,w) + \omega^{ll}(Y_1,Y_0)p^{ll}(Y_1,Y_0,w)
\end{align*}
almost surely under $\widetilde{\P}$, where $\omega^k(Y_1, Y_0) \in [0,1]$, and $\sum_{k} \omega^k(Y_1,Y_0) = 1$ almost surely under $\widetilde{\P}$ for $k \in \{ul, lu, uu, ll\}$. Since we have argued that $p^k(u,v,w) \in [\lc,\uc]$ for $(u,v) \in \R^2$, therefore, $\widetilde{\E}(X \mid Y_1,Y_0,W=w) \in [\lc,\uc]$ almost surely under $\widetilde{\P}$, which concludes the proof.
\end{proof}

\subsection{Proofs for Section \ref{subsec:monotonic_parameter_bounds}}\label{appendix:parameter_bounds_proofs}

\begin{proof}[Proof of Theorem \ref{prop:IDset_monotonic_param}]
First, we prove this proposition when Assumption \ref{assn:marginal_cdep} holds. Fix an arbitrary $(F_1,F_0,C) \in \mathcal{I}_0^\text{marg}(F_{Y,X,W};c)$. By the definition of $\mathcal{I}_0^\text{marg}(F_{Y,X,W};c)$, there exists a joint cdf $F_{Y_1,Y_0 \mid X,W}$ that generates $(F_1,F_0,C)$ and satisfies Assumption \ref{assn:marginal_cdep}. By Lemma \ref{lemma:cdf_bounds_margcdep} and the monotonicity assumption, we have 
\[
	\theta(\overline{F}_{Y_1 \mid W},\underline{F}_{Y_0 \mid W},F_{Y,X,W})
	\leq
	\theta(F_1,F_0,F_{Y,X,W})
	\leq
	\theta(\underline{F}_{Y_1 \mid W},\overline{F}_{Y_0 \mid W},F_{Y,X,W}).
\]
Since $(F_1, F_0, C)$ is arbitrary, we have 
\[
	\theta(\overline{F}_{Y_1 \mid W},\underline{F}_{Y_0 \mid W},F_{Y,X,W})
	\leq
	\inf_{(F_1,F_0,C)\in \mathcal{I}_0^\text{marg}(F_{Y,X,W;c})} \theta(F_1,F_0,F_{Y,X,W})
\]
and
\[
\sup_{(F_1,F_0,C)\in \mathcal{I}_0^\text{marg}(F_{Y,X,W;c})} \theta(F_1,F_0,F_{Y,X,W})
	\leq
	\theta(\underline{F}_{Y_1 \mid W},\overline{F}_{Y_0 \mid W},F_{Y,X,W}).
\]
Furthermore, as demonstrated by Theorem \ref{thm:cdf_sharp_margcdep}, $(\overline{F}_{Y_1 \mid W}, \underline{F}_{Y_0 \mid W}, C)$ and $(\underline{F}_{Y_1 \mid W}, \overline{F}_{Y_0 \mid W}, C)$ are contained in the identified set for any copula $C\in\mathcal{C}_{1,0 \mid X,W}$. This implies
\[
	\inf_{(F_1,F_0,C)\in \mathcal{I}_0^\text{marg}(F_{Y,X,W;c})} \theta(F_1,F_0,F_{Y,X,W}) \leq \theta(\overline{F}_{Y_1 \mid W},\underline{F}_{Y_0 \mid W},F_{Y,X,W})
\]
and
\[
	\sup_{(F_1,F_0,C)\in \mathcal{I}_0^\text{marg}(F_{Y,X,W;c})} \theta(F_1,F_0,F_{Y,X,W}) \geq \theta(\underline{F}_{Y_1 \mid W},\overline{F}_{Y_0 \mid W},F_{Y,X,W}).
\]
Thus we conclude that 
\begin{align*}
	& \inf_{(F_1,F_0,C)\in \mathcal{I}_0^\text{marg}(F_{Y,X,W;c})} \theta(F_1,F_0,F_{Y,X,W}) = \theta(\overline{F}_{Y_1 \mid W},\underline{F}_{Y_0 \mid W},F_{Y,X,W}) \\
	& \sup_{(F_1,F_0,C)\in \mathcal{I}_0^\text{marg}(F_{Y,X,W;c})} \theta(F_1,F_0,F_{Y,X,W}) = \theta(\underline{F}_{Y_1 \mid W},\overline{F}_{Y_0 \mid W},F_{Y,X,W}).
\end{align*}

Note that Theorem \ref{thm:cdf_sharp_margcdep} also implies that 
$(\varepsilon \underline{F}_{Y_1 \mid W} + (1-\varepsilon) \overline{F}_{Y_1 \mid W},\gamma \underline{F}_{Y_0 \mid W} + (1-\gamma) \overline{F}_{Y_0 \mid W}, C_{1,0 \mid X,W})$ belongs to the identified set $I^\text{marg}_0(F_{Y,X,W};c)$ for each $(\varepsilon, \gamma)\in [0,1]^2$. By the continuity of the mapping $(\epsilon,\gamma)\mapsto\theta(\varepsilon \underline{F}_{Y_1 \mid W} + (1-\epsilon)\overline{F}_{Y_1 \mid W}, \gamma \underline{F}_{Y_0 \mid W} + (1-\gamma)\overline{F}_{Y_0 \mid W})$ and the definition of $\mathcal{I}_\theta^\text{marg}(F_{Y,X,W};c)$, the sharpness of the interior then follows by the intermediate value theorem. The proof follows the same arguments when imposing Assumption \ref{assn:joint_cdep}, where we use Theorem \ref{thm:cdf_sharp_jointcdep} instead of Theorem \ref{thm:cdf_sharp_margcdep}.
\end{proof}

\begin{proof}[Proof of Lemma \ref{lemma:monotonic parameters}]
	
\textbf{Part 1:} Because $Q_{Y_x}(U) \sim  Y_x$ when $U \sim \text{Unif}(0,1)$, we can write 
	\[
	\E[Y_x] = \E[Q_{Y_x}(U)] = \int_0^1 Q_{Y_x}(\tau) \, d\tau = \int_0^1 \theta_Q(F_{Y_x};\tau) \, d\tau.
	\] 
	By the proof of Part 2 below, $\theta_Q(F_{Y_x};\tau)$ is increasing in $F_{Y_x}$ for all $\tau \in (0,1)$. This implies  $\theta_{\E}(F_{Y_x}) = \int_0^1 \theta_Q(F_{Y_x};\tau) \, d\tau$ is also increasing. Continuity follows from
\begin{align*}
	\theta_{\E}(\varepsilon \underline{F}_{Y_x} + (1-\varepsilon)\overline{F}_{Y_x}) &= \int y \, d\left(\varepsilon \underline{F}_{Y_x}(y) + (1-\varepsilon)\overline{F}_{Y_x}(y)\right)\\
	&= \int y \, d\left(\varepsilon \underline{F}_{Y_x}(y)\right) + \int y \, d\left((1-\varepsilon) \overline{F}_{Y_x}(y)\right)\\
	&= \varepsilon \theta_{\E}(\underline{F}_{Y_x}) + (1-\varepsilon) \theta_{\E}(\overline{F}_{Y_x})
\end{align*}
being continuous in $\varepsilon$ over $\varepsilon \in [0,1]$.	
	
\textbf{Part 2:}
Suppose $F_{Y_x}(y) \leq F_{Y_x}'(y)$ for all $y \in \R$. Therefore, for any $\tau \in (0,1)$, $\{y \in \R: F_{Y_x}(y) \geq \tau\} \subseteq \{y \in \R: F_{Y_x}'(y) \geq \tau\}$. Hence, 
\[
	\theta_Q(F_{Y_x};\tau) = \inf\{y \in \R: F_1(y) \geq \tau\} \geq \inf\{y \in \R: F_{Y_x}'(y) \geq \tau\} = \theta_Q(F_{Y_x}';\tau).
\] 
Since $F_{Y_x} \succeq F_{Y_x}'$, we have that $\theta_Q(F_{Y_x};\tau)$ is increasing in $F_{Y_x}$.

\textbf{Part 3:} 
Suppose $F_{Y_x}(y) \leq F_{Y_x}'(y)$ for all $y \in \R$. Denote by 
\begin{align*}
& F_{Y_x \mid X}(y \mid 1-x) = \frac{F_{Y_x}(y) - F_{Y \mid X}(Y \mid X)p_x}{p_{1-x}}, \\
& F_{Y_x \mid X}'(y \mid 1-x) = \frac{F_{Y_x}'(y) - F_{Y \mid X}(Y \mid X)p_x}{p_{1-x}}.
\end{align*} 
Then $F_{Y_x \mid X}(\cdot \mid 1-x) \succeq F_{Y_x \mid X}'(\cdot \mid 1-x)$ and, by Part 2, 
\begin{align*}
	\theta_{CQ}(F_{Y_x};\tau) &= \theta_Q(F_{Y_x \mid X}(\cdot \mid 1-x);\tau) \geq \theta_Q(F_{Y_x \mid X}'(\cdot \mid 1-x);\tau) = \theta_{CQ}(F_{Y_x}';\tau)
\end{align*}
for any $\tau \in (0,1)$. Therefore, $\theta_{CQ}$ is increasing in $F_{Y_x}$.

\textbf{Part 4:} Suppose $F_{Y_1 \mid W} \succeq F_{Y_1'|W}$. This implies
\[
	\int y \, dF_{Y_1 \mid W}(y \mid w) \geq \int y \, dF_{Y_1'|W}(y \mid w)  \text{ for all } w \in \supp(W)
\]
which in turn implies
\[
	\1\left(\int y \, dF_{Y_1 \mid W}(y \mid w) - \E[Y_0 \mid W=w] \leq z\right) \leq \1\left(\int y \, dF_{Y_1'|W}(y \mid w) - \E[Y_0 \mid W=w] \leq z\right)
\]
for all $w \in \supp(W)$ and hence
\[
	\P\left(\int y \, dF_{Y_1 \mid W}(y \mid W) - \E[Y_0 \mid W] \leq z\right) \leq \P\left(\int y \, dF_{Y_1'|W}(y \mid W) - \E[Y_0 \mid W] \leq z\right).
\]
The first statement holds by by Part 1 of this lemma, the second holds directly, and the third by integrating over the distribution of $W$. Therefore, this last cdf is decreasing in $F_{Y_1 \mid W}$. By Part 2 of this lemma, its corresponding quantile will be decreasing in $F_{Y_1 \mid W}$. This parameter is decreasing in $F_{Y_0 \mid W}$ because of the minus sign inside the CATE.

\textbf{Part 5:} Suppose $F_{Y_x} \succeq F_{Y_x}'$ for $x \in \{0,1\}$. Then, $F_{Y_x}(y) \leq F_{Y_x}'(y)$ for all $y \in \R$. Therefore, for any $(y_1,y_0) \in \R^2$ and copula $C$
\begin{align*}
	\theta(F_{Y_1},F_{Y_0},C;y_1,y_0) = C(F_{Y_1}(y_0),F_{Y_0}(y_0)) \leq C(F_{Y_1}'(y_0),F_{Y_0}'(y_0)) = \theta(F_{Y_1}',F_{Y_0}',C;y_1,y_0)
\end{align*}
because $C$, as a copula, is nondecreasing in its arguments. We conclude that this parameter is decreasing in both $F_{Y_1}$ and $F_{Y_0}$.

\textbf{Part 6:} We begin by showing that $(Y_1,Y_0) \sim (Q_{Y_1}(U_1),Q_{Y_0}(U_0))$ where $(U_1,U_0)$ have joint cdf $C$. To see this, note that $F_{Y_1,Y_0}(y_1,y_0) =  C(F_{Y_1}(y_1),F_{Y_0}(y_0))$ by Sklar's Theorem. Also,
\begin{align*}
	F_{Y_1,Y_0}(y_1,y_0) &= C(F_{Y_1}(y_1),F_{Y_0}(y_0))\\
	&= \P(U_1 \leq F_{Y_1}(y_1), U_0 \leq F_{Y_0}(y_0))\\
	&= \P(Q_{Y_1}(U_1) \leq y_1, Q_{Y_0}(U_0) \leq y_0),
\end{align*}
where the third equality follows from Lemma \ref{lemma:quantile_props}.1.

Based on this, we can write the functional as
	\begin{align*}
		\theta_{\text{DTE}}(F_{Y_1},F_{Y_0},C;z) &= \E[\1(Y_1 - Y_0 \leq z)]\\
		&=  \E[\1(Q_{Y_1}(U_1) - Q_{Y_0}(U_0) \leq z)]\\
		&=  \int \1(Q_{Y_1}(u_1) - Q_{Y_0}(u_0) \leq z)\, dC(u_1,u_0).
	\end{align*}
Now suppose that $F_{Y_1} \succeq F_{Y_1}'$. By Part 2 above, this implies that $Q_{Y_1}(u_1) \geq Q_{Y_1}'(u_1)$ for all $u_1 \in (0,1)$ and thus
	\begin{align*}
		\theta_{\text{DTE}}(F_{Y_1},F_{Y_0},C;z) &=  \int \1(Q_{Y_1}(u_1) - Q_{Y_0}(u_0) \leq z)\, dC(u_1,u_0)\\
		&\leq \int \1(Q_{Y_1}'(u_1) - Q_{Y_0}(u_0) \leq z)\, dC(u_1,u_0)\\
		&= \theta_{\text{DTE}}(F_{Y_1}',F_{Y_0},C;z).
	\end{align*}
	Therefore, $\theta_{\text{DTE}}(F_{Y_1},F_{Y_0},C;z)$ is decreasing in $F_{Y_1}$. An analogous argument shows that it is increasing in $F_{Y_0}$.
\end{proof}

\section{Proofs for Section \ref{sec:mainTreatmentEffectBounds}}

\begin{proof}[Proof of Proposition \ref{prop:joint_cdf_bounds}]
	By Lemma \ref{lemma:cdf_bounds_margcdep} and the monotonicity of copulas in their arguments, we have that
	\begin{align*}
		&\sup_{(F_1,F_0,C) \in \mathcal{I}_0^\text{marg}(F_{Y,X,W};c)} \theta_\text{CDF}(F_1,F_0,C,F_{Y,X,W};y_1,y_0) \\
		&\leq \sup_{C \in \mathcal{C}_{1,0 \mid X,W}}\theta_\text{CDF}(\overline{F}_{Y_1 \mid X,W},\overline{F}_{Y_0 \mid X,W},C,F_{Y,X,W};y_1,y_0)\\
		&= \theta_\text{CDF}(\overline{F}_{Y_1 \mid X,W},\overline{F}_{Y_0 \mid X,W},\overline{C},F_{Y,X,W};y_1,y_0).
	\end{align*}
	The equality follows from the Fr\'echet-Hoeffding bounds. Similarly, 
	\begin{align*}
		&\inf_{(F_1,F_0,C) \in \mathcal{I}_0^\text{marg}(F_{Y,X,W};c)} \theta_\text{CDF}(F_1,F_0,C,F_{Y,X,W};y_1,y_0) \\
		&\geq \inf_{C \in \mathcal{C}_{1,0 \mid X,W}}\theta_\text{CDF}(\underline{F}_{Y_1 \mid X,W},\underline{F}_{Y_0 \mid X,W},C,F_{Y,X,W};y_1,y_0)\\
		&= \theta_\text{CDF}(\underline{F}_{Y_1 \mid X,W},\underline{F}_{Y_0 \mid X,W},\underline{C},F_{Y,X,W};y_1,y_0).
	\end{align*}
	To show sharpness of the interval \eqref{eq:IDset_jointcdf}, consider the following choices for conditional cdfs and copulas:
	\begin{align*}
		(\varepsilon_1 \underline{F}_{Y_1 \mid X,W} + (1 - \varepsilon_1)\overline{F}_{Y_1 \mid X,W},\varepsilon_2 \underline{F}_{Y_0 \mid X,W} + (1 - \varepsilon_2)\overline{F}_{Y_0 \mid X,W},  \varepsilon_3\underline{C} + (1-\varepsilon_3)\overline{C})
	\end{align*}
	for $\varepsilon \coloneqq (\varepsilon_1,\varepsilon_2,\varepsilon_3)  \in [0,1]^3$. For any $\varepsilon \in [0,1]^3$, this triple belongs to $\mathcal{I}_0^\text{marg}(F_{Y,X,W};c)$. Setting $\varepsilon = (0,0,0)$ and $\varepsilon = (1,1,1)$ yields the upper and lower bounds of the interval, so the bounds are sharp. To show the interior is sharp, consider the function 
	\begin{align*}
		\varepsilon &\mapsto 
		\theta_\text{CDF}\left(
		\begin{array}{c}
		\varepsilon_1 \underline{F}_{Y_1 \mid X,W} + (1 - \varepsilon_1)\overline{F}_{Y_1 \mid X,W}, \\
		\varepsilon_2 \underline{F}_{Y_0 \mid X,W} + (1 - \varepsilon_2)\overline{F}_{Y_0 \mid X,W}, \\
		\varepsilon_3 \underline{C} + (1-\varepsilon_3)\overline{C}, \\
		F_{Y,X,W}
		\end{array};y_1,y_0\right)\\
		&= \varepsilon_3 
		\E\left[
		\max\left\{
		\begin{array}{c}
		\varepsilon_1 \underline{F}_{Y_1 \mid X,W}(y_1 \mid X,W) + (1 - \varepsilon_1)\overline{F}_{Y_1 \mid X,W}(y_1 \mid X,W) \\
		+\varepsilon_2 \underline{F}_{Y_0 \mid X,W}(y_0 \mid X,W) + (1 - \varepsilon_2)\overline{F}_{Y_0 \mid X,W}(y_0 \mid X,W) - 1
		\end{array},
		0
		\right\}\right]\\
		&\quad + (1-\varepsilon_3) 
		\E
		\left[\min
		\left\{
		\begin{array}{c}
		\varepsilon_1 \underline{F}_{Y_1 \mid X,W}(y_1 \mid X,W) + (1 - \varepsilon_1)\overline{F}_{Y_1 \mid X,W}(y_1 \mid X,W), \\
		\varepsilon_2 \underline{F}_{Y_0 \mid X,W}(y_0 \mid X,W) + (1 - \varepsilon_2)\overline{F}_{Y_0 \mid X,W}(y_0 \mid X,W)
		\end{array}
		\right\}\right].
	\end{align*}
	This mapping is continuous in $\varepsilon_3$. It is also continuous in $\varepsilon_1$ and $\varepsilon_2$ since the functions $(u,v) \mapsto \underline{C}(u,v)$ and $(u,v) \mapsto \overline{C}(u,v)$ are both continuous, and by the dominated convergence theorem. Therefore, by the intermediate value theorem, all values in the interval \eqref{eq:IDset_jointcdf} are attained and thus the identified set is this interval.
\end{proof}

\begin{proof}[Proof of Proposition \ref{prop:DTE_bounds}]
	By lemmas \ref{lemma:cdf_bounds_margcdep} and \ref{lemma:monotonic parameters}.5
	\begin{align*}
		&\sup_{(F_1,F_0,C) \in \mathcal{I}_0^\text{marg}(F_{Y,X,W};c)} \theta_\text{DTE}(F_{Y_1 \mid X,W},F_{Y_0 \mid X,W},C_{1,0 \mid X,W},F_{Y,X,W};z) \\
		&\leq \sup_{C \in \mathcal{C}_{1,0 \mid X,W}}\theta_\text{DTE}(\overline{F}_{Y_1 \mid X,W},\underline{F}_{Y_0 \mid X,W},C,F_{Y,X,W};z).
	\end{align*}
	By Lemma 2.1 in \cite{FanPark2010},
	\begin{equation}
	\label{eq:DTE_upper_bound}
	\begin{aligned}
		&\sup_{C \in \mathcal{C}_{1,0 \mid X,W}} \theta_\text{DTE}(\overline{F}_{Y_1 \mid X,W},\underline{F}_{Y_0 \mid X,W},C,F_{Y,X,W};z)	\\	
	 	&\leq  1 +  \E\left[\min\left\{\inf_{y \in \R} \left(\overline{F}_{Y_1 \mid X,W}(y \mid X,W) - \underline{F}_{Y_0 \mid X,W}(y - z \mid X,W)\right),0\right\}\right].
	\end{aligned}
	\end{equation}
	
	This bound can be attained because the cdf pair $(\overline{F}_{Y_1 \mid X,W},\underline{F}_{Y_0 \mid X,W})$ is attainable by Theorem \ref{thm:cdf_sharp_margcdep}, and the bound in \eqref{eq:DTE_upper_bound} is attained by Lemma 2.1 in \cite{FanPark2010} since the set of conditional copulas under marginal $c$-dependence is unrestricted.
	
	Similar proof can be used to show that the lower bound 
	\[
	\E\left[\max\left\{\sup_{y \in \R} \left(\underline{F}_{Y_1 \mid X,W}(y \mid X,W) - \overline{F}_{Y_0 \mid X,W}(y - z \mid X,W)\right),0\right\}\right]
	\]
	is sharp as well.
\end{proof}

\section{Appendix: Explicit Bounds on Expected Potential Outcomes}
\label{appendix:bounds_exp_outcomes}

\begin{lemma}\label{lemma:integrals of quantiles}
	Let $Y$ be random variable with cdf $F$ and quantile function $Q$. Suppose $\E(|Y|) < \infty$. Then, for $a \in (0,1)$:
	\begin{align*}
		\int_0^a Q(u) \, du &= \E[Y \mid Y \leq Q(a)] F(Q(a)) - Q(a)(F(Q(a)) - a)\\
		\int_a^1 Q(u) \, du &= \E[Y \mid Y > Q(a)](1 - F(Q(a))) + Q(a)(F(Q(a)) - a).
	\end{align*}
	If $\P(Y = Q(a)) = 0$, then 
	\begin{align*}
		\int_0^a Q(u) \, du &= \E[Y \mid Y \leq Q(a)] a\\
		\int_a^1 Q(u) \, du &= \E[Y \mid Y > Q(a)](1 - a).
	\end{align*}
\end{lemma}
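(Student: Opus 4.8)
The plan is to reduce everything to the quantile (probability integral) transform: if $U \sim \mathrm{Unif}(0,1)$ then $Q(U)$ has cdf $F$, which is the same device already used in the proof of Lemma \ref{lemma:monotonic parameters}. Since $\E|Y| < \infty$, the integrals $\int_0^a Q(u)\,du = \E[Q(U)\1(U \le a)]$ and $\int_a^1 Q(u)\,du = \E[Q(U)\1(U > a)]$ are well defined and may be split freely. The entire computation then hinges on comparing the event $\{U \le a\}$ with $\{Q(U) \le Q(a)\}$.

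First I would record two consequences of Lemma \ref{lemma:quantile_props}: part 1 (applied with $x = Q(a)$ and $p = U$) gives $\{Q(U) \le Q(a)\} = \{U \le F(Q(a))\}$, and part 2 gives $F(Q(a)) \ge a$. Hence $\{U \le a\} \subseteq \{Q(U) \le Q(a)\}$ and the two events differ exactly by $A \coloneqq \{a < U \le F(Q(a))\}$, with $\P(A) = F(Q(a)) - a$. The key point is that $Q(U) = Q(a)$ on $A$: monotonicity of $Q$ and $U > a$ give $Q(U) \ge Q(a)$, while $U \le F(Q(a))$ gives $Q(U) \le Q(a)$; consequently $\E[Q(U)\1_A] = Q(a)\big(F(Q(a)) - a\big)$.

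Next, writing $\1(U \le a) = \1(Q(U) \le Q(a)) - \1_A$ and using $Q(U) \overset{d}{=} Y$,
\[
\int_0^a Q(u)\,du = \E[Y\1(Y \le Q(a))] - Q(a)\big(F(Q(a)) - a\big) = \E[Y \mid Y \le Q(a)]\,F(Q(a)) - Q(a)\big(F(Q(a)) - a\big),
\]
where the conditional expectation is legitimate because $\P(Y \le Q(a)) = F(Q(a)) \ge a > 0$. The second identity follows symmetrically from $\1(U > a) = \1(Q(U) > Q(a)) + \1_A$, reading $\E[Y \mid Y > Q(a)]\,(1 - F(Q(a)))$ as $0$ in the degenerate case $F(Q(a)) = 1$. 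For the addendum, if $\P(Y = Q(a)) = 0$ then $F$ is continuous at $Q(a)$, so Lemma \ref{lemma:quantile_props}.3 gives $F(Q(a)) = F(Q(a)-) \le a$, which together with $F(Q(a)) \ge a$ forces $F(Q(a)) = a$; substituting $F(Q(a)) - a = 0$ yields the simplified formulas.

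I expect the only mildly delicate points to be the bookkeeping around a possible atom of $Y$ at $Q(a)$ (i.e.\ the case $F(Q(a)) > a$, which is exactly what the correction term $Q(a)(F(Q(a)) - a)$ accounts for) and the degenerate endpoint $F(Q(a)) = 1$; beyond that the argument is a routine application of the quantile transform together with the elementary quantile/cdf inequalities already recorded in Lemma \ref{lemma:quantile_props}.
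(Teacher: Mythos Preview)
Your proposal is correct and follows essentially the same route as the paper: both use the quantile transform $Q(U)\overset{d}{=}Y$, decompose $\1(U\le a)=\1(Q(U)\le Q(a))-\1_A$ with $A=\{a<U\le F(Q(a))\}$, and observe that $Q(U)=Q(a)$ on $A$. The only cosmetic differences are that the paper obtains the second identity by subtracting the first from $\E[Y]=\int_0^1 Q(u)\,du$ rather than by your symmetric decomposition, and handles the continuity case by citing the equality clause of Lemma~\ref{lemma:quantile_props}.2 rather than combining parts 2 and 3 as you do.
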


\begin{lemma}\label{lemma:expectation bounds}
	Let Assumption \ref{assn:overlap} hold. Then,
	\begin{equation}
	\label{eq:Y1ubExp}
	\begin{aligned}
	&\int y \, d\overline{F}_{Y_1 \mid W}(y \mid w) \\
	&= \frac{p_{1|w}}{\lc} \left[\E[Y \mid Y \leq \overline{Q}_1, X=1,W=w] {F}_{Y \mid X,W}(\overline{Q}_1 \mid 1,w) - \overline{Q}_1(F_{Y \mid X,W}(\overline{Q}_1 \mid 1,w) - \overline{\tau}_1)\right]\\
	& +  \frac{p_{1|w}}{\uc} \left[\E[Y \mid Y > \overline{Q}_1, X=1, W=w] (1 - {F}_{Y \mid X,W}(\overline{Q}_1 \mid 1,w)) + \overline{Q}_1(\overline{F}_{Y \mid X,W}(\overline{Q}_1 \mid 1,w) - \overline{\tau}_1)\right],
	\end{aligned}
	\end{equation}
	\begin{equation}
	\label{eq:Y1lbExp}
	\begin{aligned}
	&\int y \, d\underline{F}_{Y_1 \mid W}(y \mid w) \\
	&= \frac{p_{1|w}}{\uc} \left[\E[Y \mid Y \leq \underline{Q}_1, X=1,W=w] F_{Y \mid X,W}(\underline{Q}_1 \mid 1,w) - \underline{Q}_1(F_{Y \mid X,W}(\underline{Q}_1 \mid 1,w) - \underline{\tau}_1)\right]\\
	& +  \frac{p_{1|w}}{\lc} \left[\E[Y \mid Y > \underline{Q}_1, X=1, W=w] (1 - F_{Y \mid X,W}(\underline{Q}_1 \mid 1,w)) + \underline{Q}_1(F_{Y \mid X,W}(\underline{Q}_1 \mid 1,w) - \underline{\tau}_1)\right]
	\end{aligned}
	\end{equation}
and
	\begin{equation}
	\label{eq:Y0ubExp}
	\begin{aligned}
	&\int y \, d\overline{F}_{Y_0 \mid W}(y \mid w) \\
	&= \frac{p_{0|w}}{1-\uc} \left[\E[Y \mid Y \leq \overline{Q}_0, X=0,W=w] F_{Y \mid X,W}(\overline{Q}_0|0,w) - \overline{Q}_0(F_{Y \mid X,W}(\overline{Q}_0|0,w) - \overline{\tau}_0)\right]\\
	& +  \frac{p_{0|w}}{1-\lc} \left[\E[Y \mid Y > \overline{Q}_0, X=0, W=w] (1 - F_{Y \mid X,W}(\overline{Q}_0|0,w)) + \overline{Q}_0(F_{Y \mid X,W}(\overline{Q}_0|0,w) - \overline{\tau}_0)\right],
	\end{aligned}
	\end{equation}
	\begin{equation}
	\label{eq:Y0lbExp}
	\begin{aligned}
	&\int y \, d\underline{F}_{Y_0 \mid W}(y \mid w) \\
	&= \frac{p_{0|w}}{1-\lc} \left[\E[Y \mid Y \leq \underline{Q}_0, X=0,W=w] F_{Y \mid X,W}(\underline{Q}_0|0,w) - \underline{Q}_0(F_{Y \mid X,W}(\underline{Q}_0|0,w) - \underline{\tau}_0)\right]\\
	& +  \frac{p_{0|w}}{1-\uc} \left[\E[Y \mid Y > \underline{Q}_0, X=0, W=w] (1 - F_{Y \mid X,W}(\underline{Q}_0|0,w)) + \underline{Q}_0(F_{Y \mid X,W}(\underline{Q}_0|0,w) - \underline{\tau}_0)\right].
	\end{aligned}
	\end{equation}
	If $Y$ is continuously distributed conditionally on $(X,W)$, then these expressions simplify to
	\begin{align*}
		\int y \, d\overline{F}_{Y_1 \mid W}(y \mid w) &=  \E[Y \mid Y \leq \overline{Q}_1, X=1,W=w] \frac{\uc - p_{1|w}}{\uc - \lc}\\
	& +   \E[Y \mid Y > \overline{Q}_1, X=1, W=w]\frac{p_{1|w} - \lc}{\uc - \lc} \\
	\int y \, d\underline{F}_{Y_1 \mid W}(y \mid w) &=  \E[Y \mid Y \leq \underline{Q}_1, X=1,W=w] \frac{p_{1|w} - \lc}{\uc - \lc}\\
	& +   \E[Y \mid Y > \underline{Q}_1, X=1, W=w] \frac{\uc - p_{1|w}}{\uc - \lc}
	\end{align*}
and
	\begin{align*}
		\int y \, d\overline{F}_{Y_0 \mid W}(y \mid w) &=  \E[Y \mid Y \leq \overline{Q}_0, X=0,W=w] \frac{p_{1|w} - \lc}{\uc - \lc}\\
	& +   \E[Y \mid Y > \overline{Q}_0, X=0, W=w]\frac{\uc - p_{1|w}}{\uc - \lc} \\
	\int y \, d\underline{F}_{Y_0 \mid W}(y \mid w) &=  \E[Y \mid Y \leq \underline{Q}_0, X=0,W=w] \frac{\uc - p_{1|w}}{\uc - \lc}\\
	& +   \E[Y \mid Y > \underline{Q}_0, X=0, W=w] \frac{p_{1|w} - \lc}{\uc - \lc}
	\end{align*}
where $\overline{Q}_x$, $\underline{Q}_x$, $\overline{\tau}_x$, and $\underline{\tau}_x$ for $x=0,1$ are defined in Appendix \ref{appendix:notation}.
\end{lemma}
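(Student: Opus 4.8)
The plan is to reduce each of the four integrals to a weighted sum of two integrals of the conditional quantile function $Q_{Y \mid X,W}(\cdot \mid x,w)$ over subintervals of $(0,1)$, and then invoke Lemma \ref{lemma:integrals of quantiles} to convert those into conditional expectations. I would carry out the argument in detail for $\int y \, d\overline{F}_{Y_1 \mid W}(y \mid w)$; the remaining three cases are analogous, with the $Y_0$ identities following from the $Y_1$ identities by the substitution $p_{1|w} \leftrightarrow p_{0|w}$, $(\lc,\uc) \leftrightarrow (1-\uc,1-\lc)$ recorded in Appendix \ref{appendix:notation}. First, since $\overline{F}_{Y_1 \mid W}^{-1}(U \mid w)$ has cdf $\overline{F}_{Y_1 \mid W}(\cdot \mid w)$ when $U \sim \mathrm{Unif}(0,1)$, and assuming the relevant conditional mean of $|Y|$ is finite (as required for Lemma \ref{lemma:integrals of quantiles} and for the integral to be well-defined), the quantile transformation gives $\int y \, d\overline{F}_{Y_1 \mid W}(y \mid w) = \int_0^1 \underline{Q}_{Y_1 \mid W}(\tau \mid w) \, d\tau$, where $\underline{Q}_{Y_1 \mid W}(\tau \mid w) = \overline{F}_{Y_1 \mid W}^{-1}(\tau \mid w) = Q_{Y \mid X,W}\!\left(\max\!\left\{\tfrac{\lc}{p_{1|w}}\tau,\ \tfrac{p_{1|w}-\uc}{p_{1|w}} + \tfrac{\uc}{p_{1|w}}\tau\right\} \mid 1,w\right)$ is the explicit formula from Appendix \ref{appendix:notation}.

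Second, I would determine where the $\max$ switches. The two affine functions of $\tau$ inside it cross at $\tau^\star = \tfrac{\uc - p_{1|w}}{\uc - \lc}$, with the slope-$\lc/p_{1|w}$ branch dominating for $\tau < \tau^\star$ and the other for $\tau \geq \tau^\star$ (using $\lc \leq p_{1|w} \leq \uc$ and strict overlap). A short computation shows both affine functions equal $\overline{\tau}_1$ at $\tau^\star$, hence $\underline{Q}_{Y_1 \mid W}(\tau^\star \mid w) = Q_{Y \mid X,W}(\overline{\tau}_1 \mid 1,w) = \overline{Q}_1$, matching the quantities fixed in Appendix \ref{appendix:notation}. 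Splitting the integral at $\tau^\star$ and performing the linear changes of variables $u = \tfrac{\lc}{p_{1|w}}\tau$ on $[0,\tau^\star]$ and $u = \tfrac{p_{1|w}-\uc}{p_{1|w}} + \tfrac{\uc}{p_{1|w}}\tau$ on $[\tau^\star,1]$ (both of which send $\tau^\star \mapsto \overline{\tau}_1$, and send $0 \mapsto 0$, $1 \mapsto 1$ respectively) yields
\[
\int y \, d\overline{F}_{Y_1 \mid W}(y \mid w) = \frac{p_{1|w}}{\lc}\int_0^{\overline{\tau}_1} Q_{Y \mid X,W}(u \mid 1,w)\, du + \frac{p_{1|w}}{\uc}\int_{\overline{\tau}_1}^1 Q_{Y \mid X,W}(u \mid 1,w)\, du.
\]

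Third, I would apply Lemma \ref{lemma:integrals of quantiles} with $a = \overline{\tau}_1$, $F = F_{Y \mid X,W}(\cdot \mid 1,w)$, $Q = Q_{Y \mid X,W}(\cdot \mid 1,w)$, and $Q(a) = \overline{Q}_1$ to rewrite both quantile integrals as the conditional-expectation expressions on the right side of \eqref{eq:Y1ubExp}. For $\int y \, d\underline{F}_{Y_1 \mid W}$ the only change is that the relevant formula $\overline{Q}_{Y_1 \mid W}(\tau \mid w) = Q_{Y \mid X,W}(\min\{\tfrac{\uc}{p_{1|w}}\tau,\ \tfrac{p_{1|w}-\lc}{p_{1|w}} + \tfrac{\lc}{p_{1|w}}\tau\} \mid 1,w)$ switches at $\underline{\tau}_1$ with switch value $\underline{Q}_1$, giving \eqref{eq:Y1lbExp}; \eqref{eq:Y0ubExp}--\eqref{eq:Y0lbExp} then follow by the symmetry substitution. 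Finally, when $Y \mid X,W$ is continuously distributed we have $\P(Y = \overline{Q}_x \mid X=x,W=w) = 0$, so Lemma \ref{lemma:quantile_props} gives $F_{Y \mid X,W}(\overline{Q}_x \mid x,w) = \overline{\tau}_x$ and the correction terms of the form $\overline{Q}_x(F_{Y \mid X,W}(\overline{Q}_x \mid x,w) - \overline{\tau}_x)$ vanish; combining this with the algebraic identities $\tfrac{p_{1|w}}{\lc}\overline{\tau}_1 = \tfrac{\uc - p_{1|w}}{\uc - \lc}$ and $\tfrac{p_{1|w}}{\uc}(1-\overline{\tau}_1) = \tfrac{p_{1|w}-\lc}{\uc-\lc}$ (and the analogues for $\underline{\tau}_1$ and for $Y_0$) produces the simplified formulas.

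The main obstacle I anticipate is the bookkeeping in the second step: correctly locating the switch point of the piecewise-affine argument of $Q_{Y \mid X,W}$ and verifying that its $\tau$-value and its $Q_{Y \mid X,W}$-value coincide with the quantities $\overline{\tau}_x$, $\underline{\tau}_x$, $\overline{Q}_x$, $\underline{Q}_x$ already defined in Appendix \ref{appendix:notation}. Once that is pinned down, the changes of variables and the application of Lemma \ref{lemma:integrals of quantiles} are routine, and the four cases are symmetric enough that no genuinely new work arises — only careful tracking of which of $\lc$, $\uc$ (or $1-\uc$, $1-\lc$) multiplies which piece.
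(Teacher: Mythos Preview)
Your proposal is correct and follows essentially the same approach as the paper: quantile representation $\int y\,d\overline{F}_{Y_1\mid W} = \int_0^1 \underline{Q}_{Y_1\mid W}(\tau\mid w)\,d\tau$, splitting at the switch point $\tau^\star = (\uc - p_{1|w})/(\uc-\lc)$ of the piecewise-affine argument, the same linear changes of variable mapping each piece onto $[0,\overline{\tau}_1]$ and $[\overline{\tau}_1,1]$, and then Lemma~\ref{lemma:integrals of quantiles}. One minor wording point: for $\underline{F}_{Y_1\mid W}$ the switch in $\tau$ occurs at $(p_{1|w}-\lc)/(\uc-\lc)$, not at $\underline{\tau}_1$; rather $\underline{\tau}_1$ is the common $u$-value of the two branches there (so that the integration limits after the change of variable are $[0,\underline{\tau}_1]$ and $[\underline{\tau}_1,1]$), which is what you need.
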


\subsection{Proofs of Analytical Bounds on Expectations}

\begin{proof}[Proof of Lemma \ref{lemma:integrals of quantiles}]
		First we consider the equality involving $\int_0^a Q(u) du$. Note that
\begin{align*}
	\int_0^a Q(u) \, du
	&= \int_0^1 Q(u) \1[u\leq a] \, du \\
	&= \int_0^1 Q(u) \1[Q(u) \leq Q(a), u\leq a] \, du \\
	&= \int_0^1 Q(u) \1[Q(u) \leq Q(a)] du - \int_0^1 Q(u) \1[Q(u) \leq Q(a), u>a] \, du \\
	&= \int_0^1 Q(u) \1[Q(u) \leq Q(a)] du - \int_0^1 Q(u) \1[Q(u) = Q(a), u>a] \, du \\
	&= \int_0^1 Q(u) \1[Q(u) \leq Q(a)] du - Q(a) \int_0^1 \1[Q(u) \leq Q(a), u>a] \, du,
\end{align*}
where the second, the fourth, and the last line follow by the monotonicity of quantile function $Q(\cdot)$.

The first term in the last line can be written as below:
\begin{align*}
	\int_0^1 Q(u) \1[Q(u) \leq Q(a)] \, du = \E(Y \1[Y\leq Q(a)]) = \E(Y \mid Y\leq Q(a))F(Q(a)),
\end{align*}
where the first equality follows by that $Q(U)$ has the same distribution as $Y$ if $U$ is uniformly distributed over $[0,1]$. To expand the second term, note that $\{u: Q(u)\leq Q(a), u>a\}$ is a half-open interval with the left endpoint $a$ excluded, and right endpoint $\sup\{u: Q(u) \leq Q(a)\}$ included in the interval due to the left-continuity of quantile function $Q(\cdot)$. So we have
\begin{align*}
	Q(a)\int_0^1 \1[Q(u)\leq Q(a), u>a] \, du
	&= Q(a)(\sup\{u: Q(u)\leq Q(a)\} - a) \\
	&= Q(a)(\sup\{u: u\leq F(Q(a))\} - a) \\
	&= Q(a)(F(Q(a)) - a),
\end{align*}
where the second line holds by Lemma \ref{lemma:quantile_props}.1. Given the above derivations, we conclude that 
\[
	\int_0^a Q(u) \, du = \E[Y \mid Y\leq Q(a)] F(Q(a)) - Q(a)(F(Q(a)) - a),
\]
as desired. 

Regarding the second equality involving $\int_a^1 Q(u) du$, note that $\int_0^1 Q(u) \, du = \E[Q(U)] = \E[Y]$. This implies
\begin{align*}
	\int_a^1 Q(u) \, du 
	&= \int_0^1 Q(u)du - \int_0^a Q(u) \, du \\
	&= \E[Y] - \E[Y \mid Y\leq Q(a)]F(Q(a)) + Q(a)(F(Q(a)) - a) \\
	&= \E[Y \mid Y>Q(a)](1-F(Q(a))) + Q(a)(F(Q(a)) - a),
\end{align*}
where the last line follows by the law of iterated expectation. So the second equality is established.

When $\P(Y=Q(a))=0$, the the CDF $F$ is continuous at $Q(a)$, which implies that 
$F(Q(a)) = a$ by Lemma \ref{lemma:quantile_props}.2. Therefore, 
\begin{align*}
	\int_0^a Q(u) \, du 
	= \E[Y \mid Y \leq Q(a)] F(Q(a)) - Q(a)(F(Q(a)) - a) 
	= \E[Y \mid Y \leq Q(a)] a,
\end{align*}
and similar arguments can be applied to $\int_a^1 Q(u) \, du$ as well. Therefore we have established the desired result.
\end{proof}

\begin{proof}[Proof of Lemma \ref{lemma:expectation bounds}]
We prove the claim for $\int_0^1 y \, d\overline{F}_{Y_1 \mid W}(y \mid w)$, and note that the claims for the other terms can be derived analogously. 

Let $U \sim \text{Unif}(0,1)$, then $\underline{Q}_{Y_1 \mid W}(U \mid 1,w)$ has the distribution $\overline{F}_{Y_1 \mid W}(\cdot \mid 1,w)$, which implies
\begin{align}
	\int y \, d\overline{F}_{Y_1 \mid W}(y \mid w) 
	&= \int_0^1 \underline{Q}_{Y_1 \mid W}(\tau \mid w) \, d\tau  \notag \\
	&= \int_0^1 Q_{Y \mid X,W}\left(\frac{\lc \tau}{p_{1|w}} \mid 1,w\right)\1\left[\tau \leq \frac{\uc - p_{1|w}}{\uc - \lc}\right] \, d\tau \label{eq:term1_Y1ubExp}\\
	&\quad + \int_0^1 Q_{Y \mid X,W}\left(\frac{p_{1|w} - \uc + \uc\tau}{p_{1|w}} \mid 1,w\right) \1\left[\tau > \frac{\uc - p_{1|w}}{\uc - \lc}\right] \, d\tau \label{eq:term2_Y1ubExp}
\end{align}
We expand the term \eqref{eq:term1_Y1ubExp} below:
\begin{align}
	&\int_0^1 Q_{Y \mid X,W}\left(\frac{\lc \tau}{p_{1|w}} \mid 1,w\right)\1\left[\tau \leq \frac{\uc - p_{1|w}}{\uc - \lc}\right] \, d\tau  \notag\\
	&= \frac{p_{1|w}}{\lc}\int_0^{\overline{\tau}_1} Q_{Y \mid X,W}(u \mid 1,w)\,du \notag \\
	&= \frac{p_{1|w}}{\lc}
	\left[\E[Y \mid Y\leq Q_{Y \mid X,W}(\overline{\tau}_1 \mid 1,w), X=1, W=w] F_{Y \mid X,W}(Q_{Y \mid X,W}(\overline{\tau}_1 \mid 1,w) \mid 1,w) \right] \notag \\
	&\quad - \frac{p_{1|w}}{\lc}Q_{Y \mid X,W}(\overline{\tau}_1 \mid 1,w) \left[F_{Y \mid X,W}(Q_{Y \mid X,W}(\overline{\tau}_1 \mid 1,w) \mid 1,w) - \overline{\tau}_1\right] \notag \\
	&= \frac{p_{1|w}}{\lc}\left[\E[Y \mid Y\leq \overline{Q}_1, X=1, W=w]F_{Y \mid X,W}(\overline{Q}_1 \mid 1,w) - \overline{Q}_1(F_{Y \mid X,W}(\overline{Q}_1 \mid 1,w) - \overline{\tau}_1)\right]. \label{eq:term1_final_Y1ubExp}
\end{align}
The first equality uses the changes of variable $u = \lc \tau/p_{1|w}$ and recall that
\[
	\overline{\tau}_1 = \frac{\lc}{p_{1|w}} \frac{\uc - p_{1|w}}{\uc - \lc}.
\]
The second equality follows by Lemma \ref{lemma:integrals of quantiles}, and the last line holds by recalling that $\overline{Q}_1 = Q_{Y \mid X,W}(\overline{\tau}_1 \mid 1,w)$.

Similarly, we can expand the term \eqref{eq:term2_Y1ubExp} below:
\begin{align}
&\int_0^1 Q_{Y \mid X,W}\left(\frac{p_{1|w} - \uc + \uc\tau}{p_{1|w}} \mid 1,w\right) \1\left[\tau > \frac{\uc - p_{1|w}}{\uc - \lc}\right] \, d\tau  \notag\\
&= \int_{\overline{\tau}_1}^1 Q_{Y \mid X,W}\left(u \mid 1,w\right)\, du \notag \\
&= \frac{p_{1|w}}{\uc}\left[\E[Y \mid Y > \overline{Q}_1, X=1, W=w] (1-F_{Y \mid X,W}(\overline{Q}_1 \mid 1,w)) + \overline{Q}_1\left(F_{Y \mid X,W}(\overline{Q}_1 \mid 1,w) - \overline{\tau}_1\right)\right], \label{eq:term2_final_Y1ubExp}
\end{align}
where we use the change of variable $u = 1 - \frac{(1-\tau)\uc}{p_{1|w}}$ in the second line.

Combining the above results, we con combine \eqref{eq:term1_final_Y1ubExp} and \eqref{eq:term2_final_Y1ubExp} to obtain the analytical formula of $\int y \, d\overline{F}_{Y_1 \mid W}$:
\begin{align*}
	& \int y \, d\overline{F}_{Y_1 \mid W}(y \mid w) \\
	& =
	\frac{p_{1|w}}{\lc}\left[\E[Y \mid Y\leq \overline{Q}_1, X=1, W=w]F_{Y \mid X,W}(\overline{Q}_1 \mid 1,w) - \overline{Q}_1(F_{Y \mid X,W}(\overline{Q}_1 \mid 1,w) - \overline{\tau}_1)\right]  \\
	&\quad + \frac{p_{1|w}}{\uc}\left[\E[Y \mid Y > \overline{Q}_1, X=1, W=w] (1-F_{Y \mid X,W}(\overline{Q}_1 \mid 1,w)) + \overline{Q}_1\left(F_{Y \mid X,W}(\overline{Q}_1 \mid 1,w) - \overline{\tau}_1\right)\right].
\end{align*}
Finally, we note that if $Y$ is continuously distributed conditional on $(X,W)$, then 
\[
	F_{Y \mid X,W}(\overline{Q}_1 \mid 1,w) = \overline{\tau}_1,
\]
which implies
\begin{align*}
	\int y \, d\overline{F}_{Y_1 \mid W}(y \mid w)
	&=
	\frac{p_{1|w}}{\lc} \E[Y \mid Y\leq \overline{Q}_1, X=1, W=w]\overline{\tau}_1 + \frac{p_{1|w}}{\uc} \E[Y \mid Y> \overline{Q}_1, X=1, W=w] (1-\overline{\tau}_1) \\
	&= \E[Y \mid Y\leq \overline{Q}_1, X=1, W=w] \frac{\uc - p_{1|w}}{\uc - \lc} + \E[Y \mid Y> \overline{Q}_1, X=1, W=w] \frac{p_{1|w} - \lc}{\uc-\lc},
\end{align*}
as desired.
\end{proof}

\end{document}